\newcommand{\tikzxmark}{%
\tikz[scale=0.23] {
    \draw[line width=0.7,line cap=round] (0,0) to [bend left=6] (1,1);
    \draw[line width=0.7,line cap=round] (0.2,0.95) to [bend right=3] (0.8,0.05);
}}
\newcommand{\tikzcmark}{%
\tikz[scale=0.23] {
    \draw[line width=0.7,line cap=round] (0.25,0) to [bend left=10] (1,1);
    \draw[line width=0.8,line cap=round] (0,0.35) to [bend right=1] (0.23,0);
}}
\DeclarePairedDelimiter\ket{\lvert}{\rangle}
\DeclarePairedDelimiterX\braket[2]{\langle}{\rangle}{#1 \delimsize\vert #2}
\DeclarePairedDelimiterX\inner[2]{\langle}{\rangle}{#1,#2}
\definecolor{Myblue}{rgb}{0,0,0.6}
\theoremstyle{definition}
\newtheorem{theorem}{Theorem}[section]
\newtheorem{proposition}[theorem]{Proposition}
\newtheorem{lemma}[theorem]{Lemma}
\newtheorem{corollary}[theorem]{Corollary}
	\newtheorem{definition}[theorem]{Definition}
	\newtheorem{example}[theorem]{Example}
	\newtheorem{remark}[theorem]{Remark}
\newcommand{\detectA}{\texttt{detectA}}
\def\res{\operatorname{Res}}
\def\vacu{\ket{\emptyset}}
\def\be{\begin{equation}}
\def\ee{\end{equation}}
\DeclareMathOperator{\inc}{inc}
\def\comp{\underline{\textup{comp}}}
\def\contract{\;\lrcorner\;}
\newcommand{\lto}{\longrightarrow}
\newcommand{\bb}[1]{\mathbb{#1}}
\def\l{\,|\,}
\newcolumntype{?}{!{\vrule width 1pt}}
\title{Programs as Singularities}
\author{Daniel Murfet, Will Troiani}
\date{\today}
\begin{document}

\def\ScoreOverhang{1pt}

\makeatletter
\DeclareRobustCommand{\rvdots}{%
  \vbox{
    \baselineskip4\p@\lineskiplimit\z@
    \kern-\p@
    \hbox{}\hbox{.}\hbox{.}\hbox{.}
  }}
\makeatother

\newcommand{\startproof}[1]{
	\AxiomC{#1}
	\noLine
	\UnaryInfC{$\vdots$}
}
\newcommand{\ax}{\operatorname{Ax}}
\newcommand{\proofvdots}[1]{\overset{\displaystyle #1}{\rvdots}}
\def\Res{\res\!}
\newcommand{\ud}[1]{\operatorname{d}\!{#1}}
\newcommand{\Ress}[1]{\res_{#1}\!}
\newcommand{\cat}[1]{\mathcal{#1}}
\newcommand{\xlto}[1]{\stackrel{#1}\lto}
\newcommand{\mf}[1]{\mathfrak{#1}}
\newcommand{\md}[1]{\mathscr{#1}}
\newcommand{\church}[1]{\underline{#1}}
\newcommand{\prf}[1]{\underline{#1}}
\newcommand{\den}[1]{\llbracket #1 \rrbracket}
\def\l{\,|\,}
\def\sgn{\textup{sgn}}
\def\cont{\operatorname{cont}}
\def\counit{\varepsilon}
\def\ptail{\underline{\operatorname{tail}}}
\def\phead{\underline{\operatorname{head}}}
\def\comp{\underline{\textup{comp}}}
\def\mult{\underline{\textup{mult}}}
\def\repeat{\underline{\textup{repeat}}}
\def\contract{\;\lrcorner\;}
\def\<{\langle} \def\>{\rangle}
\newcommand{\id}{\text{id}}
\newcommand{\del}{\partial}
\newcommand{\Inj}{\operatorname{Inj}}

\newcommand{\tTur}{\textbf{Tur}}
\newcommand{\tInt}{\textbf{int}}
\newcommand{\tBint}{\textbf{bint}}
\newcommand{\tList}{\textbf{list}}
\newcommand{\tTint}{\textbf{tint}}
\newcommand{\tMBool}{\textbf{${_m}$bool}}
\newcommand{\tNBool}{\textbf{${_n}$bool}}
\newcommand{\tSBool}{\textbf{${_s}$bool}}
\newcommand{\tHBool}{\textbf{${_h}$bool}}
\newcommand{\tSList}{{_s}\textbf{list}}

\newcommand{\reject}{\operatorname{reject}}
\newcommand{\accept}{\operatorname{accept}}

\newcommand{\pTapehead}{\text{\underline{tapehead}}}
\newcommand{\pAbsStep}{\text{\underline{absstep}}}
\newcommand{\pSymbol}{\text{\underline{symbol}}}
\newcommand{\pBoolWeak}{\text{\underline{boolweak}}}
\newcommand{\pLeft}{\text{\underline{left}}}
\newcommand{\pEval}{\text{\underline{eval}}}
\newcommand{\pRight}{\text{\underline{right}}}
\newcommand{\pState}{\text{\underline{state}}}
\newcommand{\pRecomb}{\text{\underline{recomb}}}
\newcommand{\pConcat}{\text{\underline{concat}}}
\newcommand{\pListconcat}{\text{\underline{listconcat}}}
\newcommand{\pTrans}{\text{\underline{trans}}}
\newcommand{\pDecomp}{\text{\underline{decomp}}}
\newcommand{\pHead}{\text{\underline{head}}}
\newcommand{\pTail}{\text{\underline{tail}}}
\newcommand{\pIntCopy}{\text{\underline{intcopy}}}
\newcommand{\pBintCopy}{\text{\underline{bintcopy}}}
\newcommand{\pNBoolCopy}{\text{${_n}$\underline{boolcopy}}}
\newcommand{\pStep}{\text{\underline{step}}}
\newcommand{\pRelStep}{\text{\underline{relstep}}}
\newcommand{\pBoolStep}{\text{\underline{boolstep}}}
\newcommand{\pExtract}{\text{\underline{extract}}}
\newcommand{\pPack}{\text{\underline{pack}}}
\newcommand{\pUnpack}{\text{\underline{unpack}}}
\newcommand{\pRead}{\text{\underline{read}}}
\newcommand{\pMultread}{\text{\underline{multread}}}
\newcommand{\pTensor}{\text{\underline{tensor}}}
\newcommand{\pComp}{\text{\underline{comp}}}
\newcommand{\pRepeat}{\text{\underline{repeat}}}
\newcommand{\pAdd}{\text{\underline{add}}}
\newcommand{\pMult}{\text{\underline{mult}}}
\newcommand{\pPred}{\text{\underline{pred}}}
\newcommand{\pIter}{\text{\underline{iter}}}
\newcommand{\pBooltype}{\text{\underline{booltype}}}
\newcommand{\pCast}{\text{\underline{cast}}}
\newcommand{\proj}{\operatorname{proj}}
\newcommand{\prob}{\bold{P}}
\newcommand{\probc}{\mathscr{P}}
\newcommand{\dkl}{D_{\operatorname{KL}}}

\newcommand{\nl}{\text{nl}} 
\newcommand{\dntn}[1]{\llbracket #1 \rrbracket} 
\newcommand{\dntntup}[1]{\langle\!\langle #1 \rangle\!\rangle} 
\newcommand{\dntnPT}[1]{\dntn{#1}_{\text{PT}}}
\newcommand{\dntnNL}[1]{\dntn{#1}_{\nl}}
\newcommand{\Sym}{\operatorname{Sym}}
\newcommand{\ND}{\text{ND}} 

\maketitle

\begin{abstract} We develop a correspondence between the structure of Turing machines and the structure of singularities of real analytic functions, based on connecting the Ehrhard-Regnier derivative from linear logic with the role of geometry in Watanabe's singular learning theory. The correspondence works by embedding ordinary (discrete) Turing machine codes into a family of ``noisy'' codes which form a smooth parameter space. On this parameter space we consider a potential function which has Turing machines as critical points. By relating the Taylor series expansion of this potential at such a critical point to combinatorics of error syndromes, we relate the local geometry to internal structure of the Turing machine.

The potential in question is the negative log-likelihood for a statistical model, so that the structure of the Turing machine and its associated singularity is further related to Bayesian inference. Two algorithms that produce the same predictive function can nonetheless correspond to singularities with different geometries, which implies that the Bayesian posterior can discriminate between distinct algorithmic implementations, contrary to a purely functional view of inference. In the context of singular learning theory our results point to a more nuanced understanding of Occam’s razor and the meaning of simplicity in inductive inference.
\end{abstract}

\setlength{\epigraphwidth}{0.7\textwidth}
\epigraph{All of this will lead to theories which are much less rigidly of an all-or-none nature than past and present formal logic. They will be of a much less combinatorial, and much more analytical, character. In fact there are numerous indications to make us believe that this new system of formal logic will move closer to another discipline which has been little linked in the past with logic. This is thermodynamics, primarily in the form it was received from Boltzmann, and is that part of theoretical physics which comes nearest in some of its aspects to manipulating and measuring information.}{J. von Neumann, Collected Works V, \cite[p. 304]{vonNeumann1963}}

\tableofcontents
\newpage

\section{Introduction}

We present a perspective on programs which relates them to the geometry of singularities in real algebraic geometry \cite{arnold1981singularity,greuel2007introduction,greuel2007singularity}. This continues a line of work on algebraic semantics of linear logic \cite{murfet2015sweedler,clift2020cofree,CliftMastersThesis} and connections of those semantics to geometry \cite{murfet2022eliminationcuteliminationmultiplicativelinear, troiani2025linearlogichilbertscheme} and statistical learning theory \cite{clift2021geometry,waring2021geometric}. In this introduction we explain the key ideas of the present work without requiring any familiarity with this background material.
\\

The concept of an \emph{algorithm} or \emph{program} had a long gestation. From a modern perspective we can see two primary threads: the refinement of logical debate and structured arguments starting with Aristotle and the refinement of the idea of an algebraic manipulation or calculation starting with Al-Khw\={a}rizm\={i}. Over time these threads came together in the work of Leibniz and Boole \cite{boole1854laws} then later Gentzen \cite{gentzen1935untersuchungen}, Turing \cite{turing1936computable}, Brouwer-Heyting-Kolmogorov \cite{troelstra2011history}, Church \cite{church1932set}, Curry and Howard \cite{howard1980formulae} and others to give us the modern synthesis of logic and computation. However, with the rise of machine learning and artificial intelligence the conceptual landscape under this synthesis has shifted, as already anticipated by Turing \cite{turing1948intelligent} and von Neumann \cite{von1958computer, vonNeumann1963}. Soon most sophisticated cognition on Earth, including the construction of programs, will occur in machines. Thus it may be appropriate to expand the proper domain of logic to include understanding the laws of thought that emerge in the minds of large learning machines \emph{other} than ourselves.\footnote{Interpretability is the new (old) logic.}

One way to take this seriously is to search for an embedding of the discrete world of proofs and programs into the continuous world of machine learning and statistical learning theory. We know one relevant example: our brains are continuous but can parametrise effectively discrete forms of reasoning and symbol manipulation \cite{von1958computer}. Our main desiderata for this embedding should be that it be a \emph{homomorphism}, that is, the internal structure of proofs and programs should be reflected in the geometric structure of the continuous space near the image of that program. When the geometry in question is that of the singularity of a potential function, there is a long tradition in mathematics and physics of relating the ``internal structure'' of points to the geometry of singularities \cite{cecotti1993classification,cartier2001mad,greuel2007singularity,arnold2012singularities}.
\\

In this paper we study such a homomorphism, that is, a structure preserving map from programs to singularities. In \cite{clift2018derivatives,clift2021geometry,waring2021geometric} the map was constructed by considering \emph{synthesis problems}, that is, the Bayesian inference of a (noisy) Turing machine code given input-output examples in the spirit of Solomonoff \cite{solomonoff1964formal}. In this paper we develop a language of \emph{error syndromes} in which to describe this map and exhibit the sense in which it is structure-preserving. Finally, we explain how this correspondence offers a new view on Bayesian statistics, which we term \emph{structural Bayesianism}. This point of view emphasises that according to standard principles of model selection we should not only prefer models that predict well but which further have an internal structure which is optimal. This provides some theoretical grounding for approaches that attempt to perform structural inference on neural networks, another important class of singular models \cite{hoogland2024developmental,wang2024differentiation}.
\\

\emph{Acknowledgements.} We warmly thank Billy Snikkers, Zach Furman and Rumi Salazar for participating in a series of lectures on this material at the University of Melbourne and ZF for suggesting the relation to Fourier analysis of Boolean functions.

\subsection{Overview}

The paper in overview:

\begin{itemize}
    \item Section \ref{section:background}: we recall the general setup of synthesis problems and smooth relaxations of the execution of a UTM, and modulo the details of that smooth relaxation how we define a statistical model (Section \ref{section:background_iid}, Section \ref{section:details_model}). We then recall the definition of learning coefficients (Section \ref{section:background_llc}) and how this geometric invariant controls the Bayesian learning process according to singular learning theory (Section \ref{section:free_energy}).
    \item Section \ref{section:geometry}: having associated to a Turing machine $M$ a singularity $([M], L)$ we proceed to introduce what it means to study the geometry of such an object. This leads to the definition of \emph{influence functions} $g_{\bold{i}}(x)$.
    \item Section \ref{section:derivatives_errors}: we recall the basics of linear logic and its semantics (Section \ref{section:linear_logic_setup}). Then we give a sketch of how in this semantics inputs are copied and distributed throughout the ``programs'' that are one interpretation of linear logic proofs, and how when the inputs are distributions this has the semantics of ``resample whenever you use an input'' which leads to the fundamental notion of an error syndrome (Section \ref{section:eval_prog_vec}). This is further expanded in Section \ref{section:error_syndromes} which concludes with Theorem \ref{theorem:error_count}. This says that derivatives of denotations of plain proofs (an important case being the proof that encodes the execution of a UTM for $t$ timesteps) can be thought of in terms of counts of error syndromes.
    \item Section \ref{section:diff_utm}: we introduce a plain proof $\psi$ which encodes reading off the final state of a simulated machine after the execution of a particular UTM for $t$ steps (Section \ref{section:encoding_utm_step}) and then apply Section \ref{section:derivatives_errors} to that proof to define derivatives and express them in terms of error syndromes (Section \ref{section:error_syndromes_U}).
    \item Section \ref{sec:GeometryofInductiveInference}: we express the Taylor series expansion of the polynomial function $H(w)$ at a code $[M]$ in terms of counts of error syndromes (Theorem \ref{theorem:partial_formula_explicit}).
    \item Section \ref{sec:examples}: using Theorem \ref{theorem:partial_formula_explicit} we explain how to make three separate connections between program structure and geometry (summarised in Section \ref{section:pas} below).
    \item Section \ref{section:conclusion}: in the conclusion we draw some general lessons from this connection between programs and singularities for Bayesian inference and interpretability and introduce the philosophy of structural Bayesianism.
\end{itemize} 

In the rest of this section we sketch the main results. FIrst we recall the general situation of Bayesian inference of Turing machine codes considered in \cite{clift2018derivatives,clift2021geometry,waring2021geometric}, which is depicted in Figure \ref{fig:utm_diagram}. 

\subsubsection{Inductive Inference of TMs}\label{section:overview_ind_inf}

We suppose there is a computable function $y = y(x)$ with $y \in \{0,1\}$ and $x \in \Sigma^*$ for some alphabet of symbols $\Sigma$ where in this paper $\Sigma^*$ denotes strings of positive length. That is, we are given a \emph{language} $L = \{ x \in \Sigma^* \mid y(x) = 1 \}$. The learning problem, which we refer to as a \emph{synthesis problem}, is to infer a Turing machine code that explains a given dataset $D_n = \{ (x_i, y_i) \}_{i=1}^n$ of examples with $y_i = y(x_i)$ and $x_i \sim q(x)$ for some given distribution over inputs with finite support $I \subseteq \Sigma^*$. We refer to a Turing machine $M$ as a \emph{classical solution} if $M(x) = y(x)$ for all $x \in I$.\footnote{We will be more precise about the meaning of $M(x)$ below, but note that it will involve a timeout.}

\begin{figure}[t]
    \centering
    \includegraphics
        [width=0.8\textwidth]
        {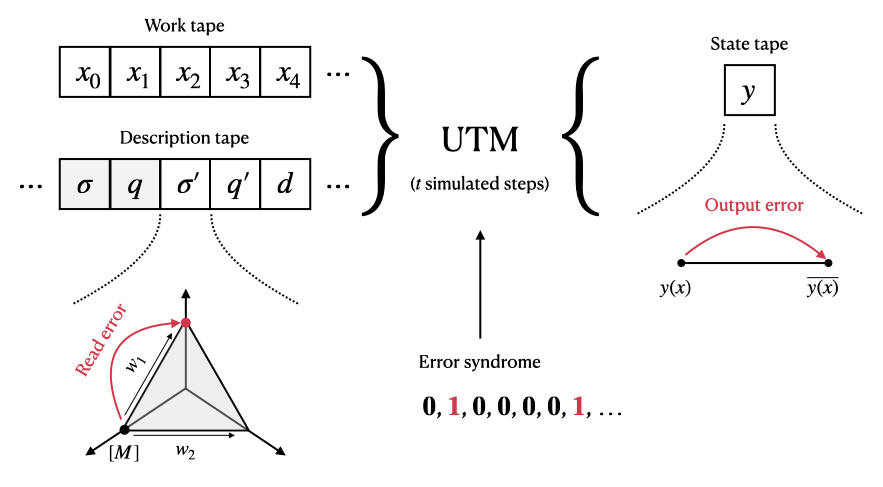}
    \caption{\label{fig:utm_diagram}%
        The parameter space $W$ with coordinates $w_1,\ldots,w_d$ parametrises ``noisy codes'' which specify for each entry on the description tape of a UTM a distribution over possible values. When executing such a noisy Turing machine code on an input $x$ for $t$ steps of the simulated machine the UTM can experience read errors when accessing the code on the description tape, and the pattern of such errors determines an error syndrome. Any particular error syndrome may flip the output of the simulated machine from the correct output $y = y(x)$ to the incorrect output $\overline{y(x)}$.
    }
\end{figure}

We can perform Bayesian inference to assign probabilities to various Turing machines $M$ represented by their codes $[M]$ for a particular universal Turing machine (UTM) which we denote $\mathcal{U}$. However, this approach does not ``see'' the internal structure of a Turing machine. On the theory that internal structure is revealed by the pattern of responses of a system to perturbations, we consider perturbations of the dynamical system $\mathcal{U}$ simulating $M$ of the following kind: when the UTM performs an operation, it \emph{reads from the description tape with some probability of error}, obtaining enough samples to simulate the machine from the first step. That is, we view the connection of the description tape to the rest of the UTM as a noisy channel. The code $w$ for a \emph{noisy Turing machine} specifies the precise errors that can occur and their probability. In this way we can include the ordinary Turing machine codes in a larger continuous space $W$. On this space we define the negative log-likelihood
\begin{equation}
L_n(w) = - \frac{1}{n} \sum_{i=1}^n \log p(y_i|x_i,w)
\end{equation}
for any noisy Turing machine code $w \in W$, and its average $L(w) = \mathbb{E}_{D_n}[ L_n(w) ]$ over all possible datasets. This is an analytic function and the global minima of $L$ include Turing machines $M$ which are solutions in the above sense. The singularity assigned in \cite{clift2021geometry,waring2021geometric} to the pair consisting of $M$ and the synthesis problem is the germ $([M], L)$ (see Definition \ref{defn:germ} below). Of course it remains to explain precisely how to assign probabilities to $y_i \in \{0,1\}$ given an input $x_i$ and a noisy Turing machine code $w$ (we will get to this in Section \ref{section:background_iid}).

\subsubsection{Error Syndromes}

With this preparation we can sketch the idea of error syndromes, influence functions and how they relate the structure of $M$ to the geometry of $([M], L)$. An \emph{error syndrome} $\gamma$ records the pattern of errors encountered while $\mathcal{U}$ simulates $M$ for some fixed number $t$ of steps. Let $M$ be a classical solution and let $\mathcal{U}(x, \gamma)$ denote the ``output'' (the entry on the state tape of the UTM) obtained on input $x$ with error syndrome $\gamma$ when simulating $M$ for $t$ steps. If we only allow a single error on the $i$th entry of the description tape the \emph{influence function of weight one} is (Section \ref{section:influence_func_utm})
\begin{equation}
g_i(x) = \Big| \big\{ \gamma \l \mathcal{U}(x,\gamma) \neq y(x) \big\} \Big|\,.
\end{equation}
That is, $g_i \in \mathbb{N}^I$ assigns to each input $x$ the number of those error syndromes, with one error localised to the $i$th entry of the description tape, which cause an output error. Up to second-order the main structural information there is to know about the germ $([M], L)$ is whether the singularity is nondegenerate, that is, whether its Hessian is invertible. One can show that this geometry of $L$ is the same as the geometry of a polynomial function $H$ and our results are formulated about this polynomial (Section \ref{section:geometry}).

The Hessian of $H$ can be described in terms of influence functions (Corollary \ref{corollary:partial_formula_explicit_hessian})
\begin{equation}\label{eq:intro_hessian_H}
\frac{1}{2} \frac{\partial^2}{ \partial w_i \partial w_j} H \Bigr|_{w = [M]} = \mathbb{E}_x\Big[ g_i(x) g_j(x) \Big]
\end{equation}
where $\mathbb{E}_x[-]$ denotes expectation with respect to $q(x)$ and $w_1,\ldots,w_d$ are natural coordinates on the space of noisy TM codes $W$ at $[M]$. It follows that $([M], L)$ is nondegenerate if and only if the set of influence functions is linearly independent, or what is the same, if \emph{every bit of the code of $M$ influences the computation in an independent way}. This provides the first hint of a nontrivial relation between the geometry of $([M], L)$ and the structure of the Turing machine (see Section \ref{section:meaning_nondeg}).

More generally for a nonzero multi-index $\bold{k} = (k_1,\ldots,k_d)$ we have (Lemma \ref{lemma:partial_derivatives_H})
\begin{equation}
\frac{\partial^{|\bold{k}|}}{\partial w_1^{k_1} \cdots \partial w_d^{k_d}} H \Bigr|_{w = [M]} = \sum_{\substack{\bold{i},\bold{j} \neq 0 \\ \bold{i} + \bold{j} = \bold{k}}} C(\bold{i}, \bold{j})\, \mathbb{E}_x\Big[ g_{\bold{i}}(x) g_{\bold{j}}(x) \Big]
\end{equation}
where the $C$ are some combinatorial factors and $g_{\bold{i}}(x)$ is a general influence function, which can be expressed in terms of counts of error syndromes subject to the condition that the number of errors in each entry of the description tape is at most the integer specified in $\bold{i}$ (Theorem \ref{theorem:partial_formula_explicit}). In this precise sense, the combinatorics of error syndromes controls the geometry of $H$ and thus of $L$.

\subsubsection{Programs as Singularities}\label{section:pas}

We claim that typical Turing machine codes are highly degenerate (see Section \ref{section:full_geometry} for an example). That is, the effects of different entries on the description tape are highly correlated in terms of the effect that errors in these bits have on the output. We claim further that the pattern of these correlations reflects internal structure of Turing machines. Since there is no general formal definition of ``internal structure'' of programs to compare to, evidence for this claim takes the form of the development (in this paper and its sequels) of examples of program structure, together with a study of how this structure is reflected in correlations of influence functions (Figure \ref{figure:triumvirate}).
\\

In this paper we give three examples:
\begin{itemize}
\item If a Turing machine $M$ has run-time error correction then the Taylor series expansion of $H$ at $[M]$ has many vanishing lower order terms; this leads to upper bounds on the learning coefficient of $L$ (Section \ref{section:error_correction}).
\item It is apparent from \eqref{eq:intro_hessian_H} that Turing machines which process subsets of the input using independent control paths should have that structure reflected in their geometry (see Section \ref{section:control_flow}).
\item In Section \ref{section:full_geometry} we study a Turing machine $\detectA^{(0)}$ which recognises the language consisting of all strings in $\{ A, B \}$ which contain an $A$. We compute explicitly the polynomial $H$ to second order and diagonalise its Hessian in local coordinates to exhibit how the geometry relates to the structure of the machine. 
\end{itemize}

\begin{figure}[htpb]
\begin{center}
\quad\xymatrix@R+3pc{
& *++[F-,]{\text{Combinatorics of error syndromes}} \ar[dr(0.8)]\\
*++[F-,]{\text{Program structure}} \ar@{=>}[rr]\ar[ur(0.8)] & & *++[F-,]{\text{Geometric structure}}
}
\end{center}
\caption{The relationship between program structure and geometry we establish works by relating both to the combinatorics of error syndromes, which are patterns of flips in bits on the description tape of a UTM which affect the output of the simulated machine.}
\label{figure:triumvirate}
\end{figure}
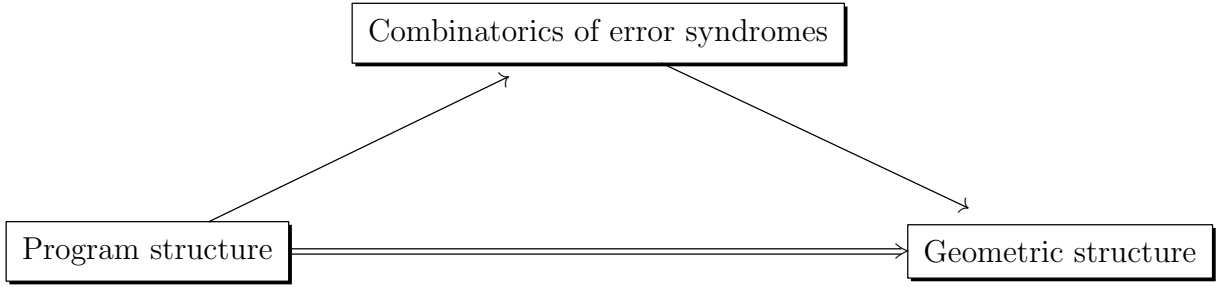

\begin{table}[hptb]
\centering
\begin{tabular}{|c|p{9cm}|c|}
\hline
\textbf{Symbol} & \textbf{Description} & \textbf{Reference} \\
\hline
$\gamma$ & Error syndrome & Definition \ref{defn:error_syndrome} \\
\hline
$\mathcal{U}$ & Staged pseudo-universal Turing machine & Section \ref{appendix:stagedpseudoutm} \\
\hline
$W$ & Parameter space of noisy Turing machines & Section \ref{section:background} \\
\hline
$d$ & Dimension of $W$ & Equation \eqref{eq:d_dim_W}\\
\hline
$W^{\texttt{code}}$ & Set of (deterministic) Turing machine codes & Section \ref{section:background} \\
\hline
$[M]$ & The code of a Turing machine $M$ & Definition \ref{def:code_of_T} \\
\hline
$\Delta \textrm{step}^t$ & Execution of a noisy Turing machine for $t$ steps & Equation \eqref{eq:smooth_relax} \\
\hline
$L(w)$ & Average negative log-likelihood function & Section \ref{section:details_model} \\
\hline
$K(w)$ & KL divergence & Section \ref{section:details_model} \\
\hline
$H(w)$ & Polynomial function comparable to $K$ & Definition \ref{definition:polynomial_H} \\
\hline
$g_{\bold{i}}(x)$ & Influence function measuring the effect of errors of pattern $\bold{i}$ on input $x$ & Definition \ref{defn:influence} \\
\hline
$A^{\bold{s}}(x)$ & Count of weight $\bold{s}$ error syndromes that cause output errors on input $x$ & Definition \ref{defn:Asfunctions} \\
\hline
$\mathbb{E}_x[-]$ & Expectation with respect to $q(x)$ & Section \ref{section:geometry} \\
\hline
$\lambda([M],q)$ & Local learning coefficient of a Turing machine $M$ & Section \ref{sec:GeometryofInductiveInference} \\
\hline
$\psi$ & Plain proof in intuitionistic linear logic & Definition \ref{def:plain} \\
\hline
$\pi$ & Linear proof (usually the linear part of $\psi$) & Definition \ref{def:plain} \\
\hline
$f^\tau_\psi$ & Polynomial computing the denotation of $\psi$ & Definition \ref{defn:f_tau} \\
\hline
$\den{\cdot}$ & Denotational semantics operator & Section \ref{section:linear_logic_setup} \\
\hline
$\mathcal{U}(x,\gamma)$ & Evaluation of $\mathcal{U}$ input $x$ and error syndrome $\gamma$ & Definition \ref{defn:uxgamma} \\
\hline
\end{tabular}
\caption{Key symbols used throughout the paper and their descriptions.}
\label{table:symbols}
\end{table}

\section{Background}\label{section:background}

\subsection{From Turing Machine to Singularity}\label{section:background_iid}

We recall the treatment of inductive inference in \cite{clift2018derivatives, clift2021geometry}, assuming familiarity with Turing machines at the level of \cite{arora2009computational}. Our conventions for Turing machines are as in \cite[\S 6.1]{clift2018derivatives}. We denote by $\Sigma$ the set of symbols and $Q$ the set of states. The set $\Sigma$ is assumed to contain some designated blank symbol $\Box$ which is the only symbol that is allowed to occur infinitely often on the tape. Often one also designates a starting state.

The \emph{configuration} of a Turing machine $M$ is an element $\big( (\sigma_u)_{u \in \mathbb{Z}}, q )$ of $\Sigma^{\mathbb{Z}} \times Q$ where $q$ is the current state and the symbol in the square in position $u$ relative to the head is $\sigma_u$ (so $\sigma_0$ is the symbol currently under the tape head, $\sigma_{-1}$ is the symbol immediately to the left of the tape head). Observe that the configuration of the tape actually lies in the smaller set of functions which are finitely supported, in the following sense:

\begin{definition}\label{defn:finsupstate} We write
\[
\Sigma^{\mathbb{Z}, \Box} = \{ f: \mathbb{Z} \lto \Sigma \l f(u) = \Box \text{ except for finitely many $u$} \}\,.
\]
\end{definition}

Given a finite alphabet $\Sigma$ and set of states $Q$ a Turing machine $M$ for this alphabet and set of states is identified with its \emph{transition function}
\begin{equation}
\delta: \Sigma \times Q \lto \Sigma \times Q \times \{ \text{L}, \text{S}, \text{R} \}\,.
\end{equation}
The interpretation is that when the machine reads the symbol $\sigma$ and is in state $q$ it writes $\sigma'$, transitions to state $q'$ and moves in direction $d$ (Left, Stay or Right) where $\delta(\sigma,q) = (\sigma', q', d)$. We sometimes write $\delta_0(\sigma,q), \delta_1(\sigma,q), \delta_2(\sigma,q)$ for $\sigma',q',d$ respectively.

The set of Turing machines with this alphabet and these states is
\begin{equation}
    W^{\texttt{code}} = \prod_{\sigma, q}\Sigma \times Q \times \{\text{L}, \text{S}, \text{R}\}
\end{equation}
where unless specified otherwise, $\sigma$ always ranges over $\Sigma$ and $q$ over $Q$. We think of $W^{\texttt{code}}$ as a set of \emph{codes} since the tuples $(\sigma, q, \sigma', q', d)$ are, when encoded on the description tape of a Universal Turing Machine (UTM) for all $\sigma,q$, a code for the Turing machine with the given transition function.

\begin{definition}
\label{def:code_of_T}
    If $M$ is a Turing machine with alphabet $\Sigma$ and set of states $Q$, then the point in $W^{\texttt{code}}$ corresponding to $M$ is the \emph{code of $M$} and is denoted $[M]$.
\end{definition}

In this paper $\Sigma, Q$ are viewed as fixed and all Turing machines have the same set of symbols and states. Thus all Turing machines have codes in $W^{\texttt{code}}$. See Appendix \ref{section:faq_realutm} for some comments on this restriction. 

\begin{definition}
\label{def:simplex}
    Given a set $Z$ we denote by $\Delta Z$ the \emph{simplex of $Z$} which is
    \begin{equation}
        \Delta Z := \big\{f: Z \lto [0,1] \mid \sum_{z \in Z}f(z) = 1 \big\}.
    \end{equation}
\end{definition}
A \emph{noisy Turing machine} is given by a function
\begin{equation}
\delta: \Sigma \times Q \lto \Delta \Sigma \times \Delta Q \times \Delta \{ \text{L}, \text{S}, \text{R} \}
\end{equation}
so the set of all noisy Turing machines is
\begin{equation}
\label{eq:noisy}
        W := \prod_{\sigma, q}\Delta \Sigma \times \Delta Q \times \Delta\{\text{L}, \text{S},\text{R}\}\,.
\end{equation}
This is a manifold with corners of dimension
\be\label{eq:d_dim_W}
d := \dim W = \sum_{\sigma, q}( |\Sigma| - 1 + |Q| - 1 + 2 ) = |\Sigma||Q|\big(|\Sigma| + |Q|\big)\,.
\ee
Again, we do not distinguish between a machine and its code, which are the points of $W$. The interpretation of a noisy Turing machine code is that when the UTM attempts to simulate the machine on an input $x$, it has some probability of \emph{read errors} when it accesses the description tape. For instance if the second component of $\delta(\sigma,q)$ is $\bold{q}' \in \Delta Q$ then when the UTM attempts to determine what state to transition the simulated machine to when it reads $\sigma$ and is in state $q$, it receives from the description tape a sample from the distribution $\bold{q}'$. That is, communication between the description tape and other parts of the UTM are a \emph{noisy channel} (see Appendix \ref{section:think_noisy} for some elaboration).\footnote{There is an existing literature on noisy Turing machines, with some variations in the definition. Unlike \cite{asarin2005noisy} we do not consider random errors in every part of the configuration of the Turing machine. For example, we do not introduce random errors on the tape squares. Our noise is restricted to determining the tuples in the description to execute.}

Different ways of thinking about the operation of the UTM in the presence of this noise channel lead to different \emph{smooth relaxations} \cite{clift2018derivatives}, by which we mean in particular the specification of smooth functions $\Delta \textrm{step}^t$ for all integers $t \ge 1$ which make the diagram
\begin{equation}\label{eq:smooth_relax}
    \xymatrix@C+3pc{
    \Sigma^* \times W \ar[r]^-{\Delta \textrm{step}^t} & \Delta Q\\
    \Sigma^* \times W^{\texttt{code}} \ar[u]\ar[r]_-{\textrm{step}^t} & Q \ar[u]
    }
\end{equation}
commute, where the bottom row is the usual execution of the UTM for $t$ steps of the simulated machine, and the vertical maps are the canonical inclusions. We read
\begin{equation}\label{eq:prop_uncer}
    \Delta \textrm{step}^t(x,w) = \sum_y \Delta \textrm{step}^t(x,w)_y \cdot y \in \Delta Q
\end{equation}
as saying that when the UTM executes the noisy code $w$ on a (deterministic) input $x \in \Sigma^*$ for $t$ steps, the distribution over the state that results from propagating this uncertainty about the code to uncertainty about the final state of the simulated machine is $\{ \textrm{step}^t(x,w)_y \}_{y \in Q}$. If there is no uncertainty in the code, there is no uncertainty in the state. We can more generally consider the uncertainty that results about the state of the tape squares \cite{clift2018derivatives, waring2021geometric}, but for simplicity we restrict ourselves as in \cite{clift2021geometry} to considering the propagation of uncertainty in the code to uncertainty in the state. The precise smooth relaxation we have in mind is recalled in Section \ref{section:encoding_utm_step}.
\\

This becomes a problem of inductive inference \cite[\S 7]{clift2018derivatives} if we observe the final state $y$ of the execution of some unknown machine when run on some inputs $x \in \Sigma^*$, and we wish to know \emph{which Turing machine} explains these observations. If our model of the propagation of uncertainty is \eqref{eq:prop_uncer} then the conditional distribution of the state $y \in Q$ given $x,w$ is\footnote{For expository purposes we give here a simplified form of the model, see Section \ref{section:details_model} for details.}
\begin{equation}
\label{eq:posterior}
p(y|x,w) = \Delta \textrm{step}^t(x,w)
\end{equation}
where we assume $t$ is fixed; for a more general class of synthesis problem see \cite{waring2021geometric} and also Appendix \ref{section:faq_timesteps}. According to Bayes' rule if we observe a set of inputs and final states $D_n = \{ (x_i, y_i) \}_{i=1}^n$ the probability $p(w|D_n)$ of a given noisy code $w$ is
\begin{equation}
p(w|D_n) = \frac{p(D_n|w) \varphi(w)}{p(D_n)} = \frac{1}{Z_n} \varphi(w) \prod_{i=1}^n p(y_i | x_i, w)
\end{equation}
where $\varphi(w)$ is our prior belief in codes, and $Z_n := p(D_n)$ is a normalising constant which is independent of $w$. Thus the degree of our belief in $w$ is controlled by the quantity
\begin{equation}
L_n(w) = - \sum_{i=1}^n \frac{1}{n} \log p(y_i|x_i,w)
\end{equation}
which is called the \emph{negative log-likelihood}, via the relationship
\begin{align}
p(w|D_n) &= \frac{1}{Z_n} \exp(-n L_n(w)) \varphi(w)\,, \label{eq:posterior_defn}\\
Z_n &= \int \exp(-n L_n(w) ) \varphi(w) dw\,. \label{eq:zn_defn}
\end{align}
Our philosophical position here is that ultimately the explanations we wish to consider are the (honest) Turing machine codes $w \in W^{\texttt{code}}$, but in executing these codes with the UTM we cannot avoid some small (perhaps extremely small) possibility of this execution being corrupted by read errors of the kind modelled by small perturbations within $W$.

Given Turing machines $M_1, M_2$ we compare codes $[M_1], [M_2] \in W^{\texttt{code}}$ by comparing the conditional probability that, given $D_n$, the explanation lies in a small neighbourhood $\mathcal{W}_i$ of $[M_i]$. These probabilities are
\begin{equation}\label{eq:pmathcalw}
p(\mathcal{W}_i \mid D_n) = \frac{1}{Z_n} \int_{\mathcal{W}} \exp(-n L_n(w) ) \varphi(w) dw\,.
\end{equation}
It turns out that, asymptotically as $n$ becomes large, this comparison is determined by the geometry at these points of the \emph{average negative log-likelihood}
\[
L = - \sum_{x \in I} q(x) \sum_{y \in Q} q(y|x) \log p(y|x,w)\,.
\]
We explain how this comparison works in detail in Section \ref{section:free_energy}. For now we take this as motivation to introduce the mathematical concept which captures the behaviour of a function at a point:

\begin{definition}\label{defn:germ}
A \emph{germ} of continuous (resp. analytic) real-valued functions on a space $X$ is an equivalence class of pairs $(U, f)$ where $U$ is an open neighbourhood of $x$, $f: U \lto \mathbb{R}$ is continuous (resp. analytic) and we say two such pairs are equivalent $(U, f) \sim (V, g)$ if there exists an open neighbourhood $Z$ of $x$ with $Z \subseteq U \cap V$ and $f(z) = g(z)$ for all $z \in Z$.
\end{definition}

This provides the basic setting where we associate to a Turing machine $M$ the germ $([M], L)$ of an analytic function, and some property of this germ is relevant to the question of whether or not we should prefer $M$ as an explanation of the observed data, relative to some other candidate Turing machine.
\\

We only consider \emph{deterministic} synthesis problems for which there is a \emph{true function} $y = y(x)$ from $I$ to $Q$ and the true conditional distribution $q(y|x) = \delta_{y = y(x)}$ puts all probability mass on this correct output. We suppose that two states $\{ 0, 1\} \subseteq Q$ have been identified (which we could read as for example $\reject, \accept$) and that $y(x) \in \{0,1\}$ for all $x \in I$. For $z \in \{0,1\}$ we sometimes write $\overline{z} = z + 1$ modulo $2$. 

\begin{definition}
Given a Turing machine $M$ we let
\[
M(x) := \textrm{step}^t(x,[M]) \in Q
\]
denote the state after executing $M$ for $t$ steps on input $x$.\footnote{We may also assume there is some common initialisation state $\underline{\text{init}} \in Q$.}
\end{definition}

\begin{definition} A Turing machine $M$ is a \emph{classical solution} if $M(x) = y(x)$ for all $x \in I$.
\end{definition}

\begin{remark} By commutativity of \eqref{eq:smooth_relax} to be a classical solution is equivalent to
\[
\Delta\textrm{step}^t\big(x,[M]\big)_{y(x)} = 1\,, \qquad \forall x \in I\,.
\]
\end{remark}

\begin{definition}\label{definition:polynomial_H} We define
\begin{equation}
H(w) = \sum_{x} q(x) \, p\big( y \neq y(x) \mid x, w\big)^2
\end{equation}
where $y = y(x)$ is the true output.
\end{definition}

To be explicit
\be\label{eq:model_from_stept}
p\big( y \neq y(x) \mid x, w\big) = \sum_{z \in Q, z \neq y(x)} \Delta\textrm{step}^t(x,w)_{z}\,.
\ee
For the smooth relaxation $\Delta \textrm{step}^t$ we use in this paper, the function $H$ is a polynomial function of $w$. By construction the zeros of $H$ among the classical codes $W^{\texttt{code}}$ are precisely the classical solutions. One can show \cite[Lemma D.2]{clift2021geometry} that the zeros of $H$ all lie on the boundary of the probability simplex, but we will not need that result here.

When things are set up correctly (see Section \ref{section:details_model} below) the average log-likelihood $L(w)$ of the model is comparable to $H(w)$ up to an additive constant and so it suffices for many purposes to study the geometry of the polynomial function $H(w)$.

\begin{definition} The \emph{algebraic singularity} associated to a classical solution $M$ is the function germ $([M], H)$. This singularity depends on the UTM, the smooth relaxation \eqref{eq:smooth_relax} and the time cutoff $t$.
\end{definition}

\subsection{Details of the Model}\label{section:details_model}

For somewhat technical reasons the model we actually use is a slight variation on \eqref{eq:posterior}. The reader is advised to skip this section on a first reading. Before giving the complete definition of the model, here are the two issues with the naive definition in \eqref{eq:posterior}:

\begin{itemize}
    \item \textbf{Correct vs Incorrect.} It is convenient to read the prediction from a single tape square of the UTM. Since we focus in this paper on the case of recognising languages, this should be a state of the simulated machine, as in \eqref{eq:posterior}. However, it is unnecessary to restrict the machine to two states \emph{during operation}. So we simply measure the probability assigned by the code to the \emph{correct} final state and everything else.
    \item \textbf{Moving off the Boundary.} It is convenient to analyse a polynomial $H(w)$ that is comparable to the (analytic, but not polynomial) KL divergence $K(w)$. This is possible (see Appendix \ref{sec:KL_Squared_Distance}) but requires that we move the true distribution $q$ and the model $p$ ``off the boundary'' of the probability simplex $\Delta Q$. From a Bayesian point of view this can be thought of as insisting that zero probabilities are not realistic, so we introduce some (very small) chance of error.
\end{itemize}

Given the setting of the previous section, we define
\[
p^{\mathbb{Z}/2}( z | x, w ) = \begin{cases} p( z | x, w) & z = y(x)\,\\
\sum_{y \in Q, y \neq y(x)} p(y|x,w) & z \neq y(x)
\end{cases}
\]
where $p(y|x,w)$ is as in \eqref{eq:posterior}. That is, $p^{\mathbb{Z}/2}(z | x, w)$ is a distribution over $z \in \{0,1\}$ and
\[
p^{\mathbb{Z}/2}( y(x) | x, w ) = p(y(x) | x, w)\,.
\]
For $0 < \mu < 1$ let $\varepsilon_\mu: \Delta Q \lto \Delta Q$ move a probability distribution $\bold{x}$ a fraction $\mu$ towards the barycenter $\bold{b}$, i.e. the maximum entropy distribution (see Appendix \ref{section:boundary_away})
\[
\varepsilon_\mu(\bold{x}) = (1-\mu) \bold{x} + \mu \bold{b}\,.
\]
We fix some small value of $\mu$, perhaps so small that with high probability the distinction from zero is never noticed in our experiments.  The true distribution and statistical model we use in this paper are
\[
q_\mu(y|x) = \varepsilon_\mu q(y|x)\,, \qquad p^{\mathbb{Z}/2}_\mu(y|x,w) = \varepsilon_\mu p^{\mathbb{Z}_2}(y|x,w)\,.
\]
We let $L_\mu$ be the average log-likelihood defined with $q_\mu, p_\mu$.

\begin{definition}\label{defn:comparable} We say that two analytic functions $f(w),g(w)$ on an open subset $U \subseteq \mathbb{R}^d$ are \emph{comparable} (or \emph{comparable up to constants}) if there exist $c, d > 0$ such that
\[
c \cdot g(x) \le f(x) \le d \cdot g(x)\,, \qquad \forall w \in U\,.
\]
It is easy to check that this is an equivalence relation on the set of analytic functions on $U$, and we write $f \sim g$ when $f,g$ are comparable.
\end{definition}

Recall that we say the true distribution $q_\mu$ is \emph{realisable} by the model $p_\mu$ if there exists $w \in W$ with $p_\mu(y|x,w) = q_\mu(y|x)$ for all $x \in I, y \in \{0,1\}$. But since $\varepsilon_\mu$ is injective that is just to say that $p^{\mathbb{Z}/2}(y|x,w) = q(y|x)$ for all $x,y$ which is to say that $p(y(x)|x,w) = 1$ for all $x \in I$. Since this is independent of $\mu$ we will simply say that the \emph{true function is realisable}. In particular this is true if any classical solution exists.

\begin{lemma}\label{lemma:comparable_H_K_L} If the true function is realisable by $w^*$ then the KL divergence
\be
K_\mu(w) = \dkl( q_\mu \, || \, p^{\mathbb{Z}/2}_\mu(w) )
\ee
is comparable to the polynomial $\tfrac{1}{2} H(w)$ in some open neighbourhood of $w^*$. In particular, the average log-likelihood $L_\mu(w)$ is comparable to $\tfrac{1}{2} H(w) + L^\mu_0$ where $L^\mu_0$ is a constant, the entropy of the true distribution $q_\mu$.
\end{lemma}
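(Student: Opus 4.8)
The plan is to reduce the comparison of $K_\mu$ with $\tfrac12 H$ to a standard local fact about KL divergence between finitely-supported distributions, namely that near the diagonal (where the two distributions agree) the KL divergence is comparable to the squared Euclidean distance. First I would unpack the definitions. Since $q_\mu(y\mid x) = \varepsilon_\mu q(y\mid x)$ and $p^{\mathbb{Z}/2}_\mu(y\mid x,w) = \varepsilon_\mu p^{\mathbb{Z}/2}(y\mid x,w)$, both are distributions over $\{0,1\}$ for each $x \in I$, and the barycenter shift $\varepsilon_\mu$ keeps them bounded away from the boundary of $\Delta\{0,1\}$ uniformly in $w$: every component lies in $[\mu/2, 1-\mu/2]$. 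Write $K_\mu(w) = \sum_{x \in I} q(x)\, \dkl\big(q_\mu(\cdot\mid x)\,\|\,p^{\mathbb{Z}/2}_\mu(\cdot\mid x,w)\big)$, a finite sum of nonnegative analytic functions. At $w^*$ we have $p^{\mathbb{Z}/2}(y(x)\mid x,w^*) = 1$ by realisability (using injectivity of $\varepsilon_\mu$ as noted just before the lemma), so each summand vanishes at $w^*$.

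The key step is the pointwise estimate. For each fixed $x$, consider the single scalar $u = u_x(w) := p^{\mathbb{Z}/2}(y\neq y(x)\mid x,w) = \sum_{z\in Q,\,z\neq y(x)}\Delta\mathrm{step}^t(x,w)_z$, which is a polynomial in $w$ taking values in $[0,1]$, with $u_x(w^*) = 0$. The distribution $p^{\mathbb{Z}/2}_\mu(\cdot\mid x,w)$ is then the two-point distribution $\big((1-\mu)(1-u) + \tfrac\mu2,\ (1-\mu)u + \tfrac\mu2\big)$ and $q_\mu(\cdot\mid x)$ is $\big(1-\tfrac\mu2,\ \tfrac\mu2\big)$. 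A one-variable Taylor expansion of the function $u \mapsto \dkl\big(q_\mu(\cdot\mid x)\,\|\,p^{\mathbb{Z}/2}_\mu(\cdot\mid x,w)\big)$ around $u=0$ shows it vanishes to first order (the gradient of KL in the second argument vanishes at equality) and has strictly positive second derivative on the compact interval $u \in [0,1]$ (because the Fisher information of a two-point distribution with components in $[\mu/2,1-\mu/2]$ is bounded above and below by positive constants depending only on $\mu$). Hence there are constants $0 < a_\mu \le b_\mu$, depending only on $\mu$, with
\[
a_\mu\, u_x(w)^2 \;\le\; \dkl\big(q_\mu(\cdot\mid x)\,\|\,p^{\mathbb{Z}/2}_\mu(\cdot\mid x,w)\big) \;\le\; b_\mu\, u_x(w)^2
\]
for all $w \in W$ and all $x \in I$. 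Multiplying by $q(x)$ and summing over the finite set $I$ gives $a_\mu\, \tfrac12 H(w) \le K_\mu(w) \le b_\mu\, \tfrac12 H(w)$, using that $H(w) = \sum_x q(x)\, p\big(y\neq y(x)\mid x,w\big)^2 = \sum_x q(x)\, u_x(w)^2$ from Definition \ref{definition:polynomial_H} and \eqref{eq:model_from_stept} (note $p^{\mathbb{Z}/2}(y\neq y(x)\mid x,w) = p(y\neq y(x)\mid x,w)$ by construction of $p^{\mathbb{Z}/2}$). This establishes comparability of $K_\mu$ and $\tfrac12 H$ on all of $W$, hence certainly in an open neighbourhood of $w^*$.

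For the last sentence: by the standard decomposition, the average log-likelihood satisfies $L_\mu(w) = K_\mu(w) + L^\mu_0$ where $L^\mu_0 = -\sum_{x\in I} q(x)\sum_{y} q_\mu(y\mid x)\log q_\mu(y\mid x)$ is the entropy of $q_\mu$, a constant independent of $w$; substituting the comparability just proved gives the claim. The main obstacle I anticipate is purely the bookkeeping around $\varepsilon_\mu$: one must check carefully that after the barycenter shift the relevant second derivative of KL is bounded away from zero \emph{uniformly in $w$} (this is exactly what moving off the boundary buys us, and is why the lemma is stated for $p^{\mathbb{Z}/2}_\mu$ rather than the boundary-touching $p^{\mathbb{Z}/2}$), and that the comparison constants can be taken independent of $x$ since $I$ is finite. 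No genuinely hard analysis is involved once the problem is localised to the single scalar $u_x(w)$.
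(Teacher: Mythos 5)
Your proof is correct, but it takes a genuinely different route from the paper's. The paper factors the argument through a general-purpose comparison lemma (Lemma \ref{lemma:H_comparable_K} in Appendix \ref{sec:KL_Squared_Distance}): for arbitrary finitely-supported distributions with full support, the KL divergence is comparable to the weighted sum of squared differences $H_\mu(w)$, proved via the elementary bounds $c\cdot\tfrac12(u-1)^2 \le -\log u + u - 1 \le d\cdot\tfrac12(u-1)^2$ applied to the ratios $u = p_\mu/q_\mu$; since those bounds only hold for $u$ near $1$, the comparison is obtained only on a neighbourhood of $w^*$ (Remark \ref{remark:realisable_app}). A separate computation (Appendix \ref{section:boundary_away}) then identifies $\tfrac12 H_\mu$ with $(1-\mu)^2 H$. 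You instead exploit the binary structure of $p^{\mathbb{Z}/2}_\mu$ directly: everything collapses onto the single scalar $u_x(w)$, the derivative of the per-input KL in $u$ vanishes at $u=0$ because the variation $(1-\mu)(-1,+1)$ is tangent to the simplex, and the second derivative $(1-\mu)^2\bigl(q_0/p_0^2 + q_1/p_1^2\bigr)$ is bounded above and below by positive constants uniformly for $u\in[0,1]$ precisely because $\varepsilon_\mu$ keeps all components in $[\mu/2, 1-\mu/2]$. Taylor's theorem with integral remainder then gives $a_\mu u_x(w)^2 \le \dkl \le b_\mu u_x(w)^2$ on all of $W$. Your version is therefore \emph{stronger} --- comparability holds globally on $W$ with constants depending only on $\mu$, not merely near $w^*$ --- and is self-contained, whereas the paper's version is local but modular: Lemma \ref{lemma:H_comparable_K} applies verbatim to models with more than two output classes, which is why the paper states it in that generality. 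One cosmetic point: summing your displayed inequality against $q(x)$ yields $a_\mu H(w) \le K_\mu(w) \le b_\mu H(w)$ rather than $a_\mu\,\tfrac12 H \le K_\mu \le b_\mu\,\tfrac12 H$; the factor of two is absorbed into the constants and is immaterial for comparability.
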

\begin{proof}
By Lemma \ref{lemma:H_comparable_K} applied to $q_\mu, p_\mu$ we find that $K_\mu(w)$ is comparable to $\tfrac{1}{2} H_\mu(w)$ in some neighbourhood of $w^*$, where $H_\mu(w)$ is defined by
\[
H_\mu(w) = \sum_{x \in I} q(x) \sum_{z \in \mathbb{Z}/2} \Big( p^{\mathbb{Z}/2}_\mu( z | x, w) - q_\mu(z|x) \Big)^2\,.
\]
By Appendix \ref{section:boundary_away} this is equal to
\begin{align*}
\tfrac{1}{2} (1-\mu)^2 &\sum_{x \in I} q(x)\Big[ \Big( p^{\mathbb{Z}/2}( y(x) | x, w) - 1 \Big)^2 + p^{\mathbb{Z}/2}\big( \overline{y(x)} \mid x, w\big)^2 \Big]\\
&= (1-\mu)^2 \sum_{x \in I} q(x) \, p^{\mathbb{Z}/2}\big( \overline{y(x)} \mid x, w\big)^2
\end{align*}
which is $(1-\mu)^2 H(w)$, proving the claim.
\end{proof}

Note that the zeros of $K_\mu(w)$ are by injectivity of $\varepsilon_\mu$ the same those of $\dkl( q || p(w) )$. In the rest of this document we write $K, L$ for $K_\mu, L_\mu$.

\begin{definition} The \emph{analytic singularities} associated to a classical solution $M$ are the function germs $([M], K)$, $([M], L)$. These singularities depends on the UTM, the smooth relaxation \eqref{eq:smooth_relax}, the time cutoff $t$ and $\mu > 0$.
\end{definition}

Note that the germ is independent of $\mu$ up to the comparability relation. Since $K, L$ differ by a constant there is no essential difference and we use both in the sequel.

\subsection{Learning Coefficients}\label{section:background_llc}

Given a compact semi-analytic subset $\Omega \subseteq \mathbb{R}^d$, an analytic real-valued function $f$ on $\Omega$ and a smooth positive function $\varphi: \Omega \lto \mathbb{R}$ we consider the zeta function
\be\label{eq:zeta_function}
\zeta(z) = \int_{\Omega} |f(w)|^z \varphi(w) dw\,.
\ee
This defines a holomorphic function on the half-plane $\operatorname{Re}(z) > 0$ in $\mathbb{C}$ and provided $f$ vanishes somewhere on $\Omega$ the zeta function can be analytically continued to a memorphic function on all of $\mathbb{C}$ with poles on the negative real line which are rational numbers and have a greatest element \cite[\S 4.5]{watanabe2018}, \cite[Corollary 3.10]{lin2011algebraic}. If $-\lambda$ is the largest pole then $\lambda$ is called the \emph{real log canonical threshold} of the pair $f, \varphi$ and is denoted $\operatorname{RLCT}_W(f ; \varphi)$.

In the setting of singular learning theory \cite{watanabeAlgebraicGeometryStatistical2009} we specify a triple $(p, q, \varphi)$ consisting of a model $p(y|x,w)$ a true distribution $q(y|x), q(x)$ and a prior distribution $\varphi(w)$ over some semi-analytic subset $W \subseteq \mathbb{R}^d$ which we refer to as parameter space. These have been defined in Section \ref{section:background}. Recall that $q(x)$ is a distribution supported on $I \subseteq \Sigma^*$ and we assume $q(y|x)$ is associated to what \cite{clift2021geometry} calls a \emph{deterministic synthesis problem}, that is, it is the distribution associated to a function $y = y(x)$. The prior $\varphi(w)$ does not play a significant role in this paper and may be chosen to be a (normalised) Lebesgue measure.

Let $K: W \lto \mathbb{R}$ denote the KL divergence associated to this choice of triple
\be
K(w) = \int q(y|x) \log\frac{q(y|x)}{p(y|x,w)} q(x) dx\,.
\ee
Note that with
\begin{align*}
L(w) = -\int q(y|x) \log p(y|x,w) q(x) dx
\end{align*}
we have $K(w) = - H(q) + L(w)$ where $H(q)$ is the entropy.

Taking $f = K$ in \eqref{eq:zeta_function} we obtain the real log canonical threshold, which is also called the \emph{learning coefficient} \cite[\S 7]{watanabe2018} and denoted
\[
\lambda = \operatorname{RLCT}_W(K ; \varphi)\,.
\]
By \cite[Proposition 3.9]{lin2011algebraic} this \emph{global} learning coefficient is the minimum
\[
\lambda = \min_{x \in \mathcal{V}} \operatorname{RLCT}_{W_x}( K ; \varphi )
\]
over the set $\mathcal{V} = \{ w \in W \l K(w) = 0 \}$. Here $W_x$ is a compact semi-analytic neighbourhood of $x \in \mathcal{V}$, which by the proof of \cite[Proposition 3.9]{lin2011algebraic} may be chosen sufficiently small that the RLCT is independent of the choice of neighbourhood. When $x = [M]$ is the code of a Turing machine which is a classical solution we call this the \emph{local learning coefficient} of $M$ associated to the synthesis problem encoded by $q$.


\begin{definition}
The \emph{local learning coefficient} of $M$ for the synthesis problem $q$ is 
\be
\lambda([M], q) := \operatorname{RLCT}_{W_{[M]}}( K ; \varphi )
\ee
where $W_{[M]}$ is a sufficiently small compact semi-analytic neighbourhood of $[M]$.
\end{definition}

\begin{lemma} Let $H$ be the polynomial function of Definition \ref{definition:polynomial_H}. Then
\[
\operatorname{RLCT}_{W_{[M]}}( K ; \varphi ) = \operatorname{RLCT}_{W_{[M]}}( H ; \varphi)\,.
\]
\end{lemma}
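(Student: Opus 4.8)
The plan is to reduce the equality of RLCTs to the comparability of $K$ and $H$ as analytic function germs at $[M]$, together with the basic fact that the RLCT of a pair $(f,\varphi)$ depends only on the comparability class of $f$. The key input is Lemma \ref{lemma:comparable_H_K_L}: since we are in the situation where the true function is realisable (any classical solution $M$ gives a realising parameter $w^* = [M]$), that lemma tells us $K = K_\mu$ is comparable to $\tfrac{1}{2}H$ in some open neighbourhood of $[M]$. Shrinking the compact semi-analytic neighbourhood $W_{[M]}$ if necessary — which is harmless, since by the cited proof of \cite[Proposition 3.9]{lin2011algebraic} the local RLCT stabilises once the neighbourhood is small enough — we may assume $c \cdot \tfrac{1}{2}H(w) \le K(w) \le d \cdot \tfrac{1}{2}H(w)$ for all $w \in W_{[M]}$, with constants $c,d > 0$.

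Next I would invoke the monotonicity and scale-invariance of the RLCT under comparability. Concretely, from $|K(w)|^z = |\tfrac{1}{2}H(w)|^z \cdot |K(w)/(\tfrac12 H(w))|^z$ and the fact that the ratio $K/(\tfrac12 H)$ is bounded above and below by positive constants on $W_{[M]}$, the zeta functions
\[
\zeta_K(z) = \int_{W_{[M]}} |K(w)|^z \varphi(w)\, dw\,, \qquad \zeta_H(z) = \int_{W_{[M]}} \big|\tfrac{1}{2}H(w)\big|^z \varphi(w)\, dw
\]
satisfy $\zeta_H(z) \le C_1 \zeta_K(z)$ and $\zeta_K(z) \le C_2 \zeta_H(z)$ for real $z > 0$, for suitable constants depending on $c,d$ and on a bound for $z$ on compact subsets of the right half-plane. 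Since both zeta functions are holomorphic on $\operatorname{Re}(z) > 0$ and extend meromorphically with poles only on the negative real axis, these two-sided bounds force the largest pole of $\zeta_K$ and of $\zeta_H$ to coincide: if the largest pole of $\zeta_H$ were strictly larger (less negative) than that of $\zeta_K$, the inequality $\zeta_H \le C_1 \zeta_K$ near that pole would be violated, and symmetrically. Finally, the factor $\tfrac{1}{2}$ is irrelevant: replacing $H$ by $\tfrac12 H$ only rescales $\zeta_H(z)$ by the constant $2^{-z}$, which is nonvanishing and holomorphic on all of $\mathbb{C}$ and hence does not move any poles. This gives $\operatorname{RLCT}_{W_{[M]}}(K;\varphi) = \operatorname{RLCT}_{W_{[M]}}(\tfrac12 H;\varphi) = \operatorname{RLCT}_{W_{[M]}}(H;\varphi)$.

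I expect the only genuine subtlety to be the bookkeeping around the neighbourhood of validity: Lemma \ref{lemma:comparable_H_K_L} only asserts comparability on \emph{some} neighbourhood of $[M]$, whereas the definition of $\lambda([M],q)$ uses a \emph{sufficiently small} compact semi-analytic $W_{[M]}$. One must note that "sufficiently small" for the RLCT to be well-defined (i.e. independent of the neighbourhood, per \cite[Proposition 3.9]{lin2011algebraic}) can be taken compatible with "small enough to lie inside the comparability neighbourhood", so there is a common choice of $W_{[M]}$ on which both the comparability bounds and the stability of the RLCT hold. Everything else — the behaviour of RLCTs under multiplication by bounded positive functions and by nonzero constants — is a standard property of the real log canonical threshold and can be cited from \cite[\S 4.5]{watanabe2018} or proved directly from the zeta-function description as sketched above. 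No nontrivial resolution of singularities is needed for this particular lemma; it is purely a comparability argument.
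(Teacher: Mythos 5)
Your overall strategy is exactly the paper's: the proof there is a one-line appeal to Lemma \ref{lemma:comparable_H_K_L} together with the standard fact, cited from \cite[\S 7]{watanabeAlgebraicGeometryStatistical2009}, that comparable functions have the same RLCT. Your bookkeeping about shrinking $W_{[M]}$ to fit inside the comparability neighbourhood, and the observation that the factor $\tfrac12$ only rescales the zeta function by $2^{-z}$ and so moves no poles, are both fine.

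The one genuine problem is in your self-contained justification that comparability preserves the largest pole. You establish $\zeta_H(z) \le C_1\,\zeta_K(z)$ and $\zeta_K(z)\le C_2\,\zeta_H(z)$ only for real $z>0$, and then argue that ``the inequality near that pole would be violated'' --- but the poles lie on the negative real axis, where your inequalities say nothing. An inequality between two holomorphic functions on the positive real axis places no constraint on where their meromorphic continuations acquire poles. The repair is standard and short: for real $z<0$ the pointwise bounds reverse, giving $d^{z}\,\bigl|\tfrac12 H(w)\bigr|^{z}\le |K(w)|^{z}\le c^{z}\,\bigl|\tfrac12 H(w)\bigr|^{z}$ with $c^z, d^z$ finite and positive, so the two zeta integrals converge for exactly the same real $z<0$; since $-\lambda$ is the infimum of the set of real $z$ for which the zeta integral converges (as one sees in normal crossing coordinates on a resolution), the two thresholds coincide. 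Alternatively, you could simply invoke the paper's own Appendix \ref{sec:geom_comparable}, which shows that comparable functions pull back to generators of the same ideal in each local ring on a resolution of singularities and hence share every invariant, including the RLCT, that factors through the resolution.
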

\begin{proof}
Follows from Lemma \ref{lemma:comparable_H_K_L} and \cite[\S 7]{watanabeAlgebraicGeometryStatistical2009}.
\end{proof}

\begin{definition} We define the \emph{Newton polyhedron} of $M$ to be
\be
\mathcal{P}(M) = \operatorname{conv}\Big\{ \bold{k} \in \mathbb{N}^d \,\Big|\, \frac{\partial^{|\bold{k}|}}{ \partial w_1^{k_1} \cdots \partial w_d^{k_d}} H \Bigr|_{w = [M]} \neq 0 \Big\}.
\ee
The Newton polyhedron of $M$ is the convex hull of the set of monomials $w^{\bold{k}}$ represented as points $\bold{k}$ in $\mathbb{R}^d$ with nonzero coefficients in the Taylor series expansion of $H$ at $[M]$. The \emph{distance} $l = l(M)$ of the Newton polyhedron is
\be
l(M) = \inf\{ s \ge 0 \l (s,s,\ldots,s) \in \mathcal{P}(M) \}\,.
\ee
\end{definition}

Note that by Theorem \ref{theorem:partial_formula_explicit} the Newton polyhedron encodes some information about the error syndromes with weight $\le \bold{k}$ which cause output errors in the execution of $M$.

\begin{proposition}\label{prop:upper_bound_RLCT} $\lambda([M], q) \le 1/l(M)$.
\end{proposition}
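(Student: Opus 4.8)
The plan is to invoke the standard Newton-polyhedron upper bound on the real log canonical threshold, applied to the polynomial $H$ whose germ at $[M]$ controls the learning coefficient by the preceding Lemma (which gives $\operatorname{RLCT}_{W_{[M]}}(K;\varphi) = \operatorname{RLCT}_{W_{[M]}}(H;\varphi)$). The key classical input is the result, going back to Varchenko and recorded in the singular-learning-theory literature (see \cite{watanabe2018} or \cite{lin2011algebraic}), that for an analytic function $f$ with $f([M]) = 0$ and Taylor expansion supported on the monomial exponents appearing in $\mathcal{P}(M)$, one has $\operatorname{RLCT}_{W_{[M]}}(f;\varphi) \le 1/l$, where $l$ is the distance of the Newton polyhedron as defined above, \emph{provided} the measure $\varphi$ is bounded below by a positive constant near $[M]$ (which holds here since $\varphi$ may be taken to be Lebesgue measure). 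So at the level of citing machinery the proof is short.

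The steps I would carry out, in order: first, translate coordinates so that $[M]$ is the origin, so that the Taylor expansion of $H$ at $[M]$ becomes an honest power series $H(w) = \sum_{\bold{k}} c_{\bold{k}} w^{\bold{k}}$ with $c_{\bold{k}} \neq 0$ exactly on the generating set of $\mathcal{P}(M)$ — note $H$ is a polynomial, so this is a finite sum and there are no convergence subtleties, and $H(0) = 0$ because $M$ is a classical solution (Definition \ref{definition:polynomial_H} together with the Remark characterising classical solutions). Second, observe that along the diagonal ray $w = (u, u, \ldots, u)$ with $u > 0$ small, the restriction $H(u,\ldots,u)$ is a one-variable polynomial in $u$ whose lowest-order exponent is exactly $m := \min\{ |\bold{k}| \text{ weighted diagonally} \}$; more precisely the relevant exponent governing the diagonal rate of vanishing is $l = l(M)$, the smallest $s$ with $(s,\ldots,s) \in \mathcal{P}(M)$. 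Third, use the elementary fact that if $g(u)$ vanishes to order $r$ at $u=0$ then $\int_0^\epsilon |g(u)|^z \, du$ has a pole at $z = -1/r$ coming from $\int_0^\epsilon u^{rz}\,du$, and then lift this one-dimensional computation to the full integral by restricting the region of integration to a diagonal tube, which only decreases the RLCT; this is the content of the Newton-polyhedron bound and is where one cites \cite[\S 4.5]{watanabe2018}. Fourth, conclude $\lambda([M],q) = \operatorname{RLCT}_{W_{[M]}}(H;\varphi) \le 1/l(M)$.

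The main obstacle — really the only non-bookkeeping point — is justifying that the Newton polyhedron in the sense of \emph{non-vanishing Taylor coefficients of $H$} is the correct object to feed into Varchenko's bound. The classical statement is usually phrased for the Newton polyhedron of a single analytic function germ, and one must be careful that (i) the monomials with nonzero coefficient genuinely contribute — there is no cancellation along the diagonal ray since all nonzero $c_{\bold{k}}$ with $|\bold{k}|$ minimal on the diagonal are of the same sign, as $H$ is a sum of squares by Definition \ref{definition:polynomial_H} — and (ii) the neighbourhood $W_{[M]}$ is small enough that the RLCT is computed by the germ, which is exactly the hypothesis built into the definition of $\lambda([M],q)$. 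Point (i) is the crux: since $H(w) = \sum_x q(x)\, p(y \neq y(x) \mid x, w)^2$ is manifestly nonnegative, its diagonal restriction is a nonnegative polynomial in $u$, so its lowest-order term has positive coefficient and vanishing order exactly $2 l'$ where $l'$ relates to $l(M)$ through the squaring; tracking this factor of two correctly (and confirming it is absorbed consistently, so that the bound reads $1/l(M)$ and not $1/(2l(M))$ or similar) is the one place genuine care is needed. Everything else is a citation.
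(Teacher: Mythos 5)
Your top-level route is the same as the paper's: reduce to $H$ via the preceding lemma and then invoke the standard Newton-polyhedron upper bound on the RLCT (the paper routes this through \cite[Propositions 4.3, 4.8 and Theorem 4.18]{lin2011algebraic}, replacing $\langle H \rangle$ by the monomial ideal $\operatorname{mon}(\langle H \rangle)$ generated by the $w^{\bold{k}}$ with nonvanishing Taylor coefficient; you cite the equivalent Varchenko-type statement). As a citation-level proof this is fine, and your factor-of-two worry is moot: $\mathcal{P}(M)$ and $l(M)$ are defined directly from the Taylor coefficients of $H$ itself, so there is no squaring left to track.

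However, the elementary sketch you offer of the cited bound would not prove it, and the step you single out as ``the crux'' is not where the content lies. First, restricting the domain of integration can only \emph{increase} the RLCT, not decrease it: shrinking $\Omega$ shrinks $\int_\Omega |H|^{-c}\varphi\,dw$ and so enlarges the set of $c$ for which it converges. The tube argument works because divergence over the tube forces divergence over the whole neighbourhood, giving $\lambda_\Omega \le \lambda_{\text{tube}}$; as literally stated, your inequality runs the wrong way and yields no conclusion. Second, an isotropic tube around the diagonal only sees the multiplicity $m = \min\{|\bold{k}| : c_{\bold{k}} \neq 0\}$ and yields the weaker bound $\lambda \le d/m$. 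Since $l(M) \ge m/d$, with strict inequality in general (e.g. for $(w_1^2+w_2)^2$ one has $d/m = 1$ but $1/l = 3/4$), recovering $1/l(M)$ requires the anisotropic scaling adapted to the face of the Newton polyhedron met by the diagonal; that is the actual content of Varchenko's argument and of \cite[Theorem 4.18]{lin2011algebraic}. Relatedly, ``no cancellation since $H$ is a sum of squares'' is false: the lowest homogeneous part of a nonnegative polynomial need not have coefficients of one sign and can vanish identically on the diagonal, as $(w_1-w_2)^2$ shows. None of this breaks the proof if you lean entirely on the citation, but the supporting narrative should be repaired or dropped.
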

\begin{proof}
Let $I = \langle H \rangle$ be the ideal generated by $H$ in the ring of germs of analytic functions at $[M]$ (let us change coordinates so this is at the origin). Then using the notation of \cite[Proposition 4.3]{lin2011algebraic} it follows from \cite[Proposition 4.8]{lin2011algebraic} that
\[
\lambda([M], q) = \operatorname{RLCT}_0(I ; 1) \le \operatorname{RLCT}_0(\operatorname{mon}(I) ; 1)\,.
\]
Now $\operatorname{mon}(I)$ is generated by $w^{\bold{k}}$ with $\frac{\partial^{|\bold{k}|}}{ \partial w_1^{k_1} \cdots \partial w_d^{k_d}} H \bigr|_{w = [M]} \neq 0$ and hence the claim follows from \cite[Theorem 4.18]{lin2011algebraic}.    
\end{proof}

\subsection{Free Energy and Comparison of Classical Solutions}\label{section:free_energy}

In this paper we establish relations between internal structure of a Turing machine $M$ and the geometry of the associated singularities $([M], H), ([M], L), ([M], K)$. By further linking the geometry of these singularities, quantified by the local learning coefficient $\lambda([M], q)$, to the Bayesian learning process for the synthesis problem, we link the internal structure of $M$ with the behaviour of this process.

The central role played by the learning coefficient $\lambda$ in controlling the asymptotic form of the Bayesian posterior is captured by the \emph{free energy}, which is the negative logarithm of the marginal likelihood \cite{watanabeAlgebraicGeometryStatistical2009}. For a dataset $D_n = \{ (x_i, y_i) \}_{i=1}^n$ of size $n$ with $(x,y) \sim q(x,y)$.\footnote{In this section $n$ means the number of samples, elsewhere in the paper it is $n = |\Sigma|$.} Given a prior $\varphi(w)$ over codes $w \in W$, the \emph{posterior} is defined by \eqref{eq:posterior_defn} with normalising constant $Z_n$. The free energy is defined to be
\begin{equation}
    F_n := -\log Z_n = -\log \int_W \exp(-n L_n(w)) \varphi(w) dw.
\end{equation}
Under some technical conditions the asymptotic behaviour of $F_n$ as $n \to \infty$ is governed by Watanabe's free energy formula \cite[\S 6.3]{watanabe2018}, which asserts that
\begin{equation}\label{eq:free_energy_formula_programs}
    F_n = n L_n(w^*) + \lambda \log n + O_p(\log \log n),
\end{equation}
where $w^* \in W$ minimises the average negative log-likelihood $L(w)$ and $\lambda$ is the learning coefficient. In this paper we assume the true distribution is realisable, so that the relevant conditions are satisfied.

This provides a natural asymptotic comparison between competing candidate explanations in the synthesis problem: given two classical solutions $M_1, M_2$ with respective codes $[M_1], [M_2] \in W^{\texttt{code}}$, the posterior probability mass contained in a small neighbourhood of each solution can be approximated in terms of the asymptotic behaviour of the free energies associated to those neighbourhoods. Our treatment follows \cite[\S 7.6]{watanabeAlgebraicGeometryStatistical2009}, \cite{watanabephasetalk}, \cite{chen2023tms1} and we refer to these references for full details. In the context of program synthesis these ideas were previously discussed in \cite{clift2021geometry, waring2021geometric}.
\\

Let $\mathcal{W}_1, \mathcal{W}_2 \subseteq W$ be sufficiently small compact semi-analytic neighbourhoods of $[M_1], [M_2]$ respectively, and let
\[
F_n(\mathcal{W}_i) := -\log \int_{\mathcal{W}_i} \exp(-n L_n(w)) \varphi(w) dw.
\]
Then the posterior mass in $\mathcal{W}_i$ is approximately $\exp(-F_n(\mathcal{W}_i))$ and
\[
F_n(\mathcal{W}_i) = n L_n([M_i]) + \lambda([M_i], q) \log n + O_p(\log \log n)\,.
\]
Since both $M_1, M_2$ are classical solutions $L([M_1]) = L([M_2])$. Thus in the large-sample limit the posterior probability assigned to each classical solution is controlled by the local learning coefficient, and the posterior will increasingly concentrate \emph{around the solution with the smaller learning coefficient}. This explains why we are interested in understanding the kinds of internal structures of Turing machines $M$ which lead to lower learning coefficients $\lambda([M], q)$: the posterior will tend to prefer such machines among all other classical solutions for the same synthesis problem.

\begin{remark}
There can be nontrivial competition between the order $n$, $\log n$, $\log \log n$ and constant order terms in the asymptotic expansion of the free energy. Particularly for program synthesis we expect that some of these lower order terms may play a significant role, and the emphasis here on the learning coefficient should be viewed as an appropriate starting point for the theory but not the complete story.
\end{remark}

\begin{remark}
While the discussion above compares classical solutions, the Bayesian posterior assigns nonzero mass to any neighbourhood in $W$ with positive prior measure. In particular, a machine $M'$ which is \emph{not} a classical solution but which fits the data well on average and has a small learning coefficient $\lambda([M'], q)$ may dominate a classical solution $M$ in the posterior for small to moderate $n$, despite being wrong in the limit. 

This phenomenon is due to the competition between the loss term $n L_n([M'])$ and the complexity term $\lambda([M']) \log n$ in the free energy. Such competition can induce \emph{Bayesian transitions} in the posterior, where the dominant contribution to the marginal likelihood shifts from one region of parameter space to another as $n$ increases \cite[\S 7.6]{watanabeAlgebraicGeometryStatistical2009}, \cite{chen2023tms1}.
\end{remark}

\section{Geometry}\label{section:geometry}

Given a synthesis problem we have associated in Section \ref{section:background_iid} to a classical solution $M$ the germ $([M], L)$ of an analytic function. The claim made in the introduction is that there is some relation between the ``structure'' of $M$ and the geometry of the germ. In this section we explain in a self-contained way what we mean by this geometry, and sketch the first layer of the relation to the structure of $M$.

The germ $([M], L)$ is equivalent to the information of all the Taylor series coefficients
\begin{equation}
\frac{\partial^{|\alpha|} L}{ \partial^{\alpha_1} w_1 \cdots \partial^{\alpha_d} w_d}\Bigr|_{w = [M]}
\end{equation}
where $w_1,\ldots,w_d$ are the coordinates of the parameter space $W$ and for $\alpha = (\alpha_1,\ldots,\alpha_d) \in \mathbb{N}^d$ we write $|\alpha| = \sum_{i=1}^d \alpha_i$. Thus, roughly speaking, the geometric information comes in \emph{layers} with the $r$th layer corresponding to the information in the Taylor series coefficients with $|\alpha| \le r$. In this section we analyse the first two layers. For the higher layers we need more algebraic geometry. As explained in the previous section, we find it easier to analyse the geometric information in the Taylor series expansion of the polynomial function $H$, and deduce from this geometric information about $L, K$ (see Section \ref{section:details_model}).

\begin{remark}\label{remark:coord_arb} In this section the choice of coordinates $w_1,\ldots,w_d$ is arbitrary. Note that the $w_i$ are \emph{not} probabilities themselves, since these are not all independent. Later we will describe explicit coordinates in a neighbourhood of a Turing machine code.
\end{remark}

We assume that there is a given true function $y = y(x)$ from $I$ to $\{ 0,1 \} \subseteq Q$ and that $M$ is a classical solution. Thus $H([M]) = 0$ and $\nabla H([M]) = 0$. The second-order information in the germ is contained in the Hessian matrix
\begin{equation}
\operatorname{Hess}(H, [M]) = \Bigg( \frac{\partial^2 H}{\partial w_i \partial w_j}\Bigr|_{w = [M]} \Bigg)_{1 \le i, j \le d}\,.
\end{equation}
The geometry we are interested in does not depend on any choice of local coordinates. Since $\operatorname{Hess}(H, [M])$ is real symmetric and therefore diagonalisable, that means we are interested in the eigenvalues and eigenvectors of this matrix.

\begin{definition}\label{defn:influence} Given $\bold{i} \in \mathbb{N}^d$ the \emph{influence function}
\[
g_{\bold{i}}: \Sigma^* \lto \mathbb{R}
\]
is defined by
\begin{equation}
g_{\bold{i}}(x) = \frac{\partial^{|\bold{i}|}}{\partial w_1^{i_1} \cdots \partial w_d^{i_d}}\, p\big(y \neq y(x) \mid x, w \big)\Bigr|_{w = [M]}\,.
\end{equation}
When $\bold{i} = e_i$ is the $i$th standard basis vector we write $g_i(x)$ for $g_{\bold{i}}(x)$.
\end{definition}

In the coordinates that we will choose later at the point $[M]$ in $W$, each index $1 \le i \le d$ will correspond to a particular direction to vary a particular bit in the code $[M]$ away from its given value, and the derivative $g_i(x)$ measures the rate of change of the probability of the incorrect outputs with respect to this variation.

By the assumption that $M$ is a classical solution, we have $g_{\bold{0}}(x) = 0$ where $\bold{0} = (0,\ldots,0)$ means no derivatives. 


\begin{lemma}\label{lemma:partial_derivatives_H} Let $|\bold{k}| > 0$. Then
\begin{equation}
\frac{\partial^{|\bold{k}|}}{ \partial w_1^{k_1} \cdots \partial w_d^{k_d}} H \Bigr|_{w = [M]} =\sum_{\substack{\bold{i},\bold{j} \neq 0 \\ \bold{i} + \bold{j} = \bold{k}}} C(\bold{i}, \bold{j})\, \mathbb{E}_x\Big[ g_{\bold{i}}(x) g_{\bold{j}}(x) \Big]
\end{equation}
where $C(\bold{i}, \bold{j}) = \prod_{s=1}^d \binom{i_s + j_s}{i_s}$, $\mathbb{E}_x[-]$ denotes expectation with respect to $q(x)$ and $\bold{i}, \bold{j} \in \mathbb{N}^d$.
\end{lemma}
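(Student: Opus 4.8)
The plan is to apply the higher-order Leibniz rule to the square appearing in Definition \ref{definition:polynomial_H} and then evaluate at $[M]$, using that the zeroth influence function vanishes because $M$ is a classical solution. Write $P_x(w) = p\big(y \neq y(x) \mid x, w\big)$ so that $H(w) = \sum_{x \in I} q(x) P_x(w)^2$, and for a multi-index $\bold{i}$ write $\partial^{\bold{i}}$ for $\partial^{|\bold{i}|}/(\partial w_1^{i_1}\cdots\partial w_d^{i_d})$, so that by Definition \ref{defn:influence} we have $\partial^{\bold{i}} P_x\big|_{[M]} = g_{\bold{i}}(x)$.

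First I would record the multivariate Leibniz formula: for analytic functions $f, g$ on an open subset of $\mathbb{R}^d$ and a multi-index $\bold{k}$,
\[
\partial^{\bold{k}}(fg) = \sum_{\bold{i} + \bold{j} = \bold{k}} \binom{\bold{k}}{\bold{i}}\, \partial^{\bold{i}} f \cdot \partial^{\bold{j}} g, \qquad \binom{\bold{k}}{\bold{i}} := \prod_{s=1}^d \binom{k_s}{i_s},
\]
which follows by iterating the one-variable Leibniz rule in each coordinate $w_s$ in turn. Applying this with $f = g = P_x$ and observing that $\binom{\bold{k}}{\bold{i}} = \prod_{s=1}^d \binom{i_s + j_s}{i_s} = C(\bold{i},\bold{j})$ whenever $\bold{i} + \bold{j} = \bold{k}$, one obtains $\partial^{\bold{k}}(P_x^2) = \sum_{\bold{i}+\bold{j}=\bold{k}} C(\bold{i},\bold{j})\, \partial^{\bold{i}} P_x \cdot \partial^{\bold{j}} P_x$.

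Next I would differentiate $H$ term by term (the sum over $x \in I$ is finite, so differentiation commutes with it), evaluate at $w = [M]$, and substitute $\partial^{\bold{i}} P_x\big|_{[M]} = g_{\bold{i}}(x)$. Since $M$ is a classical solution, $g_{\bold{0}}(x) = P_x([M]) = 0$ for all $x \in I$; hence every term in which $\bold{i} = 0$ or $\bold{j} = 0$ drops out, and what remains is exactly $\sum_{x \in I} q(x) \sum_{\bold{i},\bold{j}\neq 0,\ \bold{i}+\bold{j}=\bold{k}} C(\bold{i},\bold{j})\, g_{\bold{i}}(x) g_{\bold{j}}(x)$. Rewriting the finite sum over $x$ as the expectation $\mathbb{E}_x[-]$ with respect to $q(x)$ yields the claimed identity. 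This also accounts for $|\bold{k}| = 1$, where the constrained sum is empty and both sides vanish, consistent with $\nabla H([M]) = 0$.

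I do not expect any genuine obstacle: the only bookkeeping that needs care is the identification of the multi-index binomial coefficient $\binom{\bold{k}}{\bold{i}}$ with $C(\bold{i},\bold{j})$, and the (routine) justification that differentiation passes through the finite sum over the support $I$ of $q$. The one substantive input is the vanishing of $g_{\bold{0}}$, which is precisely where the hypothesis that $M$ is a classical solution is used.
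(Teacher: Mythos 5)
Your proposal is correct and follows essentially the same route as the paper's proof: apply the multivariate Leibniz rule to $p(y\neq y(x)\mid x,w)^2$, identify the multi-index binomial coefficient with $C(\bold{i},\bold{j})$, evaluate at $[M]$, and discard the $\bold{i}=\bold{0}$ or $\bold{j}=\bold{0}$ terms using $g_{\bold{0}}(x)=0$, which holds because $M$ is a classical solution. No gaps.
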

\begin{proof}
For the duration of this proof we write
\[
g(x,w) = p\big( y \neq y(x) \mid x, w \big)\,.
\]
Recall that given functions $p(w), q(w)$ we have
\[
\frac{\partial^{|\bold{k}|}}{ \partial w_1^{k_1} \cdots \partial w_d^{k_d}}\Big[ p(w) q(w) \Big] = \sum_{0 \le \bold{i} \le \bold{k}} \Big\{ \prod_{s=1}^d \binom{k_s}{i_s} \Big\} \frac{\partial^{|\bold{i}|}}{ \partial w_1^{i_1} \cdots \partial w_d^{i_d}} p(w) \frac{\partial^{|\bold{j}|}}{ \partial w_1^{j_1} \cdots \partial w_d^{j_d}} q(w)
\]
where $\bold{i} \le \bold{k}$ means $i_s \le k_s$ for $1 \le s \le d$ and $\bold{j} = \bold{k} - \bold{i}$. Recall that $\binom{0}{0} = 1$ by convention.

Substituting $p(w) = q(w) = g(x,w)$ we have
\begin{align*}
\frac{\partial^{|\bold{k}|}}{ \partial w_1^{k_1} \cdots \partial w_d^{k_d}} H(w) &= \sum_x q(x) \sum_{\substack{\bold{i},\bold{j} \ge 0 \\ \bold{i} + \bold{j} = \bold{k}}} C(\bold{i},\bold{j}) \frac{\partial^{|\bold{i}|}}{ \partial w_1^{i_1} \cdots \partial w_d^{i_d}} g(x,w) \frac{\partial^{|\bold{j}|}}{ \partial w_1^{j_1} \cdots \partial w_d^{j_d}} g(x,w)\,.
\end{align*}
Evaluating at $w = [M]$ we obtain
\begin{align*}
\frac{\partial^{|\bold{k}|}}{ \partial w_1^{k_1} \cdots \partial w_d^{k_d}} H(w)\Bigr|_{w = [M]} &= \sum_x q(x) \sum_{\substack{\bold{i},\bold{j} \ge 0 \\ \bold{i} + \bold{j} = \bold{k}}} C(\bold{i},\bold{j}) g_{\bold{i}}(x) g_{\bold{j}}(x)\,.
\end{align*}
Since $g_{\bold{0}}(x) = 0$ we obtain the result.
\end{proof}

In particular for $1 \le i \le d$
\[
\frac{\partial}{\partial w_i} H\Bigr|_{w = [M]} = 0\,.
\]
That is, $[M]$ is a critical point of $H$, which is why we call the germ $([M], H)$ a \emph{singularity}. Another important special case is when $\bold{k} = e_i + e_j$ for $1 \le i, j \le d$. Then when $i \neq j$ we have two summands indexed by $(\bold{i}, \bold{j})$ equal to $(e_i, e_j), (e_j, e_i)$ and when $i = j$ just one summand with $\bold{i} = \bold{j} = e_i$, so in either case
\begin{equation}
\frac{\partial^2}{ \partial w_i \partial w_j} H \Bigr|_{w = [M]} = 2\, \mathbb{E}_x\Big[ g_i(x) g_j(x) \Big]\,.
\end{equation}
Let $I \subseteq \Sigma^*$ be the set of allowed inputs. If we write $P$ for the $|I| \times d$ matrix
\[
P = \big( g_i(x) \big)_{x, i}
\]
then we have shown
\begin{equation}\label{eq:hessian_formula_P}
\operatorname{Hess}(H, [M]) = 2 P^T Q P = 2 (Q^{1/2} P)^T (Q^{1/2} P)
\end{equation}
where $Q$ is the diagonal matrix with entries $q(x)$. This means that the spectrum of the Hessian is the same as the singular values of $Q^{1/2} P$. In particular the Hessian and $Q^{1/2} P$ have the same rank, which is the rank of $P$.

\begin{lemma}\label{lemma:slt25} Let $V$ be the space of functions in $\mathbb{R}^I$ spanned by $g_1,\ldots,g_d$. Then the rank of $\operatorname{Hess}(H, [M])$ is equal to $\dim(V)$. In particular, the Hessian is nondegenerate if and only if the $g_i$ are linearly independent.
\end{lemma}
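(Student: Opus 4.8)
The plan is to read off the statement directly from the factorisation \eqref{eq:hessian_formula_P} that has just been established. We have $\operatorname{Hess}(H,[M]) = 2(Q^{1/2}P)^T(Q^{1/2}P)$, so the Hessian is a Gram-type matrix and its rank equals $\rank(Q^{1/2}P)$. Since $q(x) > 0$ for every $x \in I$ (as $q$ is supported on $I$), the diagonal matrix $Q^{1/2}$ is invertible, hence $\rank(Q^{1/2}P) = \rank(P)$. So the first step is simply to record that $\rank \operatorname{Hess}(H,[M]) = \rank(P)$.

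The second step is to identify $\rank(P)$ with $\dim V$. The matrix $P = (g_i(x))_{x,i}$ has columns indexed by $i \in \{1,\ldots,d\}$, and the $i$th column is precisely the vector $(g_i(x))_{x \in I} \in \mathbb{R}^I$, i.e. the function $g_i$ restricted to $I$ viewed as an element of $\mathbb{R}^I$. The column rank of $P$ is by definition the dimension of the span of its columns, which is exactly $\dim V$ where $V = \operatorname{span}_{\mathbb{R}}\{g_1,\ldots,g_d\} \subseteq \mathbb{R}^I$. Chaining the two equalities gives $\rank \operatorname{Hess}(H,[M]) = \dim V$.

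For the final sentence: the Hessian is a $d \times d$ symmetric matrix, so it is nondegenerate (invertible) if and only if it has full rank $d$, i.e. if and only if $\dim V = d$. Since $V$ is spanned by the $d$ functions $g_1,\ldots,g_d$, we have $\dim V = d$ if and only if these $d$ functions are linearly independent in $\mathbb{R}^I$. This completes the argument.

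There is really no obstacle here — the lemma is an immediate corollary of \eqref{eq:hessian_formula_P}. The only point requiring a (trivial) justification is that $Q^{1/2}$ is invertible, which needs $q(x) > 0$ on all of $I$; this is built into the convention that $q$ is a distribution with support exactly $I$, so strictly one should note that if some $q(x) = 0$ one may shrink $I$ to the actual support without changing anything. Everything else is the standard fact that a Gram matrix $A^TA$ has the same rank as $A$, together with the tautological identification of column rank with the dimension of the column span.
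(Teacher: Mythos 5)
Your proposal is correct and follows essentially the same route as the paper, which derives the lemma directly from the factorisation $\operatorname{Hess}(H, [M]) = 2 (Q^{1/2} P)^T (Q^{1/2} P)$ in \eqref{eq:hessian_formula_P} together with the observation that this matrix has the same rank as $P$, whose column span is $V$. Your additional remark that $Q^{1/2}$ is invertible because $q(x) > 0$ on $I$ is a correct and worthwhile point to make explicit.
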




We have seen in Section \ref{section:background} how inductive inference on Turing machines, viewed from the perspective of embedding ordinary codes into noisy codes, boils down to the comparison of integrals of the posterior over regions of parameter space \eqref{eq:pmathcalw} and how the asymptotics of those comparisons are sensitive to the geometry of the analytic function $L: W \lto \mathbb{R}$ around a code $[M]$. That is, to the geometry of the germ $([M], L)$. Studying the geometry of this germ has now lead us to the following question: what does it mean \emph{computationally} for there to be a linear dependence relation
\begin{equation}
\sum_{i=1}^d a_i g_i(x) = 0\,, \qquad \forall x \in I\,.
\end{equation}
In the rest of the paper we explore this question, which we answer in Section \ref{section:meaning_nondeg}.

\section{Derivatives and Errors for Plain Proofs}\label{section:derivatives_errors}

\subsection{Linear Logic}\label{section:linear_logic_setup}

In this paper \emph{linear logic} means \emph{first-order intuitionistic linear logic} with connectives $\otimes, \&, \multimap, !$ along with the corresponding introduction rules and cut-elimination transformations from \cite{mellies2009categorical, Benten}. Our use of linear logic follows \cite{clift2020cofree, clift2020encodings, clift2018derivatives}. For an introduction we recommend \cite{girard1987linear, mellies2009categorical, murfet2014logic}.

In this paper the role of linear logic is that by encoding the step function of a UTM into this logic, and then making use of the semantics of linear logic in vector spaces, we obtain a natural smooth relaxation of the execution of the UTM in the sense of \eqref{eq:smooth_relax}. In particular, this provides a definition of the probability $p(y \neq y(x)|x,w)$ of an output error on some input $x$ when $w$ is a noisy Turing machine. The reader who is willing to take this probabilistic model at face value is encouraged to skip this section on a first reading.
\\

Fix an algebraically closed field $\mathbbm{k}$.\footnote{We take $\mathbbm{k} = \mathbb{C}$ but all coefficients are taken to be real.} Given formulas (or synonymously, \emph{types}) $A,B$ the definition of the \emph{Sweedler semantics} \cite{murfet2014logic, murfet2015sweedler, CliftMastersThesis} of linear logic in the category of vector spaces over $\mathbbm{k}$, which we denote by $\llbracket - \rrbracket$, associates a vector space $\llbracket X \rrbracket$ to each atomic formula $X$, and to $A,B$, the vector spaces determined by the following rules:
\begin{align*}
    \llbracket A \otimes B \rrbracket &= \llbracket A \rrbracket \otimes \llbracket B \rrbracket\\
    \llbracket A \, \& \, B\rrbracket &= \llbracket A \rrbracket \oplus \llbracket B \rrbracket\\
    \llbracket A \multimap B \rrbracket &= \operatorname{Hom}_{\mathbbm{k}}(\llbracket A \rrbracket, \llbracket B \rrbracket)\\
    \llbracket !A \rrbracket &= !\llbracket A \rrbracket
\end{align*}
where $!V$ denotes the universal cocomutative coassociative coalgebra of $V$ (referred to here as simply the \emph{cofree coalgebra}). The universal morphism is usually denoted $d_V: !V \lto V$ or just $d$. An explicit description of the cofree coalgebra is \cite{murfet2015sweedler}
\begin{equation}
    !V = \bigoplus_{P \in V}\operatorname{Sym}_P(V)
\end{equation}
where $\operatorname{Sym}_P(V) = \operatorname{Sym}(V)$ is the symmetric algebra with its canonical structure as a coalgebra. If $V$ is of dimension $n$ and $e_1, \ldots, e_n$ is a basis for $V$ then as a vector space there is an isomorphism $\operatorname{Sym}(V) \cong \mathbbm{k}[e_1, \ldots, e_n]$. The denotations of proofs are defined such that if $\rho$ is a proof of $A$, written $\rho : A$, then $\den{\rho} \in \den{A}$. In particular $\varphi: A \vdash B$ then $\den{\varphi}: \den{A} \lto \den{B}$ is a linear transformation. We sometimes write $\varphi(\rho) : B$ for the cut of $\varphi, \rho$ and think in this way of $\varphi$ as defining a function from cut-elimination equivalence classes of proofs of $A$ to such equivalence classes of proofs of $B$. If
\[
\psi: {!} A_1, \ldots, {!} A_r \vdash B
\]
is a proof, and $\rho_i : A_i$, then we also sometimes write
\be\label{eq:eval_proof}
\psi(\rho_1,\ldots,\rho_r) : B
\ee
for the proof obtained by cutting $\psi$ against the promotions of the $\rho_i$ to proofs of $\vdash {!} A_i$.

These denotations of proofs are constructed using the structural maps of the functors $\otimes, \&, \multimap$ as well as ${!}$. We briefly recall the latter. Given $v_1, \ldots, v_s \in V$, the corresponding vector in $\operatorname{Sym}_P(V)$ is written using a ket
\begin{equation}
    \ket{v_1, \ldots, v_s}_P := v_1 \otimes \ldots \otimes v_s \in \operatorname{Sym}_P(V).
\end{equation}
The identity element $1 \in \operatorname{Sym}_P(V)$ is denoted by $\ket{\emptyset}_P$. For a subset $I = \{i_1, \ldots, i_p\}$ we denote by $v_I$ the sequence $v_{i_1}, \ldots, v_{i_p}$ and $I^c$ is the complement of $I$. With this notation, the universal map $d$ is defined by
\begin{equation}
    d\ket{\emptyset}_P = P,\quad d\ket{v}_P = v,\quad d\ket{v_1, \ldots, v_s}_P = 0\quad s>1
\end{equation}
and the comultiplication on ${!}V$ is defined by
\begin{equation}
    \Delta \ket{v_1, \ldots, v_s}_P = \sum_{I \subseteq \{1, \ldots, s\}} \ket{v_I}_P \otimes \ket{v_{I^{c}}}_P
\end{equation}
where $I$ ranges over all subsets including the empty set. In particular
\begin{equation}
    \Delta\ket{\emptyset}_P = \ket{\emptyset}_P \otimes \ket{\emptyset}_P.
\end{equation}
We write $A^n = A \,\& \, \ldots \, \& \, A$ where there are $n$ copies of $A$. 

\subsubsection{Proofs}

A \emph{proof} in linear logic is a decorated tree that represents the application of deduction rules to instances of axioms; see \cite{girard1987linear,mellies2009categorical}. Following the Brouwer-Heyting-Kolmogorov interpretation \cite{troelstra2011history} of intuionistic proof, or the Curry-Howard correspondence \cite{howard1980formulae}, we can think of proofs in intuitionistic linear logic as algorithms which take inputs in the form of proofs of their hypotheses and produce an output which is the proof of their conclusion.

In general a proof in linear logic, viewed as an algorithm, will use a different number of copies of its inputs depending on other inputs. For example the multiplication of Church numerals iterates one of its inputs a number of times specified by another of its inputs. Such a proof is \emph{not} plain, in the sense of the following definitions. The defining characteristic of a plain proof is that the number of copies of each input required to compute the output is a \emph{property of the proof itself} and does not depend on the inputs. 

\begin{definition}
    A \emph{linear proof} will mean a proof in linear logic which only uses the connectives $\otimes, \&, \multimap$ with the corresponding introduction rules.
\end{definition}

Throughout in the hypotheses $n\, A$ means $\overset{n \text{ times}}{\overbrace{A,\ldots,A}}$.

\begin{definition}
\label{def:plain}
    A proof $\psi$ of a sequent ${!}A_1, \ldots, {!} A_r \vdash B$ for $r \geq 0$ is \emph{plain} if it is equivalent under cut-elimination to
    \begin{center}
\AxiomC{$\proofvdots{\pi}$}
\noLine\UnaryInfC{$n_1\, A_1,\ldots,n_r \, A_r \vdash B$}
\RightLabel{\scriptsize der}
\doubleLine\UnaryInfC{$n_1\, {!}A_1, \ldots, n_r \, {!} A_r \vdash B$}
\RightLabel{\scriptsize ctr/wk}
\doubleLine\UnaryInfC{${!}A_1,\ldots,{!}A_r \vdash B$}
\DisplayProof
\end{center}
    for some linear proof $\pi$ and non-negative integers $n_1, \ldots, n_r$ where for $n_i > 1$ in the final step there is a corresponding contraction rule, and if $n_i = 0$ the final step involves a weakening rule. We refer to $n_i$ as the \emph{degree} of $A_i$. We call $\pi$ the \emph{linear part} of $\psi$.
\end{definition}

\begin{remark}
In this paper all plain proofs have positive degrees $n_i > 0$ for $1 \le i \le r$.
\end{remark}

A proof is \emph{component-wise plain} if it is obtained by tensoring together a sequence of plain proofs; see \cite[Definition 3.5]{clift2020encodings}.


\begin{example}
    Many simple data types such as booleans, integers and binary integers have a natural encoding as types in linear logic. Some of these are reviewed in \cite[\S 3]{CliftMastersThesis}, among which are the Church numerals $\underline{m}$, $m \geq 0$, (for some given formula $A$) of type
    \[
    \underline{m}: \textbf{int}_A = {!}(A \multimap A) \multimap (A \multimap A)
    \]
    and the integers $\underline{i}$, $0 \le i < n$ for fixed $n >0$ of generalised Boolean type
\[
    \underline{i}: {}_n\textbf{bool}_A = A^n \multimap A.
\]
These proofs are all plain. Generally we write types with bold letters, and denote proofs with underlines. The proof $\underline{i}$ encodes a projection onto one of $n$ factors.
\end{example}

\subsubsection{Denotations of Plain Proofs}

The denotation of a plain proof $\psi$ is defined by a set of polynomials $f^\tau_\psi$ which will play an important role in the sequel and which we now define. Here $\psi: !A_1, \ldots, !A_r \vdash B$ denotes a plain proof with linear part $\pi: n_1 A_1, \ldots, n_r A_r \vdash B$ as in Definition \ref{def:plain}. Suppose given finite sets of proofs $\mathcal{P}_i$ of $A_i$ and $\mathcal{Q}$ of $B$. Assume that $\{\llbracket \nu \rrbracket\}_{\nu \in \mathcal{Q}}$ is linearly independent in $\llbracket B \rrbracket$ and that
\begin{equation}
    \{\pi(X_1, \ldots, X_r) \mid X_i \in \mathcal{P}_i^{n_i}\} \subseteq \mathcal{Q}.
\end{equation}
Our polynomials have variables $\{ x^i_j \}_{1 \le i \le r, 1 \le j \le n_i}$.

The way denotations work is that
\begin{align*}
\den{\psi}\Big( \ket{\emptyset}_{\den{\rho_1}} \otimes \cdots \otimes \ket{\emptyset}_{\den{\rho_r}} \Big) &= \den{ \psi( \rho_1, \ldots, \rho_r ) }\\
&= \den{ \pi\big( \overset{n_1 \text{ times}}{\overbrace{\rho_1, \ldots, \rho_1}}, \ldots, \overset{n_r \text{ times}}{\overbrace{\rho_r, \ldots, \rho_r}} \big) }
\end{align*}
where $\psi( \rho_1, \ldots, \rho_r )$ recall is defined by cutting $\psi$ against the promotions of the $\rho_i: A_i$ as in \eqref{eq:eval_proof}. However there is more information in the denotation $\den{\psi}$ than is in this formula, and it is this information that is surfaced when we take derivatives.

\begin{definition}\label{defn:f_tau}
    Given $\tau \in \mathcal{Q}$ we define
    \be
    f_\psi^\tau = \sum_{\gamma_1, \ldots, \gamma_r}\delta_{\tau = \operatorname{eval}(\pi, \gamma)}\prod_{i = 1}^r \prod_{j = 1}^{n_i}x_{\gamma_i(j)}^{i}
    \ee
    where $\gamma_i$ ranges over all functions $\{1, \ldots, n_i\} \lto \mathcal{P}_i$, and $\operatorname{eval}(\pi, \gamma)$ means
\begin{equation}
    \operatorname{eval}(\pi, \gamma) = \pi(\gamma_1(1),\ldots, \gamma_1(n_1), \ldots, \gamma_r(1), \ldots, \gamma_r(n_r)),
\end{equation}
\end{definition}

The next lemma says that $\den{\psi}$ is computed by these polynomials $f^\tau_\psi$. Given a set $\mathcal{A}$ we denote by $\mathbbm{k} \mathcal{A}$ the free vector space on the set, that is, the vector space with the elements of $\mathcal{A}$ as a basis.

\begin{lemma}
\label{lem:F_function}
There exists a unique function $F_\psi$ making the diagram
\[\begin{tikzcd}
	{!\llbracket A_1 \rrbracket \otimes \ldots \otimes !\llbracket A_r \rrbracket} & {\llbracket B \rrbracket} \\
	{\mathbbm{k}\mathcal{P}_1 \times \ldots \times \mathbbm{k}\mathcal{P}_r} & {\mathbbm{k}\mathcal{Q}}
	\arrow["{\llbracket \psi \rrbracket}", from=1-1, to=1-2]
	\arrow["\iota", from=2-1, to=1-1]
	\arrow["{F_\psi}", from=2-1, to=2-2]
	\arrow["{\llbracket - \rrbracket}"', from=2-2, to=1-2]
\end{tikzcd}\]
    commute, where $\iota$ maps an element
    \begin{equation}
    \label{eq:random_element_F}
        \Big(\sum_{\rho \in \mathcal{P}_1}\lambda_{\rho}^1\rho, \ldots, \sum_{\rho \in \mathcal{P}_r}\lambda_{\rho}^r\rho\Big) \in \mathbbm{k}\mathcal{P}_1 \times \ldots \times \mathbbm{k}\mathcal{P}_r
    \end{equation}
    to the tensor
    \begin{equation}
        \ket{\emptyset}_{\sum_{\rho \in \mathcal{P}_1}\lambda_\rho^1\llbracket \rho \rrbracket} \otimes \ldots \otimes \ket{\emptyset}_{\sum_{\rho \in \mathcal{P}_r}\lambda_\rho^r\llbracket \rho \rrbracket}\,.
    \end{equation}
    The function $F_\psi$ maps $\eqref{eq:random_element_F}$ to
    \begin{equation}
        \sum_{\tau \in \mathcal{Q}}f^\tau_\psi\Bigr\vert_{x_\rho^i = \lambda_\rho^i}\cdot \tau\,.
    \end{equation}
\end{lemma}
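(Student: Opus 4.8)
The plan is to verify commutativity of the diagram by a direct computation, tracking a general element through both paths. First I would unwind the definition of $\iota$: since the cofree coalgebra satisfies $!(\sum_\rho \lambda_\rho \den{\rho}) $ contains the grouplike element $\ket{\emptyset}_{\sum_\rho \lambda_\rho \den{\rho}}$, the image $\iota$ of the tuple \eqref{eq:random_element_F} is a tensor product of such grouplike elements. The key point is that $\den{\psi}$ applied to a tensor of grouplike elements can be computed by unfolding the structure of $\psi$ as a plain proof: the final contraction/weakening steps $\mathrm{ctr}/\mathrm{wk}$ correspond (under $\den{-}$) to comultiplication, the $\mathrm{der}$ steps correspond to applying the counit $d$, and the linear part $\pi$ acts by its (multilinear) denotation. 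Concretely, comultiplication applied to a grouplike element $\ket{\emptyset}_P$ reproduces $\ket{\emptyset}_P \otimes \ket{\emptyset}_P$, so after the $n_i$-fold contraction on the $i$th input we obtain $\ket{\emptyset}_{P_i}^{\otimes n_i}$ where $P_i = \sum_{\rho \in \mathcal{P}_i} \lambda_\rho^i \den{\rho}$; then each $\mathrm{der}$ applies $d$, sending $\ket{\emptyset}_{P_i} \mapsto P_i$; finally $\den{\pi}$ is applied to $P_1^{\otimes n_1} \otimes \cdots \otimes P_r^{\otimes n_r}$.

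The heart of the calculation is then the multilinearity of $\den{\pi}$. Expanding each $P_i = \sum_{\rho} \lambda_\rho^i \den{\rho}$ and distributing across all $n_i$ tensor slots, we get a sum over all choice functions $\gamma_i \colon \{1,\ldots,n_i\} \lto \mathcal{P}_i$, weighted by $\prod_{i,j} \lambda^i_{\gamma_i(j)}$, of the vectors $\den{\pi}\big( \den{\gamma_1(1)} \otimes \cdots \otimes \den{\gamma_r(n_r)} \big) = \den{\pi(\gamma_1(1),\ldots,\gamma_r(n_r))} = \den{\operatorname{eval}(\pi,\gamma)}$. By the hypothesis $\{\pi(X_1,\ldots,X_r) \mid X_i \in \mathcal{P}_i^{n_i}\} \subseteq \mathcal{Q}$, each $\operatorname{eval}(\pi,\gamma)$ lies in $\mathcal{Q}$, so collecting terms by value $\tau \in \mathcal{Q}$ yields exactly $\sum_{\tau \in \mathcal{Q}} \big(\sum_\gamma \delta_{\tau = \operatorname{eval}(\pi,\gamma)} \prod_{i,j} \lambda^i_{\gamma_i(j)}\big) \den{\tau} = \sum_\tau f^\tau_\psi\big|_{x^i_\rho = \lambda^i_\rho} \den{\tau}$, which is precisely $\den{-}$ applied to the claimed formula for $F_\psi$. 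This establishes that the composite $\den{-} \circ F_\psi$ equals $\den{\psi}\circ \iota$. For uniqueness, one uses the hypothesis that $\{\den{\nu}\}_{\nu \in \mathcal{Q}}$ is linearly independent in $\den{B}$: this makes $\den{-}\colon \mathbbm{k}\mathcal{Q} \lto \den{B}$ injective, so $F_\psi$ is determined by the requirement that $\den{-} \circ F_\psi = \den{\psi} \circ \iota$.

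I expect the main obstacle to be the bookkeeping in identifying $\den{\psi}$ evaluated on grouplike inputs with the iterated comultiplication–counit–$\den{\pi}$ composite: this requires being careful about how the cut-elimination normal form of a plain proof (Definition \ref{def:plain}) translates into the structural morphisms of the Sweedler semantics, in particular that contraction is interpreted by $\Delta$, dereliction by $d$, weakening by the counit to $\mathbbm{k}$, and that these are compatible with cutting against promotions as in \eqref{eq:eval_proof}. Once that dictionary is in place — and it is essentially the content of how $\den{\psi(\rho_1,\ldots,\rho_r)}$ is computed, already recalled in the displayed equations preceding Definition \ref{defn:f_tau} — the rest is the routine multilinear expansion above. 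A minor point to check is that the grouplike element $\ket{\emptyset}_P$ behaves correctly under $\Delta$, namely $\Delta\ket{\emptyset}_P = \ket{\emptyset}_P \otimes \ket{\emptyset}_P$, which is exactly the displayed identity for comultiplication given in Section \ref{section:linear_logic_setup}, so iterating it $n_i - 1$ times gives the required $n_i$-fold tensor power.
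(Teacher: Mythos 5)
Your argument is correct and is essentially the computation the paper itself defers to \cite[Proposition 3.8]{clift2020encodings} and reproduces later in \eqref{eq:compute_psi_general}: grouplike inputs are duplicated by $\Delta$, derelicted to the points $P_i$, expanded multilinearly through $\den{\pi}$, and collected by $\tau \in \mathcal{Q}$, with uniqueness following from the assumed linear independence of $\{\den{\nu}\}_{\nu \in \mathcal{Q}}$. No gaps; you have simply written out the details the paper leaves to the cited reference.
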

\begin{proof}
    See \cite[Proposition 3.8]{clift2020encodings}.
\end{proof}

The papers \cite{clift2018derivatives}, \cite{clift2020cofree}, \cite{clift2020encodings} give an interpretation of the polynomials $f_\psi^\tau$ using a probabilistic semantics. Recall from Definition \ref{def:simplex} that associated to every finite set $Z$ there is a simplex $\Delta Z$. The following is \cite[Proposition 5.3]{clift2018derivatives}.

\begin{proposition}\label{prop:delta_psi}
    There is a unique function
    \begin{equation}
        \Delta \psi: \Delta \mathcal{P}_1 \times \ldots \times \Delta \mathcal{P}_r \lto \Delta \mathcal{Q}
    \end{equation}
    which makes the diagram
    \begin{equation}\label{eq:delta_psi_square}
        \begin{tikzcd}
            !\llbracket A_1 \rrbracket \otimes \ldots \otimes !\llbracket A_r \rrbracket\arrow[r,"{\llbracket \psi \rrbracket}"] & \llbracket B \rrbracket\\
            \Delta \mathcal{P}_1 \times \ldots \times \Delta \mathcal{P}_r\arrow[r,"{\Delta \psi}"]\arrow[u,"{\inc}"] & \Delta \mathcal{Q}\arrow[u, swap, "{\llbracket - \rrbracket}"]
        \end{tikzcd}
    \end{equation}
    commute.
\end{proposition}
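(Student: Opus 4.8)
The plan is to obtain $\Delta\psi$ as the restriction of the function $F_\psi$ produced by Lemma \ref{lem:F_function}: that lemma already lifts $\llbracket\psi\rrbracket$ along $\llbracket-\rrbracket$ at the level of free vector spaces, so all that remains is to check that $F_\psi$ carries the product of simplices into $\Delta\mathcal{Q}$ and to read off uniqueness.

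First I would dispose of uniqueness. By hypothesis $\{\llbracket\nu\rrbracket\}_{\nu\in\mathcal{Q}}$ is linearly independent in $\llbracket B\rrbracket$, so the linear map $\mathbbm{k}\mathcal{Q}\lto\llbracket B\rrbracket$ sending $\nu\mapsto\llbracket\nu\rrbracket$ is injective, and in particular its restriction to $\Delta\mathcal{Q}$ is injective. Hence if two functions $\Delta\mathcal{P}_1\times\cdots\times\Delta\mathcal{P}_r\lto\Delta\mathcal{Q}$ both make \eqref{eq:delta_psi_square} commute, post-composing with this injective right-hand vertical arrow shows they coincide.

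For existence I would invoke Lemma \ref{lem:F_function}, noting that the map $\inc$ in \eqref{eq:delta_psi_square} is simply the restriction of $\iota$ to the product of simplices (each $\Delta\mathcal{P}_i$ sitting inside $\mathbbm{k}\mathcal{P}_i$). Recall that $F_\psi$ sends a tuple of formal combinations with coefficients $\lambda^i_\rho$ to $\sum_{\tau\in\mathcal{Q}} f^\tau_\psi\bigr|_{x^i_\rho=\lambda^i_\rho}\cdot\tau$. I must then verify two things about this output when the $\lambda^i_\rho$ are probability vectors. First, nonnegativity of each coefficient: by Definition \ref{defn:f_tau} each $f^\tau_\psi$ is a polynomial in the $x^i_j$ all of whose coefficients are nonnegative integers, so it is $\ge 0$ at nonnegative arguments. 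Second, normalisation: summing the formula of Definition \ref{defn:f_tau} over all $\tau\in\mathcal{Q}$ and using the standing hypothesis $\{\pi(X_1,\ldots,X_r)\mid X_i\in\mathcal{P}_i^{n_i}\}\subseteq\mathcal{Q}$ --- which guarantees that for every tuple of functions $\gamma=(\gamma_1,\ldots,\gamma_r)$ the proof $\operatorname{eval}(\pi,\gamma)$ is a genuine element of $\mathcal{Q}$ --- the Kronecker delta collapses each $\gamma$ onto a unique $\tau$, giving
\[
\sum_{\tau\in\mathcal{Q}} f^\tau_\psi\bigr|_{x=\lambda} \;=\; \sum_{\gamma_1,\ldots,\gamma_r}\ \prod_{i=1}^r\prod_{j=1}^{n_i}\lambda^i_{\gamma_i(j)} \;=\; \prod_{i=1}^r\Bigl(\,\sum_{\rho\in\mathcal{P}_i}\lambda^i_\rho\Bigr)^{n_i} \;=\; 1 .
\]
Thus $F_\psi$ restricts to a function $\Delta\psi\colon\Delta\mathcal{P}_1\times\cdots\times\Delta\mathcal{P}_r\lto\Delta\mathcal{Q}$, and commutativity of \eqref{eq:delta_psi_square} is inherited from the commutative square of Lemma \ref{lem:F_function} by restricting along the inclusions of the simplices into the corresponding free vector spaces.

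The only point that needs care is the normalisation identity: one has to be sure that $\operatorname{eval}(\pi,-)$ takes values in the finite set $\mathcal{Q}$ rather than merely in cut-elimination classes of proofs of $B$, which is precisely what the inclusion hypothesis on the linear part $\pi$ secures; once that is granted the displayed identity is just a multinomial expansion. I do not anticipate any obstacle beyond this bookkeeping, since the substantive content --- that $\llbracket\psi\rrbracket$ is computed by the polynomials $f^\tau_\psi$ --- is already packaged in Lemma \ref{lem:F_function}.
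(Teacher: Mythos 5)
Your proof is correct. The paper itself offers no argument here --- it simply cites \cite[Proposition 5.3]{clift2018derivatives} --- so your derivation is a self-contained reconstruction of what that citation hides: uniqueness from the linear independence of $\{\llbracket\nu\rrbracket\}_{\nu\in\mathcal{Q}}$ (hence injectivity of the right-hand vertical map on $\Delta\mathcal{Q}$), and existence by restricting $F_\psi$ from Lemma \ref{lem:F_function}, checking nonnegativity of the coefficients of $f^\tau_\psi$ at nonnegative arguments and the multinomial normalisation $\sum_\tau f^\tau_\psi = \prod_i(\sum_\rho\lambda^i_\rho)^{n_i}=1$, where the hypothesis $\{\pi(X_1,\ldots,X_r)\mid X_i\in\mathcal{P}_i^{n_i}\}\subseteq\mathcal{Q}$ is exactly what makes the delta functions sum to $1$ over $\tau$ for each $\gamma$. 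This is the standard argument and there is nothing to add.
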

That is, probability distributions are sent to probability distributions by $\llbracket \psi \rrbracket$. By uniqueness of such a function $\Delta \psi$ has the same equation as $F_\psi$. Note that $\Delta \cat{P} \subseteq \mathbbm{k} \cat{P}$ and so the left hand vertical arrow in \eqref{eq:delta_psi_square} is a restriction of the $\iota$ in Lemma \ref{lem:F_function}.

In the next section we explain how to think about this function $\Delta \psi$ and what it means to propagate uncertainty through $\psi$ (which we mean evaluate $\Delta \psi$ on input distributions, representing uncertainty about proofs in $\mathcal{P}_i$). See also \cite[Section 5.1]{clift2018derivatives}.


\subsection{Evaluating Programs in Vector Spaces}\label{section:eval_prog_vec}

From the point of view of linear logic the inputs to a Turing machine (encoded by its step function) are represented as proofs, and these inputs are ``distributed'' through the steps of the Turing machine by contraction rules, to eventually be ``unpackaged'' by dereliction rules and consumed by linear parts of the proof. Semantically, the contraction or copying step is represented by comultiplication $\Delta$ on the coalgebra ${!} \den{A}$ and the dereliction step by the counit $d: {!} \den{A} \lto \den{A}$. For background on the cofree coalgebra see \cite{murfet2014logic, murfet2015sweedler, clift2020cofree}.

For example if $\rho : A$ is a proof representing such an input (say some bit in the code of a Turing machine on the description tape of a UTM) then a pair of copies are made via
\be
\Delta \vacu_{\den{\rho}} = \vacu_{\den{\rho}} \otimes \vacu_{\den{\rho}}\,.
\ee
At the point where these copies are to be used, they are ``unpackaged'' by derelictions
\be\label{eq:d_otimes_2}
(d \otimes d) \Delta \vacu_{\den{\rho}} = \den{\rho} \otimes \den{\rho}\,.
\ee
Since the comultiplication is co-associative, we have $(1 \otimes \Delta)\Delta = (\Delta \otimes 1)\Delta$ and following standard practice we denote both maps by $\Delta^2$. Similarly, $\Delta^n$ denotes all possible ways of arranging $n$ comultiplications, one after another (which are the same linear map). Thus to make and consume $n$ copies we are computing
\be
d^{\otimes n} \Delta^{n-1} \vacu_{\den{\rho}} = d^{\otimes n} \Big( \vacu_{\den{\rho}} \otimes \cdots \otimes \vacu_{\den{\rho}} \Big) = \den{\rho} \otimes \cdots \otimes \den{\rho}\,.
\ee
A plain proof $\psi: {!} \vdash B$ is, by definition, a linear part $\pi$ below which are derelictions and contractions (and possibly weakenings). Thus its denotation is, roughly speaking and accounting for only one input of type $A$, of the form $\den{\pi} d^{\otimes n} \Delta^{n-1}$. Thus we may compute $\den{\psi}\vacu_{\den{\rho}}$ by applying the denotation $\den{\pi}$ of the linear part to the above:
\be\label{eq:normal_evaluation}
\den{\psi} \vacu_{\den{\rho}} = \den{\pi} d^{\otimes n} \Delta^{n-1} \vacu_{\den{\rho}} = \den{\pi(\rho,\ldots,\rho)}\,.
\ee
To compute derivatives of denotations of proofs is to provide inputs other than vacuums $\vacu_{\den{\rho}}$ \cite[Corollary 4.5]{clift2018derivatives}. This has an easy to understand effect on the set of copies that are provided throughout the proof:
\begin{align}
d^{\otimes n} \Delta^{n-1} \ket{\den{\nu}}_{\den{\rho}} &= d^{\otimes n} \Big( \sum_{i=1}^n \vacu_{\den{\rho}} \otimes \cdots \otimes \underbrace{\ket{\den{\nu}}_{\den{\rho}}}_{i\text{th}} \otimes \cdots \otimes \vacu_{\den{\rho}} \Big) \nonumber\\
&= \sum_{i=1}^n \den{\rho} \otimes \cdots \otimes \underbrace{\den{\nu}}_{i\text{th}} \otimes \cdots \otimes \den{\rho}\,.
\end{align}
That is, among all the $n$ copies of $\rho$ that are produced to be unpackaged and consumed in the proof, we substitute exactly one copy with $\nu$. This is the central idea in Ehrhard-Regnier's notion of derivative \cite{ehrhard2003differential}. If we now apply a linear proof $\pi$ to this
\be
\den{\pi} d^{\otimes n} \Delta^{n-1} \ket{\den{\nu}}_{\den{\rho}} 
= \sum_{i=1}^n \den{\pi(\rho,\ldots,\nu,\ldots,\rho)}\,.
\ee
Hence
\be
\den{\pi} d^{\otimes n} \Delta^{n-1} \ket{\den{\nu} - \den{\rho}}_{\den{\rho}} 
= \sum_{i=1}^n \big\{ \den{\pi(\rho,\ldots,\nu,\ldots,\rho)} - \den{\pi(\rho,\ldots,\rho,\ldots,\rho)} \big\} \,.
\ee
Now suppose that we have a linear combination $\sum_\rho x_\rho \rho$. Then
\begin{align*}
d^{\otimes n} \Delta^{n-1} \vacu_{\sum_\rho x_\rho \den{\rho}} &= d^{\otimes n} \Big( \vacu_{\sum_\rho x_\rho \den{\rho}} \otimes \cdots \otimes \vacu_{\sum_\rho x_\rho \den{\rho}} \Big)\\
&= \sum_{\rho_1,\ldots,\rho_n} x_{\rho_1} \cdots x_{\rho_n} \den{\rho_1} \otimes \cdots \otimes \den{\rho_n}\,.
\end{align*}
If these copies are to be consumed in the linear part $\pi$ of the overall component-wise plain proof $\psi$ (which is roughly $\pi d^{\otimes n} \Delta^{n-1}$) then we end up with an output
\begin{align}
\den{\psi} \vacu_{\sum_\rho x_\rho \den{\rho}} &= \den{\pi} d^{\otimes n} \Delta^{n-1} \vacu_{\sum_\rho x_\rho \den{\rho}} \nonumber\\
&= \sum_{\rho_1,\ldots,\rho_n} x_{\rho_1} \cdots x_{\rho_n} \den{\pi(\rho_1,\ldots,\rho_n)} \nonumber\\
&= \sum_\tau \Big\{ \sum_{\rho_1,\ldots,\rho_n} \delta_{\tau = \pi(\rho_1,\ldots,\rho_n)} x_{\rho_1} \cdots x_{\rho_n} \Big\} \den{\tau} \nonumber\\
&= \sum_\tau f^\tau_\psi \den{\tau} \label{eq:compute_psi_general}
\end{align}
where $f^\tau_\psi$ is the polynomial of Definition \ref{defn:f_tau} in the special case where there is a single input and $\rho$ runs over the proofs chosen in $\cat{P}$. Note that if $\sum_\rho x_\rho \rho$ is a probability distribution, then we can define $p(\tau) := f^\tau_\psi$ in which case
\be\label{eq:comp_path_discover}
p(\tau) = \sum_\gamma p(\tau|\gamma) p(\gamma)
\ee
where $p(\tau|\gamma) := \delta_{\tau = \pi(\rho_1,\ldots,\rho_n)}$ and $p(\gamma) := x_{\rho_1} \cdots x_{\rho_n}$. Here $\gamma$ ranges over sequences of proofs $\rho_1,\ldots,\rho_n$. Note that the linear part $\pi$ requires $n$ inputs, and when we evaluate $\psi$ on a linear combination (as opposed to the denotation of a proof) we can end up passing the linear part sequences other than $\rho,\ldots,\rho$ as in the normal evaluation \eqref{eq:normal_evaluation}.

Thus the polynomials $f^\tau_\psi$, which are the central algebraic objects, can be thought of as presenting the probabilities of output states $\tau$ by marginalising out some random variable $\gamma$ which represents a sequence of $n$ choices of proofs of type $A$. This random variable $\gamma$ is the central object of the present paper, and we explain in the next section how to think of it as an \emph{error syndrome}.

\subsection{Error Syndromes for Plain Proofs}\label{section:error_syndromes}

\subsubsection{Special Case}

We begin with the case $\psi: {!} A \vdash B$ where $\psi$ has a single input with linear part $\pi: n A \vdash B$ and we restrict to the case where our set of allowed proofs is $\cat{P} = \{ 0, 1 \}$. We consider a situation in which $0$ is the ``correct'' input and $1$ is an error. So the correct output is
\[
\psi(0) = \pi(0, \ldots, 0)
\]
and an error syndrome $\gamma$ is by definition an assignment to every computation path through the proof of either a $0$ (meaning that this path experiences no error) or $1$ (meaning that for this path there is an error). This explains the term \emph{error syndrome} which is the standard terminology for such a pattern in the theory of error correcting codes.

\begin{theorem}\label{theorem:error_count} We have
\be
\den{\psi} \ket{ \den{1} - \den{0} }_{\den{0}} = \sum_{\tau \neq \psi(0)} A_\tau \Big\{ \den{\tau} - \den{\psi(0)} \Big\}
\ee
where $A_\tau$ is a non-negative integer defined for $\tau \neq \psi(\underline{0})$ by
\[
A_\tau = \Big| \big\{ 1 \le i \le n \l \pi(0, \ldots, \overset{i}{1}, \ldots, 0) = \tau \big\} \Big|
\]
the count of the \emph{number of weight one error syndromes that produce $\tau$}.
\end{theorem}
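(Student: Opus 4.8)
The plan is to reduce the statement to the derivative computation already carried out in Section~\ref{section:eval_prog_vec}, instantiated at $\rho = 0$, $\nu = 1$, and then to reorganise the resulting sum according to the value of the output proof. Since $\psi$ is plain with linear part $\pi \colon nA \vdash B$ and, by our standing convention, has positive degree $n \ge 1$ (no weakening), its denotation factors on the cofree coalgebra as $\den{\psi} = \den{\pi} \circ d^{\otimes n} \circ \Delta^{n-1}$: the contractions below $\pi$ realise the iterated comultiplication and the derelictions realise $d^{\otimes n}$. I would first evaluate the left-hand side using the coalgebra structure maps. By the comultiplication formula the element $\ket{\den{1} - \den{0}}_{\den{0}}$ is primitive over the grouplike element $\vacu_{\den{0}}$, so
\[
\Delta^{n-1}\ket{\den{1} - \den{0}}_{\den{0}} = \sum_{i=1}^n \vacu_{\den{0}} \otimes \cdots \otimes \underbrace{\ket{\den{1} - \den{0}}_{\den{0}}}_{i\text{th}} \otimes \cdots \otimes \vacu_{\den{0}}\,,
\]
and applying $d^{\otimes n}$ with $d\vacu_{\den{0}} = \den{0}$ and $d\ket{\den{1} - \den{0}}_{\den{0}} = \den{1} - \den{0}$ gives
\[
d^{\otimes n}\Delta^{n-1}\ket{\den{1} - \den{0}}_{\den{0}} = \sum_{i=1}^n \den{0} \otimes \cdots \otimes \underbrace{(\den{1} - \den{0})}_{i\text{th}} \otimes \cdots \otimes \den{0}\,.
\]

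Next I would apply $\den{\pi}$, using that it is linear on $\den{A}^{\otimes n}$ hence multilinear in its $n$ slots, together with $\den{\pi}(\den{\rho_1} \otimes \cdots \otimes \den{\rho_n}) = \den{\pi(\rho_1,\ldots,\rho_n)}$. Expanding $\den{1} - \den{0}$ in the $i$th slot yields
\[
\den{\psi}\ket{\den{1} - \den{0}}_{\den{0}} = \sum_{i=1}^n \Big\{ \den{\pi(0,\ldots,\overset{i}{1},\ldots,0)} - \den{\pi(0,\ldots,0)} \Big\}\,,
\]
which is exactly the $\rho = 0$, $\nu = 1$ case of the derivative identity displayed just before \eqref{eq:compute_psi_general}; and $\den{\pi(0,\ldots,0)} = \den{\psi(0)}$ by \eqref{eq:normal_evaluation}. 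Finally I reorganise the first group of terms by output proof. For $\tau \in \cat{Q}$ put $\widetilde{A}_\tau = \big|\{ 1 \le i \le n \mid \pi(0,\ldots,\overset{i}{1},\ldots,0) = \tau \}\big|$; this is a non-negative integer agreeing with $A_\tau$ for $\tau \neq \psi(0)$, and $\sum_{\tau \in \cat{Q}} \widetilde{A}_\tau = n$ since each index $i$ is counted exactly once (note $\pi(0,\ldots,\overset{i}{1},\ldots,0) \in \cat{Q}$ by the hypothesis $\{\pi(X) \mid X \in \cat{P}^{n}\} \subseteq \cat{Q}$). Hence $\sum_{i=1}^n \den{\pi(0,\ldots,\overset{i}{1},\ldots,0)} = \sum_{\tau \in \cat{Q}} \widetilde{A}_\tau \den{\tau}$, and substituting $n = \sum_{\tau} \widetilde{A}_\tau$ into the term $-n\den{\psi(0)}$ turns the right-hand side into $\sum_{\tau \in \cat{Q}} \widetilde{A}_\tau\big(\den{\tau} - \den{\psi(0)}\big)$; the $\tau = \psi(0)$ summand vanishes, leaving $\sum_{\tau \neq \psi(0)} A_\tau\big(\den{\tau} - \den{\psi(0)}\big)$, which is the claim.

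The computation is essentially routine once Section~\ref{section:eval_prog_vec} is in hand; the only points needing care are the justification that $\den{\psi}$ really is $\den{\pi}\,d^{\otimes n}\Delta^{n-1}$ on the relevant elements (using plainness and the absence of weakenings), the coalgebra identity for $\Delta^{n-1}\ket{\den{1} - \den{0}}_{\den{0}}$, and the bookkeeping observation that the ``missing'' count $\widetilde{A}_{\psi(0)} = n - \sum_{\tau \neq \psi(0)} A_\tau$ is precisely what converts $\sum_\tau \widetilde{A}_\tau \den{\tau} - n\,\den{\psi(0)}$ into the stated difference form. I do not anticipate any genuine obstacle beyond this.
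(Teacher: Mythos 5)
Your proof is correct, and it reaches the key intermediate identity
$\sum_{i=1}^n \big\{ \den{\pi(0,\ldots,\overset{i}{1},\ldots,0)} - \den{\psi(0)} \big\}$
by a route that differs from the paper's in where it places the burden of rigour. The paper's proof never touches the coalgebra structure maps directly: it evaluates $\den{\psi}$ on the vacuum $\ket{\emptyset}_{x_0\den{0}+x_1\den{1}}$ to get the polynomial expression \eqref{eq:compute_psi_general}, applies the directional derivative $\tfrac{\partial}{\partial x_1} - \tfrac{\partial}{\partial x_0}$ at $(x_0,x_1)=(1,0)$ via the product rule on the monomials $x_{\rho_1}\cdots x_{\rho_n}$, and then invokes \cite[Corollary 4.5]{clift2018derivatives} to identify that derivative with $\den{\psi}\ket{\den{1}-\den{0}}_{\den{0}}$. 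You instead compute the left-hand side directly, using the factorisation $\den{\psi} = \den{\pi}\, d^{\otimes n}\, \Delta^{n-1}$, the primitivity of $\ket{\den{1}-\den{0}}_{\den{0}}$ under comultiplication, and multilinearity of $\den{\pi}$ --- which is precisely the informal computation of Section \ref{section:eval_prog_vec}, now promoted to the proof itself. What your version buys is self-containedness (no appeal to the external corollary) and a more explicit final bookkeeping step, where the identity $\sum_\tau \widetilde{A}_\tau = n$ makes the regrouping into $\sum_{\tau\neq\psi(0)} A_\tau\{\den{\tau}-\den{\psi(0)}\}$ transparent; the paper asserts this regrouping without comment. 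What it costs is that the factorisation $\den{\psi} = \den{\pi}\, d^{\otimes n}\, \Delta^{n-1}$ is only stated ``roughly speaking'' in the expository section, so a fully rigorous version of your argument would still need to justify it (e.g.\ by the cut-elimination results of \cite{clift2020encodings}) --- this is exactly the rigour that the paper outsources to \cite[Corollary 4.5]{clift2018derivatives}. The two proofs are therefore equivalent in substance, with the external citation and the direct coalgebra computation playing interchangeable roles.
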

\begin{proof}
    We consider a linear combination of these denotations in $\den{A}$ with coefficients $x_0,x_1$ and then differentiate the output probabilities obtained using \eqref{eq:compute_psi_general}:
\begin{align*}
    \Big\{ \frac{\partial}{\partial x_1} - \frac{\partial}{\partial x_0} \Big\} & \den{\psi} \vacu_{x_0 \den{0} + x_1 \den{1}}\Bigr|_{x_0=1,x_1=0}\\
    &\qquad = \sum_\tau \sum_{\rho_1,\ldots,\rho_n} \delta_{\tau = \pi(\rho_1,\ldots,\rho_n)} \Big\{ \frac{\partial}{\partial x_1} - \frac{\partial}{\partial x_0} \Big\}\Big( x_{\rho_1} \cdots x_{\rho_n} \Big) \Bigr|_{x_0=1,x_1=0} \den{\tau}\\
    &\qquad = \sum_{i=1}^n \sum_\tau \sum_{\rho_1,\ldots,\rho_n} \delta_{\tau = \pi(\rho_1,\ldots,\rho_n)} \Big\{ x_{\rho_1} \cdots \frac{\partial}{\partial x_1}( x_{\rho_i} ) \cdots x_{\rho_n} - \\
    &\qquad\qquad x_{\rho_1} \cdots \frac{\partial}{\partial x_0}( x_{\rho_i} ) \cdots x_{\rho_n} \Big) \Bigr|_{x_0=1,x_1=0} \den{\tau}\\
    &\quad = \sum_{i=1}^n \Big\{ \den{\pi(0,\ldots,\overset{i}{1},\ldots,0)} - \den{\psi(0)} \Big\}\\
    &\quad = \sum_{\tau \neq \psi(0)} A_\tau \Big\{ \den{\tau} - \den{\psi(0)} \Big\}\,.
\end{align*}
Note that under the prevailing hypotheses the set $\{ \den{\tau} \}_\tau$ is linearly independent, hence $\{ \den{\tau} - \den{\psi(0)} \}_{\tau \neq \psi(0)}$ is linearly independent. The conclusion follows from \cite[Corollary 4.5]{clift2018derivatives} which shows that the quantity calculated above is $\den{\psi} \ket{ \den{1} - \den{0} }_{\den{0}}$.
\end{proof}

In general we can express the derivatives of $f^\tau_\psi$ in terms of combinatorics of higher-weight error syndromes that produce $\tau$.

\subsubsection{General Case}\label{section:error_syndrome_general_case}

Let $\psi$ be a plain proof of a sequent ${!}A_1, \ldots, {!} A_r \vdash B$ as in Definition \ref{def:plain} with linear part $\pi: n_1 A_1, \ldots, n_r A_r \vdash B$ as in the following diagram:
\begin{center}
\AxiomC{$\proofvdots{\pi}$}
\noLine\UnaryInfC{$n_1\, A_1,\ldots,n_r \, A_r \vdash B$}
\RightLabel{\scriptsize der}
\doubleLine\UnaryInfC{$n_1\, {!}A_1, \ldots, n_r \, {!} A_r \vdash B$}
\RightLabel{\scriptsize ctr/wk}
\doubleLine\UnaryInfC{${!}A_1,\ldots,{!}A_r \vdash B$}
\DisplayProof
\end{center}
For $1 \leq i \leq r$ we assume given a finite set $\mathcal{P}_i$ of proofs of $A_i$ of size $p_i - 1 = |\mathcal{P}_i|$ and a set of proofs $\mathcal{Q}$ of $B$ satisfying the hypotheses of Lemma \ref{lem:F_function}. We further assume given an enumeration of these proofs via bijections
\be\label{eq:enumeration_Pi}
\{ 0, 1, \ldots, p_i \} \overset{\cong}{\longrightarrow} \mathcal{P}_i
\ee
for each $1 \le i \le r$ and
\be\label{eq:enumeration_Q}
\{ 0, 1, \ldots, |\mathcal{Q}| - 1 \} \overset{\cong}{\longrightarrow} \mathcal{Q}\,.
\ee
Even if $A_i = A_j$ for $i \neq j$ and $\mathcal{P}_i = \mathcal{P}_j$ as sets, we still allow the enumerations of these sets to be different for $i$ and $j$. By abuse of notation we treat these enumerations as equalities, and refer to an element of $\mathcal{P}_i$ by its corresponding integer $0 \le j \le p_i - 1$. We further assume these enumerations are chosen so that $\psi(0,\ldots,0) = 0$. Following \cite{clift2018derivatives} we write
\[
\Big( \sum_{j=1}^{p_i} x^i_j \den{j} \Big)_{i=1}^r \in \prod_{i=1}^r \Delta \mathcal{P}_i
\]
for a generic input to the map $\Delta \psi$ of Proposition \ref{prop:delta_psi}. Note that $\Delta \cat{P}_i \subseteq \mathbb{R}^{p_i}$ is a manifold with corners, where we give the ambient space the coordinates $\{ x^i_j \}_{j=0}^{p_i}$. The distribution in $\Delta \cat{P}_i$ concentrated at $0$ we simply denote by $\bold{0}$, that is,
\[
\bold{0} = \den{0} = (1,0,\ldots,0) \in \Delta \cat{P}_i\,.
\]
We write $\bold{0}$ for the sequence $\big( \bold{0} \big)_{i=1}^r$ and $\bold{0} = \den{0} = (1,0) \in \Delta \mathcal{Q}$. We are interested in
\[
T_{\bold{0}}(\Delta \psi): T_{\bold{0}}\Big( \prod_{i=1}^r \Delta \mathcal{P}_i \Big) \lto T_{\bold{0}}\big( \Delta \mathcal{Q} \big)
\]
and higher-order derivatives.

The linear part $\pi$ takes $n_i$ inputs of type $A_i$ for $1 \le i \le r$. These correspond to $n_i$ different \emph{computation paths} in which different copies of the $i$th input are used differently in the algorithm represented by the proof.\footnote{This will become more explicit in the special case of the encoding of the pseudo-UTM step function below, but the cut-elimination algorithm of of \cite[Proposition 3.6]{clift2020encodings} provides a general notion of computation path when $\psi$ is the cut of multiple component-wise plain proofs.} On each of those paths we have $p_i$ distinct \emph{types} of error (the elements of $\mathcal{P}_i$ other than $0$). An error syndrome, as defined below, assigns to each computation path either a $0$ (no error) or a particular type of error. In what follows $|S|$ denotes the cardinality of a finite set $S$.

\begin{definition}\label{defn:error_syndrome} An \emph{error syndrome} $\gamma$ is a sequence of functions $\gamma_1,\ldots,\gamma_r$ where
\[
\gamma_i: \{ 1, \ldots, n_i \} \lto \{ 0, \ldots, p_i \} \qquad 1 \le i \le r\,.
\]
Hence $|\gamma^{-1}_i(j)|$ for $1 \le i \le r, 1 \le j \le p_i$ is the number of errors of type $j$ that occur in input $i$ in the syndrome. The total number of errors in the $i$th input is
\[
|\gamma_i| = \Big| \big\{ 1 \le c \le n_i \l \gamma_i(c) \neq 0 \big\} \Big|\,.
\]
The \emph{weight} of the error syndrome is
\[
\operatorname{wt}(\gamma) = (\bold{s}^1,\ldots,\bold{s}^r) \in \prod_{i=1}^r \mathbb{N}^{p_i}
\]
where
\[
\bold{s}^i = ( s^i_1, \ldots, s^i_{p_i} )\,, \qquad s^i_j = |\gamma_i^{-1}(j)|\,.
\]
\end{definition}

\begin{figure}[t]
    \centering
    \includegraphics
        [width=0.5\textwidth]
        {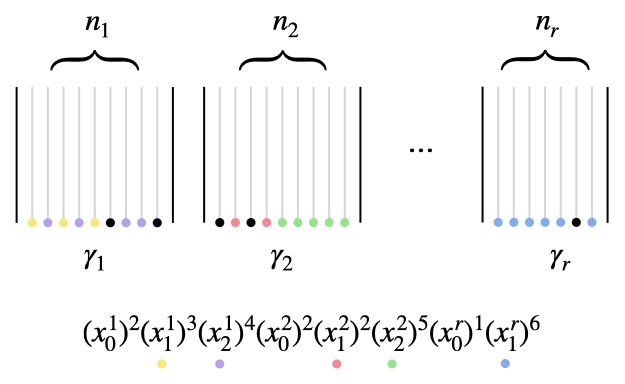}
    \caption{\label{fig:syndrome_diagram}%
        The plain proof $\psi$ has $r$ inputs, and $n_i$ computation paths leading from the $i$th input. In this depiction of an error syndrome $\gamma$, integers $1 \le j \le p_i$ are represented as colours with a different set of colours (proofs) for each $i$. Here for example $\gamma_1$ is the function corresponding to the integer sequence $1,2,1,2,1,0,2,2,0$. The corresponding factor in the monomial associated to $\gamma$ in $f^\tau_\psi$ is $(x^1_0)^2(x^1_1)^3(x^1_2)^4$ with $x^i_0$ factors being associated to ``no error'' (the integer $0$). Note that different orderings of the integer sequence represent distinct error syndromes but contribute the same monomial; this is the origin of coefficients in $f^\tau_\psi$ other than $1$.
    }
\end{figure}

\begin{definition}\label{defn:eval_error_syndrome}
Let $\gamma$ be an error syndrome. Then we define
\be\label{eq:eval_pi_gamma}
\operatorname{eval}(\pi, \gamma) = \pi\Big( \gamma_1(1), \ldots, \gamma_1(n_1),
\gamma_2(1), \ldots, \gamma_2(n_2), \ldots
\gamma_r(1), \ldots, \gamma_r(n_r)\Big)\,.
\ee
\end{definition}

\begin{theorem}\label{theorem:main_error_syndrome_general} Let $\bold{k}^i \in \mathbb{N}^{p_i}$ for $1 \le i \le r$ and set $\bold{k} = (\bold{k}^1,\ldots,\bold{k}^r)$. Then
\be\label{eq:main_comp1}
\dntn{\psi}\Big( \bigotimes_{i=1}^r \Big| \big(\den{1} - \den{0}\big)^{\otimes k^i_1}, \ldots, \big(\den{p_i} - \den{0}\big)^{\otimes k^i_{p_i}} \Big\rangle_{\den{0}} \Big) = \sum_{\tau \in \mathcal{Q}} \sum_{\bold{0} \le \bold{s} \le \bold{k}} S^{\bold{k}}_{\bold{s}} A^{\bold{s}}_\tau  \den{\tau}
\ee
where $A^{\bold{s}}_\tau$ is a non-negative integer defined by
\be\label{eq:defn_A_s_tau}
A^{\bold{s}}_\tau = \Big| \big\{ \gamma \l \operatorname{wt}(\gamma) = \bold{s} \text{ and } \operatorname{eval}(\pi, \gamma) = \tau \big\} \Big|
\ee
the count of the \emph{number of weight $\bold{s}$ error syndromes that produce $\tau$}. Here $\bold{s} \le \bold{k}$ means $s^i_j \le k^i_j$ for all $1 \le i \le r$, $1 \le j \le p_i$ and
\be\label{eq:s_coeff_theorem}
S^{\bold{k}}_{\bold{s}} = \prod_{i=1}^r \Big\{ \delta_{n_i \ge |\bold{k}^i|} (-1)^{|\bold{k}^i| - |\bold{s}^i|} \frac{(n_i-|\bold{s}^i|)!}{(n_i-|\bold{k}^i|)!} \prod_{j=1}^{p_i} \frac{k^i_j!}{(k^i_j - s^i_j)!} \Big\}
\ee
where $|\bold{s}^i| = \sum_j s^i_j$, $|\bold{k}^i| = \sum_j k^i_j$.
\end{theorem}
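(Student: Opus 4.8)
The plan is to reduce the general multi-input statement to a product over $i$ of one-variable identities, then establish the one-variable identity by a direct differentiation argument generalising the proof of Theorem~\ref{theorem:error_count}. First I would invoke Lemma~\ref{lem:F_function} and Proposition~\ref{prop:delta_psi} to replace $\den{\psi}$ applied to a tensor of kets by the polynomial expression $\sum_\tau f^\tau_\psi \den{\tau}$, so that the left-hand side of \eqref{eq:main_comp1} becomes a differential operator applied to $f^\tau_\psi$ evaluated at the distinguished point. Concretely, by \cite[Corollary 4.5]{clift2018derivatives} the vector $\dntn{\psi}\big(\bigotimes_i |(\den{1}-\den{0})^{\otimes k^i_1},\ldots\rangle_{\den{0}}\big)$ is computed by applying, for each $i$ and each $j$, the operator $(\partial_{x^i_j} - \partial_{x^i_0})^{k^i_j}$ to $\sum_\tau f^\tau_\psi \den{\tau}$ and then setting $x^i_0 = 1$, $x^i_j = 0$ for $j \ge 1$. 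Since $\{\den{\tau}\}_{\tau \in \mathcal{Q}}$ is linearly independent by hypothesis, it suffices to prove the scalar identity
\[
\prod_{i=1}^r \prod_{j=1}^{p_i}\Big(\tfrac{\partial}{\partial x^i_j} - \tfrac{\partial}{\partial x^i_0}\Big)^{k^i_j} f^\tau_\psi \Bigr|_{x^i_0 = 1,\, x^i_{\ge 1} = 0} = \sum_{\bold{0}\le\bold{s}\le\bold{k}} S^{\bold{k}}_{\bold{s}} A^{\bold{s}}_\tau
\]
for each $\tau$.

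Next I would expand $f^\tau_\psi = \sum_\gamma \delta_{\tau = \operatorname{eval}(\pi,\gamma)} \prod_{i,j} (x^i_j)^{s^i_j(\gamma)}$, grouping error syndromes by their weight $\bold{s} = \operatorname{wt}(\gamma)$; the number of syndromes of weight $\bold{s}$ evaluating to $\tau$ is exactly $A^{\bold{s}}_\tau$, while each such syndrome contributes the monomial $\prod_{i,j}(x^i_j)^{s^i_j}$ times a multinomial count of orderings. In fact, for fixed weight $\bold{s}$ the monomial itself is $\prod_{i}\prod_{j\ge 1}(x^i_j)^{s^i_j}\cdot (x^i_0)^{n_i - |\bold{s}^i|}$, and the differential operator acts on each such monomial separately and multiplicatively across $i$. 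Thus the whole computation factors as a product over $i$, and within each $i$ I reduce to the purely combinatorial claim: for a single monomial in variables $x_0, x_1, \ldots, x_{p}$ of the form $x_0^{n - |\bold{s}|}\prod_{j\ge 1} x_j^{s_j}$, applying $\prod_{j=1}^{p}(\partial_{x_j} - \partial_{x_0})^{k_j}$ and evaluating at $x_0 = 1$, $x_{\ge 1} = 0$ yields $\delta_{n \ge |\bold{k}|}(-1)^{|\bold{k}|-|\bold{s}|}\frac{(n-|\bold{s}|)!}{(n-|\bold{k}|)!}\prod_j \frac{k_j!}{(k_j - s_j)!}$ when $\bold{s}\le\bold{k}$ and $0$ otherwise. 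This last identity is a hypergeometric-type Leibniz expansion: expanding $(\partial_{x_j}-\partial_{x_0})^{k_j} = \sum_{t_j}\binom{k_j}{t_j}(-1)^{k_j - t_j}\partial_{x_j}^{t_j}\partial_{x_0}^{k_j - t_j}$, applying to the monomial, and noting that after evaluation only $t_j = s_j$ survives (we need all $x_j$-powers killed), so $\sum_j (k_j - t_j) = |\bold{k}| - |\bold{s}|$ derivatives hit $x_0^{n-|\bold{s}|}$, producing the falling factorial $\frac{(n-|\bold{s}|)!}{(n-|\bold{k}|)!}$ provided $n - |\bold{s}| \ge |\bold{k}| - |\bold{s}|$, i.e.\ $n \ge |\bold{k}|$.

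The combinatorial bookkeeping linking the ordering-count of syndromes of a given weight to the monomial coefficient has already been absorbed correctly: when we expand $f^\tau_\psi$ as a sum over \emph{syndromes} (not monomials), distinct orderings are distinct syndromes and each contributes the bare monomial with coefficient $1$, so $A^{\bold{s}}_\tau$ is literally the coefficient of the weight-$\bold{s}$ part, and no extra multinomial factor appears — this is the remark in the caption of Figure~\ref{fig:syndrome_diagram}. Collecting: the differential operator applied to the weight-$\bold{s}$ contribution gives $A^{\bold{s}}_\tau$ times the scalar $\prod_i\{\delta_{n_i \ge |\bold{k}^i|}(-1)^{|\bold{k}^i|-|\bold{s}^i|}\frac{(n_i-|\bold{s}^i|)!}{(n_i-|\bold{k}^i|)!}\prod_j\frac{k^i_j!}{(k^i_j - s^i_j)!}\}$, which is exactly $S^{\bold{k}}_{\bold{s}}$, and summing over $\bold{s}$ with $\bold{0}\le\bold{s}\le\bold{k}$ (the constraint $s^i_j \le k^i_j$ coming from $\frac{k^i_j!}{(k^i_j - s^i_j)!} = 0$ otherwise) completes the proof. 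The main obstacle I anticipate is getting the signs and the direction of the inequality $n_i \ge |\bold{k}^i|$ right, and in particular verifying that terms with $t_j \ne s_j$ genuinely vanish after the substitution $x_{\ge 1} = 0$ — one must check that $\partial_{x_j}^{t_j}$ applied to $x_j^{s_j}$ followed by setting $x_j = 0$ gives $0$ unless $t_j = s_j$, and that no spurious cross-terms survive when several $\partial_{x_0}$'s act; this is routine but is where an off-by-one or sign slip would hide.
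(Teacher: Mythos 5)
Your proposal is correct and follows essentially the same route as the paper: identify the left-hand side with the mixed derivatives $\prod_{i,j}(\partial_{x^i_j}-\partial_{x^i_0})^{k^i_j}f^\tau_\psi$ evaluated at $\bold{x}_0=1,\bold{x}_{\neq 0}=0$ via \cite[Corollary 4.5]{clift2018derivatives}, expand $f^\tau_\psi$ over error syndromes grouped by weight so that $A^{\bold{s}}_\tau$ appears as the coefficient of the weight-$\bold{s}$ monomial, and evaluate the derivative of each monomial by an elementary per-$i$ combinatorial identity. The only cosmetic difference is that the paper performs this last step in the simplex coordinates $w^i_j$ (so the monomial is $(1-\sum_j w^i_j)^{n_i-|\bold{s}^i|}\prod_j (w^i_j)^{s^i_j}$ and the identity is Lemma \ref{comin_deriv_pair}), whereas you expand $(\partial_{x_j}-\partial_{x_0})^{k_j}$ binomially in the $x$-coordinates directly; these agree via $\partial/\partial w^i_j = \partial/\partial x^i_j - \partial/\partial x^i_0$.
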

\begin{proof}
In a neighbourhood of $\bold{0} \in \Delta \cat{P}_i$ we can parametrise the probability simplex via a smooth immersion of manifolds with corners
\begin{gather}
\iota_i: [0,\varepsilon)^{p_i} \lto \Delta \cat{P}_i\\
(w^i_1,\ldots,w^i_{p_i}) \mapsto (1-\sum_{j=1}^{p_i} w^i_j, w^i_1,\ldots,w^i_{p_i}) \label{eq:w_coord_simplex}
\end{gather}
with
\[
T_{\bold{0}}(\iota)\Big( \frac{\partial}{\partial w^i_j} \Big) = \frac{\partial}{\partial x^i_j} - \frac{\partial}{\partial x^i_0} \in T_{\bold{0}}\big( \Delta \mathcal{P}_i \big)
\]
for $1 \le j \le p_i$. The composite
\[
\xymatrix@C+2pc{
\prod_{i=1}^r [0, \varepsilon)^{p_i} \ar[r]^{\prod_i \iota_i} & \prod_{i=1}^r \Delta \mathcal{P}_i \ar[r]^{\Delta \psi} & \Delta \mathcal{Q}
}
\]
has derivative, writing $\iota = \prod_i \iota_i$ and letting $\bold{k}^i \in \mathbb{N}^{p_i}$
\begin{align}
\prod_{i=1}^r \frac{\partial^{|\bold{k^i}|}}{\partial (w^i_1)^{k^i_1} \cdots \partial (w^i_{p_i})^{k^i_{p_i}}} ( \Delta \psi \circ \iota )\Bigr|_{w = \bold{0}} &= \sum_{\tau \in \mathcal{Q}} \prod_{i=1}^r \frac{\partial^{|\bold{k^i}|}}{\partial (w^i_1)^{k^i_1} \cdots \partial (w^i_{p_i})^{k^i_{p_i}}} f^\tau_{\psi} \Bigr|_{w = \bold{0}} \cdot \tau \label{eq:theorem_eq_1}
\end{align}
Abbreviating $x^1_0 = 1, x^1_1 = 0, \ldots, x^2_0 = 1, \ldots$ to $\bold{x}_0=1,\bold{x}_{\neq 0}=0$ we can write the coefficient of $\tau$ in this expression as
\begin{align}
\prod_{i=1}^r \prod_{j=1}^{p_i} \Big[ \frac{\partial}{\partial x^i_j} &- \frac{\partial}{\partial x^i_0} \Big]^{k^i_j} f^{\tau}_\psi \Bigr|_{\bold{x}_0=1,\bold{x}_{\neq 0}=0}\\
&= \prod_{i=1}^r \prod_{j=1}^{p_i} \sum_{s = 0}^{k^i_j} \binom{k^i_j}{s} (-1)^{k^i_j-s}\Big[ \frac{\partial}{\partial x^i_j} \Big]^s \Big[ \frac{\partial}{\partial x^i_0} \Big]^{k^i_j-s} f^{\tau}_\psi \Bigr|_{\bold{x}_0=1,\bold{x}_{\neq 0}=0}
\end{align}
Hence by \cite[Corollary 4.5]{clift2018derivatives}, \eqref{eq:theorem_eq_1} is equal to the left hand side of \eqref{eq:main_comp1}, noting that the $\ket{-}$ symbols are multi-linear in their entries. On the other hand, we can by \cite[Proposition 4.3]{clift2018derivatives} compute \eqref{eq:theorem_eq_1} by expanding $f^\tau_\psi$:
\begin{align}
&\sum_{\gamma} \delta_{\tau = \operatorname{eval}(\pi, \gamma)} \Big\{ \prod_{i=1}^r \prod_{j=1}^{p_i} \frac{\partial^{k^i_j}}{\partial (w^i_j)^{k^i_j}} \Big\}\Big( x^1_{\gamma_1(1)} \cdots x^1_{\gamma_1(n_1)} \cdots x^r_{\gamma_r(1)} \cdots x^r_{\gamma_r(n_r)} \Big)\Bigr|_{w = \bold{0}} \nonumber\\
&= \sum_{\gamma} \delta_{\tau = \operatorname{eval}(\pi, \gamma)} \Big\{ \prod_{i=1}^r \prod_{j=1}^{p_i} \frac{\partial^{k^i_j}}{\partial (w^i_j)^{k^i_j}} \Big\}\Big( \prod_{i=1}^r (1-\sum_{j=1}^{p_i} w^i_j)^{n_i-|\gamma_i|} (w_1^i)^{|\gamma^{-1}_i(1)|} \cdots (w_{p_i}^i)^{|\gamma^{-1}_i(p_i)|} \Big)\Bigr|_{w = \bold{0}} \nonumber\\
&= \sum_{\gamma} \delta_{\tau = \operatorname{eval}(\pi, \gamma)} \prod_{i=1}^r \Big\{ \prod_{j=1}^{p_i} \frac{\partial^{k^i_j}}{\partial (w^i_j)^{k^i_j}} \Big( (1-\sum_{j=1}^{p_i} w^i_j)^{n_i-|\gamma_i|} (w_1^i)^{|\gamma^{-1}_i(1)|} \cdots (w_{p_i}^i)^{|\gamma^{-1}_i(p_i)|} \Big) \Big\}  \Bigr|_{w = \bold{0}}\,. \label{eq:main_comp1_gammapol}
\end{align}
Next we apply Lemma \ref{comin_deriv_pair} below to each $i$ separately. We find that \eqref{eq:main_comp1_gammapol} is equal to a sum over $\gamma$ of $\delta_{\tau = \operatorname{eval}(\pi, \gamma)}$ times the product from $i = 1$ to $r$ of the delta functions
\[
\delta_{|\bold{k}^i| \le n_i} \delta_{|\gamma_i^{-1}(1)| \le k^i_1} \cdots \delta_{|\gamma_i^{-1}(p_i)| \le k^i_{p_i}}
\]
multiplied by the coefficient
\[
(-1)^{|\bold{k}^i| - |\gamma_i|} \frac{(n_i - |\gamma_i|)!}{(n_i - |\bold{k}^i|)!} \prod_{j=1}^p \frac{k^i_j!}{(k^i_j-|\gamma^{-1}_i(j)|)!}\,.
\]
Hence the coefficient of $\tau$ in \eqref{eq:theorem_eq_1} is
\begin{align*}
\sum_{\gamma} \delta_{\tau = \operatorname{eval}(\pi, \gamma)} \prod_{i=1}^r \Big\{ \delta_{n_i \ge |\bold{k}^i|} (-1)^{|\bold{k}^i| - |\gamma_i|} \frac{(n_i-|\gamma_i|)!}{(n_i-|\bold{k}^i|)!} \prod_{j=1}^{p_i} \frac{\delta_{k^i_j \ge |\gamma_i^{-1}(j)|}k^i_j!}{(k^i_j - |\gamma^{-1}_i(j)|)!} \Big\}\,.
\end{align*}
The combinatorial factor depends only on the weight of $\gamma$ so we can write \eqref{eq:theorem_eq_1} as
\begin{align*}
\prod_{i=1}^r \frac{\partial^{|\bold{k^i}|}}{\partial (w^i_1)^{k^i_1} \cdots \partial (w^i_{p_i})^{k^i_{p_i}}} ( \Delta \psi \circ \iota )\Bigr|_{w = \bold{0}} &= \sum_{\tau} \sum_{\gamma} \delta_{\tau = \operatorname{eval}(\pi, \gamma)} \delta_{\bold{k} \ge \operatorname{wt}(\gamma)} S^{\bold{k}}_{\operatorname{wt}(\gamma)} \tau\\
&= \sum_\tau \sum_{\bold{s} \le \bold{k}} S^{\bold{k}}_{\bold{s}} A^{\bold{s}}_\tau \tau
\end{align*}
using the notation of \eqref{eq:s_coeff_theorem}, as claimed.
\end{proof}

\begin{lemma}\label{comin_deriv_pair} Given variables $y_1,\ldots,y_p$ and integers $b, c_1, \ldots, c_p \ge 0$ we have
\begin{align*}
\frac{\partial^{a_1}}{\partial y_1^{a_1}} \cdots \frac{\partial^{a_p}}{\partial y_p^{a_p}} &\Big\{ \big( 1 - \sum_{j=1}^p y_j \big)^b y_1^{c_1} \cdots y_p^{c_p} \Big\}\Bigr|_{\bold{y} = 0}\\
&= \delta_{|\bold{a}| - |\bold{c}| \le b} \delta_{c_1 \le a_1} \cdots \delta_{c_p \le a_p} (-1)^{|\bold{a}| - |\bold{c}|} \frac{b!}{(b - (|\bold{a}| - |\bold{c}|))!} \prod_{j=1}^p \frac{a_j!}{(a_j-c_j)!}
\end{align*}
where $|\bold{a}| = \sum_{j=1}^p a_j$, $|\bold{c}| = \sum_{j=1}^p c_j$.
\end{lemma}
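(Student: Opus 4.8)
The plan is to reduce the statement to the extraction of a single coefficient of a polynomial. The starting observation is that for any polynomial $f$ in the variables $y_1,\ldots,y_p$ one has
\[
\frac{\partial^{a_1}}{\partial y_1^{a_1}} \cdots \frac{\partial^{a_p}}{\partial y_p^{a_p}} f \Bigr|_{\bold{y} = 0} = \Big( \prod_{j=1}^p a_j! \Big) \cdot [\, y_1^{a_1} \cdots y_p^{a_p}\, ] f\,,
\]
where $[\, y_1^{a_1}\cdots y_p^{a_p}\, ] f$ denotes the coefficient of the indicated monomial in $f$ (indeed, differentiating a monomial $y_1^{b_1}\cdots y_p^{b_p}$ and evaluating at $0$ is nonzero only when $b_j = a_j$ for all $j$, in which case it yields $\prod_j a_j!$). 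So it suffices to compute the coefficient of $y_1^{a_1}\cdots y_p^{a_p}$ in $f = (1 - \sum_{j} y_j)^b\, y_1^{c_1}\cdots y_p^{c_p}$ and multiply by $\prod_j a_j!$.

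Next I would factor out the monomial prefactor: multiplication by $y_1^{c_1}\cdots y_p^{c_p}$ merely shifts exponents, so the coefficient of $y_1^{a_1}\cdots y_p^{a_p}$ in $f$ vanishes unless $c_j \le a_j$ for every $j$ — this accounts for the factors $\delta_{c_j \le a_j}$ on the right — and when $c_j \le a_j$ for all $j$ it equals the coefficient of $y_1^{e_1}\cdots y_p^{e_p}$ in $(1-\sum_j y_j)^b$, where $e_j := a_j - c_j$. Now by the multinomial theorem
\[
\Big( 1 - \sum_{j=1}^p y_j \Big)^b = \sum_{m_0 + m_1 + \cdots + m_p = b} \binom{b}{m_0, m_1, \ldots, m_p} (-1)^{m_1 + \cdots + m_p}\, y_1^{m_1} \cdots y_p^{m_p}\,,
\]
so the coefficient of $y_1^{e_1}\cdots y_p^{e_p}$ is $(-1)^{|\bold{e}|} \binom{b}{\, b - |\bold{e}|,\, e_1, \ldots, e_p\,}$ when $|\bold{e}| := \sum_j e_j \le b$ (so that the forced value $m_0 = b - |\bold{e}|$ is nonnegative), and $0$ otherwise; since $|\bold{e}| = |\bold{a}| - |\bold{c}|$ under the standing assumption $c_j \le a_j$, this case distinction produces the factor $\delta_{|\bold{a}| - |\bold{c}| \le b}$.

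Finally I would assemble the pieces: expanding $\binom{b}{b-|\bold{e}|, e_1,\ldots,e_p} = \frac{b!}{(b-|\bold{e}|)!\, e_1! \cdots e_p!}$, multiplying by $\prod_j a_j!$, and substituting $e_j! = (a_j - c_j)!$ and $|\bold{e}| = |\bold{a}| - |\bold{c}|$ gives precisely the right-hand side. There is no real obstacle — the statement is purely a coefficient extraction via the multinomial theorem — so the only point needing a word of care is the boundary bookkeeping. At the outset I would therefore fix the convention that $1/m! := 0$ for $m < 0$, so that whenever a delta function on the right is $0$ the accompanying factorial of a would-be negative argument is not actually evaluated; this is exactly what makes both sides agree on the degenerate cases $c_j > a_j$ (where the $y_j^{c_j}$ factor forces the coefficient to vanish) and $|\bold{a}| - |\bold{c}| > b$ (where no monomial of $(1-\sum_j y_j)^b$ has large enough total degree).
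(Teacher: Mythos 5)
Your proof is correct. It takes a mildly different route from the paper's: the paper applies the Leibniz rule directly, observing that of the $a_j$ derivatives in $y_j$ exactly $c_j$ must land on the factor $y_j^{c_j}$ (fewer leaves a surviving power of $y_j$ that dies at $\bold{y}=0$, more annihilates it), which yields the factor $\binom{a_j}{c_j}c_j! = a_j!/(a_j-c_j)!$, with the remaining $a_j - c_j$ derivatives falling on $(1-\sum_j y_j)^b$ to produce the sign and the $b!/(b-(|\bold{a}|-|\bold{c}|))!$ factor. You instead convert the derivative-at-zero into coefficient extraction, absorb the monomial prefactor as an exponent shift, and read off the remaining coefficient from the multinomial theorem. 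The two arguments are dual ways of distributing the same budget (derivatives versus exponents) between the two factors, and they land on identical case distinctions for the delta functions; your version has the small advantage of making the vanishing conditions completely mechanical (a coefficient of a polynomial either exists or it does not), while the paper's is marginally shorter. Your closing remark about the convention $1/m! = 0$ for $m<0$ is a reasonable way to read the right-hand side in the degenerate cases, though strictly the delta factors already force the product to zero before any negative-argument factorial needs to be interpreted.
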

\begin{proof}
The left hand side is clearly zero unless $c_j \le a_j$ for all $j$, in which case we get for each $1 \le j \le p$ a factor of $\binom{a_j}{c_j}$ from all the ways of assigning $c_j$ copies of $\frac{\partial}{\partial y_j}$ to $y_j^{c_j}$ and then another factor of $c_j!$ from the $c_j$ derivatives of $y_j^{c_j}$ yielding the factor $\frac{a_j!}{(a_j-c_j)!}$. Then
\[
\frac{\partial^{a_1-c_1}}{\partial y_1^{a_1-c_1}} \cdots \frac{\partial^{a_p-c_p}}{\partial y_p^{a_p-c_p}}\Big( 1 - \sum_{j=1}^p y_j \Big)^b\Bigr|_{w = \bold{0}}
\]
contributes the remaining factors.
\end{proof}

\begin{remark}\label{remark:S_factor_zero}
The sum in \eqref{eq:main_comp1} includes $\bold{s} = \bold{0} = (0,\ldots,0)$. However there is only one error syndrome $\gamma$ of weight $\bold{0}$, the constant function assigning $0$ to every computation path. Moreover by hypothesis if $\gamma = \bold{0}$ denotes this syndrome then
\[
\operatorname{eval}(\pi, \gamma) = \psi(\bold{0}) = 0
\]
that is, $A^{\bold{0}}_\tau = \delta_{\tau=0}$. We note that when $\bold{s} = 0$, the combinatorial factor simplifies
\[
S^{\bold{k}}_{\bold{0}} = \prod_{i=1}^r \Big\{ \delta_{n_i \ge |\bold{k}^i|} (-1)^{|\bold{k}^i|} \frac{n_i!}{(n_i-|\bold{k}^i|)!} \Big\}\,.
\]
\end{remark}

\begin{remark}\label{remark:skk}
Note that provided $|\bold{k}^i| \le n_i$ for $1 \le i \le r$
\[
S^{\bold{k}}_{\bold{k}} = \prod_{i=1}^r \prod_{j=1}^{p_i} k^i_j!\,.
\]
\end{remark}

\section{Derivatives of a Universal Turing Machine}\label{section:diff_utm}

Recall that to define our probabilistic model $p(y|x,w)$ of final states $y$ of our UTM simulating a noisy Turing machine $w$ for $t$ steps on input $x$, we have in \eqref{eq:smooth_relax} of Section \ref{section:background_iid} assumed that we were given a smooth relaxation of the function
\be\label{eq:type_function_stept}
\xymatrix@C+2pc{
\Sigma^* \times W^{\texttt{code}} \ar[r]^-{\textrm{step}^t} & Q
}\,.
\ee
This is constructed as follows: first we encode $\Sigma^*$, $W^{\texttt{code}}$, $Q$ as types and their elements as proofs, then we define a plain proof $\psi$ which computes $\textrm{step}^t$ and then from this we obtain a denotation $\Delta \psi$ which by Proposition \ref{prop:delta_psi} makes the required diagram \eqref{eq:smooth_relax} commute. This $\Delta \psi$ is our smooth relaxation $\Delta \textrm{step}^t$. Our model $p(y|x,w)$ has already been defined given this smooth relaxation (see \eqref{eq:model_from_stept} and Section \ref{section:details_model}) and the derivatives of $H$ are related to derivatives of $\Delta \psi$.

In this section we first define $\mathcal{U}$ (Section \ref{appendix:stagedpseudoutm}), then we define $\psi$ (Section \ref{section:encoding_utm_step}) and then we specialise the discussion earlier of error syndromes for plain proofs to abstractly describe error syndromes for $\psi$ (Section \ref{section:error_syndromes_U}).


\subsection{Staged Pseudo-UTM}
\label{appendix:stagedpseudoutm}

Recall that in a Turing machine with multiple tapes, at each timestep the heads for each tape read the current symbol and, as a function of that sequence of symbols and the current state, the machine transitions to a new state, writes to all of the tapes simultaneously and moves each head \cite{arora2009computational}. We assume our UTM has a description tape (which contains the specification of the TM to simulate), a staging tape (which is used to avoid overloading the states of the UTM with the description tape information), a state tape (which contains the current state of the simulated machine) and a working tape (which is the state of the tape of the simulated machine). 

An example of a pseudo-UTM $\mathcal{U}$ with these features is presented in this section. Note that this is not a universal machine because we assume the symbols and states of the simulated machine can be encoded on single tape squares. However, for the purposes of comparing any finite set of Turing machines this is sufficient (see Appendix \ref{section:faq_realutm}).
\\

We now introduce the staged pseudo-UTM $\mathcal{U}$ following \cite{clift2021geometry}.\footnote{We make the following amendment: we introduce a new symbol $e$, let the symbols $a,c,d,e$ stand for generic symbols which are not $X$, and let $b$ stand for a generic symbol (which may be $X$). In \cite{clift2021geometry}, there is only $a,b,c,d$ and all of these are required to not be $X$. Other than this, our presentation is identical to \cite{clift2021geometry}.} Simulating a Turing machine $M$ with tape alphabet $\Sigma_M$ and set of states $Q_M$ on a UTM generally requires the specification of an encoding of $\Sigma_M$ and $Q_M$ into strings in the tape alphabet $\Sigma_{\text{UTM}}$ of the UTM. From the point of view of this paper it is not clear that this additional complexity is interesting, and it is a significant obstacle to doing calculations by hand. As such given $\Sigma, Q$ we consider a \emph{staged pseudo-UTM} whose alphabet is
\[
\Sigma_{\text{UTM}} = \Sigma \cup Q \cup \{L,R,S\} \cup \{X, \square\}
\]
where the union is disjoint and $\square$ is the blank symbol (which is distinct from the blank symbol of the simulated machines). We emphasise that there is no \emph{theoretical} obstacle to considering a staged UTM of the same design but which makes use of encoded alphabets and states, or even to using a general UTM; we only restrict to the staged pseudo-UTM to make it tractable to reason about the resulting geometry. 

Such a machine is capable of simulating any machine with tape alphabet $\Sigma$ and set of states $Q$ but cannot simulate arbitrary machines and is not a UTM in the standard sense. The adjective \emph{staged} refers to the design of the UTM. The set of states is
\begin{align*}
Q_{\text{UTM}} = \{\,
  & \text{compSymbol, compState, copySymbol, copyState, copyDir,} \\
  & \text{$\neg$compState, $\neg$copySymbol, $\neg$copyState, $\neg$copyDir,}\\
  & \text{updateSymbol, updateState, updateDir, resetDescr} \,\}.
\end{align*}
The UTM has four tapes numbered from 0 to 3, which we refer to as the \textit{description tape}, the \textit{staging tape}, the \textit{state tape} and the \textit{working tape} respectively. We write
\be
n = |\Sigma|\,, \quad m = |Q|\,, \quad N = nm\,.
\ee
Then initially the description tape contains a string of the form
\[
X\sigma_1q_1\sigma_1'q_1'd_1\sigma_1q_1\sigma_1'q_1'd_1 \dots \sigma_Nq_N\sigma_N'q_N'd_NX,
\]
corresponding to the tuples which define $M$, with the tape head initially on $\sigma_1$. The staging tape is initially a string $XXX$ with the tape head over the second $X$. The state tape has a single square containing some distribution in $\Delta Q$, corresponding to the initial state of the simulated machine $M$, with the tape head over that square. Each square on the working tape is some distribution in $\Delta \Sigma$ with only finitely many distributions different from $\square$. The UTM is initialized in state compSymbol.

The specification of $\mathcal{U}$ is given in Figure~\ref{figure:staged_pseudo_utm}. It consists of two phases; the \textit{scan phase} (middle and right path), and the \textit{update phase} (left path). During the scan phase, the description tape is scanned from left to right, and the first two squares of each tuple are compared to the contents of the working tape and state tape respectively. If both agree, then the last three symbols of the tuple are written to the staging tape (middle path), otherwise the tuple is ignored (right path). Once the $X$ at the end of the description tape is reached, the UTM begins the update phase, wherein the three symbols on the staging tape are then used to print the new symbol on the working tape, to update the simulated state on the state tape, and to move the working tape head in the appropriate direction. The tape head on the description tape is then reset to the initial $X$.


We define the \textit{period} of the UTM to be the smallest nonzero time interval taken for the tape head on the description tape to return to the initial $X$, and the machine to reenter the state compSymbol. If the number of tuples on the description tape is $N$, then the period of the UTM is $P = 10N+6$. Moreover, other than the working tape, the position of the tape heads are $P$-periodic.

\begin{figure}[htpb]
    \centering
\[
\begin{tikzcd}[row sep = large]
	&& {\text{compSymbol}} \\
	\\
	{\text{updateSymbol}} && {\text{compState}} && {\neg \text{compState}} \\
	\\
	{\text{updateState}} && {\text{copySymbol}} && {\neg\text{copySymbol}} \\
	\\
	{\text{updateDir}} && {\text{copyState}} && {\neg\text{copyState}} \\
	\\
	{\text{resetDescr}} && {\text{copyDir}} && {\neg\text{copyDir}} \\
	\\
	&& {\textsl{go to compSymbol}}
	\arrow["{X,b,c,d}"'{pos=0.3}, from=1-3, to=3-1]
	\arrow["LLSS"'{pos=0.7}, from=1-3, to=3-1]
	\arrow["{a,b,c,a}"{pos=0.3}, from=1-3, to=3-3]
	\arrow["RLSS"{pos=0.7}, from=1-3, to=3-3]
	\arrow["{\text{else}}"{pos=0.3}, from=1-3, to=3-5]
	\arrow["RLSS"{pos=0.7}, from=1-3, to=3-5]
	\arrow["{a,X,c,d}"'{pos=0.3}, shift right=5, from=3-1, to=5-1]
	\arrow["SRSS"'{pos=0.7}, shift right=5, from=3-1, to=5-1]
	\arrow["{a,e,c,d}"{pos=0.3}, shift left=5, from=3-1, to=5-1]
	\arrow["{\genfrac{}{}{0pt}{}{\text{write }a,X,c,e}{SRSS}}"{pos=0.7}, shift left=5, from=3-1, to=5-1]
	\arrow["{a,b,a,d}"{pos=0.3}, from=3-3, to=5-3]
	\arrow["RSSS"{pos=0.7}, from=3-3, to=5-3]
	\arrow["{\text{else}}"{pos=0.3}, from=3-3, to=5-5]
	\arrow["RSSS"{pos=0.7}, from=3-3, to=5-5]
	\arrow["{a,b,c,d}"{pos=0.3}, from=3-5, to=5-5]
	\arrow["RSSS"{pos=0.7}, from=3-5, to=5-5]
	\arrow["{a,X,c,d}"'{pos=0.3}, shift right=5, from=5-1, to=7-1]
	\arrow["SRSS"'{pos=0.7}, shift right=5, from=5-1, to=7-1]
	\arrow["{a,e,c,d}"{pos=0.3}, shift left=5, from=5-1, to=7-1]
	\arrow["{\genfrac{}{}{0pt}{}{\text{write }a,X,e,d}{SRSS}}"{pos=0.7}, shift left=5, from=5-1, to=7-1]
	\arrow["{a,b,c,d}"{pos=0.3}, from=5-3, to=7-3]
	\arrow["{\genfrac{}{}{0pt}{}{\text{write }a,a,c,d}{RRSS}}"{pos=0.7}, from=5-3, to=7-3]
	\arrow["{a,b,c,d}"{pos=0.3}, from=5-5, to=7-5]
	\arrow["RRSS"{pos=0.7}, from=5-5, to=7-5]
	\arrow["{a,X,c,d}"'{pos=0.3}, shift right=5, from=7-1, to=9-1]
	\arrow["SRSS"'{pos=0.7}, shift right=5, from=7-1, to=9-1]
	\arrow["{a,e,c,d}"{pos=0.3}, shift left=5, from=7-1, to=9-1]
	\arrow["{\genfrac{}{}{0pt}{}{\text{write }a,X,c,d}{SLSb}}"{pos=0.7}, shift left=5, from=7-1, to=9-1]
	\arrow["{a,b,c,d}"{pos=0.3}, from=7-3, to=9-3]
	\arrow["{\genfrac{}{}{0pt}{}{\text{write }a,a,c,d}{RRSS}}"{pos=0.7}, from=7-3, to=9-3]
	\arrow["{a,b,c,d}"{pos=0.3}, from=7-5, to=9-5]
	\arrow["RRSS"{pos=0.7}, from=7-5, to=9-5]
	\arrow["{a,b,c,d}"'{pos=0.3}, from=9-1, to=9-1, loop, in=265, out=185, distance=15mm]
	\arrow["LSSS"{pos=0.3}, no head, from=9-1, to=9-1, loop, in=185, out=265, distance=15mm]
	\arrow["{X,b,c,d}"'{pos=0.3}, from=9-1, to=11-3]
	\arrow["RSSS"'{pos=0.7}, from=9-1, to=11-3]
	\arrow["{a,b,c,d}"{pos=0.3}, from=9-3, to=11-3]
	\arrow["{\genfrac{}{}{0pt}{}{\text{write }a,a,c,d}{RLSS}}"{pos=0.7}, from=9-3, to=11-3]
	\arrow["{a,b,c,d}"{pos=0.3}, from=9-5, to=11-3]
	\arrow["RLSS"{pos=0.7}, from=9-5, to=11-3]
\end{tikzcd}
\]
\caption{\label{figure:staged_pseudo_utm}\textbf{The staged pseudo-UTM $\mathcal{U}$.} The nodes, except for \emph{go to compSymbol}, are states of $\mathcal{U}$ and an arrow $q \to q'$ has the following interpretation: if the UTM is in state $q$ and sees the tape symbols (on the four tapes) as indicated by the source of the arrow, then the UTM transitions to state $q'$, writes the indicated symbols (or if there is no write instruction, simply rewrites the same symbols back onto the tapes), and performs the indicated movements of each of the tape heads with $R, S, L$ standing for Right, Stay, Left. The symbols $a,c,d,e$ stand for generic symbols which are not $X$, and $b$ stands for a generic symbol (which may be $X$).}
\end{figure}
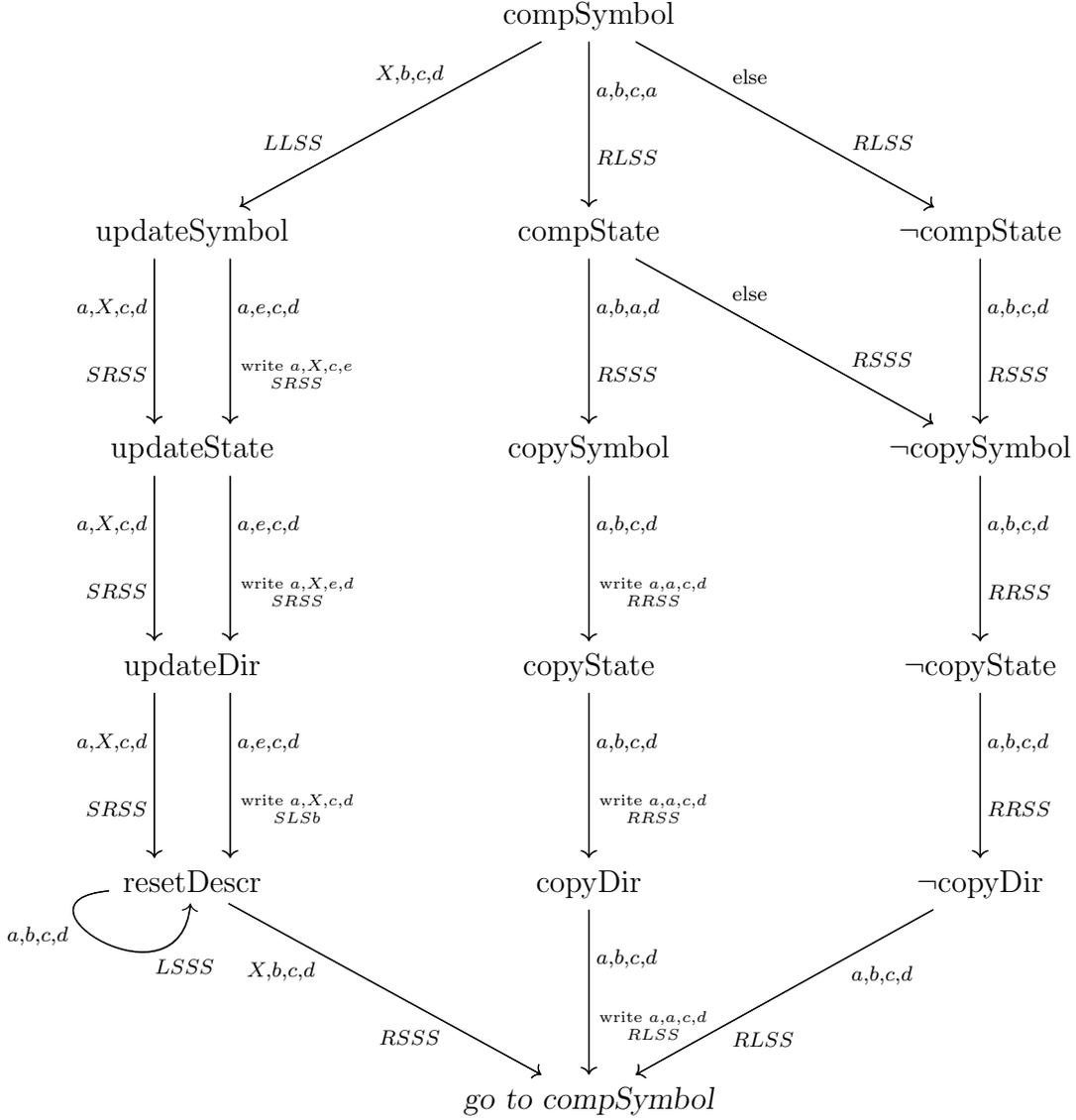

\subsection{Encoding of the Step Function}\label{section:encoding_utm_step}

We encode the function \eqref{eq:type_function_stept} as a plain proof which can be described schematically as
\be\label{eq:schematic_Psi}
\psi: \big\{ \text{type encoding } \Sigma^* \big\}, \big\{ \text{type encoding } W^{\texttt{code}} \big\} \vdash \big\{ \text{type encoding } Q \big\}.
\ee
This is an encoding in the sense that cutting $\psi$ against a proof encoding a string $x \in \Sigma^*$ on the work tape and a proof encoding $w \in W^{\texttt{code}}$ yields a proof equivalent under cut-elimination to $\textrm{step}^t(x,w)$, the state after $\mathcal{U}$ has simulated $w$ on input $x$ for $t$ steps.

In Appendix \ref{section:multi_tape} we define a component-wise plain proof ${}_h\underline{\text{$\mathcal{U}$relstep}}^t$ which encodes the UTM $\mathcal{U}$ running for $t$ cycles, that is, simulating $t$ steps of a Turing machine (see Definition \ref{defn:urelstep_t}). The component of this component-wise plain proof that produces the state of the simulated machine after $t$ steps has type
\be\label{defn:utm_state_comp_type}
\textbf{dscr} \otimes \textbf{stg} \otimes {!}{}_m\textbf{bool} \otimes {!}{}_n\textbf{bool}^{\otimes 2h+1}\otimes {!}{}_{13}\textbf{bool} \vdash {}_m\textbf{bool}\,.
\ee
The notation is defined in full in the appendix, but in summary:
\begin{itemize}
    \item Proofs of type $\textbf{dscr}$ encode configurations of the description tape, where
    \begin{equation}
    \textbf{dscr} = \bigotimes_{i = 1}^{N}\Big( \underset{\sigma'_i}{\underbrace{{!}_n\textbf{bool}}} \otimes \underset{q'_i}{\underbrace{{!}_m\textbf{bool}}} \otimes \underset{d_i}{\underbrace{{!}_3\textbf{bool}}} \Big)\,.
\end{equation}
    \item Proofs of $\textbf{stg}$ encode configurations of the staging tape.
    \item Proofs of ${!}{}_m\textbf{bool}$ encode initial states of the simulated machine (i.e. elements of $Q$ which has size $m$).
    \item Proofs of ${!}{}_n\textbf{bool}^{\otimes 2h + 1}$ encode configurations of the work tape (that is, $2h + 1$ elements of $\Sigma$ where we note that the work tape is encoded relative to the head, so this means there are $h$ entries to the right of the head and $h$ entries to the left).
    \item Proofs of ${!}{}_{13}\textbf{bool}$ encode the state of the UTM itself (i.e. an element of $Q_{\text{UTM}}$).
\end{itemize}

\begin{definition}
We define $\psi$ to be the plain proof obtained by cutting the component of ${}_h\underline{\text{$\mathcal{U}$relstep}}^t$ with type \eqref{defn:utm_state_comp_type} against the following inputs
\[
\underline{XXX}: \textbf{stg}\,, \quad \underline{\text{init}}: {!}{}_m\textbf{bool}\,, \quad \underline{\text{compSymbol}} : {!}{}_{13}\textbf{bool}
\]
which respectively encode putting $XXX$ on the staging tape, some initialisation state on the state tape and starting the UTM in the initial state $\text{compSymbol}$. Here the underlines refer to proofs encoding a given piece of data. This proof has type
\[
\psi: \underset{w}{\underbrace{\textbf{dscr}}} \otimes \underset{x}{\underbrace{{!}{}_n\textbf{bool}^{\otimes 2h+1}}} \vdash {}_m\textbf{bool}
\]
realising the schematic in \eqref{eq:schematic_Psi}.\footnote{On the left hand side of the turnstile tensors may be exchanged for commas, and freely permuted.}
\end{definition}

We work with fixed sets of proofs (see \cite[Appendix B]{clift2020encodings} for definitions):
\begin{equation}
    \mathcal{P}_i^{\text{dscr}} =
    \begin{cases}
        \{\llbracket \underline{j} \rrbracket\}_{j = 0,\ldots, n}\subseteq \llbracket {}_{n}\textbf{bool} \rrbracket, & i = 1\text{ mod 3},\\
        \{\llbracket \underline{j} \rrbracket \}_{j = 0,\ldots, m}\subseteq \llbracket {}_{m}\textbf{bool} \rrbracket, & i = 2\text{ mod 3},\\
        \{\llbracket \underline{j} \rrbracket\}_{j = 0,1,2}\subseteq \llbracket {}_3\textbf{bool} \rrbracket, & i = 0\text{ mod 3}.
    \end{cases}
\end{equation}

\begin{remark}\label{remark:enum_proofs}
These proofs are used to encode elements of $\Sigma, Q$ and directions to move. There is a subtlety: while all the squares of the sample ``type'' ($\sigma',q'$ or $d$) are encoded by the same formula and have the set of proofs, later it will be important that \emph{different squares on the description tape use different enumerations} in the sense of \eqref{eq:enumeration_Pi}. The enumerations are arbitrary, except that we always choose $0$ to label the symbol, state or direction which appears in this position in the code $[M]$ of the chosen Turing machine $M$.
\end{remark}

Since $\psi$ is plain \cite[Lemma 3.4]{clift2020encodings} we may pass to probability distributions (Proposition \ref{prop:delta_psi}) to obtain the function
\be
\Delta \psi: \Sigma^* \times \prod_{i=1}^N \Big( \Delta \Sigma \times \Delta Q \times \Delta \{ \text{L}, \text{S}, \text{R} \}\Big) \lto \Delta Q \,.
\ee
Since we do not allow uncertainty in the input $x$ we restrict the function to $\Sigma^* \subseteq \Delta \Sigma^*$.

\begin{definition} The smooth relaxation $\Delta \textrm{step}^t$ in \eqref{eq:smooth_relax} is $\Delta \psi$.
\end{definition}

\subsection{Error Syndromes for $\mathcal{U}$}\label{section:error_syndromes_U}

Error syndromes for $\mathcal{U}$ are determined by the linear part $\pi$ of the proof $\psi$ encoding $\textrm{step}^t$ defined in the previous section. Abstractly we know $\psi$ is a plain proof and thus there are uniquely determined integers $c,d$ (with various sub- and superscripts) which are the degrees of each input with
\be
\pi: \bigotimes_{i = 1}^{N}\Big( c_i^{\text{sym}}\, {}_n\textbf{bool} \otimes c_i^{\text{state}}\, {}_m\textbf{bool} \otimes c_i^{\text{dir}} \, {}_3\textbf{bool} \Big) \otimes \bigotimes_{i=-h}^h b_i \, {}_n\textbf{bool} \vdash {}_m\textbf{bool}\,.
\ee
However it is not necessarily trivial to compute these integers or enumerate the computation paths that they count. Later in Section \ref{section:relstep_UTM} we do this explicitly when $t = 2$. In any case, we can now formally define \emph{error syndromes} for $\mathcal{U}$ to be error syndromes in the sense of Definition \ref{defn:error_syndrome} associated to the pair consisting of the plain proof $\psi$ of the previous section with linear part $\pi$. 

\begin{remark}
Because of the way that the UTM is designed, the number of computation paths starting at a square on the description tape depends on the \emph{type} of the square but not its position on the tape. That is, there exist integers $c^{\text{sym}}, c^{\text{state}}, c^{\text{dir}}$ such that $c_i^{\text{sym}} = c^{\text{sym}}$, $c_i^{\text{state}} = c^{\text{state}}$ and $c_i^{\text{dir}} = c^{\text{dir}}$ for all $1 \le i \le N$.
\end{remark}

It is important to keep in mind that since we study geometry \emph{locally} at a code $[M]$ error syndromes are always considered relative to a particular Turing machine. To make this explicit, let us now:
\begin{center}
    \textbf{Fix a Turing machine $M$} so that ``errors'' are variations away from $[M]$.
\end{center}
That is, error syndromes describe variations away from this code as a string on the description tape of the UTM (see Remark \ref{remark:enum_proofs}). We also do not consider derivatives with respect to the tape squares encoding the input $x$ and thus do not consider them in the definition of error syndromes. So to be explicit:

\begin{definition}\label{defn:error_syndrome_utm} An \emph{error syndrome} $\gamma$ for $\mathcal{U}$ is a sequence of functions $\gamma_1,\ldots,\gamma_{3N}$ where
\[
\gamma_i: \{ 1, \ldots, n_i \} \lto \{ 0, \ldots, p_i \} \qquad 1 \le i \le 3N\,.
\]
and
\[
n_i = \begin{cases} c^{\text{sym}} & i = 1 \text{ mod } 3\\
c^{\text{state}} & i = 2 \text{ mod } 3\\
c^{\text{dir}} & i = 3 \text{ mod } 3
\end{cases}\,,
\qquad
p_i = \begin{cases} n - 1 & i = 1 \text{ mod } 3\\
m - 1 & i = 2 \text{ mod } 3\\
2 & i = 3 \text{ mod } 3
\end{cases}
\]
where $\gamma_i(j) = 0$ is always interpreted as \emph{no error} that is, whatever symbol, state or direction is in this position of the description of some fixed background machine $[M]$.
\end{definition}

The tuples $\bold{s}, \bold{k}$ of Section \ref{section:error_syndrome_general_case} belong to
\be
\prod_{i=1}^r \mathbb{N}^{p_i} = \prod_{a=1}^N\big[ \mathbb{N}^{|\Sigma|-1} \times \mathbb{N}^{|Q|-1} \times \mathbb{N}^2 \big]\,.
\ee
To make this more concrete, suppose $\bold{s} \in \prod_{i=1}^r \mathbb{N}^{p_i}$ is such a tuple and that $1 \le a \le N$ is the index of a tuple $\sigma q \sigma' q' d$ in $[M]$ and that
\be
\bold{s} = \Big( \cdots, \overset{a}{\overbrace{\underset{\mathbb{N}^{|\Sigma|-1}}{\bold{u}}, \underset{\mathbb{N}^{|Q|-1}}{\bold{v}}, \underset{\mathbb{N}^2}{\bold{w}}\;}}, \ldots \Big)\,.
\ee
If $\operatorname{wt}(\gamma) = \bold{s}$ for some error syndrome $\gamma$, then out of all the computation paths leading from the description square containing $\sigma'$ in $[M]$ the count of those assigned errors of each type given by elements of $\Sigma \setminus \{ \sigma' \}$ is given by $\bold{u}$, and similarly $\bold{v}, \bold{w}$ count assignments of errors of various types to computation paths out of $q', d$ respectively.

\begin{definition}\label{defn:uxgamma}
Given an input $x$ and error syndrome $\gamma$ we define
\begin{equation}\label{eq:mathcalU_x_gamma}
\mathcal{U}(x,\gamma) = \pi\big( \gamma, x \big)
\end{equation}
where as in \eqref{eq:eval_pi_gamma} the right hand side involves placing the outputs of all the $\gamma_i$ as inputs to $\pi$ in the correct positions. In particular if $\gamma = \bold{0}$ is the constant function then
\[
\mathcal{U}(x, \bold{0}) = M(x)\,.
\]
\end{definition}

\begin{remark}\label{remark:s_in_Nd}
Note that
\[
\sum_{i=1}^r p_i = |Q||\Sigma|\big( |\Sigma| + |Q| ) = d
\]
the dimension of the parameter space $W$. Hence we may also write $\bold{s}, \bold{k} \in \mathbb{N}^d$.
\end{remark}

With this preparation everything in Section \ref{section:error_syndromes} applies verbatim to the analysis of derivatives of the plain proof $\psi$. We denote by $A^{\bold{s}}(x)$ the error syndrome count from \eqref{eq:defn_A_s_tau} with $\tau$ the alternative output to the correct output $M(x)$, that is:

\begin{definition}\label{defn:Asfunctions} Given $\bold{s} \in \prod_{i=1}^r \mathbb{N}^{p_i}$ and $x \in \Sigma^*$
\be
A^{\bold{s}}(x) = \Big| \big\{ \gamma \l \operatorname{wt}(\gamma) = \bold{s} \text{ and } \mathcal{U}(x,\gamma) \neq M(x) \big\} \Big|
\ee
denotes the count of weight $\bold{s}$ error syndromes $\gamma$ which, when we evaluate the UTM with $[M]$ on the description tape, deviations from the code given by $\gamma$, and $x$ on the work tape, produce the incorrect output.
\end{definition}

\begin{remark}\label{remark:A_s_zero}
Note that by definition, $A^{\bold{0}}(x) = 0$ for all $x$.
\end{remark}

Since $\psi$ is defined by cutting $t$ component-wise plain proofs against one another, we can from the proof that the cut of component-wise plain proofs is component-wise plain \cite[Proposition 3.6]{clift2020encodings} derive the linear part $\pi$ of $\psi$ from the linear parts of each of these individual step functions. This gives us an explicit form of the computation paths through $\psi$ and thus the error syndromes. In the next section we present the computation paths in a special case.

\section{Error and Geometry}
\label{sec:GeometryofInductiveInference}

In Section \ref{section:background} we outlined a perspective on inductive inference over Turing machines, where we believe that a Turing machine is an explanation for our observations to the extent that the posterior concentrates around it in a continuous parameter space of ``noisy'' Turing machines. We sketched how our preferences for Turing machines are dominated, asymptotically in the number of samples $n$ from the generating process, by the geometry of the average log-likelihood $L$ at the code $[M]$ of our machine. 

This geometry was related in Section \ref{section:geometry} to the derivatives of the function that propagates uncertainty about Turing machine codes through our UTM to uncertainty about final states of those simulated machines. Section \ref{section:derivatives_errors} showed that these derivatives can be understood in terms of the combinatorics of error syndromes (Theorem \ref{theorem:error_count}). If we put this all together, we can make some progress into understanding the underlying geometry of inductive inference.
\\

As in Section \ref{section:error_syndromes_U}, we now fix in addition to the staged pseudo-UTM $\mathcal{U}$ a particular classical solution $M$. That is, a Turing machine with $M(x) = y(x)$ for all $x \in I$.

\subsection{Graphical Model}\label{section:graphical_model_utm}

We consider the evolution of the random variables involved in the computation of $\mathcal{U}$ as encoded in linear logic in the previous section. This was previously discussed in \cite[\S 6.2]{clift2018derivatives}.

To each pair consisting of a tape square of $\mathcal{U}$ and a time step $\mu$ within the cycle there is a corresponding random variable. There is also a random variable for each time step $t$ ranging over all possible states of $\mathcal{U}$. The semantics of the encoding of the step function of $\mathcal{U}$ give the random variables at timestep $\mu + 1$ as functions of those at timestep $\mu$. This defines a directed graphical model (DGM) for one cycle of the UTM (so one simulated step) as shown in Figure \ref{figure:symbol_write_error_diagram}. This has been simplified by the method in Section \ref{section:comp_path_utm}.

We indicate in the left-most column the entry $[M]_i$ in the code of $M$ which is currently under the UTM head. One particular square, of $\sigma'$ type, is highlighted and denoted $\theta$. Paths in the DGM that start from $\theta$ (or other random variables on the description tape) and end at the state $q$ in some future cycle are the computation paths of $\mathcal{U}$ which are enumerated carefully in the next section. Further details are given in Appendix \ref{section:comp_path_utm}.

\begin{figure}[htpb]
    \centering
\[\adjustbox{scale=0.65}{\begin{tikzcd}[column sep = tiny, row sep = small]
	& X & X & X & q & {\sigma_{-h-1}} & {\sigma_{-h}} & \ldots & {\sigma_{-1}} & {\sigma_{0}} & {\sigma_{1}} & \ldots & {\sigma_{h}} & {\sigma_{h+1}} & {\text{compSymbol}} & {10N+6} \\
	& X & X & X & q & {\sigma_{-h-1}} & {\sigma_{-h}} & \ldots & {\sigma_{-1}} & {\sigma_{0}} & {\sigma_{1}} & \ldots & {\sigma_{h}} & {\sigma_{h+1}} & {\text{resetDescr}} & {10N+5} \\
	& \vdots & \vdots & \vdots & \vdots & \vdots & \vdots & \vdots & \vdots & \vdots & \vdots & \vdots & \vdots & \vdots & \vdots & \vdots \\
	& X & X & X & q & {\sigma_{-h-1}} & {\sigma_{-h}} & \ldots & {\sigma_{-1}} & {\sigma_{0}} & {\sigma_{1}} & \ldots & {\sigma_{h}} & {\sigma_{h+1}} & {\text{resetDescr}} & {5N+4} \\
	& X & X & {s_{2}} & q && {\sigma_{-h}} & \ldots & {\sigma_{-1}} & {\sigma_{0}} & {\sigma_{1}} & \ldots & {\sigma_{h}} && {\text{updateDir}} & {5N+3} \\
	& X & {s_{1}} & {s_{2}} & q && {\sigma_{-h}} & \ldots & {\sigma_{-1}} & {\sigma_{0}} & {\sigma_{1}} & \ldots & {\sigma_{h}} && {\text{updateState}} & {5N+2} \\
	& {s_{0}} & {s_{1}} & {s_{2}} & q && {\sigma_{-h}} & \ldots & {\sigma_{-1}} & {\sigma_{0}} & {\sigma_{1}} & \ldots & {\sigma_{h}} && {\text{updateSymbol}} & {5N+1} \\
	& {s_{0}} & {s_{1}} & {s_{2}} & q && {\sigma_{-h}} & \ldots & {\sigma_{-1}} & {\sigma_{0}} & {\sigma_{1}} & \ldots & {\sigma_{h}} && {\text{compSymbol}} & 5N \\
	& \vdots & \vdots & \vdots & \vdots && \vdots & \vdots & \vdots & \vdots & \vdots & \vdots & \vdots && \vdots & \vdots \\
	& {s_{0}} & {s_{1}} & {s_{2}} & q && {\sigma_{-h}} & \ldots & {\sigma_{-1}} & {\sigma_{0}} & {\sigma_{1}} & \ldots & {\sigma_{h}} && {\text{compSymbol}} & {i+2} \\
	{[M]_{i+2}} & {s_{0}} & {s_{1}} & {s_{2}} & q && {\sigma_{-h}} & \ldots & {\sigma_{-1}} & {\sigma_{0}} & {\sigma_{1}} & \ldots & {\sigma_{h}} && \varphi & {i+1} \\
	{[M]_{i+1}} & {s_{0}} & {s_{1}} & {s_{2}} & q && {\sigma_{-h}} & \ldots & {\sigma_{-1}} & {\sigma_{0}} & {\sigma_{1}} & \ldots & {\sigma_{h}} && \varphi & i \\
	\theta & {s_0} & {s_{1}} & {s_{2}} & q && {\sigma_{-h}} & \ldots & {\sigma_{-1}} & {\sigma_{0}} & {\sigma_{1}} & \ldots & {\sigma_{h}} && \varphi & {i-1} \\
	& {s_0} & {s_{1}} & {s_{2}} & q && {\sigma_{-h}} & \ldots & {\sigma_{-1}} & {\sigma_{0}} & {\sigma_{1}} & \ldots & {\sigma_{h}} && \varphi & {i-2} \\
	& {s_0} & {s_{1}} & {s_{2}} & q && {\sigma_{-h}} & \ldots & {\sigma_{-1}} & {\sigma_{0}} & {\sigma_{1}} & \ldots & {\sigma_{h}} && {\text{compSymbol}} & {i-3} \\
	& \vdots & \vdots & \vdots & \vdots && \vdots & \vdots & \vdots & \vdots & \vdots & \vdots & \vdots && \vdots & \vdots \\
	& {s_0} & {s_{1}} & {s_{2}} & q && {\sigma_{-h}} & \ldots & {\sigma_{-1}} & {\sigma_0} & {\sigma_{1}} & \ldots & {\sigma_h} && {\text{compSymbol}} & 5 \\
	{[M]_5} & {s_0} & {s_{1}} & X & q && {\sigma_{-h}} & \ldots & {\sigma_{-1}} & {\sigma_0} & {\sigma_{1}} & \ldots & {\sigma_h} && \varphi & 4 \\
	{[M]_4} & {s_0} & X & X & q && {\sigma_{-h}} & \ldots & {\sigma_{-1}} & {\sigma_0} & {\sigma_{1}} & \ldots & {\sigma_h} && \varphi & 3 \\
	{[M]_{3}} & X & X & X & q && {\sigma_{-h}} & \ldots & {\sigma_{-1}} & {\sigma_{0}} & {\sigma_{1}} & \ldots & {\sigma_{h}} && \varphi & 2 \\
	& X & X & X & q && {\sigma_{-h}} & \ldots & {\sigma_{-1}} & {\sigma_{0}} & {\sigma_{1}} & \ldots & {\sigma_{h}} && \varphi & 1 \\
	& X & X & X & q && {\sigma_{-h}} & \ldots & {\sigma_{-1}} & {\sigma_0} & {\sigma_{1}} & \ldots & {\sigma_h} && {\text{compSymbol}} & 0
	\arrow[from=2-2, to=1-2]
	\arrow[from=2-3, to=1-3]
	\arrow[from=2-4, to=1-4]
	\arrow[from=2-5, to=1-5]
	\arrow[from=2-6, to=1-6]
	\arrow[from=2-7, to=1-7]
	\arrow[from=2-9, to=1-9]
	\arrow[from=2-10, to=1-10]
	\arrow[from=2-11, to=1-11]
	\arrow[from=2-13, to=1-13]
	\arrow[from=2-14, to=1-14]
	\arrow[from=3-2, to=2-2]
	\arrow[from=3-3, to=2-3]
	\arrow[from=3-4, to=2-4]
	\arrow[from=3-5, to=2-5]
	\arrow[from=3-6, to=2-6]
	\arrow[from=3-7, to=2-7]
	\arrow[from=3-9, to=2-9]
	\arrow[from=3-10, to=2-10]
	\arrow[from=3-11, to=2-11]
	\arrow[from=3-13, to=2-13]
	\arrow[from=3-14, to=2-14]
	\arrow[from=4-2, to=3-2]
	\arrow[from=4-3, to=3-3]
	\arrow[from=4-4, to=3-4]
	\arrow[from=4-5, to=3-5]
	\arrow[from=4-6, to=3-6]
	\arrow[from=4-7, to=3-7]
	\arrow[from=4-9, to=3-9]
	\arrow[from=4-10, to=3-10]
	\arrow[from=4-11, to=3-11]
	\arrow[from=4-13, to=3-13]
	\arrow[from=4-14, to=3-14]
	\arrow[from=5-2, to=4-2]
	\arrow[from=5-3, to=4-3]
	\arrow[from=5-4, to=4-6]
	\arrow[from=5-4, to=4-7]
	\arrow[from=5-4, to=4-9]
	\arrow[from=5-4, to=4-10]
	\arrow[from=5-4, to=4-11]
	\arrow[from=5-4, to=4-13]
	\arrow[from=5-4, to=4-14]
	\arrow[from=5-5, to=4-5]
	\arrow[from=5-7, to=4-6]
	\arrow[from=5-7, to=4-7]
	\arrow[from=5-9, to=4-9]
	\arrow[from=5-9, to=4-10]
	\arrow[from=5-10, to=4-9]
	\arrow[from=5-10, to=4-10]
	\arrow[from=5-10, to=4-11]
	\arrow[from=5-11, to=4-10]
	\arrow[from=5-11, to=4-11]
	\arrow[from=5-13, to=4-13]
	\arrow[from=5-13, to=4-14]
	\arrow[from=6-2, to=5-2]
	\arrow[from=6-3, to=5-5]
	\arrow[from=6-4, to=5-4]
	\arrow[from=6-5, to=5-5]
	\arrow[from=6-7, to=5-7]
	\arrow[from=6-9, to=5-9]
	\arrow[from=6-10, to=5-10]
	\arrow[from=6-11, to=5-11]
	\arrow[from=6-13, to=5-13]
	\arrow[from=7-2, to=6-10]
	\arrow[from=7-3, to=6-3]
	\arrow[from=7-4, to=6-4]
	\arrow[from=7-5, to=6-5]
	\arrow[from=7-7, to=6-7]
	\arrow[from=7-9, to=6-9]
	\arrow[from=7-11, to=6-11]
	\arrow[from=7-13, to=6-13]
	\arrow[from=8-2, to=7-2]
	\arrow[from=8-3, to=7-3]
	\arrow[from=8-4, to=7-4]
	\arrow[from=8-5, to=7-5]
	\arrow[from=8-7, to=7-7]
	\arrow[from=8-9, to=7-9]
	\arrow[from=8-10, to=7-10]
	\arrow[from=8-11, to=7-11]
	\arrow[from=8-13, to=7-13]
	\arrow[from=9-2, to=8-2]
	\arrow[from=9-3, to=8-3]
	\arrow[from=9-4, to=8-4]
	\arrow[from=9-5, to=8-5]
	\arrow[from=9-7, to=8-7]
	\arrow[from=9-9, to=8-9]
	\arrow[from=9-10, to=8-10]
	\arrow[from=9-11, to=8-11]
	\arrow[from=9-13, to=8-13]
	\arrow[from=10-2, to=9-2]
	\arrow[from=10-3, to=9-3]
	\arrow[from=10-4, to=9-4]
	\arrow[from=10-5, to=9-5]
	\arrow[from=10-7, to=9-7]
	\arrow[from=10-9, to=9-9]
	\arrow[from=10-10, to=9-10]
	\arrow[from=10-11, to=9-11]
	\arrow[from=10-13, to=9-13]
	\arrow[from=11-1, to=10-4]
	\arrow[from=11-2, to=10-2]
	\arrow[from=11-3, to=10-3]
	\arrow[from=11-4, to=10-4]
	\arrow[from=11-5, to=10-5]
	\arrow[from=11-7, to=10-7]
	\arrow[from=11-9, to=10-9]
	\arrow[from=11-10, to=10-10]
	\arrow[from=11-11, to=10-11]
	\arrow[from=11-13, to=10-13]
	\arrow[from=11-15, to=10-4]
	\arrow[from=12-1, to=11-3]
	\arrow[from=12-2, to=11-2]
	\arrow[from=12-3, to=11-3]
	\arrow[from=12-4, to=11-4]
	\arrow[from=12-5, to=11-5]
	\arrow[from=12-7, to=11-7]
	\arrow[from=12-9, to=11-9]
	\arrow[from=12-10, to=11-10]
	\arrow[from=12-11, to=11-11]
	\arrow[from=12-13, to=11-13]
	\arrow[from=12-15, to=11-3]
	\arrow[from=12-15, to=11-15]
	\arrow[from=13-1, to=12-2]
	\arrow[from=13-2, to=12-2]
	\arrow[from=13-3, to=12-3]
	\arrow[from=13-4, to=12-4]
	\arrow[from=13-5, to=12-5]
	\arrow[from=13-7, to=12-7]
	\arrow[from=13-9, to=12-9]
	\arrow[from=13-10, to=12-10]
	\arrow[from=13-11, to=12-11]
	\arrow[from=13-13, to=12-13]
	\arrow[from=13-15, to=12-2]
	\arrow[from=13-15, to=12-15]
	\arrow[from=14-2, to=13-2]
	\arrow[from=14-3, to=13-3]
	\arrow[from=14-4, to=13-4]
	\arrow[from=14-5, to=13-5]
	\arrow[from=14-5, to=13-15]
	\arrow[from=14-7, to=13-7]
	\arrow[from=14-9, to=13-9]
	\arrow[from=14-10, to=13-10]
	\arrow[from=14-11, to=13-11]
	\arrow[from=14-13, to=13-13]
	\arrow[from=14-15, to=13-15]
	\arrow[from=15-2, to=14-2]
	\arrow[from=15-3, to=14-3]
	\arrow[from=15-4, to=14-4]
	\arrow[from=15-5, to=14-5]
	\arrow[from=15-7, to=14-7]
	\arrow[from=15-9, to=14-9]
	\arrow[from=15-10, to=14-10]
	\arrow[from=15-10, to=14-15]
	\arrow[from=15-11, to=14-11]
	\arrow[from=15-13, to=14-13]
	\arrow[from=16-2, to=15-2]
	\arrow[from=16-3, to=15-3]
	\arrow[from=16-4, to=15-4]
	\arrow[from=16-5, to=15-5]
	\arrow[from=16-7, to=15-7]
	\arrow[from=16-9, to=15-9]
	\arrow[from=16-10, to=15-10]
	\arrow[from=16-11, to=15-11]
	\arrow[from=16-13, to=15-13]
	\arrow[from=17-2, to=16-2]
	\arrow[from=17-3, to=16-3]
	\arrow[from=17-4, to=16-4]
	\arrow[from=17-5, to=16-5]
	\arrow[from=17-7, to=16-7]
	\arrow[from=17-9, to=16-9]
	\arrow[from=17-10, to=16-10]
	\arrow[from=17-11, to=16-11]
	\arrow[from=17-13, to=16-13]
	\arrow[from=18-1, to=17-4]
	\arrow[from=18-2, to=17-2]
	\arrow[from=18-3, to=17-3]
	\arrow[from=18-4, to=17-4]
	\arrow[from=18-5, to=17-5]
	\arrow[from=18-7, to=17-7]
	\arrow[from=18-9, to=17-9]
	\arrow[from=18-10, to=17-10]
	\arrow[from=18-11, to=17-11]
	\arrow[from=18-13, to=17-13]
	\arrow[from=18-15, to=17-4]
	\arrow[from=19-1, to=18-3]
	\arrow[from=19-2, to=18-2]
	\arrow[from=19-3, to=18-3]
	\arrow[from=19-4, to=18-4]
	\arrow[from=19-5, to=18-5]
	\arrow[from=19-7, to=18-7]
	\arrow[from=19-9, to=18-9]
	\arrow[from=19-10, to=18-10]
	\arrow[from=19-11, to=18-11]
	\arrow[from=19-13, to=18-13]
	\arrow[from=19-15, to=18-3]
	\arrow[from=19-15, to=18-15]
	\arrow[from=20-1, to=19-2]
	\arrow[from=20-2, to=19-2]
	\arrow[from=20-3, to=19-3]
	\arrow[from=20-4, to=19-4]
	\arrow[from=20-5, to=19-5]
	\arrow[from=20-7, to=19-7]
	\arrow[from=20-9, to=19-9]
	\arrow[from=20-10, to=19-10]
	\arrow[from=20-11, to=19-11]
	\arrow[from=20-13, to=19-13]
	\arrow[from=20-15, to=19-2]
	\arrow[from=20-15, to=19-15]
	\arrow[from=21-2, to=20-2]
	\arrow[from=21-3, to=20-3]
	\arrow[from=21-4, to=20-4]
	\arrow[from=21-5, to=20-5]
	\arrow[from=21-5, to=20-15]
	\arrow[from=21-7, to=20-7]
	\arrow[from=21-9, to=20-9]
	\arrow[from=21-10, to=20-10]
	\arrow[from=21-11, to=20-11]
	\arrow[from=21-13, to=20-13]
	\arrow[from=21-15, to=20-15]
	\arrow[from=22-2, to=21-2]
	\arrow[from=22-3, to=21-3]
	\arrow[from=22-4, to=21-4]
	\arrow[from=22-5, to=21-5]
	\arrow[from=22-7, to=21-7]
	\arrow[from=22-9, to=21-9]
	\arrow[from=22-10, to=21-10]
	\arrow[from=22-10, to=21-15]
	\arrow[from=22-11, to=21-11]
	\arrow[from=22-13, to=21-13]
\end{tikzcd}}\]
\caption{\label{figure:symbol_write_error_diagram} \textbf{Directed graphical model for a cycle of $\mathcal{U}$.} Shown is a complete cycle of the staged pseudo-UTM $\mathcal{U}$. Vertices represent random variables and arrows show the dependence relations. Columns are respectively random variables representing the squares on the description tape (column $1$, counting from the left), staging tape (columns $2-4$), state tape (column $5$), work tape squares, UTM state (second column from the right) and the timestep $\mu$ of the UTM within the cycle (rightmost column). Shown for reference is a random variable $\theta$ corresponding to a $\sigma'$ entry on the description tape.}
\end{figure}
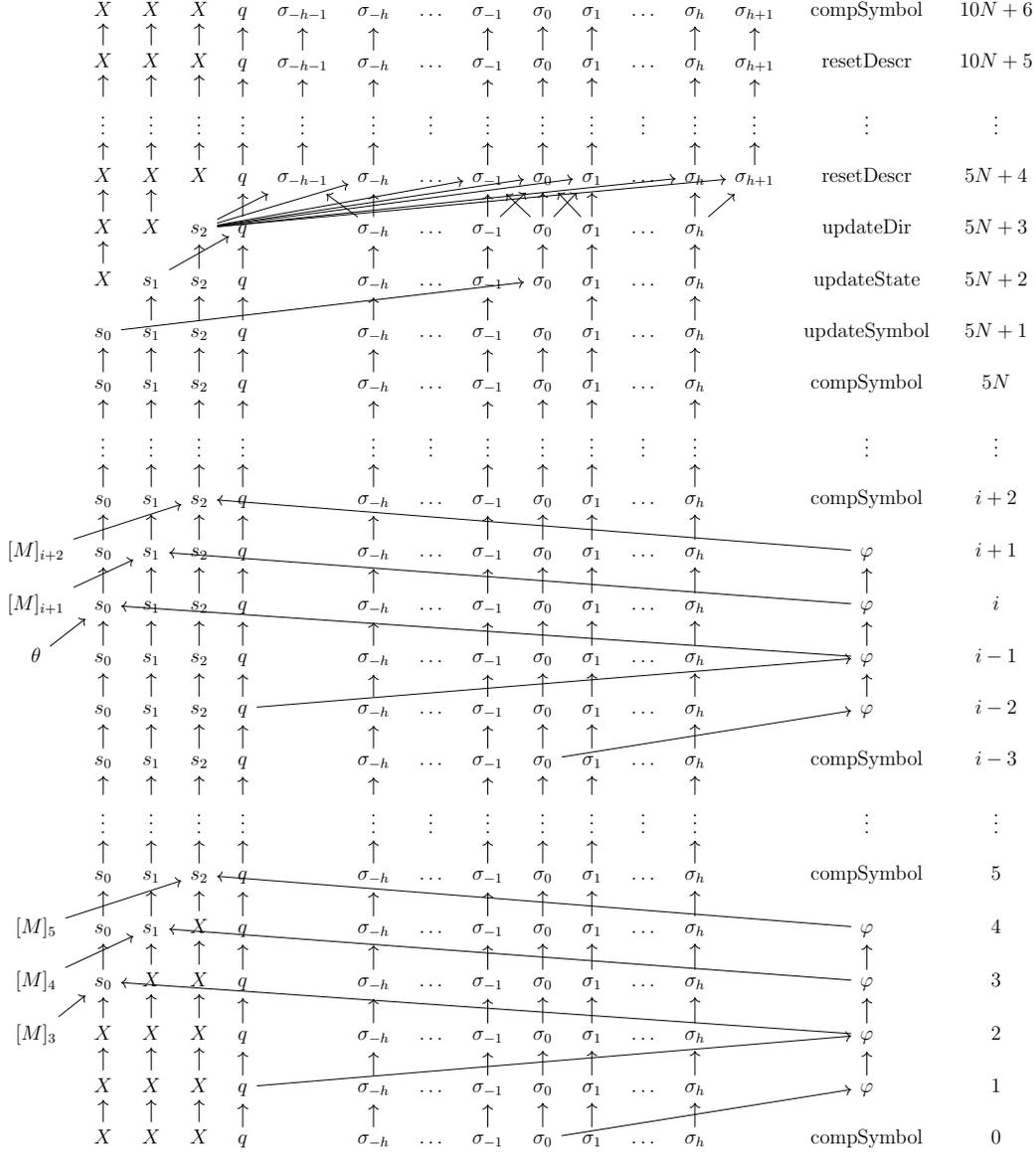

\subsection{Computation Paths for $\mathcal{U}$}\label{section:relstep_UTM}

In this section we enumerate the computation paths for the pseudo-UTM $\mathcal{U}$ from each of its $3N$ inputs (which recall represent squares on the description tape, which contains $N$ tuples) to the final state represented by the type $B$. Recall that we write description tape entries as $\sigma,q,\sigma',q',d$. For each index $1 \le j \le N$ we represent the corresponding tuple by $\sigma_j, q_j, \sigma'_j, q'_j, d_j$. Computation paths may be inferred from stacking two copies of Figure \ref{figure:symbol_write_error_diagram} to represent two cycles of the UTM.

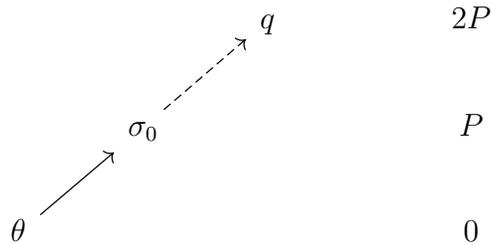
\begin{figure}[htpb]
    \centering
\[\begin{tikzcd}
	&& q && 2P \\
	& {\sigma_0} &&& P \\
	\theta &&&& 0
	\arrow[dashed, from=2-2, to=1-3]
	\arrow[from=3-1, to=2-2]
\end{tikzcd}\]
    \caption{\label{figure:write_error_two_steps_main} A collapsed version of two stacked diagrams as in Figure \ref{figure:symbol_write_error_diagram} (that is, two simulated timesteps of a TM, considering a possible error in either the symbol to write $\sigma'$ or the direction to move $d$) showing only paths that begin at $\theta$ and end at $q$. Dashed lines stand for $N$ computation paths (where $N$ is the number of tuples on the description tape) and $P$ for the number of UTM steps required to simulate a single step of $M$.
    }
\end{figure}

\begin{figure}[htpb]
    \centering
\[\begin{tikzcd}
	&& q && 2P \\
	\theta && q && P \\
	\theta &&&& 0
	\arrow[from=2-1, to=1-3]
	\arrow[shift left, dashed, from=2-3, to=1-3]
	\arrow[shift right, from=2-3, to=1-3]
	\arrow[from=3-1, to=2-3]
\end{tikzcd}\]
\caption{\label{figure:state_error_two_steps_main} A collapsed version of two stacked diagrams as in Figure \ref{figure:symbol_write_error_diagram} (that is, two simulated timesteps of a TM, considering a possible error in the state to transition $q'$ to on the description tape) showing only paths that begin at $\theta$ and end at $q$.}
\end{figure}
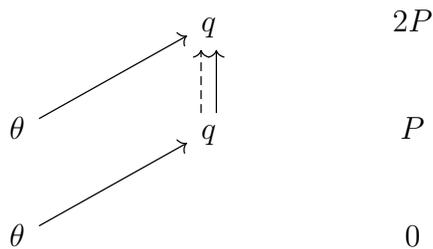

In Figure \ref{figure:write_error_two_steps_main} and Figure \ref{figure:state_error_two_steps_main} we present a graphical model of the random variables involved in the forward propagation of uncertainty through the UTM from the description tape. The computation paths through the plain proof encoding $2P$ steps of the staged pseudo-UTM (so two steps of the simulated machine) which begin at a description tape entry $\theta$ and end at $q$ depend on whether $\theta$ is a $\sigma',q'$ or $d$-type square:
\begin{itemize}
    \item $\sigma'$-\textbf{type}: there are computation paths from $\theta$ to $q$
    \be
    \Gamma_j \qquad 1 \le j \le N
    \ee
    as shown in Figure \ref{figure:write_error_two_steps_main}. This path goes through the $\sigma_0$ after one simulated timestep and then through the second entry on the staging tape to $q$. In the second cycle of the UTM, simulating the second step of the machine, the first two squares of each description tuple are read and compared to $\sigma_0, q$. One copy of $\sigma_0$ is used in each comparison, and the computation path $\Gamma_j$ represents the copy of $\sigma_0$ (which is produced using one copy of $\theta$) being used in the comparison with $\sigma_j$, the symbol on the $j$th tuple of the description tape. This comparison affects which state is written to the staging tape and eventually used to produce $q$.
    \item $q'$-\textbf{type}: there are computation paths from $\theta$ to $q$
    \be
    \Lambda_j \qquad 1 \le j \le N\,, \qquad \Omega\,, \qquad \Xi
    \ee
    as shown in Figure \ref{figure:state_error_two_steps_main}. The path $\Lambda_j$ goes through the $q$ after one simulated timestep and represents $\theta$ first being used to compute the state after one step, and then this output being used in the comparison with $q_j$ in order to update the staging tape. There is another computation path $\Omega$ where $\theta$ is used directly to produce the state in the second step. There is one additional computation path $\Xi$ where $\theta$ is used to compute the state after one simulated timestep, and then no match on the description tape is found, and so this state is carried through to the second simulated timestep.
    \item $d$-\textbf{type}: there are computation paths from $\theta$ to $q$
    \be
    \Theta_j \qquad 1 \le j \le N
    \ee
    as shown in Figure \ref{figure:write_error_two_steps_main}. As above this path goes through $\sigma_0$ and represents the copy of $\sigma_0$ (and thus $\theta$) used in the comparison with $\sigma_j$. The difference is that the original $\theta$ affects $\sigma_0$ in this case through the direction the tape head moves in the first step, rather than by writing a symbol.
\end{itemize}


\subsection{Influence Functions}\label{section:influence_func_utm}

We return to the question of understanding the computational content of the influence functions (Definition \ref{defn:influence}) for $\bold{i} \in \mathbb{N}^d$ (see Remark \ref{remark:s_in_Nd})
\begin{equation}
g_{\bold{i}}(x) = \frac{\partial^{|\bold{i}|}}{\partial w_1^{i_1} \cdots \partial w_d^{i_d}}\, p\big( y \neq y(x) \mid x, w \big) \Bigr|_{w = [M]}\,.
\end{equation}
However first we have to fix coordinates. In Section \ref{section:geometry} the local coordinates $w_1,\ldots,w_d$ at $[M]$ in $W$ were arbitrary (Remark \ref{remark:coord_arb}). However, now we have to fix the coordinates to be as specified in \eqref{eq:w_coord_simplex}. That is, $w$ parametrises deviations in the probability simplex from $[M]$ (which is $w = \bold{0}$) with one coordinate for each position on the description tape and each symbol, or state or direction other than the one given in that position in $[M]$.


\begin{corollary} For $\bold{0} \neq \bold{k} \in \mathbb{N}^d$
\be
g_{\bold{k}}(x) = \sum_{\bold{0} \neq \bold{s} \le \bold{k}} S^{\bold{k}}_{\bold{s}} A^{\bold{s}}(x)
\ee
where $S^{\bold{k}}_{\bold{s}}$ is as in \eqref{eq:s_coeff_theorem} and $A^{\bold{s}}(x)$ is as in Definition \ref{defn:Asfunctions}.
\end{corollary}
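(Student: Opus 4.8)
The plan is to obtain this as a direct consequence of Theorem~\ref{theorem:main_error_syndrome_general}, which already computes the iterated derivatives of $\Delta\textrm{step}^t = \Delta\psi$ at $[M] = \bold{0}$ in terms of error syndrome counts; the influence function $g_{\bold{k}}(x)$ is then recovered by summing those derivatives over all incorrect output states.

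First I would fix the input $x \in \Sigma^*$ and view the resulting proof as a plain proof in the $3N$ description-tape inputs alone, so that Section~\ref{section:error_syndromes} applies verbatim (as noted at the end of Section~\ref{section:error_syndromes_U}). The coordinates $w_1,\ldots,w_d$ appearing in Definition~\ref{defn:influence} are, by the stipulation at the start of Section~\ref{section:influence_func_utm}, exactly the $\iota$-coordinates of \eqref{eq:w_coord_simplex} stacked over the $3N$ squares, and under Remark~\ref{remark:s_in_Nd} a multi-index $\bold{k} \in \mathbb{N}^d$ is precisely the tuple $(\bold{k}^1,\ldots,\bold{k}^r)$ of Theorem~\ref{theorem:main_error_syndrome_general}. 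Next, since $M$ is a classical solution we have $y(x) = M(x)$, so by \eqref{eq:model_from_stept}
\[
p\big(y \neq y(x) \mid x, w\big) = \sum_{\tau \in Q,\ \tau \neq M(x)} \Delta\textrm{step}^t(x,w)_\tau = \sum_{\tau \neq M(x)} \big(f^\tau_\psi \circ \iota\big)(w),
\]
the last equality because the $\tau$-component of $\Delta\psi \circ \iota$ is $f^\tau_\psi \circ \iota$. Extracting the coefficient of $\tau$ from the identity $\sum_\tau \big(\partial^{\bold{k}} (f^\tau_\psi\circ\iota)|_{\bold{0}}\big)\tau = \sum_\tau \sum_{\bold{0} \le \bold{s} \le \bold{k}} S^{\bold{k}}_{\bold{s}} A^{\bold{s}}_\tau\,\tau$ established in the proof of Theorem~\ref{theorem:main_error_syndrome_general} (using linear independence of the basis vectors $\tau \in \mathbbm{k}\mathcal{Q}$) gives
\[
\frac{\partial^{|\bold{k}|}}{\partial w_1^{k_1}\cdots \partial w_d^{k_d}}\big(f^\tau_\psi \circ \iota\big)\Bigr|_{w=\bold{0}} = \sum_{\bold{0} \le \bold{s} \le \bold{k}} S^{\bold{k}}_{\bold{s}}\, A^{\bold{s}}_\tau(x),
\]
where $A^{\bold{s}}_\tau(x) = |\{\gamma \mid \operatorname{wt}(\gamma) = \bold{s},\ \mathcal{U}(x,\gamma) = \tau\}|$ via Definition~\ref{defn:uxgamma}. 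Summing over $\tau \neq M(x)$, interchanging the two finite sums, and using $\sum_{\tau \neq M(x)} A^{\bold{s}}_\tau(x) = A^{\bold{s}}(x)$ from Definition~\ref{defn:Asfunctions} yields $g_{\bold{k}}(x) = \sum_{\bold{0} \le \bold{s} \le \bold{k}} S^{\bold{k}}_{\bold{s}}\, A^{\bold{s}}(x)$, and the $\bold{s} = \bold{0}$ term drops out since $A^{\bold{0}}(x) = 0$ (Remark~\ref{remark:A_s_zero}: the unique weight-$\bold{0}$ syndrome evaluates to $M(x)$, hence never to an incorrect output), leaving the stated formula.

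There is no genuine obstacle; the substance lies entirely in Theorem~\ref{theorem:main_error_syndrome_general}. The one point deserving care is the bookkeeping around the fixed input $x$: one must confirm that holding $x$ fixed leaves an honestly plain proof whose linear part is the restriction of $\pi$, so that $\operatorname{eval}$ of that linear part on a syndrome $\gamma$ coincides with $\mathcal{U}(x,\gamma)$ of Definition~\ref{defn:uxgamma}. This is already implicit in the way $\psi$ is built in Section~\ref{section:encoding_utm_step} (by cutting against fixed proofs for $\textbf{stg}$, the initial state, and the UTM state) and in the assertion that ``everything in Section~\ref{section:error_syndromes} applies verbatim.'' The remaining verifications --- the coordinate identification and the index conventions above --- are routine.
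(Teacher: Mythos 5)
Your proposal is correct and follows the same route as the paper: the paper's proof also reads the identity off from \eqref{eq:theorem_eq_1} in the proof of Theorem \ref{theorem:main_error_syndrome_general}, using that $\Delta\psi = \Delta\textrm{step}^t(x,\cdot)$ in the local coordinates \eqref{eq:w_coord_simplex} at $[M]$, and discards the $\bold{s}=\bold{0}$ term via Remark \ref{remark:A_s_zero}. Your additional bookkeeping (summing the coefficient extraction over $\tau \neq M(x)$ to pass from $A^{\bold{s}}_\tau$ to $A^{\bold{s}}(x)$) is exactly what the paper leaves implicit in the word ``immediate.''
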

\begin{proof}
Immediate from \eqref{eq:theorem_eq_1} in the proof of Theorem \ref{theorem:main_error_syndrome_general} since $\Delta \psi = \Delta\textrm{step}^t(x,w)$ when we take local coordinates around $[M]$. We can assume $\bold{0} \neq \bold{s}$ by Remark \ref{remark:A_s_zero}.
\end{proof}

\begin{example}\label{example:weight_one_error}
Let $\bold{k} = e_k$ be the standard basis vector in $\mathbb{R}^d$ for some $1 \le k \le d$. Such an index corresponds to a tuple on the description tape, within that tuple one of the three squares $\sigma', q', d$ and within that square one of the possible values $z$ \emph{other} than what is specified in $[M]$. Then $w_k = 0$ means the same value appearing in $[M]$ and $w_k = 1$ means that we replace that entry by some other symbol, state or direction.

Then we write $g_k(x)$ for $g_{\bold{k}}(x)$ and this function measures the infinitesimal variation in the probability of an output error caused by an infinitesimal variation in the code in this particular square, away from its value in $[M]$ towards $z$. We have (see Remark \ref{remark:skk})
\begin{align*}
g_k(x) = S^{\bold{k}}_{\bold{k}} A^{\bold{k}}(x) = A^{\bold{k}}(x) = \Big| \big\{ \gamma \l \operatorname{wt}(\gamma) = e_k \text{ and } \mathcal{U}(x,\gamma) \neq M(x) \big\} \Big|\,.
\end{align*}
We also denote this quantity simply by $A^k(x)$. To say that $\operatorname{wt}(\gamma) = e_k$ is to say that $\gamma$ assigns no error to the computation paths from all other squares on the description tape, assigns to exactly one computation path from the square associated to $k$ an error, and that error is of type $z$. Note that since there are generally many computation paths, such an error syndrome is far from unique.
\end{example}

Combining this with Lemma \ref{lemma:partial_derivatives_H} we obtain

\begin{theorem}\label{theorem:partial_formula_explicit} For $|\bold{k}| > 0$
\be\label{eq:partial_formula_explicit}
\frac{\partial^{|\bold{k}|}}{ \partial w_1^{k_1} \cdots \partial w_d^{k_d}} H \Bigr|_{w = [M]} = \sum_{\substack{\bold{i},\bold{j} \neq 0 \\ \bold{i} + \bold{j} = \bold{k}}} \sum_{\substack{\bold{0} \neq \bold{s} \le \bold{i}\\ \bold{0} \neq \bold{t} \le \bold{j}}} C(\bold{i}, \bold{j})  S^{\bold{i}}_{\bold{s}} S^{\bold{j}}_{\bold{t}} \,\mathbb{E}_x\Big[ A^{\bold{s}}(x) A^{\bold{t}}(x) \Big]
\ee
where $C(\bold{i}, \bold{j}) = \prod_{s=1}^d \binom{i_s + j_s}{i_s}$, $\mathbb{E}_x[-]$ denotes expectation with respect to $q(x)$ and $\bold{i}, \bold{j} \in \mathbb{N}^d$.
\end{theorem}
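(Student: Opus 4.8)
The plan is to obtain \eqref{eq:partial_formula_explicit} by a direct substitution, combining two results already available: Lemma~\ref{lemma:partial_derivatives_H}, which expresses the mixed partial derivative of $H$ at $[M]$ as a finite sum over ordered pairs $(\bold{i},\bold{j})$ of nonzero multi-indices with $\bold{i}+\bold{j}=\bold{k}$ of $C(\bold{i},\bold{j})\,\mathbb{E}_x[g_{\bold{i}}(x)g_{\bold{j}}(x)]$, and the Corollary at the start of Section~\ref{section:influence_func_utm}, which rewrites each influence function as $g_{\bold{k}}(x)=\sum_{\bold{0}\neq \bold{s}\le \bold{k}}S^{\bold{k}}_{\bold{s}}A^{\bold{s}}(x)$. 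Essentially all the mathematical content is in those two statements; what remains is formal.

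First I would fix coordinates. Lemma~\ref{lemma:partial_derivatives_H} is stated for an arbitrary local coordinate system at $[M]$, but the Corollary (and the error-syndrome machinery of Theorem~\ref{theorem:main_error_syndrome_general} behind it) requires the specific coordinates $w^i_j$ of \eqref{eq:w_coord_simplex} parametrising deviations within each probability simplex away from the value recorded in $[M]$. With those coordinates fixed, apply Lemma~\ref{lemma:partial_derivatives_H} to get the derivative as a sum of terms $C(\bold{i},\bold{j})\,\mathbb{E}_x[g_{\bold{i}}(x)g_{\bold{j}}(x)]$ over $\bold{i},\bold{j}\neq 0$ with $\bold{i}+\bold{j}=\bold{k}$. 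Then, for each such term, substitute $g_{\bold{i}}(x)=\sum_{\bold{0}\neq\bold{s}\le\bold{i}}S^{\bold{i}}_{\bold{s}}A^{\bold{s}}(x)$ and $g_{\bold{j}}(x)=\sum_{\bold{0}\neq\bold{t}\le\bold{j}}S^{\bold{j}}_{\bold{t}}A^{\bold{t}}(x)$, expand the product $g_{\bold{i}}(x)g_{\bold{j}}(x)$ by bilinearity into a double sum over $\bold{0}\neq\bold{s}\le\bold{i}$ and $\bold{0}\neq\bold{t}\le\bold{j}$ with coefficient $S^{\bold{i}}_{\bold{s}}S^{\bold{j}}_{\bold{t}}$, and then pull the constants $C(\bold{i},\bold{j})$, $S^{\bold{i}}_{\bold{s}}$, $S^{\bold{j}}_{\bold{t}}$ through $\mathbb{E}_x[-]$ using linearity of expectation with respect to $q(x)$. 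Collecting everything yields exactly the claimed nested sum. (Including or excluding $\bold{s}=\bold{0}$, $\bold{t}=\bold{0}$ is immaterial since $A^{\bold{0}}(x)=0$ by Remark~\ref{remark:A_s_zero}; we exclude them to match the statement.)

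I do not expect a real obstacle here: beyond the coordinate bookkeeping just mentioned, the only inputs are bilinearity of pointwise multiplication of real-valued functions on $I$ and linearity of the finite expectation, and the fact that every $A^{\bold{s}}(x)$ is a well-defined non-negative integer (Definition~\ref{defn:Asfunctions}) so that all sums in sight are finite. If anything needs a line of justification it is that the substitution is legitimate termwise for the finitely many pairs $(\bold{i},\bold{j})$, which is immediate. The mild subtlety worth flagging is simply that one must use the simplex coordinates of \eqref{eq:w_coord_simplex} throughout so that the Corollary applies, rather than the arbitrary coordinates permitted in Lemma~\ref{lemma:partial_derivatives_H}.
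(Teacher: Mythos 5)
Your proposal is correct and matches the paper's proof exactly: the paper likewise applies Lemma \ref{lemma:partial_derivatives_H} in the simplex coordinates of \eqref{eq:w_coord_simplex}, substitutes the corollary expression $g_{\bold{i}}(x)=\sum_{\bold{0}\neq\bold{s}\le\bold{i}}S^{\bold{i}}_{\bold{s}}A^{\bold{s}}(x)$, and expands by bilinearity and linearity of the finite expectation. Your remark about fixing the coordinates so that the corollary applies is the right point of care and is consistent with how the paper sets things up in Section \ref{section:influence_func_utm}.
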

\begin{proof}
We have by Lemma \ref{lemma:partial_derivatives_H}
\begin{align}
\frac{\partial^{|\bold{k}|}}{ \partial w_1^{k_1} \cdots \partial w_d^{k_d}} H \Bigr|_{w = [M]} &= \sum_{\substack{\bold{i},\bold{j} \neq 0 \\ \bold{i} + \bold{j} = \bold{k}}} C(\bold{i}, \bold{j})\, \mathbb{E}_x\Big[ g_{\bold{i}}(x) g_{\bold{j}}(x) \Big] \nonumber\\
&= \sum_{\substack{\bold{i},\bold{j} \neq 0 \\ \bold{i} + \bold{j} = \bold{k}}} C(\bold{i}, \bold{j})\,\mathbb{E}_x\Big[ \Big(\sum_{\bold{0} \neq \bold{s} \le \bold{i}} S^{\bold{i}}_{\bold{s}} A^{\bold{s}}(x) \Big) \Big(\sum_{\bold{0} \neq \bold{t} \le \bold{j}} S^{\bold{j}}_{\bold{t}} A^{\bold{t}}(x) \Big) \Big] \nonumber\\
&= \sum_{\substack{\bold{i},\bold{j} \neq 0 \\ \bold{i} + \bold{j} = \bold{k}}} \sum_{\substack{\bold{0} \neq \bold{s} \le \bold{i}\\ \bold{0} \neq \bold{t} \le \bold{j}}} C(\bold{i}, \bold{j})  S^{\bold{i}}_{\bold{s}} S^{\bold{j}}_{\bold{t}} \,\mathbb{E}_x\Big[ A^{\bold{s}}(x) A^{\bold{t}}(x) \Big]
\end{align}
as claimed.
\end{proof}

The only part of \eqref{eq:partial_formula_explicit} that depends on the machine $M$ are the correlation functions $\mathbb{E}_x\big[ A^{\bold{s}}(x) A^{\bold{t}}(x) \big]$. It is in this sense that the combinatorics and statistics of error syndromes determine the geometry of $H$ at $[M]$. An important special case is the Hessian of $H$:

\begin{corollary}\label{corollary:partial_formula_explicit_hessian} For $1 \le i, j \le d$ we have
\be\label{eq:partial_formula_explicit_hessian}
\frac{\partial^{2}}{ \partial w_i \partial w_j} H \Bigr|_{w = [M]} = 2\, \mathbb{E}_x\Big[ A^{i}(x) A^{j}(x) \Big]\,.
\ee
\end{corollary}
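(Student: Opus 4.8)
The plan is to specialise Theorem \ref{theorem:partial_formula_explicit} to the multi-index $\bold{k} = e_i + e_j$, where $e_i, e_j$ are standard basis vectors of $\mathbb{R}^d$; equivalently, one may combine the second-order instance of Lemma \ref{lemma:partial_derivatives_H} (already recorded in the text as $\frac{\partial^2}{\partial w_i \partial w_j} H|_{w=[M]} = 2\,\mathbb{E}_x[g_i(x)g_j(x)]$) with the identification $g_k(x) = A^{k}(x) = A^{e_k}(x)$ supplied by Example \ref{example:weight_one_error}. The latter route is the shortest: it substitutes $g_i = A^i$ and $g_j = A^j$ directly into the known Hessian formula and is done. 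I would, however, present it via Theorem \ref{theorem:partial_formula_explicit} so the corollary is visibly a corollary of the main theorem.

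First I would enumerate the admissible pairs $(\bold{i}, \bold{j})$ with $\bold{i}, \bold{j} \neq 0$ and $\bold{i} + \bold{j} = e_i + e_j$. When $i \neq j$ these are exactly $(e_i, e_j)$ and $(e_j, e_i)$; when $i = j$ (so $\bold{k} = 2e_i$) the only such pair is $(e_i, e_i)$. Next I would evaluate the combinatorial constants: $C(e_i, e_j) = \prod_{s=1}^d \binom{(e_i+e_j)_s}{(e_i)_s} = 1$ when $i \neq j$, while $C(e_i,e_i) = \binom{2}{1} = 2$. For the inner sums, the constraints $\bold{0} \neq \bold{s} \le \bold{i}$ with $\bold{i}$ a standard basis vector force $\bold{s} = \bold{i}$, and likewise $\bold{t} = \bold{j}$; and by Remark \ref{remark:skk}, $S^{e_k}_{e_k} = \prod_j k^k_j! = 1$ for any basis vector $e_k$.

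Assembling these: for $i \neq j$ the two pairs each contribute $1 \cdot 1 \cdot 1 \cdot \mathbb{E}_x[A^{e_i}(x) A^{e_j}(x)]$, giving a total of $2\,\mathbb{E}_x[A^i(x) A^j(x)]$; for $i = j$ the single pair contributes $2 \cdot 1 \cdot 1 \cdot \mathbb{E}_x[A^{e_i}(x)^2] = 2\,\mathbb{E}_x[A^i(x) A^i(x)]$. In both cases this equals the right-hand side of \eqref{eq:partial_formula_explicit_hessian}, using the abbreviation $A^k = A^{e_k}$ fixed in Example \ref{example:weight_one_error}.

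There is no real obstacle here; the only point requiring care is the bookkeeping of the constants $C(\bold{i},\bold{j})$ and $S^{\bold{i}}_{\bold{s}}$ in the two cases $i \neq j$ and $i = j$, making sure the $\binom{2}{1}$ appearing when $i = j$ exactly compensates for the absence of the second ordered pair, so that the uniform coefficient $2$ emerges in both cases. Everything else is immediate from Theorem \ref{theorem:partial_formula_explicit} and Remark \ref{remark:skk}.
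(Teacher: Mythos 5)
Your proposal is correct and follows essentially the same route as the paper: the paper derives the corollary by combining the second-order special case of Lemma \ref{lemma:partial_derivatives_H} (already worked out there, with the same case split on $i \neq j$ versus $i = j$ and the compensating factor $\binom{2}{1}$) with the identification $g_k(x) = A^{k}(x)$ from Example \ref{example:weight_one_error}, which is exactly your "shortest route." Your bookkeeping of $C(\bold{i},\bold{j})$ and $S^{\bold{i}}_{\bold{s}}$ when specialising Theorem \ref{theorem:partial_formula_explicit} directly is also accurate.
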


\begin{remark}\label{rmk:fourier_analysis}
In the analysis of Boolean functions \(f: \{0,1\}^n \lto \{0,1\}\) one often studies the \emph{influence} of the \(i\)-th variable on \(f\), denoted
\[
\mathrm{Inf}_i(f) \;=\; \Pr_{x}\bigl[f(x) \neq f(x^{\oplus i})\bigr],
\]
where \(x^{\oplus i}\) is obtained from \(x\) by flipping its \(i\)-th bit. That is, \(\mathrm{Inf}_i(f)\) is the probability that flipping the 
$i$-th bit in the input flips the value of the function. A rich theory has developed around these notions, often called the \emph{Fourier (or spectral) analysis of Boolean functions}, culminating in classical results such as the Poincar\'e inequality, KKL theorem, Friedgut's junta theorem, and so on \cite{o2014analysis}.

The influence functions $g_i(x)$ of this paper are conceptually similar, but distinct, in that they measure the infinitesimal rate at which perturbing the $i$th bit in the code of the Turing machine $M$ changes the probability of an output error on $x$. By Example \ref{example:weight_one_error} this is a count of weight one error syndromes which induce an output error on $x$.
\end{remark}

\subsection{Meaning of Nondegeneracy}\label{section:meaning_nondeg}

The germ $([M], H)$ is nondegenerate if and only if the functions $\{ g_i \}_{i=1}^d$ are linearly independent (Lemma \ref{lemma:slt25}). Hence by Lemma \ref{lemma:comparable_H_K_L} and Lemma \ref{lemma:comparable_hessian} the same is true of $([M], L)$. We can now understand the computational content of this condition: it says that the pattern of counts of error syndromes in the $i$th square of the description tape which flip the output state cannot be computed from the pattern for the \emph{other} squares of the description tape. In short, nondegeneracy means that \emph{different parts of the code have different effects}.

More generally, the geometry of the singularity $([M], H)$ is determined by structure in the Taylor series expansion, which by Theorem \ref{theorem:partial_formula_explicit} the same thing as structure in the combinatorics of error syndromes. That is, whenever we can show that the effect on the final state of errors in some bits \emph{interact} with errors in other bits, we can expect this to show up in the combinatorics and thus potentially in the geometry of $([M], H)$. In this way, we predict a general correspondence between geometric structure of the singularity and the internal structure of the Turing machine. Several particular cases are presented in the next section.

\section{Examples}
\label{sec:examples}

\subsection{Error-Correction}\label{section:error_correction}

The concepts of fault-tolerance and error-correction have been present in theoretical computer science since its earliest foundational works \cite{von1956probabilistic}. Von Neumann demonstrated that a bounded level of noise or faults occurring at the level of individual elements can be systematically mitigated. His method involved structuring computations redundantly by executing each computational step multiple times independently and using majority voting. In this way he showed that local errors could be corrected with high probability.

Other frameworks for fault-tolerance have been developed, including the formal notion of fault-tolerant Turing machines \cite{pippenger1985networks,spielman1996highly}. Such machines explicitly incorporate mechanisms for detecting and correcting faults that occur at the level of transitions, tape symbols, or internal states during computation. Under some conditions Çapuni and Gács \cite{ccapuni2012turing, çapuni2021reliableturingmachine} have constructed general machines of this kind.\footnote{We distinguish between noisy Turing machines and probabilistic Turing machines. While probabilistic Turing machines employ randomness deliberately as a computational resource, noisy or fault-tolerant Turing machines address a fundamentally different challenge: preserving computational correctness and universality despite the involuntary introduction of random faults.}

\begin{definition} For any integer $C \ge 0$ we say that a Turing machine $M$ can \emph{correct errors of weight $\le C$} on $I \subseteq \Sigma^*$ if for any error syndrome $\gamma \in \mathbb{N}^d$ with $|\operatorname{wt}(\gamma)| \le C$
\be
\mathcal{U}(x, \gamma) = M(x) \qquad \forall x \in I\,.
\ee
\end{definition}

Hidden in the notation is that $M(x)$ is the state after running the machine for a \emph{fixed number $t$ of steps}. So informally this definition says: if we simulate $M$ on the UTM $\mathcal{U}$ for $t$ steps and, when this execution would make use of a sample from the distributions of symbols on the description tape, we make use of the outcomes provided by $\gamma$, the resulting state $\mathcal{U}(x,\gamma)$ is $M(x)$, that is, the outcome is the same as if there had been no errors (here \emph{errors} are any time the sample we obtain is different from the entry in that position in $[M]$). We note again the unusual model of noisy execution of a UTM that we are employing (see Appendix \ref{section:think_noisy}).

\begin{corollary}\label{corollary:bound_llc_TM} If $M$ is able to correct errors of weight $\le C$ then
\be\label{eq:bound_llc_TM}
\lambda([M], q) \le \frac{d}{2(C+1)}\,.
\ee
\end{corollary}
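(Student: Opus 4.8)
The plan is to combine the error-correction hypothesis with the explicit Taylor expansion of $H$ from Theorem \ref{theorem:partial_formula_explicit} to show that the Newton polyhedron $\mathcal{P}(M)$ is pushed away from the origin, and then invoke Proposition \ref{prop:upper_bound_RLCT}.

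First I would translate the hypothesis into a statement about the error syndrome counts $A^{\bold{s}}(x)$ of Definition \ref{defn:Asfunctions}. If $M$ corrects errors of weight $\le C$ on $I$, then any error syndrome $\gamma$ with $|\operatorname{wt}(\gamma)| \le C$ satisfies $\mathcal{U}(x,\gamma) = M(x)$ for all $x \in I$, so no such $\gamma$ contributes to the count in Definition \ref{defn:Asfunctions}. Hence $A^{\bold{s}}(x) = 0$ for every $x \in I$ whenever $|\bold{s}| \le C$ (this generalises Remark \ref{remark:A_s_zero}, which is the case $C=0$).

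Next I would feed this into Theorem \ref{theorem:partial_formula_explicit}. The formula \eqref{eq:partial_formula_explicit} expresses $\frac{\partial^{|\bold{k}|}}{\partial w_1^{k_1}\cdots\partial w_d^{k_d}} H|_{w=[M]}$ as a sum over $\bold{i}+\bold{j}=\bold{k}$ with $\bold{i},\bold{j}\neq 0$ and over $\bold{0}\neq\bold{s}\le\bold{i}$, $\bold{0}\neq\bold{t}\le\bold{j}$ of scalar multiples of $\mathbb{E}_x[A^{\bold{s}}(x)A^{\bold{t}}(x)]$. Such a term can be nonzero only if neither $A^{\bold{s}}$ nor $A^{\bold{t}}$ vanishes identically on $I$, which by the previous step forces $|\bold{s}| \ge C+1$ and $|\bold{t}| \ge C+1$; since $\bold{s}\le\bold{i}$ and $\bold{t}\le\bold{j}$ this gives $|\bold{k}| = |\bold{i}| + |\bold{j}| \ge |\bold{s}| + |\bold{t}| \ge 2(C+1)$. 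Therefore all partial derivatives of $H$ at $[M]$ of total order $< 2(C+1)$ vanish, and by its definition $\mathcal{P}(M)$ is contained in the half-space $\{\,\bold{k} \in \mathbb{R}^d \mid \sum_{s} k_s \ge 2(C+1)\,\}$.

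Finally, the diagonal point $(s,\ldots,s)$ lies in this half-space only when $ds \ge 2(C+1)$, so $l(M) = \inf\{\,s \ge 0 \mid (s,\ldots,s)\in\mathcal{P}(M)\,\} \ge 2(C+1)/d$ (when $H$ vanishes identically near $[M]$ the stated bound is vacuous and there is nothing to prove). Plugging this into Proposition \ref{prop:upper_bound_RLCT} yields $\lambda([M],q) \le 1/l(M) \le d/(2(C+1))$, as claimed. The only step requiring any attention is the index bookkeeping in the third paragraph — matching the constraints $\bold{s}\le\bold{i}$, $\bold{t}\le\bold{j}$, $\bold{i}+\bold{j}=\bold{k}$ against the vanishing range $|\bold{s}|,|\bold{t}| \le C$ — but this is routine; all the substance is already packaged in Theorem \ref{theorem:partial_formula_explicit} and Proposition \ref{prop:upper_bound_RLCT}.
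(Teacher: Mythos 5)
Your proposal is correct and follows essentially the same route as the paper's proof: translate error-correction into $A^{\bold{s}}(x)=0$ for $|\bold{s}|\le C$, use Theorem \ref{theorem:partial_formula_explicit} and the bookkeeping $|\bold{k}|=|\bold{i}|+|\bold{j}|\ge|\bold{s}|+|\bold{t}|\ge 2(C+1)$ to kill all Taylor coefficients of order $<2(C+1)$, deduce $l(M)\ge 2(C+1)/d$ from the Newton polyhedron, and conclude via Proposition \ref{prop:upper_bound_RLCT}. Your aside about the vacuous case where $H$ vanishes identically is a small extra care the paper omits, but otherwise the arguments coincide.
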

\begin{proof}
If $\bold{i}, \bold{j} \neq 0$ and $\bold{s} \le \bold{i}, \bold{t} \le \bold{j}$,$\bold{i} + \bold{j} = \bold{k}$ then from $|\bold{s}| \ge C+1$,$|\bold{t}| \ge C+1$ we deduce
\[
|\bold{k}| = |\bold{i}| + |\bold{j}| \ge |\bold{s}| + |\bold{t}| \ge 2C+2\,.
\]
It follows from Theorem \ref{theorem:partial_formula_explicit} that we have
\[
\frac{\partial^{|\bold{k}|}}{ \partial w_1^{k_1} \cdots \partial w_d^{k_d}} H \Bigr|_{w = [M]} = 0
\]
unless $|\bold{k}| \ge 2(C + 1)$ since by assumption $A^{\bold{s}}(x) = 0$ for all $|\bold{s}| \le C$. Then the Newton polyhedron $\mathcal{P}(M)$ is the convex hull of points $\bold{k}$ with $|\bold{k}| \ge 2(C+1)$. Since this set is itself convex, this means that
\[
\mathcal{P}(M) \subseteq \{ \bold{k} \l |\bold{k}| \ge 2(C + 1) \}
\]
and in particular $(s,s,\ldots,s) \in \mathcal{P}(M)$ implies $ds \ge 2(C + 1)$ and so $s > \frac{2(C+1)}{d}$. The distance $l(M)$ is the infimum of a subset of $[\frac{2(C+1)}{d}, \infty]$ and hence $l(M) \ge \frac{2(C+1)}{d}$. Thus $\frac{1}{l(M)} \le \frac{d}{2(C+1)}$ and the claim now follows from Proposition \ref{prop:upper_bound_RLCT}.
\end{proof}

Note that any Turing machine can correct $C = 0$ errors, in which case the upper bound of \eqref{eq:bound_llc_TM} is $\frac{d}{2}$, the universal bound on the learning coefficient \cite[\S 7]{watanabeAlgebraicGeometryStatistical2009}, \cite[\S 4.2.3]{lin2011algebraic}.

\begin{remark}
If $M$ does not make use of all the states in $Q$, it might be modified in such a way as to implement error-correction. This will on the one hand tend to \emph{increase} the learning coefficient \cite[\S 3]{clift2018derivatives} because the machine uses more tuples, but on the other hand by the Corollary the implementation of error-correction will tend to \emph{decrease} the learning coefficient. It is an important open question whether error-correction can be made ``net negative'' in this sense for Turing machines.

It seems that simple majority voting schemes can be useful in this regard if we assume the error-correction itself is not subject to errors. However, as with \cite{ccapuni2012turing} and the existing literature on similar questions, the real problem is to allow errors in the part of the description that implement error-correction. Answering this open question would have important implications for the character of programs that are given high probability by the posterior.
\end{remark}

\begin{remark} A relation between error sensitivity of neural networks in their forward pass and generalisation error has been noted by \cite{arora2018stronger}. It has also been suggested that evolution in the presence of noise tends to produce error-correcting codes and modularity \cite{mccourt2023noisydynamicalsystemsevolve}.
\end{remark}

\subsection{Control Flow}\label{section:control_flow}

Suppose that $I$ can be written as a disjoint union $I = I_1 \sqcup I_2$ and the Turing machine $M$ on input $x$ first classifies whether $x$ belongs to $I_1$ or $I_2$ and then executes separate control paths for each subset. We divide the tuples (and thus the parameters $w_1,\ldots,w_d$) into three blocks, where the first and second blocks correspond to tuples in the first and second control path and the third block performs the classification of the input.

Then for $x \in I_1$ we have $g_j(x) = 0$ for any $j$ in the second block, and for $x \in I_2$ we have $g_i(x) = 0$ for any $i$ in the first block, since no error syndrome in one block can affect the output for inputs handled by the other block. Hence for $i$ in the first block and $j$ in the second we have by Corollary \ref{corollary:partial_formula_explicit_hessian}
\begin{equation}
\frac{1}{2} \frac{\partial^2}{ \partial w_i \partial w_j} H \Bigr|_{w = [M]} = \mathbb{E}_x\Big[ g_i(x) g_j(x) \Big] = 0\,.
\end{equation}
This means that there is a block structure in the Hessian of $H$
\be
\operatorname{Hess}(H, [M]) = \begin{pmatrix}
H_1 & 0 & A_{13} \\
0 & H_2 & A_{23} \\
A_{31} & A_{32} & A_{33}
\end{pmatrix}\,.
\ee
This is the geometric signature of control flow in the Turing machine.

\begin{remark}
In general, increasing degeneracy of $H$ looks like vanishing of higher derivatives, or relations among these derivatives; in turn this by Theorem \ref{theorem:partial_formula_explicit} takes the form of certain functions $f(x)$ with the property that $\mathbb{E}_x[f(x)] = 0$. That is, as \emph{statistical patterns} on the sample space \cite{shawe2004kernel}. These combinatorial relations among error syndrome counts in return reflect the patterns by which parts of the Turing machine code coordinate to produce the output. 
\end{remark}

\subsection{Nondegenerate Direction for \detectA}\label{section:nondeg_detectA}

We use the example of the synthesis problem {\detectA} from \cite{clift2021geometry} to illustrate the correspondence between geometry and structure. Recall that
\begin{gather*}
    \Sigma=\{0,1,2\} =\{\Box, A, B\},\\
    Q=\{0,1\} = \{\reject,\accept\}\,.
\end{gather*}
Consider the function $f: \{A,B\}^* \lto \{\reject,\accept\}$ that returns $\accept$ if the string contains at least one $A$, and $\reject$ otherwise. In Figure \ref{fig:detectA} we give two Turing machines that implement this function when the string is initially placed on the work tape with the head over the first symbol (arranged to the right on the tape). The machines are initialised in the $\reject$ state. 

\begin{figure}[htbp]
    \centering
    \begin{subfigure}[b]{0.45\textwidth}
        \centering
        $$
\begin{array}{|c|c?c|c|c|}
\hline
\sigma & q & \sigma' & q' & d \\
\hline
\Box & \reject & \Box & \reject & \text{Stay} \\
\Box & \accept & \Box & \accept & \text{Stay} \\
A & \reject & A & \accept & \text{Stay} \\
A & \accept & A & \accept & \text{Stay} \\
B & \reject & B & \reject & \text{Right} \\
B & \accept & B & \accept & \text{Stay}\\
\hline
\end{array}
$$
        \caption{$\detectA^{(0)}$}
        \label{fig:detectA_left}
    \end{subfigure}
    \hfill
    \begin{subfigure}[b]{0.45\textwidth}
        \centering
        $$
\begin{array}{|c|c?c|c|c|}
\hline
\sigma & q & \sigma' & q' & d \\
\hline
\Box & \reject & B & \reject & \text{Right} \\
\Box & \accept & \Box & \accept & \text{Left} \\
A & \reject & \Box & \accept & \text{Left} \\
A & \accept & \Box & \accept & \text{Left} \\
B & \reject & \Box & \reject & \text{Right} \\
B & \accept & \Box & \accept & \text{Left}\\
\hline
\end{array}
$$
        \caption{$\detectA^{(1)}$}
        \label{fig:detectA_right}
    \end{subfigure}
    \caption{(a) A natural implementation of $\detectA$. Each row is a tuple of the form $(\sigma, q, \sigma', q', d)$. (b) A second, distinct machine implementing the same function, implicitly given in \cite[Example 3.5]{clift2021geometry} and discovered there by MCMC-based program synthesis.}
    \label{fig:detectA}
\end{figure}

We consider errors which perturb the state $q'$ transitioned to by both machines, when $s = A$ and $q = \reject$. That is, the variation
\be\label{eq:example_detectA_bit}
(A, \reject, ?, {\accept \rightarrow \reject}, ?)\,.
\ee
We let $\theta$ denote the random variable representing this $q'$-type square on the description tape (terminology of Section \ref{section:relstep_UTM}). Hence if we use the UTM to simulate the TMs for two steps, the computation paths from $\theta$ to the final state $q$ are
\[
\Lambda_{\Box, \reject}, \ldots, \Lambda_{B, \accept}, \Omega, \Xi
\]
where we note $N = 6$ and we label computation paths by the associated $s,q$ for the tuple. Hence an error syndrome is a sequence $\gamma \in \{0,1\}^8$ where we recall $0$ is read as the symbol present in the code $[M]$ and $1$ as the alternative presented in the variation \eqref{eq:example_detectA_bit}. So for example $\gamma = (1,0,0,0,0,0,0,0)$ (we write $10000000$) means that we evaluate the UTM on a given input, using as our values of the entry $\theta$ the states
\[
\Lambda_{\Box, \reject} = \reject, \Lambda_{\Box, \accept} = \accept, \ldots, \Lambda_{B, \accept} = \accept, \Omega = \accept, \Xi = \accept
\]
where the fact that the first entry of $\gamma$ is $1$ is what causes the first state here to be $\reject$. What this means is that, while evaluating the UTM with $[\detectA^{(0)}]$ on its description tape and $x$ on its work tape, in the second cycle of $P$ steps of the UTM when we go to compare $\Box, \reject$ with $\sigma'_0, q'$ we use the state computed in the first step using the tuple
\[
A, \reject, A, {\color{red} \reject }, \text{Stay} \quad \text{rather than} \quad A, \reject, A, \accept, \text{Stay}\,.
\]
In all other computation paths we use the tuple with $\accept$, that is, without the error. For example on input $x = A$ with error syndrome $1000000$, we have (writing $\sigma_0 = A, q = \reject$ for the inital symbol and state, $\sigma'_0, q'$ for the symbol and state after one timestep of the simulated machine and $q''$ for the state after two timesteps)
\begin{align*}
q''(A) &= q''\big(\sigma'_0(A, \reject), q'(A, \reject, \overset{{\color{blue} \Box,\reject}}{\theta}), \ldots, q'_0(A, \reject, \overset{{\color{blue} B,\accept}}{\theta}), \overset{{\color{blue} 7}}{\theta}, \overset{{\color{blue} 8}}{\theta}\big)\\
&= q''\big( A, q'(A, \reject, 1), q'(A, \reject, 0), \ldots, q'(A, \reject, 0), 0, 0\big)\\
&= q''\big( A, \reject, \accept, \ldots, \accept, 0, 0 \big) = \accept\,.
\end{align*}
The syndrome $\gamma = 00100000$ is more interesting. This corresponds to an error occurring in the copy of $\theta$ used to compute $q'$ in the comparison of the current symbol and state with the first two squares $A, \reject$ of the third tuple of $\detectA^{(1)}$. Thus
\begin{align*}
q''(A) &= q''\big(\sigma'_0(A, \reject), \ldots, q'(A, \reject, \overset{{\color{blue} A,\reject}}{\theta}), q'(A, \reject, \overset{{\color{blue} A,\accept}}{\theta}), \ldots, \overset{{\color{blue} 7}}{\theta}, \overset{{\color{blue} 8}}{\theta}\big)\\
&= q''\big( A, \ldots, q'(A, \reject, 1),q'(A, \reject, 0), \ldots, 0, 0\big)\\
&= q''\big( A, \ldots, \reject, \accept, \ldots, 0, 0 \big) = \accept\,.
\end{align*}
While scanning the description tape the UTM tries to compare $A, \reject$ with the current symbol and state, gets a match on the first and then (mistakenly) a match on the second (due to the error in computing $q'$, which should be $\accept$ after the first step, but due to the error in $\theta$ is instead $\reject$). Hence the UTM writes $\accept$ to the staging tape. It then moves on to the comparison for the tuple beginning $A, \accept$ and \emph{receives a second match}, writing $\accept$ to the staging tape again.

We see that none of the syndromes of weight one can cause an error (that is, flip the output from $\accept$ to $\reject$) on the input $x = A$. However with syndrome $\gamma = 00000010$ and $x = BA$ there is an output error, as
\begin{align*}
q''(BA) &= q''\big(\sigma'_0(B, \reject), q'(B, \reject, \overset{{\color{blue} \Box,\reject}}{\theta}), \ldots, q'_0(B, \reject, \overset{{\color{blue} B,\accept}}{\theta}), \overset{{\color{blue} 7}}{\theta}\big)\\
&= q''\big( A, \reject, \ldots, \reject, 1, 0 \big) = {\color{red} \reject}\,.
\end{align*}
Continuing in this way we fill out the tables in Figure \ref{fig:detectA_syndromes}.

As we have explained in Section \ref{section:geometry}, to second order the geometry of the singularity germ $([\detectA^{(j)}], L)$ for $j \in \{0,1\}$ is closely related to the geometry of the configuration of the functions $\{ g^{(j)}_i \}_{i=1}^d$ in the vector space of functions. From Figure \ref{fig:detectA_syndromes} we compute, viewing the functions $g^{(j)}$ as vectors of coefficients in the basis of functions on $A, B, AB, BA, AA, BB$ (in that order)
\begin{align}
    g^{(0)} &= \begin{pmatrix} 0 & 0 & 0 & 1 & 0 & 0 \end{pmatrix}^T\,,\\
    g^{(1)} &= \begin{pmatrix} 0 & 0 & 0 & 1 & 0 & 0 \end{pmatrix}^T
\end{align}
by counting the number of red entries for each $x$. 

If we assume all the inputs have equal probability $\tfrac{1}{6}$, and computing the Hessian with respect to just the variable $w$ (which has value $w = 0$ when the state $q'$ in \eqref{eq:example_detectA_bit} is $\accept$ and the value $w = 1$ when it is $\reject$) then by \eqref{eq:hessian_formula_P} we have for $j \in \{0,1\}$
\be
\operatorname{Hess}(L, [\detectA^{(j)}]) = \frac{1}{3} (g^{(j)})^T g^{(j)} = \frac{1}{3}\,.
\ee
Thus in local coordinates around $[\detectA^{(j)}]$ we have $L = \tfrac{1}{3} w^2 + O(w^3)$. Even though these two Turing machines are distinct, their differences are irrelevant to errors in this particular tuple, so they have the same geometry in the $w$ direction.

\begin{figure}[htbp]
    \centering
    \begin{subfigure}[b]{0.45\textwidth}
        \centering
        $$
\begin{array}{|c|c|c|}
\hline
\text{Syndrome} & x & \\
\hline
*0 & A & \accept \\
\bold{0}1 & A & \accept \\
*0 & B & \reject \\
\bold{0}1 & B & \reject \\
*0 & AB & \accept \\
\bold{0}1 & AB & \accept \\
*0 & BA & \accept \\
\bold{0}1 & BA & {\color{red} \reject} \\
*0 & AA & \accept \\
\bold{0}1 & AA & \accept\\
*0 & BB & \reject \\
\bold{0}1 & BB & \reject\\
\hline
\end{array}
$$
        \caption{$\detectA^{(0)}$}
        \label{fig:detectA_left_syndromes}
    \end{subfigure}
    \hfill
    \begin{subfigure}[b]{0.45\textwidth}
        \centering
        $$
\begin{array}{|c|c|c|}
\hline
\text{Syndrome} & x & \\
\hline
*0 & A & \accept \\
\bold{0}1 & A & \accept \\
*0 & B & \reject \\
\bold{0}1 & B & \reject \\
*0 & AB & \accept \\
\bold{0}1 & AB & \accept \\
*0 & BA & \accept \\
\bold{0}1 & BA & {\color{red} \reject} \\
*0 & AA & \accept \\
\bold{0}1 & AA & \accept\\
*0 & BB & \reject \\
\bold{0}1 & BB & \reject\\
\hline
\end{array}
$$
        \caption{$\detectA^{(1)}$}
        \label{fig:detectA_right_syndromes}
    \end{subfigure}
    \caption{Influence functions for weight one error syndromes for the two Turing machines implementing $\detectA$. Here $*$ stands for any element of $\{0,1\}^6$ and $\bold{0} = 000000$.}
    \label{fig:detectA_syndromes}
\end{figure}

\subsection{Full geometry of \detectA}\label{section:full_geometry}

In this section we complete the analysis by studying the geometry at $\detectA^{(0)}$ to second-order in all coordinate directions. By Corollary \ref{corollary:partial_formula_explicit_hessian} to compute the second-order derivatives of $H$ at $w = [M]$ we need to count error syndromes of weight one which change the output from $M(x) = y(x)$ (that is, cause an output error). As explained in Remark \ref{example:weight_one_error} we use $1 \le k \le d$ to index triples consisting of
\begin{itemize}
    \item One among the $N = |\Sigma||Q|$ tuples on the description tape
    \item Within that tuple, one of the three positions $\sigma',q',d$
    \item For that position, a variation away from the entry of $[M]$ in that position.
\end{itemize}
For $\detectA^{(0)}$ we have $d = 30$ and the enumeration is given in Figure \ref{fig:detectA_left_coords}. From now on indices $1 \le k \le d$ are associated with triples as indicated by this Figure.

For each $1 \le k \le d$ and input $x$, $A^k(x)$ is the number of computation paths leading from the description tape square associated to $k$ which, when the type of error indexed by $k$ is introduced, lead to an output error. We compute $A^k(x)$ for all $k$ and $x$ in
\be\label{eq:input_set_X}
X = \{ A, B, AB, BA, AA, BB \}
\ee
when we run the simulated machine for two steps $t = 2$. We use the notation for computation paths
\[
\Gamma_j, \Lambda_j, \Omega, \Xi, \Theta_j \qquad 1 \le j \le N
\]
introduced in Section \ref{section:relstep_UTM}.

\begin{lemma}\label{lemma:irrelevant_entries_channel}
$A^k(x) = 0$ for all $k \notin \{3, 13, 18, 23, 24, 25\}$ and $x \in X$.    
\end{lemma}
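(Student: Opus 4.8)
The plan is to run, for each of the six inputs $x \in X = \{A,B,AB,BA,AA,BB\}$, the error-free two-step simulation of $\detectA^{(0)}$ and record which of the six tuples $\sigma_j q_j \sigma'_j q'_j d_j$ is matched at step $1$ and at step $2$. Since the machine starts in state $\reject$ with the first letter of $x$ (always $A$ or $B$) under the head, step $1$ always fires tuple $3$ ($A,\reject$) or tuple $5$ ($B,\reject$); advancing each case one more step, one finds that over all $x \in X$ the tuples matched at step $1$ form the set $\{3,5\}$ and those matched at step $2$ form $\{1,3,4,5\}$. In particular tuples $2$ and $6$ are consulted on no input in $X$.

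From here the lemma follows by a short sequence of eliminations, each using the enumeration of the computation paths $\Gamma_j,\Lambda_j,\Omega,\Xi,\Theta_j$ out of a description square recorded in Section \ref{section:relstep_UTM}. Whether a tuple is matched is determined by its \emph{comparison} entries $\sigma_j,q_j$ together with the current work symbol and state, and a weight-one error localised in the \emph{output} squares $\sigma'_j,q'_j,d_j$ changes none of these; so if tuple $j$ is never matched on $x$, no such syndrome alters the run and $A^k(x) = 0$ for every square $k$ of tuples $2$ and $6$. Next, if tuple $j$ is matched only at step $2$ (tuples $1$ and $4$), then its $\sigma'_j$ entry is written --- if at all --- only to the work tape and its $d_j$ entry only repositions the work head, neither of which is read at $t = 2$; hence $A^k(x) = 0$ for the $\sigma'$- and $d$-squares of tuples $1$ and $4$.

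The two tuples matched at step $1$ need slightly more attention. On the inputs for which tuple $3$ fires at step $1$ (namely $A,AB,AA$) the simulated state is $\accept$ after step $1$, and every tuple of $\detectA^{(0)}$ with $q=\accept$ --- that is, tuples $2,4,6$ --- has $q'=\accept$; as only those can survive the state-comparison at step $2$, the state after step $2$ equals $\accept$ regardless of which symbol the head reads. A weight-one error in $\sigma'_3$ or $d_3$ only perturbs a step-$2$ symbol-comparison or the step-$2$ head position, so it cannot flip the output; on $x = BA$ (where tuple $3$ fires only at step $2$) these entries touch only the unread work tape and head, and on $B,BB$ tuple $3$ is not matched; thus $A^k(x) = 0$ for the $\sigma'$- and $d$-squares of tuple $3$ on every $x \in X$. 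Finally, after step $1$ via tuple $5$ the head moves \emph{right} off the square holding the just-written $\sigma'_5$, and for each of $B,BA,BB$ the head never returns to that square during step $2$; since the tape is not read after step $2$, the value of $\sigma'_5$ is inert and $A^k(x) = 0$ for its two squares. The squares not eliminated are precisely the $q'$-entries of tuples $1,3,4,5$ and the two non-trivial variations of $d_5$, i.e. the indices $3,13,18,23,24,25$ of Figure \ref{fig:detectA_left_coords}, which proves the claim.

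The main obstacle --- and the reason the statement is a lemma rather than a remark --- is keeping the path-based eliminations honest: each step of the form ``an error along these paths cannot affect the output'' must be checked against the explicit path enumeration and the scan/update behaviour of the staged pseudo-UTM $\mathcal{U}$, and in particular against the subtlety emphasised in Section \ref{section:nondeg_detectA} that a weight-one syndrome perturbs a single copy of an entry along a single computation path (so, for example, an error on a path $\Lambda_j$ affects only the step-$2$ comparison against the $j$th tuple, while the copies that feed $\Omega$ and $\Xi$ are untouched). Once that bookkeeping is in place the remaining case analysis is finite and routine, and the complementary nonzero values $A^k(x)$ for $k\in\{3,13,18,23,24,25\}$ are obtained by continuing the same traces.
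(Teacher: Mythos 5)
Your proof is correct and follows essentially the same route as the paper's: a finite case analysis over the description-tape squares, using the computation-path enumeration $\Gamma_j,\Lambda_j,\Omega,\Xi,\Theta_j$ together with the operational behaviour of $\detectA^{(0)}$ (the head never starts over $\Box$, the initial state is $\reject$, the state $\accept$ is absorbing, and tuple $5$ moves the head off the square it writes). Your organisation --- first tabulating which tuples fire at step $1$ versus step $2$ and then applying a few uniform elimination principles --- is a cleaner packaging of the same eliminations the paper performs index by index, and it correctly respects the single-path semantics of weight-one syndromes.
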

\begin{proof}
We explain a few cases in detail and leave the remaining cases to the reader. For $k = 1$ the relevant computation paths are of type $\Gamma_j$ (here the $j$ indexes tuples on the description tape and is unrelated to $k$). If in computing the symbol under the head $\sigma_0$ after one simulated timestep we use the variation $\Box \rightarrow A$ for the first tuple in Figure \ref{fig:detectA_left_coords}, this does not affect $\sigma_0$ because the head never starts over a $\Box$. Thus no output error can occur as a result of this variation, and so $A^k(x) = 0$ for $k \in \{1,2,6,7\}$ and $x \in X$. The same logic applies to $k \in \{4,5,9,10\}$ with $\Gamma_j$ replaced by $\Theta_j$.

Since the machine starts in state $\reject$ the variations $k \in \{16, 17,19,20,26,27,29,30\}$ similarly cannot affect $\sigma_0$ after one timestep along the computation paths $\Gamma_j, \Theta_j$ and so $A^k(x) = 0$ also in these cases for all $x \in X$.

When $k = 8$ we have to consider computation paths $\Lambda_j, \Omega, \Xi$. Along the former, again since the head never starts over $\Box$, there can be no induced output error. The other computation path $\Omega$ is more subtle: note that $\detectA^{(0)}$ does not move the head once it transitions into state $\accept$, so under normal operation it \emph{cannot reach} the pair $(\Box, \accept)$. Hence an error in $\Omega$ cannot cause an output error. For $\Xi$, we check each $x$ and observe that a match on the description tape is always found and so this computation path is never taken. Thus, $A^{8}(x) = 0$ for all $x$.\footnote{Note how if we were to consider error syndromes of higher weight we would have to consider the possibility that the machine reaches $\Box, \accept$ due to another error, thus making $\Omega$ for $k = 8$ relevant.}

For $k \in \{11, 12\}$ if $x$ begins with a $B$ this tuple is irrelevant. If it starts with $A$ then the machine transitions into state $\accept$ and it is irrelevant what is written to the tape as $\sigma_0$ (because the machine can never exit state $\accept$). Hence errors of type $k$ on computation paths $\Gamma_j$ cannot cause an output error.\footnote{Notice the possibility for an output error from higher weight syndromes implicit in this reasoning.} For similar reasons $A^k(x) = 0$ for all $k \in \{14, 15\}$ and $x \in X$.

For $k \in \{21, 22\}$ if $x$ begins with $A$ this tuple is irrelevant. If $x$ begins with $B$ then the machine stays in state $\reject$ and moves right, so $\sigma_0$ (the symbol under the head after one simulated timestep) is independent of what is written to the tape in the first step. Hence $A^k(x) = 0$ for all $x$.\footnote{Note that if we had an error of type $22$ \emph{and} an error of type $24$ then the machine would write an $A$ and have the head stay over it, to be read in the second step. This suggests some third-order derivative may be nonzero, in a way that reflects the structure of the machine.}

Finally when $k = 28$ we have to consider computation paths $\Lambda_j, \Omega, \Xi$. Since we start in state $\reject$ no error on path $\Lambda_j$ can cause an output error. For $\Omega$, there is no way for the machine under normal operation to reach state $\accept$ with the head over $B$, so again this cannot contribute an output error. Lastly, a match on the description tape is always found, so $\Xi$ cannot contribute an output error. Hence $A^k(x) = 0$ for all $x$ also in this case.
\end{proof}

The reader is encouraged to work through the proof of Lemma \ref{lemma:irrelevant_entries_channel} in detail, since this example illustrates the way in which \emph{concretely} the behaviour of a Turing machine is reflected in combinatorics of error syndromes. It is quite intuitive why the variations that remain to be considered in Figure \ref{fig:detectA_left_coords} (that is, variations other than the ones marked in blue which are ruled out by the Lemma) should be the most critical parts of the code.


\begin{figure}[htbp]
    \centering
    $$
\begin{array}{|c|c?c|c|c|}
\hline
\sigma & q & \sigma' & q' & d \\
\hline
\Box & \reject & {\color{blue} \overset{1}{\Box \rightarrow A}}, {\color{blue} \overset{2}{\Box \rightarrow B}} & \overset{3}{\reject \rightarrow \accept} & {\color{blue} \overset{4}{\text{Stay} \rightarrow \text{Left}}}, {\color{blue} \overset{5}{\text{Stay} \rightarrow \text{Right}}} \\
\Box & \accept & {\color{blue} \overset{6}{\Box \rightarrow A}}, {\color{blue} \overset{7}{\Box \rightarrow B}} & {\color{blue} \overset{8}{\accept \rightarrow \reject}} & {\color{blue} \overset{9}{\text{Stay} \rightarrow \text{Left}}}, {\color{blue}\overset{10}{\text{Stay} \rightarrow \text{Right}}} \\
A & \reject & {\color{blue} \overset{11}{A \rightarrow \Box}}, {\color{blue} \overset{12}{A \rightarrow B}} & \overset{13}{\accept \rightarrow \reject} & {\color{blue} \overset{14}{\text{Stay} \rightarrow \text{Left}}}, {\color{blue} \overset{15}{\text{Stay} \rightarrow \text{Right}}}\\
A & \accept & {\color{blue} \overset{16}{A \rightarrow \Box}}, {\color{blue} \overset{17}{A \rightarrow B}} & \overset{18}{\accept \rightarrow \reject} & {\color{blue} \overset{19}{\text{Stay} \rightarrow \text{Left}}}, {\color{blue} \overset{20}{\text{Stay} \rightarrow \text{Right}}} \\
B & \reject & {\color{blue} \overset{21}{B \rightarrow \Box}}, {\color{blue} \overset{22}{B \rightarrow A}} & \overset{23}{\reject \rightarrow \accept} & \overset{24}{\text{Right} \rightarrow \text{Stay}}, \overset{25}{\text{Right} \rightarrow \text{Left}} \\
B & \accept & {\color{blue} \overset{26}{B \rightarrow \Box}}, {\color{blue} \overset{27}{B \rightarrow A}} & {\color{blue} \overset{28}{\accept \rightarrow \reject}} & {\color{blue} \overset{29}{\text{Stay} \rightarrow \text{Left}}}, {\color{blue} \overset{30}{\text{Stay} \rightarrow \text{Right}}}\\
\hline
\end{array}
$$ 
\caption{Local coordinates at $[\detectA^{(0)}]$ in the space of noisy Turing machines. The rows are indexed by $\sigma, q$ pairs (first two columns) and then for each of the three associated squares $\sigma', q, d$ on the description tape we enumerate the valid variations in the entry in that square away from its value in $\detectA^{(0)}$. These variations are exactly the valid values taken by an error syndrome on that square. The variations also provide the local coordinates at $[\detectA^{(0)}]$ in the space of noisy TM codes. Entries marked in blue do not contribute to error syndromes of weight one which can flip the output.}
\label{fig:detectA_left_coords}
\end{figure}

In what follows, we enumerate the tuples on the description tape following the rows of Figure \ref{fig:detectA_left_coords}. We sometimes view functions $X \rightarrow \mathbb{R}$ as vectors in the natural way, using the ordering in \eqref{eq:input_set_X}.

\begin{lemma}\label{lemma:detectA_k3} For $k = 3$ the relevant computation paths are $\Lambda_j$ for $1 \le j \le N$ and $\Omega$ and so the error syndromes with weight one can be described by tuples
\begin{align}
e_j &= \Big(\overset{\Lambda_1}{0}, \ldots, \overset{\Lambda_j}{1}, \ldots, \overset{\Lambda_N}{0}, \overset{\Omega}{0}, \overset{\Xi}{0}\Big)\,\qquad 1 \le j \le N \label{eq:lemma_detectA_k3ej}\\
e_{N+1} &= \Big(\overset{\Lambda_1}{0}, \ldots, \overset{\Lambda_j}{0}, \ldots, \overset{\Lambda_N}{0}, \overset{\Omega}{1}, \overset{\Xi}{0}\Big)\,\\
e_{N+2} &= \Big(\overset{\Lambda_1}{0}, \ldots, \overset{\Lambda_j}{0}, \ldots, \overset{\Lambda_N}{0}, \overset{\Omega}{0}, \overset{\Xi}{1}\Big)\,\label{eq:lemma_detectA_k3enplusone}
\end{align}
The values of $\mathcal{U}(x,\gamma)$ are given by Table \ref{table:values_k3} and the function $A^3$ is
\be
A^3 = \big( 0, 1, 0, 0, 0, 0 \big)
\ee
viewed as a vector in $\mathbb{R}^X$.
\end{lemma}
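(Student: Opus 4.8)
The plan is to combine the enumeration of computation paths from a $q'$-type description square established in Section \ref{section:relstep_UTM} with a direct trace of $\detectA^{(0)}$ for $t=2$ steps on each of the six inputs in the set $X$ of \eqref{eq:input_set_X}.

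First I would fix the meaning of the coordinate $k=3$: by Figure \ref{fig:detectA_left_coords} it indexes the $q'$ entry of the first tuple $(\Box,\reject,\Box,\reject,\text{Stay})$ of $\detectA^{(0)}$, with the variation $\reject\rightarrow\accept$. Writing $\theta$ for that description square, Section \ref{section:relstep_UTM} tells us the computation paths from $\theta$ to the final state of the two-step simulation are exactly $\Lambda_1,\ldots,\Lambda_N,\Omega,\Xi$ with $N=6$; hence the weight-one error syndromes in the $k$-direction are precisely the $N+2$ tuples $e_1,\ldots,e_{N+2}$ displayed in \eqref{eq:lemma_detectA_k3ej}--\eqref{eq:lemma_detectA_k3enplusone}, where a $1$ on a path means the copy of $\theta$ routed along that path is read as $\accept$ while every other copy of $\theta$ (and every other description square) keeps its value in $[\detectA^{(0)}]$.

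Next I would evaluate $\mathcal{U}(x,e_j)$ for all $x\in X$ and $1\le j\le N+2$, organising the argument around \emph{where $\theta$ can influence the output}. The paths $\Lambda_j$ and $\Xi$ both require $\theta$ to be used to compute the state after the first simulated step, which happens only if the simulated configuration at step one is $(\Box,\reject)$; but every input in $X$ starts with the head over an $A$ or a $B$, so that configuration never arises at step one, and these copies of $\theta$ are consumed without affecting the output. (Equivalently, $\Xi$ is the path taken only when no description tuple matches in the second cycle, which is impossible for $\detectA^{(0)}$ since it has a tuple for every $(\sigma,q)$.) The path $\Omega$ uses $\theta$ directly as the state transitioned into at step two, which matters exactly when the simulated machine reads $\Box$ in state $\reject$ in the second cycle. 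Tracing the six inputs two steps each shows: on $A$, $AB$, $AA$ the second step reads $A$ in state $\accept$; on $BA$ it reads $A$ in state $\reject$; on $BB$ it reads $B$ in state $\reject$; and only on $B$ does the second step read $\Box$ in state $\reject$ (step one uses the tuple $(B,\reject,B,\reject,\text{Right})$, leaving the head over a blank in state $\reject$). Hence $\mathcal{U}(B,e_{N+1})=\accept\neq\reject=y(B)$, while $\mathcal{U}(x,e_j)=M(x)=y(x)$ in every other case.

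Collecting these values gives Table \ref{table:values_k3}, and by Definition \ref{defn:Asfunctions} the value $A^3(x)$ is the number of weight-one syndromes $\gamma$ with $\mathcal{U}(x,\gamma)\neq M(x)$, i.e. the number of entries in the $x$-row of that table differing from $y(x)$. This count is $0$ for $x\in\{A,AB,BA,AA,BB\}$ and $1$, coming from $e_{N+1}$, for $x=B$, yielding $A^3=(0,1,0,0,0,0)$ as a vector in $\mathbb{R}^X$. The only delicate point is the bookkeeping matching each abstract computation path $\Lambda_j,\Omega,\Xi$ to the concrete copy of $\theta$ it routes and checking that only $\Omega$ can carry the perturbation to the output; once that is pinned down the six execution traces are routine.
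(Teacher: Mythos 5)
Your proposal is correct and follows essentially the same route as the paper's proof: you rule out the $\Lambda_j$ and $\Xi$ paths because the head never begins over a blank (so the tuple $(\Box,\reject)$ is never matched in the first cycle), and you locate the single output error on $\Omega$ for $x=B$, where the first step moves the head onto a blank in state $\reject$. The resulting table and the count $A^3=(0,1,0,0,0,0)$ agree with the paper.
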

\begin{proof}
Recall $\mathcal{U}(x, \gamma)$ is the result of executing the UTM with error syndrome $\gamma$ on input $x$. When we write $\gamma = e_j$ in the context of the current lemma, we are implicitly considering the error syndrome that assigns $0$ to all computation paths \emph{other} than the ones starting at the position in the description tape indexed by $k = 3$, and just describing the values assigned by $\gamma$ to the computation paths starting at $k = 3$.

Note that since the head never starts over a blank symbol, $\gamma = e_j$ for $1 \le j \le N$ does not affect the output of the UTM. Hence all but the last column of Table \ref{table:values_k3} contain whatever output $M$ would give on that input. An error on the computation path $\Omega$ leads to error only when $x = B$ so that after the first step the machine has $\sigma_0 = \Box, q = \reject$ and then it erroneously transitions to $\accept$. That is,
\be
\mathcal{U}(x, e_{N+1}) = \begin{cases}
   M(x) & x \neq B\\
   \accept & x = B
\end{cases}
\ee
The computation path $\Xi$ is never taken for the inputs considered here, so $\gamma = e_8$ does not affect the output of the UTM.

This completes the calculation of Table \ref{table:values_k3} and the values of $A^k(x)$ follow by summing the errors in each row.
\end{proof}

\begin{table}[htbp]
  \centering
  \begin{tabular}{|c|c|c|c|c|c|c|c|c|}
    \hline
    \diagbox{$x$}{$\gamma$} & $e_1$ & $e_2$ & $e_3$ & $e_4$ & $e_5$ & $e_6$ & $e_7$ & $e_8$ \\
    \hline
    $A$ & $\tikzcmark$ & $\tikzcmark$ & $\tikzcmark$ & $\tikzcmark$ & $\tikzcmark$ & $\tikzcmark$ & $\tikzcmark$ & $\tikzcmark$ \\
    \hline
    $B$ & $\tikzxmark$ & $\tikzxmark$ & $\tikzxmark$ & $\tikzxmark$ & $\tikzxmark$ & $\tikzxmark$ & ${\color{red} \tikzcmark}$ & $\tikzxmark$ \\
    \hline
    $AB$ & $\tikzcmark$ & $\tikzcmark$ & $\tikzcmark$ & $\tikzcmark$ & $\tikzcmark$ & $\tikzcmark$ & $\tikzcmark$ & $\tikzcmark$\\
    \hline
    $BA$ & $\tikzcmark$ & $\tikzcmark$ & $\tikzcmark$ & $\tikzcmark$ & $\tikzcmark$ & $\tikzcmark$ & $\tikzcmark$ & $\tikzcmark$\\
    \hline
    $AA$ & $\tikzcmark$ & $\tikzcmark$ & $\tikzcmark$ & $\tikzcmark$ & $\tikzcmark$ & $\tikzcmark$ & $\tikzcmark$ & $\tikzcmark$\\
    \hline
    $BB$ & $\tikzxmark$ & $\tikzxmark$ & $\tikzxmark$ & $\tikzxmark$ & $\tikzxmark$ & $\tikzxmark$ & $\tikzxmark$ & $\tikzxmark$\\
    \hline
  \end{tabular}
  \caption{\label{table:values_k3} Values of $\mathcal{U}(x, \gamma)$ for different $x$ and $\gamma$ and $k = 3$. To save space we use $\protect \tikzcmark$ for $\accept$, $\protect \tikzxmark$ for $\reject$, and we colour an output red if this is incorrect, i.e. an error.}
\end{table}

\begin{lemma}\label{lemma:detectA_k13} For $k = 13$ the relevant computation paths are $\Lambda_j$ for $1 \le j \le N,\Omega, \Xi$. With the notation of \eqref{eq:lemma_detectA_k3ej}, \eqref{eq:lemma_detectA_k3enplusone} the function $A^{13}$ is
\be
A^{13} = \big( 0, 0, 0, 1, 0, 0 \big)
\ee
viewed as a vector in $\mathbb{R}^X$.
\end{lemma}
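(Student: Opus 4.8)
The plan is to follow the same strategy as the proof of Lemma~\ref{lemma:detectA_k3}. By Figure~\ref{fig:detectA_left_coords}, the index $k = 13$ corresponds to the $q'$-type square of the tuple $(A, \reject, A, \accept, \text{Stay})$ of $\detectA^{(0)}$, with its single admissible variation $\accept \rightarrow \reject$; this is exactly the variation $\theta$ already analysed in Section~\ref{section:nondeg_detectA}. By Section~\ref{section:relstep_UTM} the computation paths from a $q'$-type square to the final state $q$ over two simulated steps are $\Lambda_1, \ldots, \Lambda_N$, $\Omega$, $\Xi$, so the weight-one error syndromes are precisely $e_1, \ldots, e_{N+2}$ in the notation of \eqref{eq:lemma_detectA_k3ej}--\eqref{eq:lemma_detectA_k3enplusone}, where $e_j$ for $1 \le j \le N$ places the error on $\Lambda_j$, $e_{N+1}$ on $\Omega$ and $e_{N+2}$ on $\Xi$.

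Next I would trace $\detectA^{(0)}$ for two simulated steps on each $x \in X$ and decide, for each syndrome $e_j$, whether the error flips $M(x) = y(x)$, assembling a table analogous to Table~\ref{table:values_k3}. Two structural observations organise this. First, $\theta$ is carried along $\Lambda_j$ or $\Xi$ only if the first simulated step matched the tuple $(A, \reject)$, i.e.\ only if $x$ begins with $A$; in that case the state after step one equals $\theta$, but since every tuple of $\detectA^{(0)}$ with $\sigma = A$ transitions to $\accept$, whichever copy of $\theta$ carries the error the description-tape comparisons in step two still end with $\accept$ on the staging tape, so no output error results. Second, along $\Omega$ the value $\theta$ is used directly as the state produced by step two, which matters precisely when step two matches $(A, \reject)$, i.e.\ when the head reads $A$ in state $\reject$ at the start of step two; among $x \in X$ this happens only for $x = BA$, where after step one the head has advanced past the $B$ onto the $A$ while remaining in $\reject$, so the erroneous transition to $\reject$ flips the correct output $\accept$. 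As in Lemma~\ref{lemma:detectA_k3} one also checks that $\Xi$ is never the taken path for any $x \in X$, since a match on the description tape is always found.

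Summing the incorrect entries in each row of the resulting table then yields $\mathcal{U}(x, \gamma) \neq M(x)$ for the single pair $(x,\gamma) = (BA, e_{N+1})$, whence $A^{13}(BA) = 1$ and $A^{13}(x) = 0$ for all other $x \in X$; that is, $A^{13} = (0,0,0,1,0,0)$ as a vector in $\mathbb{R}^X$. This is consistent with Section~\ref{section:nondeg_detectA}, where the same variation produced its only output error at $x = BA$.

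I expect the main obstacle to be the bookkeeping in the first structural observation: one must check carefully that corrupting a step-two comparison (feeding the spurious state $\reject$ into the match against some tuple) cannot produce an output error for inputs beginning with $A$, which rests on the fact that $\detectA^{(0)}$ never leaves $\accept$ once it is reached and that both $\sigma = A$ tuples transition to $\accept$. Everything else is a routine finite case check over the six inputs in $X$ and the $N + 2 = 8$ weight-one syndromes.
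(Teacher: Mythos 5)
Your proposal follows the same route as the paper: identify $k=13$ with the $\accept \to \reject$ variation in the $(A,\reject)$ tuple, enumerate the $N+2$ weight-one syndromes over the paths $\Lambda_1,\dots,\Lambda_N,\Omega,\Xi$, and check each input in $X$; your $\Omega$ analysis (the unique output error at $x=BA$) and the final answer $A^{13}=(0,0,0,1,0,0)$ agree with the paper. One intermediate claim in your first structural observation is inaccurate, however, and it sits exactly at the spot you flag as the main obstacle. For $x$ beginning with $A$, consider the syndrome placing the error on $\Lambda_4$, i.e.\ corrupting the copy of the post-step-one state used in the comparison against the tuple $(A,\accept,\dots)$: that comparison then sees $\reject$ and fails, while the comparison against $(A,\reject,\dots)$ sees the uncorrupted $\accept$ and also fails. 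So \emph{no} tuple matches and the scan phase ends with $X$ on the staging tape — not with $\accept$ written to it, contrary to your claim that "the description-tape comparisons in step two still end with $\accept$ on the staging tape." The output is nonetheless correct, but for a different reason: when $X$ remains on the staging tape the update phase leaves the simulated state unchanged, and the state being carried through (the $\Xi$ copy, uncorrupted under this syndrome) is $\accept$. The fact you cite — both $\sigma=A$ tuples transition to $\accept$ — correctly disposes of the $\Lambda_3$ syndrome (where the spurious match still writes $\accept$), but the $\Lambda_4$ syndrome needs the no-match/carry-through behaviour of the staged UTM. With that correction the case check goes through as you describe.
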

\begin{proof}
For $x \in \{B, BB\}$ none of these computation paths are used. If $x = BA$ then errors in the $\Lambda_j$ computation paths cannot affect the output, but
\[
\mathcal{U}(BA, e_7) = \reject
\]
is an output error induced by an error in $\Omega$. Otherwise the first symbol of $x$ is $A$, in which case errors on $\Omega$ cannot lead to output errors and we need only consider the computation paths $\Lambda_j$. The two relevant tuples on the description tape are $j \in \{3,4\}$.

We have $\mathcal{U}(A*, e_3) = \accept$ which is correct, even though an error occurred on the computation path $\Lambda_3$. In detail: if we use the modified tuple $(A, \reject, A, \reject, \text{Stay})$ in the first step we will have $\sigma_0 = A, q = \reject$ after one timestep, which the UTM then compares with $(A, \reject)$ on the description tape during the scan phase and finds a match. It then writes $\accept$ to the $q'$ part of the staging tape (following the unmodified form of the third tuple). Note that the UTM finds a \emph{second} match with $(A, \accept)$ and writes $\accept$ to the staging tape for a second time.

In contrast, we have $\mathcal{U}(A*, e_4) = \accept$. When checking $A, \accept$ for a match on the description tape with $\sigma_0 = A, q = \reject$ does not find a match. Hence the UTM does not match to \emph{any} tuple and at the end of the scan phase $X$ remains on the staging tape, so it leaves the TM state unchanged with $\accept$.

The computation path $\Xi$ is never taken for the inputs considered here, so $\gamma = e_8$ does not affect the output of the UTM. These calculations are summarised in Table \ref{table:values_k13}.
\end{proof}

\begin{table}[htbp]
  \centering
  \begin{tabular}{|c|c|c|c|c|c|c|c|c|}
    \hline
    \diagbox{$x$}{$\gamma$} & $e_1$ & $e_2$ & $e_3$ & $e_4$ & $e_5$ & $e_6$ & $e_7$ & $e_8$\\
    \hline
    $A$ & $\tikzcmark$ & $\tikzcmark$ & $\tikzcmark$ & $\tikzcmark$ & $\tikzcmark$ & $\tikzcmark$ & $\tikzcmark$ & $\tikzcmark$ \\
    \hline
    $B$ & $\tikzxmark$ & $\tikzxmark$ & $\tikzxmark$ & $\tikzxmark$ & $\tikzxmark$ & $\tikzxmark$ & $\tikzxmark$ & $\tikzxmark$ \\
    \hline
    $AB$ & $\tikzcmark$ & $\tikzcmark$ & $\tikzcmark$ & $\tikzcmark$ & $\tikzcmark$ & $\tikzcmark$ & $\tikzcmark$ & $\tikzcmark$\\
    \hline
    $BA$ & $\tikzxmark$ & $\tikzxmark$ & $\tikzxmark$ & $\tikzxmark$ & $\tikzxmark$ & $\tikzxmark$ & ${\color{red} \tikzcmark}$ & $\tikzxmark$\\
    \hline
    $AA$ & $\tikzcmark$ & $\tikzcmark$ & $\tikzcmark$ & $\tikzcmark$ & $\tikzcmark$ & $\tikzcmark$ & $\tikzcmark$ & $\tikzcmark$\\
    \hline
    $BB$ & $\tikzxmark$ & $\tikzxmark$ & $\tikzxmark$ & $\tikzxmark$ & $\tikzxmark$ & $\tikzxmark$ & $\tikzxmark$ & $\tikzxmark$\\
    \hline
  \end{tabular}
  \caption{\label{table:values_k13} Values of $\mathcal{U}(x, \gamma)$ for different $x$ and $\gamma$ and $k = 13$.}
\end{table}

\begin{lemma}\label{lemma:detectA_k18} For $k = 18$ the relevant computation paths are $\Lambda_j$ for $1 \le j \le N, \Omega$ and $\Xi$. With the notation of \eqref{eq:lemma_detectA_k3ej}, \eqref{eq:lemma_detectA_k3enplusone} the function $A^{18}$ is
\be
A^{18} = \big( 1, 0, 1, 0, 1, 0 \big)
\ee
viewed as a vector in $\mathbb{R}^X$.
\end{lemma}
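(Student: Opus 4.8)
The plan is as follows. The first assertion --- that the computation paths from the description square indexed by $k = 18$ to the final state are exactly the $\Lambda_j$ for $1 \le j \le N$ together with $\Omega$ and $\Xi$ --- is immediate from the enumeration in Section \ref{section:relstep_UTM}, since $k = 18$ indexes the $q'$-entry of the tuple $(A, \accept, A, \accept, \text{Stay})$ of Figure \ref{fig:detectA_left_coords}, which is a $q'$-type square. So, writing $\theta$ for the random variable at this square and using the notation of \eqref{eq:lemma_detectA_k3ej}--\eqref{eq:lemma_detectA_k3enplusone}, the weight-one error syndromes to consider are $e_1, \ldots, e_N$ (an error on $\Lambda_j$), $e_{N+1}$ (an error on $\Omega$) and $e_{N+2}$ (an error on $\Xi$), and the task is to compute $\mathcal{U}(x, \gamma)$ for each such $\gamma$ and each $x \in X$ and then sum over $\gamma$ in each row.

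The key observation I would use is that $\detectA^{(0)}$ is initialised in state $\reject$, so during the \emph{first} simulated step the configuration $(\sigma_0, q)$ can never equal $(A, \accept)$; hence tuple $4$ is never the tuple matched during the scan phase of the first cycle of $\mathcal{U}$. I would then argue that this makes the syndromes $e_1, \ldots, e_N$ and $e_{N+2}$ harmless: along $\Lambda_j$ and along $\Xi$ the entry $\theta$ can only influence the output if it is written to the staging tape during the first cycle (this is what it means for $\theta$ to ``compute the state after one simulated timestep'' in the language of Section \ref{section:relstep_UTM}), which never occurs. Therefore $A^{18}(x)$ counts exactly those $x$ for which the error $e_{N+1}$ on $\Omega$ flips the output, namely those $x$ for which tuple $4$ \emph{is} matched during the \emph{second} cycle, i.e. those $x$ for which, after one simulated step, $\detectA^{(0)}$ reads $A$ in state $\accept$.

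The remaining step is to run each $x \in X$ for one step. Strings beginning with $A$ use tuple $3$, write $A$, enter state $\accept$ and stay, so at the start of the second step $(\sigma_0, q) = (A, \accept)$; this is the case for $x \in \{A, AB, AA\}$. Strings beginning with $B$ use tuple $5$, remain in $\reject$ and move right, landing on $\Box$, $A$ or $B$ respectively but always in state $\reject$, so $(A, \accept)$ is never the configuration entering the second cycle. In the first three cases tuple $4$ is the unique match in the second cycle, and the error $e_{N+1}$ replaces the written $q' = \accept$ by $\reject$, so $\mathcal{U}(x, e_{N+1}) = \reject \neq \accept = M(x)$; in the other three cases $\theta$ is never written to the staging tape, so $\mathcal{U}(x, \gamma) = M(x)$ for every syndrome under consideration. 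Tabulating the values as in Table \ref{table:values_k3} and summing the errors in each row gives $A^{18} = (1, 0, 1, 0, 1, 0)$ in the ordering of \eqref{eq:input_set_X}, as claimed.

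The main obstacle --- and the reason this is not purely mechanical --- is the bookkeeping of which copies of $\theta$ flow along which computation paths: one must verify carefully that the $\Lambda_j$ and $\Xi$ paths genuinely cannot deliver an error to the output because tuple $4$ is never matched in the first cycle, and that in the second cycle there is no competing match that overwrites the erroneous staging-tape entry produced along $\Omega$. Both points follow from inspection of the specification of $\mathcal{U}$ in Figure \ref{figure:staged_pseudo_utm} together with the transition table of $\detectA^{(0)}$, but they require exactly the kind of close reading that the discussion following Lemma \ref{lemma:irrelevant_entries_channel} recommends.
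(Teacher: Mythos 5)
Your proposal is correct and follows essentially the same route as the paper's proof: both reduce the count to the single computation path $\Omega$, observe that an error there flips the output exactly when the simulated machine reads $A$ in state $\accept$ after one step (i.e.\ for $x \in \{A, AB, AA\}$), and dismiss $\Xi$ as never taken. Your justification that the $\Lambda_j$ paths are harmless because tuple $4$ is never matched in the first cycle (the machine starts in $\reject$) is the reasoning the paper leaves implicit behind its ``there can be no output errors when $x \in \{B, BB, BA\}$,'' so you have simply made the same argument more explicit.
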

\begin{proof}
Again there can be no output errors when $x \in \{ B, BB, BA \}$. However when $x$ begins with $A$ and we have the error of type $k = 18$ on computation path $\Omega$ we have $\mathcal{U}(A*, e_7) = \reject$ which is an output error. Lastly, the computation path $\Xi$ is never taken. These results are in Table \ref{table:values_k18}.
\end{proof}

\begin{table}[htbp]
  \centering
  \begin{tabular}{|c|c|c|c|c|c|c|c|c|}
    \hline
    \diagbox{$x$}{$\gamma$} & $e_1$ & $e_2$ & $e_3$ & $e_4$ & $e_5$ & $e_6$ & $e_7$ & $e_8$ \\
    \hline
    $A$ & $\tikzcmark$ & $\tikzcmark$ & $\tikzcmark$ & $\tikzcmark$ & $\tikzcmark$ & $\tikzcmark$ & ${\color{red} \tikzxmark}$ & $\tikzcmark$ \\
    \hline
    $B$ & $\tikzxmark$ & $\tikzxmark$ & $\tikzxmark$ & $\tikzxmark$ & $\tikzxmark$ & $\tikzxmark$ & $\tikzxmark$ & $\tikzxmark$ \\
    \hline
    $AB$ & $\tikzcmark$ & $\tikzcmark$ & $\tikzcmark$ & $\tikzcmark$ & $\tikzcmark$ & $\tikzcmark$ & ${\color{red} \tikzxmark}$ & $\tikzcmark$ \\
    \hline
    $BA$ & $\tikzxmark$ & $\tikzxmark$ & $\tikzxmark$ & $\tikzxmark$ & $\tikzxmark$ & $\tikzxmark$ & $\tikzxmark$ & $\tikzxmark$\\
    \hline
    $AA$ & $\tikzcmark$ & $\tikzcmark$ & $\tikzcmark$ & $\tikzcmark$ & $\tikzcmark$ & $\tikzcmark$ & ${\color{red} \tikzxmark}$ & $\tikzcmark$\\
    \hline
    $BB$ & $\tikzxmark$ & $\tikzxmark$ & $\tikzxmark$ & $\tikzxmark$ & $\tikzxmark$ & $\tikzxmark$ & $\tikzxmark$ & $\tikzxmark$\\
    \hline
  \end{tabular}
  \caption{\label{table:values_k18} Values of $\mathcal{U}(x, \gamma)$ for different $x$ and $\gamma$ and $k = 18$.}
\end{table}

\begin{lemma}\label{lemma:detectA_k23} For $k = 23$ the relevant computation paths are $\Lambda_j$ for $1 \le j \le N,\Omega$ and $\Xi$. With the notation of \eqref{eq:lemma_detectA_k3ej}, \eqref{eq:lemma_detectA_k3enplusone} the function $A^{23}$ is
\be
A^{23} = \big( 0, 1, 0, 1, 0, 2 \big)
\ee
viewed as a vector in $\mathbb{R}^X$.
\end{lemma}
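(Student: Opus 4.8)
The plan is to follow the same template as the proofs of Lemmas \ref{lemma:detectA_k3}, \ref{lemma:detectA_k13} and \ref{lemma:detectA_k18}: the square $\theta$ indexed by $k=23$ is the $q'$ entry of the fifth tuple $(B,\reject,B,\reject,\text{Right})$, whose computation paths to the final state after $t=2$ simulated steps are $\Lambda_1,\dots,\Lambda_6$, $\Omega$, $\Xi$. So a weight-one error syndrome is one of $e_1,\dots,e_6$ (an error on $\Lambda_1,\dots,\Lambda_6$), $e_7$ (an error on $\Omega$), or $e_8$ (an error on $\Xi$), in the notation of \eqref{eq:lemma_detectA_k3ej}--\eqref{eq:lemma_detectA_k3enplusone}. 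We must compute $\mathcal{U}(x,e_i)$ for each $x\in X$ and each $i$, and then $A^{23}(x)$ is the number of $i$ for which $\mathcal{U}(x,e_i)\neq M(x)$; these values can be recorded in a table analogous to Table \ref{table:values_k3}. Note that $\theta$ is consulted by the UTM exactly when the simulated machine reads $B$ in state $\reject$.

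First I would dispose of the inputs starting with $A$. On $x\in\{A,AB,AA\}$ the machine matches $(A,\reject,A,\accept,\text{Stay})$ at step one and is then in state $\accept$ with the head unmoved, and at step two it reads $A$ in state $\accept$ and matches $(A,\accept,A,\accept,\text{Stay})$. Within $t=2$ steps it never reads $B$ in state $\reject$, so neither the value of $\theta$ produced at step one (paths $\Lambda_j,\Xi$) nor the value used directly at step two (path $\Omega$) enters the computation, and no weight-one error on those paths changes the output. Hence $A^{23}(A)=A^{23}(AB)=A^{23}(AA)=0$, giving the first, third and fifth coordinates.

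Next I would treat $x\in\{B,BA,BB\}$. On each, step one matches tuple five, so the state after step one equals the (noisy) value of $\theta$, which is $\reject$ on the unerrored paths; step one writes $B$ back and moves the head Right, so at step two the head is over $\Box$ (for $x=B$), over $A$ (for $x=BA$), or over $B$ (for $x=BB$). In each case I would identify the normally-matching tuple at step two and check path by path whether one corrupted copy of $\theta$ flips the final state. Two mechanisms arise: (i) corrupting the copy of the step-one state that is compared against tuple $j$'s $q$-entry either \emph{breaks} the intended match (when $j$ is the normally-matching tuple, since $\accept\neq\reject$, after which no tuple matches and the state $\reject$ is carried through) or \emph{creates} a spurious match with a later tuple whose symbol also agrees and whose $q$-entry is $\accept$; since a later match overwrites the staging tape (as used in the proof of Lemma \ref{lemma:detectA_k13}), such a spurious match changes the output; (ii) for $x=BB$ the machine reads $B$ in state $\reject$ \emph{again} at step two, so tuple five re-matches and its $q'$ entry $\theta$ is written directly to the state, whence an error on $\Omega$ flips the output. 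Running the cases: for $x=B$ only corrupting the copy compared against $(\Box,\accept,\Box,\accept,\text{Left})$ flips the output (to $\accept$), so $A^{23}(B)=1$; for $x=BA$ only corrupting the copy compared against $(A,\reject,A,\accept,\text{Stay})$ flips the output, by destroying the match so that $\reject$ is carried through, so $A^{23}(BA)=1$; for $x=BB$ both the $\Omega$-error and corrupting the copy compared against $(B,\accept,B,\accept,\text{Stay})$ flip the output (to $\accept$), so $A^{23}(BB)=2$. This yields the second, fourth and sixth coordinates, and $A^{23}=(0,1,0,1,0,2)$ as claimed.

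The main obstacle is the bookkeeping in mechanism (i): one must keep track that a weight-one syndrome corrupts exactly one of the $N$ comparison copies of the step-one state, use that the UTM keeps scanning and overwrites the staging tape on a second match, and in each of the three cases verify the symbol-compatibility of the head position after step one with the tuples whose $q$-entry is $\accept$. Once these points are pinned down, the count in each row of the table is immediate.
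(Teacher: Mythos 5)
Your proposal is correct and follows essentially the same route as the paper: dismiss the $A$-initial inputs, then for $B$, $BA$, $BB$ check each weight-one syndrome $e_1,\dots,e_8$ via the match-breaking / spurious-match / direct-$\Omega$ mechanisms, arriving at errors exactly at $e_2$ for $B$, $e_3$ for $BA$, and $e_6,e_7$ for $BB$. The only slip is cosmetic: the tuple you quote as $(\Box,\accept,\Box,\accept,\text{Left})$ is the $\detectA^{(1)}$ version; for $\detectA^{(0)}$ it is $(\Box,\accept,\Box,\accept,\text{Stay})$, but since only the $(\sigma,q)$-part enters the comparison this does not affect the count.
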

\begin{proof}
Again there can be no output errors when $x \in \{ A, AB, AA \}$. The behaviour of $\mathcal{U}$ on inputs beginning with $B$ is quite complex. Firstly if $x \in \{B, BA\}$ we have $\mathcal{U}(x, e_7) = M(x)$. But $\mathcal{U}(BB, e_7) = \accept$. For the $\Lambda_j$ computation paths, we can ignore most columns depending on the second symbol in the input.
\begin{align*}
\mathcal{U}(B, e_1) &= \reject\,, \quad \mathcal{U}(B, e_2) = \accept\\ 
\mathcal{U}(BA, e_3) &= \reject \,, \quad \mathcal{U}(BA, e_4) = \accept\\ 
\mathcal{U}(BB, e_5) &= \reject \,, \quad \mathcal{U}(BB, e_6) = \accept\,. 
\end{align*}
The computation path $\Xi$ is irrelevant. These calculations are summarised in Table \ref{table:values_k23}.
\end{proof}

\begin{table}[htbp]
  \centering
  \begin{tabular}{|c|c|c|c|c|c|c|c|c|}
    \hline
    \diagbox{$x$}{$\gamma$} & $e_1$ & $e_2$ & $e_3$ & $e_4$ & $e_5$ & $e_6$ & $e_7$ & $e_8$ \\
    \hline
    $A$ & $\tikzcmark$ & $\tikzcmark$ & $\tikzcmark$ & $\tikzcmark$ & $\tikzcmark$ & $\tikzcmark$ & $\tikzcmark$ & $\tikzcmark$ \\
    \hline
    $B$ & $\tikzxmark$ & ${\color{red} \tikzcmark}$ & $\tikzxmark$ & $\tikzxmark$ & $\tikzxmark$ & $\tikzxmark$ & $\tikzxmark$ & $\tikzxmark$ \\
    \hline
    $AB$ & $\tikzcmark$ & $\tikzcmark$ & $\tikzcmark$ & $\tikzcmark$ & $\tikzcmark$ & $\tikzcmark$ & $\tikzcmark$ & $\tikzcmark$\\
    \hline
    $BA$ & $\tikzcmark$ & $\tikzcmark$ & ${\color{red} \tikzcmark}$ & $\tikzcmark$ & $\tikzcmark$ & $\tikzcmark$ & $\tikzcmark$ & $\tikzcmark$\\
    \hline
    $AA$ & $\tikzcmark$ & $\tikzcmark$ & $\tikzcmark$ & $\tikzcmark$ & $\tikzcmark$ & $\tikzcmark$ & $\tikzcmark$ & $\tikzcmark$\\
    \hline
    $BB$ & $\tikzxmark$ & $\tikzxmark$ & $\tikzxmark$ & $\tikzxmark$ & $\tikzxmark$ & ${\color{red} \tikzcmark}$ & ${\color{red} \tikzcmark}$ & $\tikzxmark$\\
    \hline
  \end{tabular}
  \caption{\label{table:values_k23} Values of $\mathcal{U}(x, \gamma)$ for different $x$ and $\gamma$ and $k = 23$.}
\end{table}

\begin{lemma}\label{lemma:detectA_k24} For $k = 24$ the relevant computation paths are $\Theta_j$ for $1 \le j \le N$ so the error syndromes with weight one can be described by tuples
\be\label{eq:error_syndrome_k24_weightone}
e_j = \Big(\overset{\Theta_1}{0}, \ldots, \overset{\Theta_j}{1}, \ldots, \overset{\Theta_N}{0} \Big)\,\qquad 1 \le j \le N\,.
\ee
The values of $\mathcal{U}(x,\gamma)$ are given in Table \ref{table:values_k24}
\be
A^{24} = \big( 0, 0, 0, 2, 0, 0 \big)
\ee
viewed as a vector in $\mathbb{R}^X$.
\end{lemma}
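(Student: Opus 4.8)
The plan is to follow the same template as the proofs of Lemmas \ref{lemma:detectA_k3}, \ref{lemma:detectA_k13}, \ref{lemma:detectA_k18} and \ref{lemma:detectA_k23}: identify the computation paths emanating from the description square indexed by $k = 24$, restrict to the inputs $x \in X$, and for each such $x$ and each weight-one error syndrome $e_j$ compute $\mathcal{U}(x,e_j)$ by running the UTM by hand, then read $A^{24}(x)$ off the corresponding row of Table \ref{table:values_k24} as the number of incorrect outputs. Since $k = 24$ sits in the $d$-position of the tuple $(B,\reject,B,\reject,\text{Right})$ (the fifth tuple of $\detectA^{(0)}$, see Figure \ref{fig:detectA_left_coords}), the relevant paths are precisely the $N$ paths $\Theta_j$ of Section \ref{section:relstep_UTM}: the perturbation $\text{Right}\rightarrow\text{Stay}$ changes the direction the UTM moves the work-tape head during the first simulated step, hence changes the symbol $\sigma_0$ read back and compared, in the second cycle, against the symbol $\sigma_j$ of the $j$th description tuple. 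Thus weight-one syndromes are exactly the tuples $e_j$ of \eqref{eq:error_syndrome_k24_weightone}, and $e_j$ means ``in the copy of the first-step computation whose output $\sigma_0$ feeds the comparison with tuple $j$, the head stays instead of moving right''.

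First I would dispose of the inputs beginning with $A$: for $x \in \{A, AB, AA\}$ the first simulated step reads $A$ in state $\reject$, so the tuple $(B,\reject,\dots)$ is never activated, the perturbed direction is never used, and $\mathcal{U}(x,e_j) = M(x)$ for every $j$; hence $A^{24}(x) = 0$. Next I would handle the two ``harmless'' inputs beginning with $B$. For $x = BB$, moving the head right after the first step lands it on the second $B$ whereas staying leaves it on the first $B$ --- both symbols equal $B$ --- so $\sigma_0$ is unaffected by the perturbation and $A^{24}(BB) = 0$. For $x = B$, staying leaves the head on $B$ rather than on the blank, so in the affected run the second cycle compares $(B,\reject)$ against the tuples; one checks that $(B,\reject)$ matches no tuple except tuple five itself, and the resulting staging write $(B,\reject,\text{Right})$ leaves the simulated state at $\reject$, which is still correct. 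Hence $A^{24}(B) = 0$.

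The substantive case is $x = BA$, whose correct output is $\accept$ but for which both $e_3$ and $e_5$ flip the output to $\reject$, via two different mechanisms. Under normal operation the head moves onto the $A$ after the first step, and in the second cycle the unique match is with tuple $3 = (A,\reject,A,\accept,\text{Stay})$, writing $\accept$ to the staging tape. With $e_3$ the copy of $\sigma_0$ entering the comparison with tuple three is the re-read $B$ rather than $A$, so that comparison fails; checking the remaining tuples shows none matches either, the staging tape stays $XXX$, no state update occurs, and $\mathcal{U}(BA,e_3) = \reject$. With $e_5$ the copy of $\sigma_0$ entering the comparison with tuple $5 = (B,\reject,B,\reject,\text{Right})$ is the re-read $B$, so that comparison now succeeds; since tuple five is scanned after tuple three, its staging write $(B,\reject,\text{Right})$ overwrites the correct one and $\mathcal{U}(BA,e_5) = \reject$. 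For $j \in \{1,2,4,6\}$ the re-read $B$ in run $j$ produces no spurious match (symbol or state disagrees with $\sigma_j,q_j$) while the legitimate match with tuple three still goes through, so $\mathcal{U}(BA,e_j) = \accept$ is correct. Collecting these computations into Table \ref{table:values_k24} and counting incorrect entries row by row yields $A^{24} = (0,0,0,2,0,0)$ in $\mathbb{R}^X$, as claimed.

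The main obstacle is the bookkeeping for $x = BA$: one must keep straight that the weight-one syndrome $e_j$ perturbs only the single copy of $\sigma_0$ feeding the $j$th comparison, not $\sigma_0$ globally, and that under the staged design an output error can arise in two distinct ways --- by destroying the unique legitimate match (as with $e_3$) or by manufacturing a spurious match later in the scan that overwrites the staging tape (as with $e_5$). Once these two mechanisms are isolated, everything reduces to a finite, mechanical check over the six tuples of $\detectA^{(0)}$ and the six inputs in $X$, which is what fills out Table \ref{table:values_k24}.
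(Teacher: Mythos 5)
Your proposal is correct and follows essentially the same route as the paper's proof: dispose of inputs beginning with $A$ (tuple five never activated), note that for $BB$ and $B$ the perturbed head position does not change the final state, and then identify $e_3$ (destroying the legitimate match with tuple three) and $e_5$ (manufacturing a spurious later match that overwrites the staging tape) as the two syndromes causing an output error on $BA$, giving $A^{24} = (0,0,0,2,0,0)$. Your explicit separation of the two error mechanisms for $x = BA$ is slightly more detailed than the paper's terse listing of $\mathcal{U}(BA,e_j)$ values, but the content is identical.
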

\begin{proof}
For $x \notin \{B, BA, BB\}$ we have $\mathcal{U}(x, e_j) = M(x)$ for all $j$. For any $x$ that begins with $B$, the effect of introducing a $k = 24$ type error is to stay rather than move right in the first step, so that $\sigma_0 = B$ after one step. Thus $\mathcal{U}(BB, e_j) = M(BB) = \reject$ for all $j$, as this is any case the symbol under the head under normal execution. We can ignore $j \in \{2, 4, 6\}$ since the machine is in state $\reject$ after one step. The remaining cases are
\begin{align*}
\mathcal{U}(B, e_1) &= \reject\,, \quad \mathcal{U}(B, e_3) = \reject\,, \quad 
\mathcal{U}(BA, e_1) &= \accept \,, \quad \mathcal{U}(BA, e_3) = \reject\,, \quad \mathcal{U}(BA, e_5) = \reject 
\end{align*}
These calculations are summarised in Table \ref{table:values_k24}.
\end{proof}

\begin{table}[htbp]
  \centering
  \begin{tabular}{|c|c|c|c|c|c|c|}
    \hline
    \diagbox{$x$}{$\gamma$} & $e_1$ & $e_2$ & $e_3$ & $e_4$ & $e_5$ & $e_6$ \\
    \hline
    $A$ & $\tikzcmark$ & $\tikzcmark$ & $\tikzcmark$ & $\tikzcmark$ & $\tikzcmark$ & $\tikzcmark$ \\
    \hline
    $B$ & $\tikzxmark$ & $\tikzxmark$ & $\tikzxmark$ & $\tikzxmark$ & $\tikzxmark$ & $\tikzxmark$ \\
    \hline
    $AB$ & $\tikzcmark$ & $\tikzcmark$ & $\tikzcmark$ & $\tikzcmark$ & $\tikzcmark$ & $\tikzcmark$ \\
    \hline
    $BA$ & $\tikzcmark$ & $\tikzcmark$ & ${\color{red} \tikzxmark}$ & $\tikzcmark$ & ${\color{red} \tikzxmark}$ & $\tikzcmark$ \\
    \hline
    $AA$ & $\tikzcmark$ & $\tikzcmark$ & $\tikzcmark$ & $\tikzcmark$ & $\tikzcmark$ & $\tikzcmark$ \\
    \hline
    $BB$ & $\tikzxmark$ & $\tikzxmark$ & $\tikzxmark$ & $\tikzxmark$ & $\tikzxmark$ & $\tikzxmark$ \\
    \hline
  \end{tabular}
  \caption{\label{table:values_k24} Values of $\mathcal{U}(x, \gamma)$ for different $x$ and $\gamma$ and $k = 24$.}
\end{table}

\begin{lemma}\label{lemma:detectA_k25} For $k = 25$ the relevant computation paths are $\Theta_j$ for $1 \le j \le N$ so the error syndromes with weight one can be described by tuples as in \eqref{eq:error_syndrome_k24_weightone}. Then
\be
A^{25} = \big( 0, 0, 0, 1, 0, 0 \big)
\ee
viewed as a vector in $\mathbb{R}^X$.
\end{lemma}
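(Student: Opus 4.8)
The plan is to mirror the proof of Lemma~\ref{lemma:detectA_k24}: the indices $k=25$ and $k=24$ both point to the direction ($d$-type) square of the fifth tuple $(B,\reject,B,\reject,\text{Right})$ of $\detectA^{(0)}$, and differ only in the variation applied to the entry $\text{Right}$ --- here it is $\text{Right}\rightarrow\text{Left}$ rather than $\text{Right}\rightarrow\text{Stay}$. First I would recall from Section~\ref{section:relstep_UTM} that from a $d$-type description square the computation paths to the final state are exactly $\Theta_1,\ldots,\Theta_N$, so a weight-one error syndrome is one of the tuples $e_j$ of \eqref{eq:error_syndrome_k24_weightone}. The error on $\Theta_j$ amounts to: the copy of $\sigma_0$ (the symbol under the head after the first simulated step) that is fed into the comparison with the $j$th description tuple during the second cycle is computed as though the head had moved \emph{Left} in the first step, while every other copy of $\sigma_0$ is computed correctly; the state after the first step is not touched by this variation and equals $\reject$ whenever the input begins with $B$.

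Next I would clear away the easy inputs. For $x\in\{A,AB,AA\}$ the simulated machine never reads $B$ in state $\reject$ during two steps, so $d_5$ is never used and $\mathcal{U}(x,e_j)=M(x)$ for all $j$, giving $A^{25}(x)=0$. For $x=B$ the tape squares immediately left and right of the head are both $\Box$, so Left and Right produce the same $\sigma_0=\Box$ and every $e_j$ is inert: $\mathcal{U}(B,e_j)=\reject=M(B)$, hence $A^{25}(B)=0$.

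The content is in the inputs $BA$ and $BB$, where Left genuinely changes $\sigma_0$ (from the second symbol to $\Box$). I would go through the six $e_j$ in each case (organising the outcomes into a table analogous to Table~\ref{table:values_k24}), noting first that $e_2,e_4,e_6$ are always inert because tuples $2,4,6$ carry state $\accept$ and cannot match the post-step-one state $\reject$. For $x=BB$: under $e_1$ and $e_3$ the correctly computed copy $\sigma_0=B$ still matches tuple $5$, so the staging tape ends up carrying $(B,\reject,\text{Right})$ and the output is the correct $\reject$; under $e_5$ the erroneous copy $\sigma_0=\Box$ destroys the match with tuple $5$ and no other tuple matches, so the scan phase writes nothing, the update phase is a no-op, and the state remains $\reject$ --- still correct. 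Thus $A^{25}(BB)=0$. For $x=BA$ the correct $\sigma_0$ is $A$, which matches tuple $3=(A,\reject)$; the only way to spoil this is $e_3$, under which the copy of $\sigma_0$ in the comparison with tuple $3$ is the erroneous $\Box$, killing that match, while in every other comparison the correct $\sigma_0=A$ fails to match as well (tuples $1,5$ have the wrong symbol, tuples $4,6$ the wrong state), so no tuple matches, the state stays $\reject$, and this \emph{is} an output error since $M(BA)=\accept$. Every other $e_j$ for $BA$ leaves the tuple-$3$ match intact, so $A^{25}(BA)=1$. Collecting the values over $X=\{A,B,AB,BA,AA,BB\}$ in order yields $A^{25}=(0,0,0,1,0,0)$. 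I expect the only real obstacle to be keeping the $BA$ versus $BB$ bookkeeping straight: one must see that losing the $A$ under the head is fatal for $BA$ (no tuple matches and the default $\reject$ is wrong) but harmless for $BB$ (the default $\reject$ is already correct), which is exactly the asymmetry that leaves a single nonzero entry.
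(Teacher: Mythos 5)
Your proposal is correct and follows essentially the same route as the paper: a case analysis over the inputs in $X$ and the weight-one syndromes $e_1,\ldots,e_6$ on the $\Theta_j$ paths, with the same identification of $e_3$ on $BA$ as the unique error-producing syndrome. The only detail you gloss over is that $e_1$ on $BA$ creates a spurious match with the first tuple in addition to the intact tuple-3 match, but since the later match overwrites the staging tape the output is still $\accept$ (the paper relegates this to a footnote as well), so nothing is lost.
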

\begin{proof}
For $x \notin \{B, BA, BB\}$ we have $\mathcal{U}(x, e_j) = M(x)$ for all $j$. For any $x$ beginning with $B$ introducing a $k = 24$ type error causes the head to move left rather than right in the first timestep, so $\mathcal{U}(B, e_j) = M(B) = \reject$ for all $j$ since the symbol to the left of the first symbol on the work tape is a blank. It is also easy to see that $\mathcal{U}(BB, e_j) = M(BB) = \reject$ for all $j$. When $x = BA$ we have
\[
\mathcal{U}(BA, e_2) = \mathcal{U}(BA, e_4) = \mathcal{U}(BA, e_6) = M(BA) = \accept
\]
since these tuples never match to the state $\reject$ after one timestep. Finally
\[
\mathcal{U}(BA, e_1) = \accept\footnote{Note that here the UTM executes two tuples.}\,, \quad \mathcal{U}(BA, e_3) = \reject\,, \quad \mathcal{U}(BA, e_5) = \accept\,. 
\]
These calculations are summarised in Table \ref{table:values_k25}.
\end{proof}

\begin{table}[htbp]
  \centering
  \begin{tabular}{|c|c|c|c|c|c|c|}
    \hline
    \diagbox{$x$}{$\gamma$} & $e_1$ & $e_2$ & $e_3$ & $e_4$ & $e_5$ & $e_6$ \\
    \hline
    $A$ & $\tikzcmark$ & $\tikzcmark$ & $\tikzcmark$ & $\tikzcmark$ & $\tikzcmark$ & $\tikzcmark$ \\
    \hline
    $B$ & $\tikzxmark$ & $\tikzxmark$ & $\tikzxmark$ & $\tikzxmark$ & $\tikzxmark$ & $\tikzxmark$ \\
    \hline
    $AB$ & $\tikzcmark$ & $\tikzcmark$ & $\tikzcmark$ & $\tikzcmark$ & $\tikzcmark$ & $\tikzcmark$ \\
    \hline
    $BA$ & $\tikzcmark$ & $\tikzcmark$ & ${\color{red} \tikzxmark}$ & $\tikzcmark$ & $\tikzcmark$ & $\tikzcmark$ \\
    \hline
    $AA$ & $\tikzcmark$ & $\tikzcmark$ & $\tikzcmark$ & $\tikzcmark$ & $\tikzcmark$ & $\tikzcmark$ \\
    \hline
    $B$ & $\tikzxmark$ & $\tikzxmark$ & $\tikzxmark$ & $\tikzxmark$ & $\tikzxmark$ & $\tikzxmark$ \\
    \hline
  \end{tabular}
  \caption{\label{table:values_k25} Values of $\mathcal{U}(x, \gamma)$ for different $x$ and $\gamma$ and $k = 25$.}
\end{table}

Given the above we compute the functions $\{ g_3, g_{13}, g_{18}, g_{23}, g_{24}, g_{25} \}$ and then the matrix $P$ of Section \ref{section:geometry} is
\be\label{eq:matrix_P_detectA0}
\begin{array}{c|cccccc|c}
 & 3 & 13 & 18 & 23 & 24 & 25 & \text{(24 zero columns)} \\
\hline
 A & 0 & 0 & 1 & 0 & 0 & 0 & \\
 B & 1 & 0 & 0 & 1 & 0 & 0 & \\
AB & 0 & 0 & 1 & 0 & 0 & 0 & \\
BA & 0 & 1 & 0 & 1 & 2 & 1 & \\
AA & 0 & 0 & 1 & 0 & 0 & 0 & \\
BB & 0 & 0 & 0 & 2 & 0 & 0 & \\
\end{array}
\ee
If we assume all the inputs have equal probability $\tfrac{1}{6}$ then by \eqref{eq:hessian_formula_P} we have
\be
\operatorname{Hess}(H, [\detectA^{(0)}]) = \tfrac{1}{3} P^T P = \tfrac{1}{3} \begin{pmatrix} P' & 0 \\ 0 & 0 \end{pmatrix}
\ee
where (ordering the variables as in \eqref{eq:matrix_P_detectA0})
\be
P' = \begin{pmatrix}
1 & 0 & 0 & 1 & 0 & 0 \\
0 & 1 & 0 & 1 & 2 & 1 \\
0 & 0 & 3 & 0 & 0 & 0 \\
1 & 1 & 0 & 4 & 2 & 1 \\
0 & 2 & 0 & 2 & 4 & 2 \\
0 & 1 & 0 & 1 & 2 & 1 \end{pmatrix}\,.
\ee
From this we see that $e_{18}$ is an eigenvector of eigenvalue $\lambda_1 = 3$, and that
\[
P'( e_{24} - 2 e_{13} ) = P'( e_{25} - e_{13} ) = 0\,.
\]
Hence zero is an eigenvalue of multiplicity two. Indeed the characteristic polynomial is
\[
p(\lambda) = \lambda^2( \lambda - 3 )( \lambda^3 - 11 \lambda^2 + 27 \lambda - 12 )\,.
\]
Let $\lambda_2, \lambda_3, \lambda_4$ denote the roots of the cubic factor, which are approximately $0.57, 2.74, 7.69$. Then in local coordinates $v_1,\ldots,v_{30}$ at $[\detectA^{(0)}]$ we have
\be\label{eq:decomp_H_detect}
H = v_1^2 + \sum_{i=2}^4 \tfrac{1}{3} \lambda_i v_i^2 + O(v^3)\,.
\ee
Remarkably, there are \emph{only four out of thirty} directions in the space of noisy TMs in which the function $H$ is nondegenerate. The majority of the possible weight one errors do not affect the performance of the machine (on inputs in $X$ when run for two steps).

Note that the behaviour of the machine when it is in state $\accept$ and reading an $A$ (which is the only way to be in state $\accept$) is critical in our two step machine; introducing an error in the state part of this tuple is therefore one of the \emph{most obvious ways to cause an output error} and we would therefore expect $H$ to be strongly curved in this direction. This is what the geometric analysis shows: the direction $v_1$ corresponding to $e_{18}$ is one of the natural coordinates associated to the eigenvectors of the Hessian of $H$, and it has a large positive eigenvalue. Thus we see a direct relationship between the local geometry of $H$ at $[\detectA^{(0)}]$ and the structure of the Turing machine.




\begin{remark} Recall that it is the geometry of the average negative log-likelihood $L(w)$ that we are ultimately interested in, and the polynomial $H(w)$ is just a convenient way to access (some of) that geometry, since the two functions are comparable. Thus the above observations about $H$ in local coordinates should not be overinterpreted. However, it \emph{is} true that \eqref{eq:decomp_H_detect} implies that the Hessian of $L$ also has rank $4$ at $[\detectA^{(0)}]$ and moreover, we can relate their eigenvalues to some extent (see Appendix \ref{section:compare_functions}).
\end{remark}

\begin{remark}\label{remark:kernel_hessian_example} It is clear by examining $P'$ that a basis for the kernel is $-2e_{13} + e_{24}, -e_{13} + e_{25}$. This gives two additional zero curvature directions in addition to the $24$ identified in Lemma \ref{lemma:irrelevant_entries_channel}. By Lemma \ref{lemma:comparable_hessian} these are also directions in the kernel of the Hessian of $L$, that is, they are genuinely part of the geometry of the statistical learning problem.

In the notation of Figure \ref{fig:detectA_left_coords} these directions are
\begin{gather*}
-2\big( \overset{13}{\accept \rightarrow \reject} ) + \overset{24}{\text{Right} \rightarrow \text{Stay}},\\
-1 \big( \overset{13}{\accept \rightarrow \reject} ) + \overset{25}{\text{Right} \rightarrow \text{Left}}\,.
\end{gather*}
Note that at the boundary point $[\detectA^{(0)}]$ of $W$ these tangent vectors point \emph{outside} of the space of noisy TM codes. Nonetheless they indicate an interesting and non-obvious way to trade-off these two tuples, reflecting non-trivial structure of the machine.
\end{remark}

\section{Conclusion}\label{section:conclusion}

One of the central lessons of logic and computer science for mathematics is to pay attention to the distinction between an \emph{algorithm} and the \emph{function} it computes. In statistical learning theory we often treat models as functions which assign probabilities to data, and ignore how those functions are implemented. In this section we explain how, within the framework of SLT, there is a natural sense in which the internal structure of \emph{probabilistic models as algorithms} play an important role in Bayesian statistics.

To compare the evidence for two candidate models $p(y|x,w^1), p(y|x,w^2)$ of a conditional distribution $q(y|x)$ with respect to a dataset $D_n$, we integrate the posterior over neighbourhoods $\mathcal{W}_i$ of $w^i$. If for simplicity we assume that $w^1, w^2$ are both true parameters, then we know from SLT that asymptotically in $n$ this comparison is determined by the local geometry of the KL divergence $K(w)$ at each $w^i$ (see Section \ref{section:free_energy}). This geometry is a reflection of \emph{the pattern of variation of the probabilities} as $w$ is varied near $w^i$
\be\label{eq:local_variation_p}
p(y|x,w) \approx p(y|x,w^i) \text{ for } w \approx w^i\,.
\ee
In many cases, such as when the probabilistic model $p(y|x,w)$ is parametrised by a neural network or as here by a noisy TM, variations in the parameter $w$ are localised microscopic changes in the algorithm that computes these probabilities (e.g. weights associated to connections between specific neurons, or probabilities of specific errors in the UTM reading the Turing machine description). Thus the geometry of $K(w)$, and consequently the preferences of the posterior, are related to how such variations in the algorithm affect (or do \emph{not} affect) the output probabilities. If the probabilities are computed by algorithms which have hidden variables, hierarchical structure or are composed of multiple information processing modules, then we expect these statistical models to be singular and that ``the knowledge or grammar to be discovered corresponds to singularities'' \cite[\S 1.2]{watanabeAlgebraicGeometryStatistical2009}.
\\

To make this more precise: some changes in the algorithm underlying our model are either just irrelevant to the outputs, or can be compensated by coordinated changes in other parts, leading to changes in $w$ that do not change the output probabilities. These totally flat directions in the geometry of $K(w)$, or \emph{degeneracies}, are signatures of the possible trade-offs among parts of the algorithm, and in turn reflect the internal structure. More refined information is obtained by considering coordinated changes in the algorithm that do not change the probabilities \emph{up to some order} in a small perturbation parameter $\varepsilon$ measuring how far we deviate from the original algorithm.

In the framework developed in this paper we have studied concrete examples of both kinds of changes: in our study of the Turing machine $\detectA^{(0)}$ we found that changes in $24$ out of $30$ directions at this code are irrelevant to outputs to second order, and in Remark \ref{remark:kernel_hessian_example} we exhibit two linearly independent coordinated changes in the code that together do not change the output probabilities up to second order.

\subsection{Structural Bayesianism}\label{sec:structural_bayesianism}

All this leads to the hypothesis that the internal computational structure of Turing machines, and potentially other singular models such as neural networks, is encoded in the geometry of the KL divergence and may be discoverable from data \cite{lau2024locallearningcoefficientsingularityaware,hoogland2024developmental,wang2024differentiation}. 

We view this as a strong motivation for taking \emph{interpretability} of singular models, including neural networks, seriously as a mathematical science. This philosophy, which we refer to as \emph{structural Bayesianism}, says that the model selection principles at the heart of Bayesian statistics dictate in many cases that the posterior can ``see'' the internal structure of the algorithms computing probabilities in the model, and consequently the posterior can prefer some kinds of algorithmic structures over others. That is, the posterior concentrates probability mass on programs that not only predict \emph{well} but predict \emph{properly}. 

\bibliographystyle{plain}
\bibliography{bibliography.bib}

\appendix

\section{Frequently Asked Questions}\label{section:faq}

So, you have questions.

\subsection{Does the geometry depend on the encoding?}

We have described how to associate the germ of a singularity to a Turing machine $M$, or more precisely to the code of this machine $[M]$ for a particular pseudo-UTM. There are many choices involved in producing this geometric object, including: the choice of UTM, the choice of encoding of the step function of that UTM into linear logic, and the choice to use linear logic at all to construct the smooth extension. In this section we address some of the natural questions about these choices.

\subsubsection{Choice of UTM}

There is no question of this geometry being \emph{canonically} associated to $M$, and indeed it is our opinion that it is probably unreasonable to expect any such canonical association: at some conceptual level the geometry of $L$ (or $H$) at $[M]$ is determined by the particular way we choose to introduce errors into the execution, and this in turn is determined by \emph{how} we think about executing that code. Different UTMs will lead to different geometries. In our understanding this is a reflection of the fact that the notion of ``internal structure'' in a Turing machine is not intrinsic to the tuples themselves, but to the dynamical system that represents their execution (that is, what a UTM does with that code for the TM).

The reader should recall that there are UTMs other than the ``simulation type'' UTM. For example, given any particular TM we can augment a simulation-type UTM $\mathcal{U}^{\text{sim}}$ with special subroutines that recognise some part of the TM and consult a look-up table encoded in the UTM's specification in order to skip an arbitrary amount of computation and directly return the answer. Let us call this $\mathcal{U}^{\text{cheat}}$. Any structure internal to the part of the TM $M$ that is skipped will not be visible in the geometry for $[M]$ on $\mathcal{U}^{\text{cheat}}$.

Even if we stick to simulation-type UTMs, there are many variations. Some, like our staged pseudo-UTM, are convenient for doing calculations but may strike the reader as unnatural in various ways. We could hypothesise that some aspects of the geometry are independent of these variations to some degree, just like Kolmogorov complexity is independent of the chosen UTM to some degree, but there is no present theory which makes such claims and it is unclear what statement one could even hope for.

\subsubsection{Choice of Linear Logic}

We choose to compute the propagation of uncertainty about Turing machine codes through the execution of a UTM by encoding the step function of the UTM into linear logic and using the Sweedler semantics \cite{murfet2015sweedler,clift2020encodings,clift2018derivatives}. This is somewhat roundabout and the reader is forgiven for wondering if ``all this linear logic stuff is really necessary''. The answer is No! 

There is a (perhaps kinder) version of this paper which eschews linear logic completely, presents the computation paths of Appendix \ref{section:comp_path_utm} as a \emph{fait accompli} and presents a model of noisy execution of TM codes in those terms and derives geometry in that way. The reader is encouraged to get their hands dirty with the calculations in Section \ref{section:full_geometry} to see how the associated mode of thought engages with computation paths and does not depend on linear logic at all. This is an entirely valid way of approaching the theory.
\\

Nonetheless, our opinion is that while this presentation is easier to approach it is ultimately shallower, as it obscures the influence of Ehrhard-Regnier's idea of derivatives of proofs. Note that by Theorem \ref{theorem:main_error_syndrome_general} the geometry considered here is \emph{an object of proof theory in its own right} (modulo the business with minus signs and the antipode). Moreover in that version of the paper, dear reader, you are asking where this notion of computation path comes from. There is some elegance (in the Kolmogorov complexity sense) of simply saying ``encode the UTM in linear logic'' and then following that to its logical conclusion, but we accept that this elegance is only visible if you know linear logic (there's that UTM dependence of $K$). From a practical point of view one can think of linear logic as a useful term language to describe the algorithm for executing the UTM.

\subsubsection{Choice of Encoding}

Once one has fixed a UTM and decided to use linear logic to construct a smooth relaxation of its step function, there are still multiple encodings into linear logic \cite{clift2020encodings}. Variations are possible which would conceivably change the geometry of the final germ $([M], L)$. Once again, we think of these encodings as a compact \emph{specification} of the computation paths which in turn determine how error can affect the simulation of the machine. We have provided one specification that we find natural, but we have no ground to assert that there are no other natural specifications. The claim of this paper is to construct \emph{an example} of a structure preserving map from Turing machines to singularities. There is no claim, explicitly or implicitly, that this homomorphism is \emph{unique}.

\subsection{Is everything in the geometry computationally meaningful?}

Since every coefficient in the Taylor series expansion of $H$ has been given a computational interpretation in terms of correlations of error syndrome statistics in Theorem \ref{theorem:partial_formula_explicit}, it is possible to answer Yes in a formal sense provided we restrict this statement to the aspects of the geometry of $H$ that are shared with $L$ (Section \ref{sec:geom_comparable}).

Our present intuition is that this answer seems overly strong, but since we don't have a clear definition of what ``counts'' as computational structure of a Turing machine it seems impossible to give a definitive answer. On the other hand, it seems plausible that geometry of the singularity $([M], L)$ that initially seems very remote from computer science may eventually be seen to be intimately connected to basic concepts in that field.

\subsection{Why is there a uniform number of time steps $t$}\label{section:faq_timesteps}

In our conceptual framing in terms of synthesis problems and statistical learning, we are interested in two classical solutions $M_1, M_2$ and how the Bayesian posterior distinguishes them. Let us suppose that for all $x \in I$ not only do $M_1(x) = y(x), M_2(x) = y(x)$ in the sense of the main text, but the machines both \emph{halt} (i.e. enter some static state).

Then, given a dataset size $n$ the question is: does the posterior assign higher probability to a neighbourhood of $M_1$ or $M_2$? Since the dataset is finite and all inputs are finite strings, and \emph{since $M_1, M_2$ are assumed to be classical solutions} and to halt, it may seem that there is some $t$ such that comparing them using this number of steps is sufficient. This is almost correct, but note that the local geometry is sensitive to execution traces of the machines that involve errors, and such traces \emph{may fail to halt}. Thus the geometry can be sensitive to arbitrary numbers of timesteps even if the machines would usually halt on all allowed inputs within a given timeout.

More general synthesis problems are considered in \cite{clift2018derivatives,waring2021geometric} but this introduces additional complexity and in this paper we follow \cite{clift2021geometry} in making the simplification of fixing $t$. We admit this as a limitation of the current presentation. It would be interesting if there was an argument to the effect that non-stably halting machines are exponentially suppressed in the posterior, which would provide a more principled reason for fixing $t$.

\subsection{Why is the input distribution given finite support?}

See the previous section: if we fix a finite number of timesteps then only a finite subset of the possible inputs can be computed. Note that finite numbers can be big: take $t \gg 0$ and $I \supseteq \Sigma^k$ for all $k \le K$ for $K \gg 0$.

\subsection{How should I think about this model of noisy computation?}\label{section:think_noisy}

We refer the reader to Section \ref{appendix:stagedpseudoutm} for a description of the staged pseudo-UTM, herein just ``UTM''. The remarks in this section expand on those in Section \ref{section:overview_ind_inf}.

Recall that the squares of the description tape are of the form $\sigma, q, \sigma', q', d$. To compute the probabilities $p(y|x,w)$ as in Section \ref{section:background_iid} the string $x$ is placed on the work tape of the UTM, $w$ is arranged in the squares of type $\sigma', q', d$ on the description tape and then the (smooth relaxation of the) UTM is executed for $t$ steps before reading off the distribution over symbols $y$ on the state tape (meaning the tape which records the state of the simulated machine).

Note that we do not allow uncertainty in the inputs $x$. There is a distribution $q(x)$ over inputs, but this is irrelevant: the UTM only ever begins execution with a definite string $x$ on its work tape. We of course allow uncertainty on the squares of the description tape coming from $w$ (there is never uncertainty about the squares of type $\sigma, q$). While there is initially no uncertainty on the work, state or staging tapes, the uncertainty on the description tape propagates to these tapes as well. There may also be some uncertainty in the state of the UTM (e.g. between \text{copySymbol} and $\neg\,$\text{copySymbol}) but the design of the staged pseudo-UTM mean that this uncertainty is ``resolved'' at the end of every cycle.
\\

The usphot of using the encoding of the step function of the UTM into linear logic to propagate uncertainty is that, following the logic of Section \ref{section:eval_prog_vec}, \emph{whenever the UTM wants to read from a square on the description, work or staging tapes with uncertainty, it draws a sample}. But there is a catch: these samples must all be \emph{independent}. What that means is that the UTM must recurse backwards to sample, in the previous step, all tape squares necessary to produce the tape square it originally wanted to read. These ``requests for samples'' propagate backwards through all previous steps until they cash out in (many) requests for samples from the original sources of uncertainty on the description tape. 

These ``chains of requests'' from the final output state $y$ to the squares on the description tape are precisely the computation paths (Appendix \ref{section:comp_path_utm}). Reading them forwards, these computation paths specify how one of a number of independent samples from some error distribution on the description tape is to be routed through the execution of the UTM to be ``consumed''. This backwards chaining of requests is made precise by the cut-elimination rules for promotion and contraction, and this explains why linear logic is a particularly natural language to reason about this kind of noisy computation model.

\subsection{Why use TMs? Why not proofs or lambda terms?}

The parameter space $W$ is part of the space of denotations of a linear logic formula, which has as proofs configurations of the description tape of a UTM. For a linear logician, it might seem natural to instead consider general proof synthesis for a general type $A$ and $W \subseteq \den{A}$ relative to some proof $\psi: {!} A, {!} C \vdash B$ which plays the role of the (iterated) step function of the UTM in the present paper with some function $y: {!} C \vdash B$ as the analogue of the synthesis problem. And indeed, this geometrisation of proof synthesis is what \cite{clift2020encodings,clift2018derivatives} were originally setting out to do!

There are however two problems: firstly, there is the technical problem of the propagation of uncertainty through non-plain proofs \cite{clift2020encodings}. It is not currently known if this holds for proofs $\psi$ which are not plain. Putting that aside, the above theory could be constructed exactly as in this paper for any plain proof $\psi$, but then the question becomes: what are the natural examples we can actually compute with? Turing machine codes provide a natural ``flat'' class of examples that are short, can be constructed by hand, and which have a clear conceptual content.

\subsection{Why not a real UTM?}\label{section:faq_realutm}

The staged pseudo-UTM $\mathcal{U}$ is not universal, because it assumes the alphabet symbols and states of the simulated machine can be encoded as single symbols and states of the UTM. That is, $\Sigma \cup Q \subseteq \Sigma_{\text{UTM}}$. In a similar spirit to Appendix \ref{section:faq_timesteps}, note that by enlarging $\Sigma, Q$ we can in this way accommodate any \emph{finite} set of Turing machines as codes on our pseudo-UTM and this suffices for our purposes.

\subsection{Why not smooth relaxation $\mathcal{X}$?}

There are many smooth relaxations of a UTM in the sense of Section \ref{section:background_iid}. Maybe you have one in mind. Any such smooth relaxation gives rise to a KL divergence $K$ whose geometry is relevant to the learning process defined with respect to that smooth relaxation. For many such relaxations there will be a relation between the local learning coefficient and invariants like program length, as described in \cite{clift2021geometry,waring2021geometric} because un-used bits will often correspond to degeneracy. 

However, generically a smooth relaxation $\mathcal{X}$ will have no deeper relation between the geometry of $K: W \lto \mathbb{R}$ and the internal structure of the machine being simulated $[M] \in W$, because there will be no \emph{precise} and general relation between perturbations away from $[M]$ within $W$ and computationally meaningful perturbations of the execution of $M$ as a dynamical system.

In our opinion, the nontrivial contribution being made in the series of papers \cite{clift2018derivatives,clift2021geometry,waring2021geometric} and continued here is to construct a smooth relaxation where such a precise relation \emph{does} exist. One way of observing this is that the geometry of $K$ is here related to the derivatives of $H$ which are, by Theorem \ref{theorem:main_error_syndrome_general}, denotations of \emph{proofs in differential linear logic}. In some sense the geometry of the synthesis problem is an object of proof theory. Arguably the degeneracies observed in Section \ref{section:full_geometry} arise from patterns that can also be expressed in the syntax. These are strong statements about the logical status of the geometry of $K$ which are not true of general smooth relaxations.

\section{Comparable Functions}\label{section:compare_functions}

We make some basic remarks about comparable functions (see Definition \ref{defn:comparable}).

\begin{lemma}\label{lemma:comparable_hessian}
Let $f, g: \mathbb{R}^d \to \mathbb{R}$ be real analytic functions on an open neighbourhood $U$ of the origin, with local minima at the origin. Suppose there exist constants $c, d > 0$ such that 
\begin{equation}
c \cdot f(x) \leq g(x) \leq d \cdot f(x)
\end{equation}
for all $x \in U$. Let $Q_f$ and $Q_g$ be the Hessian matrices of $f$ and $g$ at the origin, respectively. Then:
\begin{enumerate}
\item[(i)] $\ker(Q_f) = \ker(Q_g)$.
\item[(ii)] For any eigenvector $v$ of $Q_f$ with eigenvalue $\lambda_f > 0$, we have $c \cdot \lambda_f \leq v^T Q_g v \leq d \cdot \lambda_f$.
\end{enumerate}
\end{lemma}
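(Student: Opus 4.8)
The plan is to extract both claims from the second-order Taylor expansions of $f$ and $g$ at the origin. Since the origin is a local minimum of both analytic functions, we have $f(0)=g(0)=0$, $\nabla f(0)=\nabla g(0)=0$, and hence for $x$ near the origin
\begin{align*}
f(x) &= \tfrac12 x^T Q_f x + O(\|x\|^3)\,,\\
g(x) &= \tfrac12 x^T Q_g x + O(\|x\|^3)\,.
\end{align*}
The comparability inequality $c f(x) \le g(x) \le d f(x)$ then becomes, after dividing by $\|x\|^2$ and letting $x \to 0$ along a fixed direction $v$ (i.e. $x = tv$, $t \to 0^+$), the pointwise inequality
\[
c\, v^T Q_f v \;\le\; v^T Q_g v \;\le\; d\, v^T Q_f v \qquad \text{for all } v \in \mathbb{R}^d\,,
\]
because the cubic remainder contributes $O(t)$ to $f(tv)/t^2$ and vanishes in the limit. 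This is the key computation and I expect it to be the main (though not difficult) step: one must be careful that the constants $c,d$ survive the limiting process, which they do since they are uniform on $U$.

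For part (i): the quadratic forms $x \mapsto x^T Q_f x$ and $x \mapsto x^T Q_g x$ are both positive semidefinite (as Hessians at local minima), so $Q_f x = 0$ iff $x^T Q_f x = 0$, and likewise for $Q_g$. The sandwich inequality above gives $v^T Q_f v = 0 \iff v^T Q_g v = 0$, whence $\ker(Q_f) = \ker(Q_g)$. For part (ii): if $v$ is an eigenvector of $Q_f$ with eigenvalue $\lambda_f > 0$, normalise so $\|v\| = 1$; then $v^T Q_f v = \lambda_f$, and substituting into the sandwich inequality immediately yields $c\,\lambda_f \le v^T Q_g v \le d\,\lambda_f$, as claimed.

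The only point requiring a little care is justifying the passage from the functional inequality on a neighbourhood to the inequality of quadratic forms; I would state explicitly that for each fixed $v \neq 0$ and all sufficiently small $t > 0$ we have $tv \in U$, divide the inequality $c f(tv) \le g(tv) \le d f(tv)$ by $t^2/2$, and take $t \to 0^+$ using the Taylor expansions to evaluate each side. The case $v = 0$ is trivial. No deeper input (e.g. resolution of singularities) is needed; this is purely a statement about the $2$-jets of comparable functions, which is why it suffices for transferring Hessian rank and eigenvalue bounds between $H$ and $L$ in the main text.
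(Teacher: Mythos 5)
Your proposal is correct and follows essentially the same route as the paper: restrict the comparability inequality to rays $x = tv$, use the second-order Taylor expansions at the common local minimum, divide by $t^2$ and let $t \to 0$ to obtain the sandwich $c\,v^T Q_f v \le v^T Q_g v \le d\,v^T Q_f v$, then use positive semidefiniteness to pass from $v^T Q v = 0$ to $Qv = 0$ for part (i) and substitute $v^T Q_f v = \lambda_f$ for part (ii). The only cosmetic difference is that you establish the two-sided quadratic-form inequality once for all $v$ and deduce both parts from it, whereas the paper runs the limiting argument separately for each part.
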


\begin{proof}
Since $f$ and $g$ have local minima at $0$, we have $f(0) = g(0) = 0$, $\nabla f(0) = \nabla g(0) = 0$, and both $Q_f$ and $Q_g$ are positive semidefinite. The Taylor expansions around the origin are:
\begin{align}
f(x) &= \frac{1}{2}x^T Q_f x + O(\|x\|^3)\\
g(x) &= \frac{1}{2}x^T Q_g x + O(\|x\|^3)
\end{align}
For (i) it suffices to show $\ker(Q_f) \subseteq \ker(Q_g)$. Let $v \in \ker(Q_f)$, so $Q_f v = 0$ and $v^T Q_f v = 0$. For sufficiently small $t$, consider $x = tv \in U$:
\begin{align}
f(tv) &= \frac{1}{2}t^2 v^T Q_f v + O(t^3) = O(t^3)
\end{align}

By our growth rate assumption, we have:
\begin{align}
g(tv) &\geq c \cdot f(tv) = c \cdot O(t^3) = O(t^3)\\
g(tv) &\leq d \cdot f(tv) = d \cdot O(t^3) = O(t^3)
\end{align}

However, we also know:
\begin{align}
g(tv) &= \frac{1}{2}t^2 v^T Q_g v + O(t^3)
\end{align}
For these equations to be consistent, we must have $v^T Q_g v = 0$. Since $Q_g$ is positive semidefinite, this implies $Q_g v = 0$, and therefore $v \in \ker(Q_g)$.

For (ii) let $v$ be an eigenvector of $Q_f$ with eigenvalue $\lambda_f > 0$, so $Q_f v = \lambda_f v$ and $v^T Q_f v = \lambda_f \|v\|^2$. Without loss of generality, we can assume $\|v\| = 1$.

For small $t$, consider $x = tv$:
\begin{align}
f(tv) &= \frac{1}{2}t^2 v^T Q_f v + O(t^3) = \frac{1}{2}t^2 \lambda_f + O(t^3)\\
g(tv) &= \frac{1}{2}t^2 v^T Q_g v + O(t^3)
\end{align}
From our growth rate constraint, we have:
\begin{align}
c \cdot f(tv) \leq g(tv) \leq d \cdot f(tv)
\end{align}
Substituting the expansions:
\begin{align}
c \cdot \left(\frac{1}{2}t^2 \lambda_f + O(t^3)\right) \leq \frac{1}{2}t^2 v^T Q_g v + O(t^3) \leq d \cdot \left(\frac{1}{2}t^2 \lambda_f + O(t^3)\right)
\end{align}
Dividing by $\frac{1}{2}t^2$ and taking the limit as $t \to 0$, we obtain:
\begin{align}
c \cdot \lambda_f \leq v^T Q_g v \leq d \cdot \lambda_f
\end{align}
This establishes the desired bounds on the quadratic form $v^T Q_g v$ for any eigenvector $v$ of $Q_f$ with positive eigenvalue $\lambda_f$.
\end{proof}

\begin{remark}
Lemma \ref{lemma:comparable_hessian} establishes that two real analytic functions with comparable growth rates near a common minimum point have Hessian matrices with identical kernels. Thus the ellipsoids defined by the quadratic forms have the same ``flat directions'' and further their curvatures in other directions maintain bounded ratios as
\[
c \le \frac{v^T Q_g v}{\lambda_f} \le d \,.
\]
\end{remark}

Suppose that $v, v'$ are two eigenvectors of $Q_f$ with distinct eigenvalues $\lambda, \lambda'$. Define the curvatures of $Q_g$ in these directions as
\[
\mu = v^T Q_g v\,, \qquad \mu' = (v')^T Q_g v'\,.
\]
Then we have
\begin{align*}
\frac{\mu}{\mu'} = \frac{\mu}{\lambda} \cdot \frac{\lambda'}{\mu'} \cdot \frac{\lambda}{\lambda'} \le \frac{d}{c} \cdot \frac{\lambda}{\lambda'}
\end{align*}
and similarly with the lower bound, giving
\begin{equation}\label{eq:ratio_curve}
\frac{c}{d} \cdot \frac{\lambda}{\lambda'} \le \frac{\mu}{\mu'} \le \frac{d}{c} \cdot \frac{\lambda}{\lambda'}\,.
\end{equation}
This is applied in the following way: suppos the ratio of $\lambda$ to all other eigenvalues of $Q_f$ is large (i.e. $v$ is the most curved direction). Then the ratio of the curvatures $\frac{\mu}{\mu'}$ for $Q_g$ also can't be too small. In our applications, while $c$ can be chosen close to $1$, $d$ may be large, so that \eqref{eq:ratio_curve} need not be very informative.

\begin{definition}
    Let $S: \bb{R}_{> 0} \lto \bb{R}$ be defined by
    \[
    S(u) = -\log u + u - 1\,.
    \] 
\end{definition}

We will prove that $S$ is comparable to $\tfrac{1}{2}(u-1)^2$.

\begin{lemma}\label{lemma:S_bounds} Given $0 < c < 1$ and $d > 1$ there exists $\delta > 0$ such that
\begin{equation}\label{eq:S_bound}
c \cdot \tfrac{1}{2}(u-1)^2 \le S(u) \le d \cdot \tfrac{1}{2} (u-1)^2
\end{equation}
whenever $|1 - u| < \delta$.
\end{lemma}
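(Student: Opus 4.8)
The plan is to reduce the claim to the well-known Taylor expansion of $S(u) = -\log u + u - 1$ around $u = 1$, and then use the fact that a function whose Taylor series begins with a positive-definite quadratic term is comparable to that quadratic on a sufficiently small neighbourhood. First I would compute the derivatives of $S$ at $u = 1$: we have $S'(u) = -1/u + 1$, so $S'(1) = 0$, and $S''(u) = 1/u^2$, so $S''(1) = 1$. Hence $S(1) = 0$, $S'(1) = 0$, and by Taylor's theorem with remainder
\[
S(u) = \tfrac{1}{2}(u-1)^2 + R(u)
\]
where $R(u) = o\big((u-1)^2\big)$ as $u \to 1$; more precisely, since $S$ is real analytic on $\mathbb{R}_{>0}$, we may write $S(u) = \tfrac{1}{2}(u-1)^2\big(1 + \rho(u)\big)$ with $\rho$ real analytic near $1$ and $\rho(1) = 0$ (one can see this explicitly from $S(u) = \sum_{k \ge 2} \frac{(-1)^k}{k}(u-1)^k$, the expansion of $-\log(1+(u-1)) + (u-1)$, giving $\rho(u) = \sum_{k \ge 3} \frac{2(-1)^k}{k}(u-1)^{k-2}$).

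Given $0 < c < 1$ and $d > 1$, I would then choose $\delta > 0$ small enough that $|\rho(u)| \le \min(1 - c,\, d - 1)$ for all $u$ with $|1-u| < \delta$; this is possible since $\rho$ is continuous at $1$ with $\rho(1) = 0$. For such $u$ we get $c \le 1 + \rho(u) \le d$, and multiplying through by the nonnegative quantity $\tfrac{1}{2}(u-1)^2$ yields exactly \eqref{eq:S_bound}. This completes the argument. The only mild subtlety is handling the point $u = 1$ itself, where both sides vanish and the inequality holds trivially; away from $u=1$ the factorisation $S(u) = \tfrac{1}{2}(u-1)^2(1+\rho(u))$ and the bound on $\rho$ do the work.

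I expect no serious obstacle here — this is a routine estimate. The only thing to be careful about is to state the comparison correctly in the degenerate direction (at $u=1$) and to justify the analyticity/continuity of the correction factor $\rho$ rather than just appealing vaguely to Taylor's theorem; using the explicit power series for $-\log(1+x)+x$ sidesteps any worry about the remainder term. An alternative, essentially equivalent route would be to apply L'Hôpital's rule twice to $\lim_{u\to 1} S(u) / \big(\tfrac12(u-1)^2\big) = 1$ and then invoke continuity to get the two-sided bound on a small punctured neighbourhood, adding $u=1$ separately; I would pick whichever is cleaner in context, but the power-series version is the most self-contained.
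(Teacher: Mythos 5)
Your proof is correct and takes essentially the same approach as the paper: a second-order Taylor expansion of $S$ at $u=1$ followed by control of the remainder on a small neighbourhood of $1$. The only difference is that you control the remainder via the explicit power-series factorisation $S(u) = \tfrac{1}{2}(u-1)^2(1+\rho(u))$ with $\rho$ continuous and $\rho(1)=0$, whereas the paper uses the Lagrange form of the third-order remainder; your version is if anything cleaner, since the paper's choice of $\delta$ in \eqref{eq:bound_delta} formally involves the intermediate point $a$, which itself depends on $u$.
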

\begin{proof}
We compute $S'(u) = -\tfrac{1}{u} + 1, S''(u) = \tfrac{1}{u^2}, S^{(3)}(u) = -\tfrac{2}{u^3}$ so $S(1) = 0, S'(1) = 0, S''(1) = 1$ and so for some $a$ between $1$ and $u$
\[
S(u) = \tfrac{1}{2}(u-1)^2 - \tfrac{1}{3 a^3}(u-1)^3\,.
\]
For a constant $0 < c < 1$ we have $c \cdot \tfrac{1}{2}(u-1)^2 \le S(u)$ if and only if
\[
\tfrac{1}{3 a^3}(u-1)^3 \le \tfrac{1}{2}[1-c](u-1)^2
\]
which is automatic if $u \le 1$ and for $u > 1$ holds if and only if
\begin{equation}\label{eq:u_lowerbound}
u \le 1 + \tfrac{3a^3}{2}\big[1-c\big]\,.
\end{equation}
Similarly if $1 < d$ we have $S(u) \le d \cdot \tfrac{1}{2}(u-1)^2$ if and only if
\[
\tfrac{1}{2}\big[1 - d\big](u-1)^2 \le \tfrac{1}{3 a^3}(u-1)^3
\]
which is automatic if $u = 1$ and for $u \neq 1$ is equivalent to
\begin{equation}\label{eq:u_upperbound}
1 + \tfrac{3a^3}{2}\big[1 - d\big] \le u\,.
\end{equation}
It follows that given $c,d$ as described, \eqref{eq:S_bound} will hold provided
\be\label{eq:bound_delta}
|1 - u| < \delta = \tfrac{3a^3}{2} \min\big\{|1-c|, |1 - d| \big\}
\ee
as claimed.
\end{proof}

Note that we can make the constants $c,d$ as close to $1$ as we wish (respectively from below and above) by taking $\delta$ sufficiently small.

\subsection{Geometry of Comparable Functions}\label{sec:geom_comparable}

Let $U$ be an open neighbourhood of $0 \in \mathbb{R}^d$ and $f: U \lto \mathbb{R}$ a non-negative and non-constant analytic function with $f(0) = 0$. Let $g: U \lto \mathbb{R}$ be a comparable analytic function, say with $c, d > 0$ such that
\begin{equation}\label{eq:geom_comparable_1}
c f(x) \le g(x) \le d f(x)\,, \qquad \forall x \in U\,.
\end{equation}
By resolution of singularities (e.g. \cite[Theorem 2.3]{watanabeAlgebraicGeometryStatistical2009}) there is a triple $(W, U, h)$ where $W$ is an open set in $\mathbb{R}^d$ which contains $0$, $U$ is a $d$-dimensional real analytic manifold, $h: U \lto W$ is a real analytic map, such that $h$ is proper and with
\[
W_0 = \{ x \in W \mid f(x) = 0 \}\,, \qquad U_0 = \{ u \in U \mid f(h(u)) = 0 \}
\]
$h$ induces an analytic isomorphism of $U \setminus U_0$ with $W \setminus W_0$. Further, for every point $P \in U_0$ there are local coordinates $u_1,\ldots,u_d$ of $U$ with origin $P$ such that
\[
f(h(u)) = u_1^{2k_1} u_2^{2k_2} \cdots u_d^{2k_d}
\]
where $k_1,\ldots,k_d$ are non-negative integers, and the Jacobian determinant of $x = h(u)$ is
\[
g'(u) = b(u) u_1^{h_1} \cdots u_d^{h_d}
\]
where $b(u)$ is nonzero in the local chart and $h_1,\ldots,h_d$ are non-negative integers.

By \eqref{eq:geom_comparable_1} we have in any such local chart that
\[
c u_1^{2k_1} u_2^{2k_2} \cdots u_d^{2k_d} \le g(h(u)) \le d u_1^{2k_1} u_2^{2k_2} \cdots u_d^{2k_d}
\]
and hence by \cite[Theorem 2.6]{watanabe2018} there exists an open set within the coordinate chart around $P$ such that, on this open set, there is defined a real analytic function $t(u)$ with
\[
g(h(u)) = t(u) f(h(u))\,.
\]
That is to say, the ratio $t(u) = g(h(u))/f(h(u))$ is well-defined near $P$ on the resolution, \emph{even though} it may be not be defined on the original space. Since $c \le t(u) \le d$ on this neighbourhood of $P$ we have that $t(u)$ is a unit in the ring of analytic functions. In paticular, in the local ring $\mathcal{O}^{\text{an}}_P$ of germs of analytic functions on $U$ at $P$ we have that $f \circ h$ and $g \circ h$ generate the same ideal
\[
\langle g(h(u)) \rangle = \langle f(h(u)) \rangle
\]
and thus $\mathcal{O}^{\text{an}}_P/\langle f \circ h \rangle \cong \mathcal{O}^{\text{an}}_P / \langle g \circ h \rangle$. Thus any geometric property of $f$ that factors through the resolution is shared by $g$. This includes the real log canonical threshold (i.e. learning coefficient).

\subsection{Functions Comparable to the KL divergence}
\label{sec:KL_Squared_Distance}

Let $q(x)$ be a probability distribution on a non-empty finite set $I$ with $q(x) > 0$ for all $x \in I$. Let $Q$ be a non-empty finite set. We suppose given a conditional distribution $q(y|x)$ over $y \in Q$ given $x \in I$ such that $q(y|x) > 0$ for all $y \in Q$. Let $p(y|x,w)$ be a model defined on some $W \subseteq \mathbb{R}^d$ with open interior. Define
\begin{align*}
H(w) &= \sum_{x \in I} q(x) H(w,x)\,, \qquad H(w,x) = \sum_{y \in Q}\Big( p( y | x, w) - q(y|x) \Big)^2\,,\\
L(w) &= \sum_{x \in I} q(x) L(w,x)\,, \qquad L(w,x) = - \sum_{y \in Q} q(y|x) \log p(y|x,w) \,.\\
K(w) &= \sum_{x \in I} q(x) K(w,x)\,, \qquad K(w,x) = \sum_{y \in Q} q(y| x)\log\frac{q(y| x)}{p(y| x, w)}\,.
\end{align*}
We assume $p(y|x,w)$ is continuous in $w \in W$ and that $p(y|x,w) > 0$ for all $x \in I, y \in Q$ and $w \in W$, so that the above expressions are well-defined.

\begin{remark}\label{remark:realisable_app}
When the true distribution $q(x,y) = q(y|x) q(x)$ is realisable by a statistical model $q(y|x) = p(y|x,w^*)$ for some $w^*$ (which may be on the boundary of $W$) and we may replace $W$ by an infinitesimal neighbourhood of $w^*$. Since $p(y|x,w)$ is a continuous function of $w$ and there are only finitely many $x \in I$ and $y \in Q$, this means we may assume that given any $\varepsilon > 0$ we have simultaneously $\big| p(y|x,w) - q(y|x) \big| < \varepsilon$ for all $x,y$ and $w$. Since $q(y|x) > 0$ this means we may assume $p(y|x,w) > 0$ for all $w$, explaining the reason we adopt this hypothesis. \end{remark} 

\begin{lemma}\label{lemma:H_comparable_K} Given $0 < c < 1$ and $d > 1$ let $\delta > 0$ be as in \eqref{eq:bound_delta}. If 
\be\label{lemma:appendix_u_hyp}
\Bigg| \frac{p(y|x,w)}{q(y|x)} - 1 \Bigg| < \delta \qquad \forall x \in I, y \in Q, w \in W
\ee
then $\tfrac{1}{2} H(w)$ is comparable to $K(w)$.
\end{lemma}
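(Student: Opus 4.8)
The plan is to reduce the whole statement to the pointwise estimate on the function $S$ from Lemma \ref{lemma:S_bounds}, via an ``add zero'' manipulation of the KL divergence. First I would fix $x \in I$ and $w \in W$ and set $u = u(x,y,w) = p(y|x,w)/q(y|x)$ for $y \in Q$ (this is legitimate and lands in $\mathbb{R}_{>0}$ since $p(y|x,w)>0$ and $q(y|x)>0$), so that hypothesis \eqref{lemma:appendix_u_hyp} reads exactly $|u - 1| < \delta$. The key identity is that, because $p(\cdot \mid x, w)$ and $q(\cdot \mid x)$ are both probability distributions on the finite set $Q$,
\[
\sum_{y \in Q} q(y|x)\big( u(x,y,w) - 1 \big) = \sum_{y \in Q} p(y|x,w) - \sum_{y \in Q} q(y|x) = 1 - 1 = 0\,.
\]
Consequently, writing $-\log u = \log\frac{q(y|x)}{p(y|x,w)}$ and adding this vanishing sum,
\[
K(w,x) = \sum_{y \in Q} q(y|x)\big( - \log u \big) = \sum_{y \in Q} q(y|x)\big( -\log u + u - 1\big) = \sum_{y \in Q} q(y|x)\, S\big( u(x,y,w)\big)\,.
\]

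Next I would apply Lemma \ref{lemma:S_bounds} termwise: since $|u - 1| < \delta$ for every $x,y,w$, we get $c \cdot \tfrac12 (u-1)^2 \le S(u) \le d \cdot \tfrac12 (u-1)^2$. Multiplying by $q(y|x) > 0$ and using $q(y|x)(u - 1)^2 = \big(p(y|x,w) - q(y|x)\big)^2 / q(y|x)$ gives
\[
\frac{c}{2}\, \frac{\big(p(y|x,w) - q(y|x)\big)^2}{q(y|x)} \;\le\; q(y|x)\, S(u) \;\le\; \frac{d}{2}\, \frac{\big(p(y|x,w) - q(y|x)\big)^2}{q(y|x)}\,.
\]
The one point requiring a moment's care is passing between $\big(p - q\big)^2 / q(y|x)$ and the summands $\big(p-q\big)^2$ of $H(w,x)$: since $q(y|x) \le 1$ we have $1/q(y|x) \ge 1$, which handles the lower bound; and since $I$ and $Q$ are finite with $q(y|x) > 0$, the constant $\kappa := \min_{x \in I,\, y \in Q} q(y|x)$ is strictly positive, giving $1/q(y|x) \le 1/\kappa$ for the upper bound.

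Summing the displayed inequality over $y \in Q$, then over $x \in I$ with weights $q(x)$, yields
\[
\frac{c}{2}\, H(w) \;\le\; K(w) \;\le\; \frac{d}{2\kappa}\, H(w)\,,
\]
i.e. $c \cdot \tfrac12 H(w) \le K(w) \le \tfrac{d}{\kappa}\cdot \tfrac12 H(w)$ with $c, d/\kappa > 0$; by Definition \ref{defn:comparable} this is precisely the assertion that $\tfrac12 H(w)$ and $K(w)$ are comparable on $W$. I do not expect any genuine obstacle: all the analytic content is already packaged in Lemma \ref{lemma:S_bounds}, and the remaining work is the ``add zero'' step plus the elementary bounds $1 \le 1/q(y|x) \le 1/\kappa$.
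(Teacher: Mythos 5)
Your proposal is correct and follows essentially the same route as the paper: apply Lemma \ref{lemma:S_bounds} to $u = p(y|x,w)/q(y|x)$, use that $\sum_y q(y|x)(u-1)=0$ so that $\sum_y q(y|x)S(u) = K(w,x)$, and then trade the weights $1/q(y|x)$ for the uniform bounds $1 \le 1/q(y|x) \le 1/\kappa$ before summing over $x$. The only differences are cosmetic: you make the ``add zero'' cancellation explicit where the paper leaves it implicit, and your constants ($c$ and $d/\kappa$) agree with the paper's ($cC$ and $dD$) up to the harmless slack $C \ge 1$.
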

\begin{proof}
We can apply Lemma \ref{lemma:S_bounds} to $u = \frac{p(y|x,w)}{q(y|x)}$ to obtain
\[
\tfrac{d}{2}\Bigg( \frac{p(y|x,w)}{q(y|x)} - 1 \Bigg)^2 \le -\log  \frac{p(y|x,w)}{q(y|x)} +  \frac{p(y|x,w)}{q(y|x)} - 1 \le \tfrac{d}{2}\Bigg( \frac{p(y|x,w)}{q(y|x)} - 1 \Bigg)^2
\]
which we can multiply by $q(y|x)$ and sum over $y \in Q$ to obtain
\be
\tfrac{c}{2} H'(w,x) \le K(w,x) \le \tfrac{d}{2} H'(w,x)
\ee
where
\[
H'(w,x) = \sum_{y \in Q} q(y|x)\Bigg( \frac{p(y|x,w)}{q(y|x)} - 1 \Bigg)^2\,.
\]
Thus $K(w,x)$ is comparable to $H'(w,x)$. Now
\[
H'(w,x) = \sum_{y \in Q} q(y|x)^{-1}\Big( p(y|x,w) - q(y|x) \Big)^2
\]
and if we let $C_x = \inf_{y \in Q} q(y|x)^{-1}$ and $D_x = \sup_{y \in Q} q(y|x)^{-1}$ then
\[
C_x H(w,x) \le H'(w,x) \le D_x H(w,x)
\]
so $H'(w,x)$ is comparable to $H(w,x)$. By transitivity $H(w,x)$ is comparable to $K(w,x)$, or more explicitly we have for all $x,w$
\[
\frac{c \, C_x}{2} H(w,x) \le K(w,x) \le \frac{d \, D_x}{2} H(w,x)\,.
\]
Multiplying through and taking the sum over $x \in I$
\[
c \, C \cdot \tfrac{1}{2} H(w) \le K(w) \le d \, D \cdot \tfrac{1}{2} H(w)
\]
where $C = \inf_{x \in I} C_x$, $D = \sup_{x \in I} D_x$ as desired. Of course if $K(w)$ is comparable to $\tfrac{1}{2} H(w)$ it is comparable to $H(w)$ but since we control $c, C, d, D$ this is more natural.
\end{proof}

\begin{remark} In the proof note that $q(y|x) \le 1$ so $q(y|x)^{-1} \ge 1$ and so $1 \le C_x$ and thus $1 \le C$. Hence, possibly by shrinking $\delta$ we can take the constant $c\,C$ in the lower bound as close as we like to $1$. The upper bound is less trivial to deal with, because if there are $y$ for which $q(y|x)$ is very small, $D$ is large. For this reason the comparability of $K(w)$ to a sum of squared differences is not trivial when $q$ does not have full support. In Section \ref{section:boundary_away} we explain one way to deal with this.
\end{remark}

\begin{remark}
As in Remark \ref{remark:realisable_app}, we typically arrange for \eqref{lemma:appendix_u_hyp} to hold by shrinking $W$ to a small neighbourhood of a true parameter.
\end{remark}

\begin{remark} Recall that $K(w) = L(w) - C$ where $C$ is the entropy of the true distribution (which does not depend on $w$). Thus $L(w)$ is comparable to $\tfrac{1}{2} H(w) + C$. If we assume we are working in a neighbourhood of a true parameter $w^*$ so that $K(w^*) = 0$ then $L(w^*) = C$ so that $L(w)$ is comparable to $\tfrac{1}{2} H(w) + L(w^*)$.
\end{remark}

\subsection{Moving Away From the Boundary}\label{section:boundary_away}

We define for $\mu > 0$ the interpolation between a probability distribution and the barycenter $\bold{b} = \sum_{q \in Q} \frac{1}{|Q|} q$ of the probability simplex by
\begin{gather*}
\varepsilon_\mu: \Delta Q \lto \Delta Q\,,\\
\varepsilon_\mu( \bold{x} ) = (1-\mu) \bold{x} + \mu \bold{b}
\end{gather*}
where we write probability distributions as vectors in $\Delta Q \subseteq \mathbb{R}^Q$ and denote them with bold letters. Note that
\begin{align*}
\Vert \varepsilon_\mu(\bold{x}) - \varepsilon_\mu(\bold{y}) \Vert^2 &= \Vert (1-\mu)(\bold{x} - \bold{y}) \Vert^2 = (1-\mu)^2 \Vert \bold{x} - \bold{y} \Vert^2\,.
\end{align*}
Let $I, Q$ and $q(x)$ be as in Section \ref{sec:KL_Squared_Distance}. We assume given a conditional distribution $q(y|x)$ but we \emph{do not} assume that $q(y|x) > 0$ for all $y \in Q$. Nonetheless the distribution
\[
q_\mu(y|x) = \varepsilon_\mu q(y|x) \in \Delta Q
\]
does have this property. Similarly, let $p(y|x,w)$ be a model parametrised by $w \in W$ which is continuous, but not necessarily positive on all $y$. We define
\[
p_\mu(y|x,w) = \varepsilon_\mu p(y|x,w)
\]
which once again, is positive on all $y$. Assuming the hypothesis of Lemma \ref{lemma:H_comparable_K} for $q_\mu, p_\mu$ we obtain that the KL divergence $K_\mu(w)$ between $q_\mu$ and $p_\mu$ is comparable to
\begin{align*}
H_\mu(w) &= \sum_{x \in I} q(x) \Vert \varepsilon_\mu p(y|x,w) - \varepsilon_\mu q(y|x) \Vert^2\\
&= \sum_{x \in I} q(x) (1-\mu)^2 \Vert p(y|x,w) - q(y|x) \Vert^2\\
&= (1-\mu)^2 H(w)\,.
\end{align*}

\begin{remark}
The idea of moving distributions $p$ off the boundary in the context of program synthesis was introduced in \cite{clift2018derivatives}, but the elegant trick of applying it to both $p,q$ is due to Waring \cite{waring2021geometric}.
\end{remark}


\section{Details of the Step Function of $\mathcal{U}$}
\label{section:multi_tape}

The proof ${}_h\underline{\text{relstep}}$ of Proposition \cite[5.5]{clift2020encodings} encodes one step of a Turing machine in linear logic. In Section \ref{appendix:stagedpseudoutm} we considered simulated Turing machines with tape alphabet given by some fixed set $\Sigma$, with set of states some fixed set $Q$. Here we fix these sets to be as follows, where $n,m \geq 1$:
\begin{equation}
\label{eq:simulated_machines}
    \Sigma = \{0,\ldots, n-1\}, Q = \{0,\ldots, m-1\}, D = \{0,1,2\} = \{\texttt{Left}, \texttt{Stay}, \texttt{Right}\}.
\end{equation}
and we consider a staged pseudo-UTM $\mathcal{U}$. There are $N = nm$ tuples associated to the transition function of $M$. The aim of this section is to adapt ${}_h\underline{\text{relstep}}$, for $h > 0$, to a proof ${}_h\underline{\text{$\mathcal{U}$relstep}}$ sufficiently capturing and simulating the computation of $\mathcal{U}$ for a single simulated step of $M$ (which amounts to $10N+6$ steps of $\mathcal{U}$).

One way to approach this would be to come up with a generic encoding of Turing machines with $k > 0$ tapes in linear logic and then specialise to the particular case of $\mathcal{U}$. This is possible, but we simplify things by making use of the following properties of $\mathcal{U}$:

\begin{itemize}
    \item[(i)] The description tape, staging tape, and state tape are of determined, finite length.
    \item[(ii)] The head position of the description tape depends only on the time step $\mu$ of $\mathcal{U}$.
    \item[(iii)] The head positions of the staging tape depends only on the state of $\mathcal{U}$.
    \item[(iv)] The head position of the state tape never changes.
    \item[(v)] The symbols and states of the simulated machine can be encoded in single tape squares of the $\mathcal{U}$.
\end{itemize}
Let $\mu, h \geq 0$ be an integers with $\mu \leq 10N+6$. We define a type ${}_h^\mu\underline{\mathcal{U}\textbf{tur}}$ along with a proof
\begin{equation}
    {}_h^{\mu}\underline{\text{$\mathcal{U}$relstep}}: {}_h^\mu\underline{\mathcal{U}\textbf{tur}} \vdash {}_h^{\mu+1}\underline{\mathcal{U}\textbf{tur}}
\end{equation}
encoding a single step of $\mathcal{U}$ at position $\mu$ within the overall cycle of the staged pseudo-UTM (see Section \ref{appendix:stagedpseudoutm}). By properties (i)-(v), we can reduce to a finite set of squares which are used in a single, fixed timestep of $\mathcal{U}$. For each of these squares we will construct a plain proof and another plain proof encoding the updates of the states of $\mathcal{U}$. Following \cite{clift2020encodings}, the proof ${}_h^{\mu}\underline{\text{$\mathcal{U}$relstep}}$ will then be given by promoting these proofs, taking a tensor product, and performing contractions. We construct the proof associated to these squares one tape at a time. 
\\

In the following the base type $A$ is chosen so that all required denotations are linearly independent; this is possible by \cite[Appendix A]{clift2020encodings}. Nothing depends on the choice of $A$ and we generally drop it from the notation and write $\textbf{bool}$ for $\textbf{bool}_A$.

We will make extensive use of the following two lemmas:

\begin{lemma}\label{encoding functions of n bools mod}
Given $k \ge 1$ and any function $f: \{0, ..., n-1\}^k \to \{0, ..., m-1\}$, there exists a proof $F$ of $k\,\tNBool \vdash \tMBool$ which encodes $f$.
\end{lemma}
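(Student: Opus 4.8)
\textbf{Proof strategy for Lemma \ref{encoding functions of n bools mod}.}

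The plan is to build $F$ by a bracketing/case-analysis on the $k$ inputs of type $\tNBool$, reducing the encoding of an arbitrary function $f: \{0,\dots,n-1\}^k \to \{0,\dots,m-1\}$ to the encoding of finitely many constant proofs. Recall that a proof of $\tNBool = A^n \multimap A$ encodes an element of $\{0,\dots,n-1\}$ as a projection onto one of $n$ factors, and that the underlined proofs $\underline{j}: {}_n\textbf{bool}$ for $0 \le j \le n-1$ are the ``standard'' such proofs. First I would handle the base case $k = 0$ (or $k=1$ if the statement is read as requiring at least one input): for $k=0$ the function is just a constant $c \in \{0,\dots,m-1\}$ and $F = \underline{c}: \tMBool$ does the job; for $k = 1$ a function $f: \{0,\dots,n-1\} \to \{0,\dots,m-1\}$ is encoded by a proof which, given the input projection, selects among the $n$ constants $\underline{f(0)},\dots,\underline{f(n-1)}$ — concretely one uses the generalised Boolean elimination which feeds the tuple $(\den{\underline{f(0)}},\dots,\den{\underline{f(n-1)}}) \in \den{\tMBool}^n$ to the input of type $A^n \multimap A$ with $A = \den{\tMBool}$ (this is where we invoke that the base type $A$ in the definition of $\textbf{bool}$ can be chosen so that all the relevant denotations $\den{\underline{f(j)}}$ are available; see \cite[Appendix A]{clift2020encodings}).

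The inductive step is the heart of the argument. Given $f: \{0,\dots,n-1\}^k \to \{0,\dots,m-1\}$, fix the first argument: for each value $a \in \{0,\dots,n-1\}$ let $f_a: \{0,\dots,n-1\}^{k-1} \to \{0,\dots,m-1\}$ be the restriction $f_a(x_2,\dots,x_k) = f(a,x_2,\dots,x_k)$. By the inductive hypothesis there is a proof $F_a$ of $(k-1)\,\tNBool \vdash \tMBool$ encoding $f_a$. Now I would assemble $F$ from the $F_a$ by using the first input of type $\tNBool$ as an $n$-way case switch: the first input is (the promotion/dereliction of) a projection $A^n \multimap A$, and we instantiate it at $A = \den{\tMBool}$, feeding it the tuple $(\den{F_0(\vec x)},\dots,\den{F_{n-1}(\vec x)})$ where the $\vec x$ are the remaining $k-1$ inputs, shared across all branches. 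Formally this is done at the level of proofs by taking the $F_a$, forming the $\&$-introduction into a proof of the type with conclusion $\tMBool^n$ (i.e. $(k-1)\,\tNBool \vdash \tMBool^n = \tMBool \,\&\, \cdots \,\&\, \tMBool$), then cutting against $\underline{i}: {}_n\textbf{bool}_{\tMBool}$ supplied by the first input, and finally distributing the $k-1$ shared inputs by contraction (this is legitimate since they appear under $!$ in the sequent ${!}\,\tNBool,\dots$ — more precisely one works with the plain-proof presentation with the appropriate degrees, as in Definition \ref{def:plain}). One then checks, by the definition of the denotations of the $\&$-introduction, cut, and the projection proofs, that for standard inputs $\underline{a_1},\dots,\underline{a_k}$ the proof $F$ cuts to $\underline{f(a_1,\dots,a_k)}$, i.e. that $F$ encodes $f$ in the sense of \cite[Appendix B]{clift2020encodings}.

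The main obstacle I anticipate is bookkeeping rather than conceptual: (i) making sure that at each stage the base type $A$ can be chosen uniformly so that \emph{all} the denotations appearing across all branches and all levels of the recursion are simultaneously linearly independent — but this is exactly what \cite[Appendix A]{clift2020encodings} provides, and since the recursion is finite (depth $k$, branching $n$) only finitely many proofs are involved; and (ii) verifying that the resulting proof is genuinely plain (or at least that its denotation passes to probability distributions via Proposition \ref{prop:delta_psi}), which amounts to observing that the number of copies of each input consumed is fixed — each $F_a$ consumes a fixed number of copies of each of the $k-1$ tail inputs by induction, and the case-switch on the head input consumes it once, so the degrees are determined by the proof. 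I would relegate both of these to citations of \cite{clift2020encodings} and the standard theory of encodings of functions of Booleans, and present the recursion itself as the content of the proof.
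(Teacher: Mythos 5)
Your strategy is the same as the paper's: induction on $k$ with the base case $k=1$ delegated to \cite[Lemma 4.7]{clift2020encodings}, and an inductive step that fixes one argument, obtains by induction the $n$ proofs $F_a$ encoding the restrictions, and assembles them with a $\&$-introduction followed by a selection using the remaining $\tNBool$ hypothesis (the paper fixes the last argument rather than the first, which is immaterial). Two points of divergence are worth noting. First, your implementation of the case-switch --- instantiating the head input as a projection at base type $\den{\tMBool}$ and cutting against it --- does not literally type-check against the sequent $k\,\tNBool \vdash \tMBool$, since all hypotheses there are ${}_n\textbf{bool}_A$ for one fixed base type $A$; the paper instead unfolds $\tMBool = A^m \multimap A$, produces each branch as a proof of $A^m, (k-1)\,\tNBool \vdash A$, bundles these with $\&R$ into $A^m, (k-1)\,\tNBool \vdash A^n$, and then applies $\multimap L$ against an axiom $A \vdash A$ to consume the $\tNBool$ hypothesis at the base type, finishing with $\multimap R$. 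This achieves exactly the selection you describe while staying at the fixed $A$. Second, your concern (ii) about contraction, degrees and plainness is unnecessary here: the sequent has no $!$, and the $(k-1)$ shared hypotheses are distributed across the $n$ branches by the additive rule $\&R$, which shares contexts for free. The resulting proof is a linear proof in the sense of the paper, and plainness is not at issue in this lemma.
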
 

\begin{proof}
The proof is by induction on $k$, with $k = 1$ being Lemma \cite[4.7]{clift2020encodings}. Assuming the lemma holds for $k-1$ we for each $0 \le z \le n-1$ let $F_z$ be the proof of $(k-1)\,\tNBool \vdash \tMBool$ encoding the function
\[
f(-, z): \{0, ..., n-1\}^{k-1} \to \{0, ..., m-1\}\,.
\]
Then the proof
\begin{center}
\AxiomC{$\proofvdots{F_0}$}
\noLine\UnaryInfC{$A^m, (k-1)\, \tNBool \vdash A$}
\AxiomC{}
\noLine\UnaryInfC{$...$}
\AxiomC{$\proofvdots{F_{n-1}}$}
\noLine\UnaryInfC{$A^m, (k-1)\, \tNBool \vdash A$}

\RightLabel{\scriptsize $\&R$}
\doubleLine\TrinaryInfC{$A^m, (k-1)\, \tNBool \vdash A^n$}

\AxiomC{}
\UnaryInfC{$A \vdash A$}

\RightLabel{\scriptsize $\multimap L$}
\BinaryInfC{$A^m, (k-1)\, \tNBool, \tNBool \vdash A$}
\RightLabel{\scriptsize $\multimap R$}
\UnaryInfC{$(k-1)\, \tNBool, \tNBool \vdash \tMBool$}

\DisplayProof
\end{center}
encodes $f$.
\end{proof}

Similarly it follows that:

\begin{lemma}
\label{lem:update_multiple_tape}
    Let $n_1,n_2,m \geq 1$ and consider a function
    \begin{equation}
        f: \{0, \ldots, n_1-1\} \times \{0, \ldots, n_2-1\}\lto \{0,\ldots, m-1\}
    \end{equation}
    there exists a proof $F$ of $\textbf{${_{n_1}}$bool}, \textbf{${_{n_2}}$bool} \vdash \textbf{${_m}$bool}$ which encodes $f$.
\end{lemma}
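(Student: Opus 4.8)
\textbf{Plan for the proof of Lemma \ref{lem:update_multiple_tape}.} The statement is essentially the $k=2$ instance of Lemma \ref{encoding functions of n bools mod}, except that the two input booleans are allowed to have \emph{different} widths $n_1, n_2$ rather than a common width $n$. So the plan is to mimic the inductive step of the proof of Lemma \ref{encoding functions of n bools mod} in this slightly more general setting, which requires no induction at all since we are only handling the two-variable case directly.

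First I would, for each $0 \le z \le n_2 - 1$, invoke Lemma \cite[4.7]{clift2020encodings} (the $k=1$ base case cited in the proof of Lemma \ref{encoding functions of n bools mod}) to obtain a proof $F_z$ of $\textbf{${_{n_1}}$bool} \vdash \textbf{${_m}$bool}$ encoding the function $f(-, z): \{0,\ldots,n_1-1\} \lto \{0,\ldots,m-1\}$. Unpacking the definition $\textbf{${_{n_1}}$bool} = A^{n_1} \multimap A$ and $\textbf{${_m}$bool} = A^m \multimap A$, each such $F_z$ is a proof of $A^m, A^{n_1} \multimap A \vdash A$, or after shuffling it is convenient to view it with the $A^m$ context explicit. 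Then I would assemble these $n_2$ proofs with a $\&R$ rule into a single proof of (context) $\vdash A^{n_2}$, cut against the identity axiom $A \vdash A$ via a $\multimap L$ rule introducing the second input $\textbf{${_{n_2}}$bool} = A^{n_2} \multimap A$, and finally apply $\multimap R$ to abstract out the $A^m$ context and obtain a proof of $\textbf{${_{n_1}}$bool}, \textbf{${_{n_2}}$bool} \vdash \textbf{${_m}$bool}$. This is literally the proof tree displayed in the proof of Lemma \ref{encoding functions of n bools mod} with $n-1$ replaced by the single boolean of width $n_1$ and $n$ (the number of leaves) replaced by $n_2$.

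The one point that needs a word of verification is that the resulting proof really \emph{encodes} $f$ in the sense of \cite[Appendix B]{clift2020encodings}: cutting it against the proofs $\underline{a}: \textbf{${_{n_1}}$bool}$ and $\underline{b}: \textbf{${_{n_2}}$bool}$ should yield (up to cut-elimination) the proof $\underline{f(a,b)}: \textbf{${_m}$bool}$. This is immediate from the semantics of the $\&R$ and $\multimap L$ rules: feeding $\underline{b}$ to the $\multimap L$-introduced input selects the $b$-th component of the $\&$-tuple, namely $F_b$, and then feeding $\underline{a}$ to $F_b$ gives $\underline{f(a,b)}$ by the defining property of $F_b$. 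I expect no real obstacle here — the argument is a routine adaptation — so the statement ``Similarly it follows that'' in the paper is accurate, and the proof is genuinely a one-liner pointing back to the construction in Lemma \ref{encoding functions of n bools mod}.
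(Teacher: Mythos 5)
Your proposal is correct and matches the paper's (implicit) proof exactly: the paper introduces this lemma with ``Similarly it follows that,'' meaning the intended argument is precisely the adaptation of the proof tree of Lemma \ref{encoding functions of n bools mod} to two booleans of possibly different widths, which is what you carry out. The verification that the resulting proof encodes $f$ via the $\&R$/$\multimap L$ semantics is also the standard one, so nothing is missing.
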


\subsection{The description tape}

The description tape of $\mathcal{U}$ is
\begin{equation}
    X \sigma_1 q_1 \sigma_1' q_1' d_1 \ldots \sigma_N q_N \sigma_N' q_N' d_N X
\end{equation}
The natural orders $0 < \ldots < n-1, 0 < \ldots < m-1$ on $\Sigma, Q$ respectively induce the lexicographical order on the set $\Sigma \times Q$ which means the information of the description tape can be compressed to tuples of size 3 instead of size 5 as follows:
\begin{equation}
\label{eq:description_compressed}
    \sigma'_1 q_1' d_1 \ldots \sigma'_N q'_N d_N
\end{equation}
where we have dropped the $X$ at the beginning and at the end of the tape as these are only used in the definition of $\mathcal{U}$ as a Turing machine and will not be used in the encoding. Here $\sigma'_i \in \Sigma$, $q'_i \in Q$, $d_i \in D$ for $1 \le i \le N$. We use the type \textbf{descr} and proofs $\mathcal{P}^i_{\text{dscr}}$ introduced in Section \ref{section:encoding_utm_step}.

For each $i = 1, \ldots, 3N$ we define a proof $S^{\text{dscr}}_{\mu, i}$ of the following sequent:
\begin{equation}
    S^{\text{dscr}}_{\mu, i} :
    \begin{cases}
        {}_{n}\textbf{bool} \vdash {}_{n}\textbf{bool},& i = 1\text{ mod }3,\\
        {}_{m}\textbf{bool} \vdash {}_{m}\textbf{bool},& i = 2\text{ mod }3,\\
        {}_{3}\textbf{bool} \vdash {}_{3}\textbf{bool},& i = 0\text{ mod }3,\\
    \end{cases}
\end{equation}
Each of these consist of an Axiom-rule encoding the identity function. For example, $S_{\mu, 1}^{\text{dscr}}$ is the following proof:
\begin{prooftree}
    \AxiomC{}
    \RightLabel{$\ax$}
    \UnaryInfC{${}_{n}\textbf{bool} \vdash {}_{n}\textbf{bool}$}
\end{prooftree}
We define the proof associated to the description tape as the tensor of all of these:
\begin{equation}
    S^{\text{dscr}}_{\mu} = \bigotimes_{i = 1}^{3N}S^{\text{dscr}}_{\mu, i}: \textbf{dscr} \vdash \textbf{dscr}.
\end{equation}
Note that this proof is independent of $\mu$.

\subsection{The staging tape}

We label these squares $s_0, s_1, s_2$. If $\Sigma_{\text{stage}}^i$ denotes the set of symbols which may appear on square $s_i$ for $i =0,1,2$, then:
\begin{equation}
\Sigma_{\text{stage}}^i =
    \begin{cases}
        \Sigma \cup \{X\},& i =0,\\
        Q \cup \{X\},& i =1,\\
        D \cup \{X\},& i =2.
    \end{cases}
\end{equation}
We introduce the following type:
\begin{equation}
    \textbf{stg} = \underset{\sigma'}{\underbrace{{!}{}_{n+1}\textbf{bool}}} \otimes \underset{q'}{\underbrace{{!}{}_{m+1}\textbf{bool}}} \otimes \underset{d}{\underbrace{{!}{}_{4}\textbf{bool}}}
\end{equation}
where we have decorated each factor with the kind of information that is written to the staging tape during the execution of the UTM. We introduce sets of proofs:
\begin{equation}
    \mathcal{P}_i^{\text{stg}} =
    \begin{cases}
        \{ \llbracket \underline{j} \rrbracket \}_{j = 1, \ldots, n} \subseteq \llbracket {}_{n+1}\textbf{bool}\rrbracket,& i = 0,\\
        \{ \llbracket \underline{j} \rrbracket \}_{j = 1, \ldots, m} \subseteq \llbracket {}_{m+1}\textbf{bool}\rrbracket,& i = 1,\\
        \{ \llbracket \underline{j} \rrbracket \}_{1 \le j \le 3} \subseteq \llbracket {}_{4}\textbf{bool}\rrbracket,& i = 2.
    \end{cases}
\end{equation}
For each $i = 0,1,2$ we define a proof $S^{\text{stage}}_{\mu,i}$.

We show the details for $i = 0$ and leave $i = 1,2$ to the reader. Say $0 \leq \mu 5N$. If $\mu \in \{0,1,2,4\}\text{ mod 5}$ then $S^{\text{stage}}_{\mu, i}$ consists of a single Axiom-rule. Now say $\mu = 3\text{ mod 5}$. Recall that we have fixed a Turing machine $M$ with assocaited code $[M]$. Then $S^{\text{stage}}_{\mu,i}$ encodes the following function:
\begin{gather*}
    W^{\texttt{code}} \times \Sigma \times Q_{\text{UTM}} \lto \Sigma\\
    ([M], \sigma, \varphi) \longmapsto
    \begin{cases}
        [M]_{\hat{\mu}},& \varphi = \text{copySymbol},\\
        \sigma,& \varphi \neq \text{copySymbol}.
    \end{cases}
\end{gather*}
where $\hat{\mu}$ is the entry of the description tape corresponding to the tuple in turn corresponding to $\mu$. This is a proof
\begin{equation}
    S^{\text{stage}}_{\mu,i}: \textbf{dscr} \otimes {!}{}_{n+1}\textbf{bool} \otimes {!}{}_{13}\textbf{bool} \vdash {!}{}_{n+1}\textbf{bool}.
\end{equation}
Existence follows from Lemmas \ref{encoding functions of n bools mod}, \ref{lem:update_multiple_tape}.

If $\mu = 5N+1\text{ mod }P$ then $S^{\text{stage}}_{\mu, i}$ encodes the constant function mapping everything to $X$. For $5N+2 \leq \mu \leq 10N+6$ we define $S^{\text{stage}}_{\mu, i}$ to consist of an Axiom-rule. We define the proof associated to the staging tape as the tensor of all of these:
\begin{equation}
    S^\text{stg}_{\mu} = S_{\mu,0}^{\text{stg}} \otimes S_{\mu,1}^{\text{stg}} \otimes S_{\mu,2}^{\text{stg}}.
\end{equation}
This is a proof of the following sequents depending on $\mu$:
\begin{equation}
    \begin{cases}
        \textbf{stg} \vdash \textbf{stg},& \mu \leq 5N, \mu = 0,1\text{ mod }5,\\
        \textbf{dscr} \otimes \textbf{stg} \otimes {!}{}_{13}\textbf{bool} \vdash \textbf{stg},& \mu \leq 5N, \mu = 2,3,4\text{ mod }5,\\
        \textbf{stg} \vdash \textbf{stg},& \mu \geq 5N
    \end{cases}
\end{equation}

\subsection{The state tape}

We consider
\begin{equation}
    \mathcal{P}^{\text{state}} = \{\llbracket \underline{j}\rrbracket\}_{j = 0, \ldots, m-1} \subseteq \llbracket {}_m\textbf{bool}\rrbracket\,.
\end{equation}
We define a proof $S^{\text{state}}_\mu$. If $\mu \neq 5N+3$ then $S^{\text{state}}_\mu$ consists of a single Axiom-rule. If $\mu = 5N+2$ then we encode the function
\begin{gather*}
    (Q \cup \{X\}) \times Q \lto Q\\
    (q,q') \longmapsto
    \begin{cases}
        q, & q \neq X,\\
        q', & q = X.
    \end{cases}
\end{gather*}
This is a proof of the following sequents depending on $\mu$:
\begin{equation}
    \begin{cases}
        {!}{}_{m}\textbf{bool} \vdash {!}{}_{m}\textbf{bool},& \mu \neq 5N+2,\\
        {!}{}_{m+1}\textbf{bool} \otimes {!}{}_{m}\textbf{bool} \vdash {!}{}_{m}\textbf{bool},& \mu = 5N+2.
    \end{cases}
\end{equation}

\subsection{The working tape}

Each square on the working tape is occupied by a symbol in $\Sigma$ and so we consider for all $i = -h-1, \ldots, h+1$ we take
\begin{equation}
    \mathcal{P}^{\text{work}}_{h,i} = \{\llbracket \underline{j}\rrbracket\}_{j = 0, \ldots, n-1} \subseteq \llbracket {}_n\textbf{bool}\rrbracket.
\end{equation}
For $\mu \leq 5N+2$ and $i = -h, \ldots, h$ we define a proof $S^{\text{work}}_{\mu,i}$. For $\mu > 5N+2$ and $i = -h-1, \ldots, h+1$ we define a proof $S^{\text{work}}_{\mu,i}$. Say $0 \leq \mu \leq 5N+1$ and $i$ is any value, then $S^{\text{work}}_{\mu,i}$ consists of just an Axiom-rule. Similarly for $5N+1 < \mu < 10N+6$ and $i$ any value. Similarly also if $\mu \in \{5N+2, 5N+3\}$ and $i \neq 0$. Now say $\mu = 5N+2$ and $i = 0$. Then we encode the function
\begin{align*}
    (\Sigma \cup \{X\}) \times \Sigma \lto \Sigma\\
    (\sigma,\sigma') \longmapsto
    \begin{cases}
        \sigma, & \sigma \neq X,\\
        \sigma' & \sigma = X.
    \end{cases}
\end{align*}

Now say $\mu = 5N+3$ and $i \in \{-h+1, \ldots, h-1\}$. Then we need to encode the function
\begin{gather*}
    (D \cup \{X\}) \times \Sigma \times \Sigma \times \Sigma \lto \Sigma\\
    (d, \sigma, \sigma', \sigma'') \longmapsto
    \begin{cases}
        \sigma, & d = \text{Left},\\
        \sigma', & d \in \{\text{Stay},X\},\\
        \sigma'', & d = \text{Right}.\\
    \end{cases}
\end{gather*}
If $i = -h$ then we need to encode the function:
\begin{gather*}
    (D \cup \{X\}) \times \Sigma \times \Sigma \lto \Sigma\\
    (d, \sigma, \sigma') \longmapsto
    \begin{cases}
        0, & d = \text{Left},\\
        \sigma, & d \in \{\text{Stay},X\},\\
        \sigma', & d = \text{Right}.\\
    \end{cases}
\end{gather*}
and similarly for $i = h$. Finally, if $i = -h-1$ then we need to encode the function:
\begin{gather*}
    (D \cup \{X\}) \times \Sigma \lto \Sigma\\
    (d, \sigma) \longmapsto
    \begin{cases}
        0, & d = \text{Left},\\
        0, & d \in \{\text{Stay},X\},\\
        \sigma, & d = \text{Right}.\\
    \end{cases}
\end{gather*}
The proof associated to the working tape is the tensor of all these
\begin{equation}
    \begin{cases}
        S^{\text{work}}_{\mu} = \bigotimes_{i=-h}^{h} S^{\text{work}}_{\mu,i},& \mu \leq 5N+2,\\
        S^{\text{work}}_{\mu} = \bigotimes_{i=-h-1}^{h+1} S^{\text{work}}_{\mu,i},& \mu = 5N+3,\\
        S^{\text{work}}_{\mu} = \bigotimes_{i=-h-1}^{h+1} S^{\text{work}}_{\mu,i},& \mu \geq 5N+4.
    \end{cases}
\end{equation}
which is of the following type:
\begin{equation}
    \begin{cases}
        {!}{}_{n}\textbf{bool}^{\otimes 2h+1} \vdash {!}{}_{n}\textbf{bool}^{\otimes 2h+1},& \mu \leq 5N+2,\\
        {!}{}_{n}\textbf{bool}^{\otimes 2h+1} \vdash {!}{}_{n}\textbf{bool}^{\otimes 2h+3},& \mu = 5N+3,\\
        {!}{}_{n}\textbf{bool}^{\otimes 2h+3} \vdash {!}{}_{n}\textbf{bool}^{\otimes 2h+3},& \mu \geq 5N+4.
    \end{cases}
\end{equation}

\subsection{The state of $\mathcal{U}$}

There are $13$ states of $\mathcal{U}$, so we consider
\begin{equation}
    \mathcal{P}^{\text{state}}_i = \{\llbracket \underline{j}\rrbracket\}_{j = 0, \ldots, 12} \subseteq \llbracket {}_{13}\textbf{bool}\rrbracket.
\end{equation}
We define a proof $S^{\mathcal{U}\text{state}}_\mu$.\\

Say $\mu \leq 5N$. Say $\mu = 1\text{ mod }5$. Then $S^{\mathcal{U}\text{state}}_\mu$ encodes the function
\begin{align*}
    \Sigma &\lto Q_{\text{UTM}}\\
    \sigma &\longmapsto
    \begin{cases}
        \text{compState}, & \sigma = \sigma_{\hat{\mu}},\\
        \neg\text{compState}, & \sigma \neq \sigma_{\hat{\mu}}.
    \end{cases}
\end{align*}
where $\hat{\mu}$ is the entry of $[M]$ associated to the timestep $\mu$.

Say $\mu = 2\text{ mod 5}$. Then $S^{\mathcal{U}\text{state}}_\mu$ encodes the function
\begin{align*}
    Q &\lto Q_{\text{UTM}}\\
    q &\longmapsto
    \begin{cases}
        \text{copySymbol}, & q = q_{\hat{\mu}},\\
        \neg\text{copySymbol}, & q \neq q_{\hat{\mu}}.
    \end{cases}
\end{align*}

Now say $\mu > 5N$. Then each $S^{\mathcal{U}\text{state}}_\mu$ encodes a constant function as the state of $\mathcal{U}$ is determined completely by the timestep $\mu$. For example, if $\mu = 5N+1$ then $\varphi = \text{updateSymbol}$, etc. This is a proof
\begin{equation}
    \begin{cases}
        {!}{}_{n}\textbf{bool} \vdash {!}{}_{13}\textbf{bool},& \mu \leq 5N, \mu = 1\text{ mod }5,\\
        {!}{}_{m}\textbf{bool} \vdash {!}{}_{13}\textbf{bool},& \mu \leq 5N, \mu = 2\text{ mod }5,\\
        {!}{}_{13}\textbf{bool} \vdash {!}{}_{13}\textbf{bool}, & \mu \leq 5N, \mu = 3,4,0\text{ mod 5},\\
        \vdash {!}{}_{13}\textbf{bool},& \mu > 5N.
    \end{cases}
\end{equation}

\subsection{The type ${}_h^\mu\underline{\mathcal{U}\textbf{tur}}$}\label{section:tape_type_utm}

First we define the type such that proofs of that type include encodings of the configuration of the staged pseudo-UTM at various steps within its cycle.

\begin{definition} Define for $h \ge 0$ and $0 \le \mu \le 10N + 6$
\begin{equation}
    {}_h^\mu\underline{\mathcal{U}\textbf{tur}} = \textbf{dscr} \otimes \textbf{stg} \otimes {!}{}_m\textbf{bool} \otimes {!}{}_n\textbf{bool}^{\otimes h'}\otimes {!}{}_{13}\textbf{bool}
\end{equation}
where
\begin{equation}
    h' =
    \begin{cases}
        2h+1,& \mu \neq 5N+4,\\
        2h+3,& \mu = 5N+4.
    \end{cases}
\end{equation}
\end{definition}

Next we define the proof which encodes the mapping of configurations of the staged pseudo-UTM at position $\mu$ to $\mu+1$ within the cycle.

\begin{definition} Define for $h \ge 0$ and $0 \le \mu < 10N + 6$
    \begin{equation}
        {}_h^{\mu}\underline{\text{$\mathcal{U}$relstep}}: {}_h^\mu\underline{\mathcal{U}\textbf{tur}} \vdash {}_h^{\mu+1}\underline{\mathcal{U}\textbf{tur}}.
    \end{equation}
    as follows. We first take the promotion of each of the proofs associated to the tapes and also the proof associated to the state of $\mathcal{U}$:
    \begin{equation}
        \operatorname{Prom}(S^{\text{dscr}}_{\mu}),\quad \operatorname{Prom}(S^{\text{stg}}_{\mu}),\quad
        \operatorname{Prom}(S^{\text{state}}_{\mu}),\quad
        \operatorname{Prom}(S^{\text{work}}_{\mu}),\quad
        \operatorname{Prom}(S^{\mathcal{U}\text{state}}_{\mu})
    \end{equation}
    Then we tensor all of these together, and append appropriate Exchange- and Contraction-rules to arrive at
    \begin{align*}
    {}_h^{\mu}\underline{\text{$\mathcal{U}$relstep}} = &\operatorname{Ctr}\Big(\operatorname{Ex}\Big(\operatorname{Prom}(S^{\text{dscr}}_{\mu}) \otimes \operatorname{Prom}(S^{\text{stg}}_{\mu})\\
    & \qquad \otimes \operatorname{Prom}(S^{\text{state}}_{\mu}) \otimes \operatorname{Prom}(S^{\text{work}}_{\mu}) \otimes \operatorname{Prom}(S^{\mathcal{U}\text{state}}_{\mu}) \Big)\Big)\,.
    \end{align*}
\end{definition}

Finally, we cut these proofs together for $\mu = 0, \mu = 1, \ldots, \mu = 10N + 5$ to get a proof computing a single cycle of the UTM.

\begin{definition}\label{defn:urelstep_t} We define
    \begin{equation}
        {}_h\underline{\text{$\mathcal{U}$relstep}} = {}_h^{10N+5}\underline{\text{$\mathcal{U}$relstep}} \mid \ldots \mid {}_h^{1}\underline{\text{$\mathcal{U}$relstep}}
        \mid
        {}_h^{0}\underline{\text{$\mathcal{U}$relstep}}
    \end{equation}
    and for any integer $t \ge 1$
    \begin{equation}
        {}_h\underline{\text{$\mathcal{U}$relstep}}^t = \underset{t \text{ proofs and } t - 1 \text{ cuts}}{\underbrace{{}_{h+t}\underline{\text{$\mathcal{U}$relstep}} \mid \ldots \mid {}_{h+1}\underline{\text{$\mathcal{U}$relstep}}
        \mid
        {}_h\underline{\text{$\mathcal{U}$relstep}}}}
    \end{equation}
where $\alpha \mid \beta$ denotes the cut of two proofs $\alpha, \beta$.
\end{definition}

\begin{remark}
    We recall that the cut of component-wise plain proofs is component-wise plain. Hence ${}_h\underline{\text{$\mathcal{U}$relstep}}^t$ is component-wise plain. If we perform cut-elimination on this proof we obtain a component-wise plain proof which is the tensor product of a family of proofs indexed by the active squares of $\mathcal{U}$. If we consider the plain proof associated to the only active square of the state tape, then we obtain a plain proof $\psi$ whose naive probabilistic extension is
    \begin{equation}
        \Delta \psi = \Delta \text{step}^t
    \end{equation}
    of the body of the paper.
\end{remark}

\section{Computation Paths for the UTM}\label{section:comp_path_utm}

We continue the treatment of computation paths for $\mathcal{U}$ from Section \ref{section:graphical_model_utm}. The DGM for one simulated step of $T$ is shown in Figure \ref{figure:symbol_write_error_diagram}. We perform the following simplification: in general, given a DGM, any vertex $B$ for which there exists a unique pair of edges $e_1,e_2$ such that $v$ is the target of $e_1$ and the source of $e_2$:
\begin{equation}
\label{eq:DGM}
\begin{tikzcd}
	A & B & C
	\arrow[from=1-1, to=1-2, "{e_1}"]
	\arrow[from=1-2, to=1-3, "{e_2}"]
\end{tikzcd}
\end{equation}
has conditional dependence relation given by:
\begin{equation}
\label{eq:simplifiable}
    \operatorname{Pr}(A,B,C) = \operatorname{Pr}(A\vert B)\operatorname{Pr}(B\vert C)\operatorname{Pr}(C).
\end{equation}
If $\operatorname{Pr}(B \vert C) = \delta_{B=C}$ then \eqref{eq:simplifiable} simplifies:
\begin{align*}
    \operatorname{Pr}(A\vert B)\operatorname{Pr}(B\vert C)\operatorname{Pr}(C) &= \operatorname{Pr}(A\vert B)\delta_{B = C}\operatorname{Pr}(C)\\
    &= \operatorname{Pr}(A\vert C)\operatorname{Pr}(C).
\end{align*}
In other words, in any DGM, a subgraph of the form \eqref{eq:DGM} with $\operatorname{Pr}(B \vert C) = \delta_{B=C}$ can be replaced by the following:
\begin{center}\begin{tikzcd}
	A & C.
	\arrow[from=1-1, to=1-2]
\end{tikzcd}\end{center}
In this case we say that we have \emph{collapsed} the DGM. 

\subsection{Symbol write error}

First we consider a symbol write error. Applying the above technique of collapsing the DGM we obtain the equivalent model shown in Figure \ref{figure:write_error_two_steps}, where in this and subsequent diagrams a dashed arrow means a collection of $N$ arrows and a squiggly arrow means a collection of $2N$ arrows.

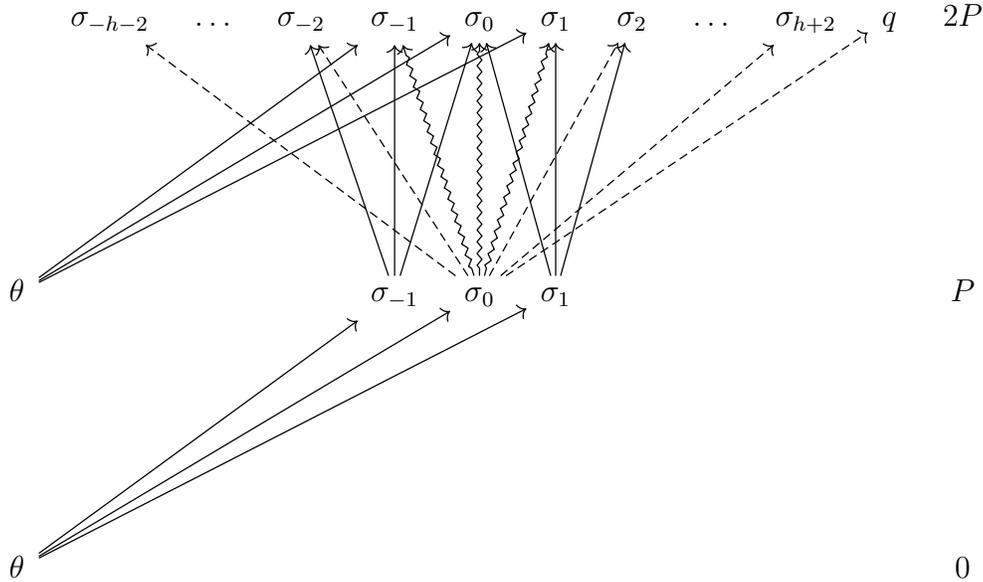
\begin{figure}[htpb]
    \centering
\[\begin{tikzcd}[column sep = tiny, row sep = huge]
	& {\sigma_{-h-2}} & \ldots & {\sigma_{-2}} & {\sigma_{-1}} & {\sigma_0} & {\sigma_1} & {\sigma_2} & \ldots & {\sigma_{h+2}} & q & 2P \\
	\\
	\theta &&&& {\sigma_{-1}} & {\sigma_{0}} & {\sigma_1} &&&&& P \\
	\\
	\theta &&&&&&&&&&& 0
	\arrow[from=3-1, to=1-5]
	\arrow[from=3-1, to=1-6]
	\arrow[from=3-1, to=1-7]
	\arrow[from=3-5, to=1-4]
	\arrow[from=3-5, to=1-5]
	\arrow[from=3-5, to=1-6]
	\arrow[dashed, from=3-6, to=1-2]
	\arrow[dashed, from=3-6, to=1-4]
	\arrow[squiggly, from=3-6, to=1-5]
	\arrow[squiggly, from=3-6, to=1-6]
	\arrow[squiggly, from=3-6, to=1-7]
	\arrow[dashed, from=3-6, to=1-8]
	\arrow[dashed, from=3-6, to=1-10]
	\arrow[dashed, from=3-6, to=1-11]
	\arrow[from=3-7, to=1-6]
	\arrow[from=3-7, to=1-7]
	\arrow[from=3-7, to=1-8]
	\arrow[from=5-1, to=3-5]
	\arrow[from=5-1, to=3-6]
	\arrow[from=5-1, to=3-7]
\end{tikzcd}\]
\caption{\label{figure:write_error_two_steps} A collapsed version of two stacked diagrams as in Figure \ref{figure:symbol_write_error_diagram} (that is, two simulated timesteps of a TM, considering a possible error in the symbol to write square of the description tape) showing only paths that begin at $\theta$.}
\end{figure}

If we consider the simulated steps from $P = 1$ to $P=3$ we obtain Figure \ref{figure:write_error_four_steps}.

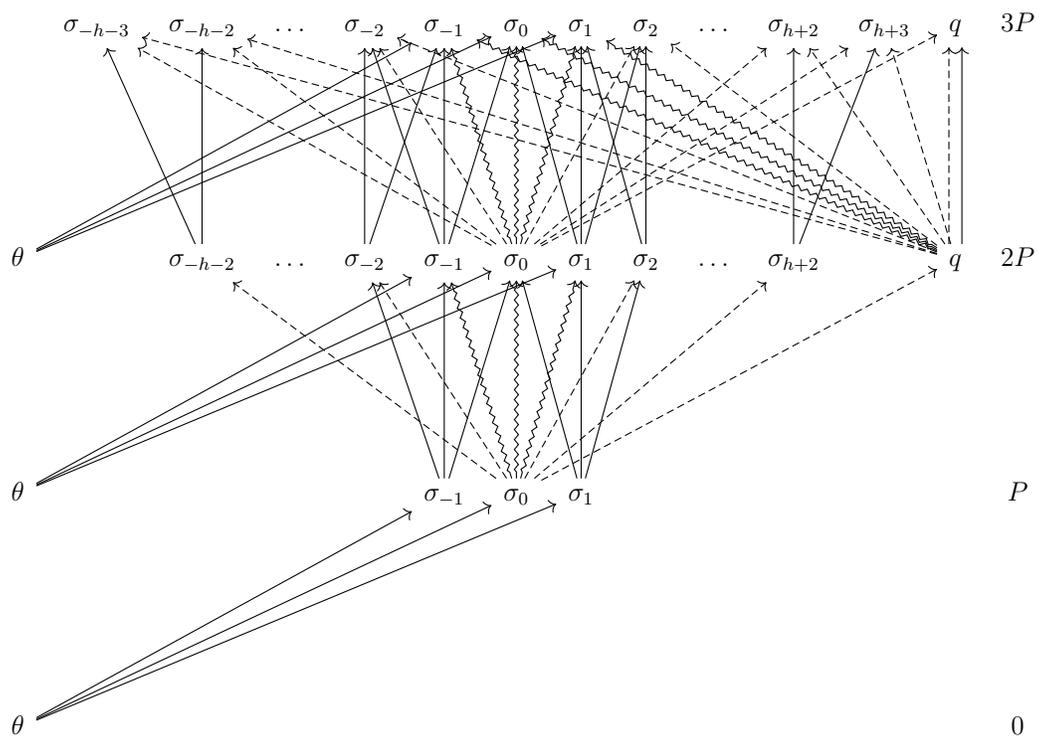
\begin{figure}[htpb]
    \centering
\[\adjustbox{scale=0.85}{\begin{tikzcd}[column sep = tiny, row sep = huge]
	& {\sigma_{-h-3}} & {\sigma_{-h-2}} & \ldots & {\sigma_{-2}} & {\sigma_{-1}} & {\sigma_0} & {\sigma_1} & {\sigma_2} & \ldots & {\sigma_{h+2}} & {\sigma_{h+3}} & q & 3P \\
	\\
	\theta && {\sigma_{-h-2}} & \ldots & {\sigma_{-2}} & {\sigma_{-1}} & {\sigma_0} & {\sigma_1} & {\sigma_2} & \ldots & {\sigma_{h+2}} && q & 2P \\
	\\
	\theta &&&&& {\sigma_{-1}} & {\sigma_0} & {\sigma_1} &&&&&& P \\
	\\
	\theta &&&&&&&&&&&&& 0
	\arrow[from=3-1, to=1-6]
	\arrow[from=3-1, to=1-7]
	\arrow[from=3-1, to=1-8]
	\arrow[from=3-3, to=1-2]
	\arrow[from=3-3, to=1-3]
	\arrow[from=3-5, to=1-5]
	\arrow[from=3-5, to=1-6]
	\arrow[from=3-6, to=1-5]
	\arrow[from=3-6, to=1-6]
	\arrow[from=3-6, to=1-7]
	\arrow[dashed, from=3-7, to=1-2]
	\arrow[dashed, from=3-7, to=1-3]
	\arrow[dashed, from=3-7, to=1-5]
	\arrow[squiggly, from=3-7, to=1-6]
	\arrow[squiggly, from=3-7, to=1-7]
	\arrow[squiggly, from=3-7, to=1-8]
	\arrow[dashed, from=3-7, to=1-9]
	\arrow[dashed, from=3-7, to=1-11]
	\arrow[dashed, from=3-7, to=1-12]
	\arrow[dashed, from=3-7, to=1-13]
	\arrow[from=3-8, to=1-7]
	\arrow[from=3-8, to=1-8]
	\arrow[from=3-8, to=1-9]
	\arrow[from=3-9, to=1-8]
	\arrow[from=3-9, to=1-9]
	\arrow[from=3-11, to=1-11]
	\arrow[from=3-11, to=1-12]
	\arrow[dashed, from=3-13, to=1-2]
	\arrow[dashed, from=3-13, to=1-3]
	\arrow[dashed, from=3-13, to=1-5]
	\arrow[squiggly, from=3-13, to=1-6]
	\arrow[squiggly, from=3-13, to=1-7]
	\arrow[squiggly, from=3-13, to=1-8]
	\arrow[dashed, from=3-13, to=1-9]
	\arrow[dashed, from=3-13, to=1-11]
	\arrow[dashed, from=3-13, to=1-12]
	\arrow[shift left, dashed, from=3-13, to=1-13]
	\arrow[shift right, from=3-13, to=1-13]
	\arrow[from=5-1, to=3-6]
	\arrow[from=5-1, to=3-7]
	\arrow[from=5-1, to=3-8]
	\arrow[from=5-6, to=3-5]
	\arrow[from=5-6, to=3-6]
	\arrow[from=5-6, to=3-7]
	\arrow[dashed, from=5-7, to=3-3]
	\arrow[dashed, from=5-7, to=3-5]
	\arrow[squiggly, from=5-7, to=3-6]
	\arrow[squiggly, from=5-7, to=3-7]
	\arrow[squiggly, from=5-7, to=3-8]
	\arrow[dashed, from=5-7, to=3-9]
	\arrow[dashed, from=5-7, to=3-11]
	\arrow[dashed, from=5-7, to=3-13]
	\arrow[from=5-8, to=3-7]
	\arrow[from=5-8, to=3-8]
	\arrow[from=5-8, to=3-9]
	\arrow[from=7-1, to=5-6]
	\arrow[from=7-1, to=5-7]
	\arrow[from=7-1, to=5-8]
\end{tikzcd}}\]
\caption{\label{figure:write_error_four_steps} A collapsed version of three stacked diagrams as in Figure \ref{figure:symbol_write_error_diagram} (that is, three simulated timesteps of a TM, considering a possible error in the symbol to write square of the description tape) showing only paths that begin at $\theta$.}
\end{figure}

The step from $P = 2$ to $P=3$ gives the general pattern. The key point is that at step $P=2$, there is now uncertainty in all of the squares on the working tape, and the square on the state tape. Except for the growth of the working tape, there is no further introduction of uncertainty, so we now have the general picture. If we only consider the paths which end at $q$ in the fourth simulated step, we obtain the equivalent model shown in Figure \ref{figure:write_error_five_steps}.

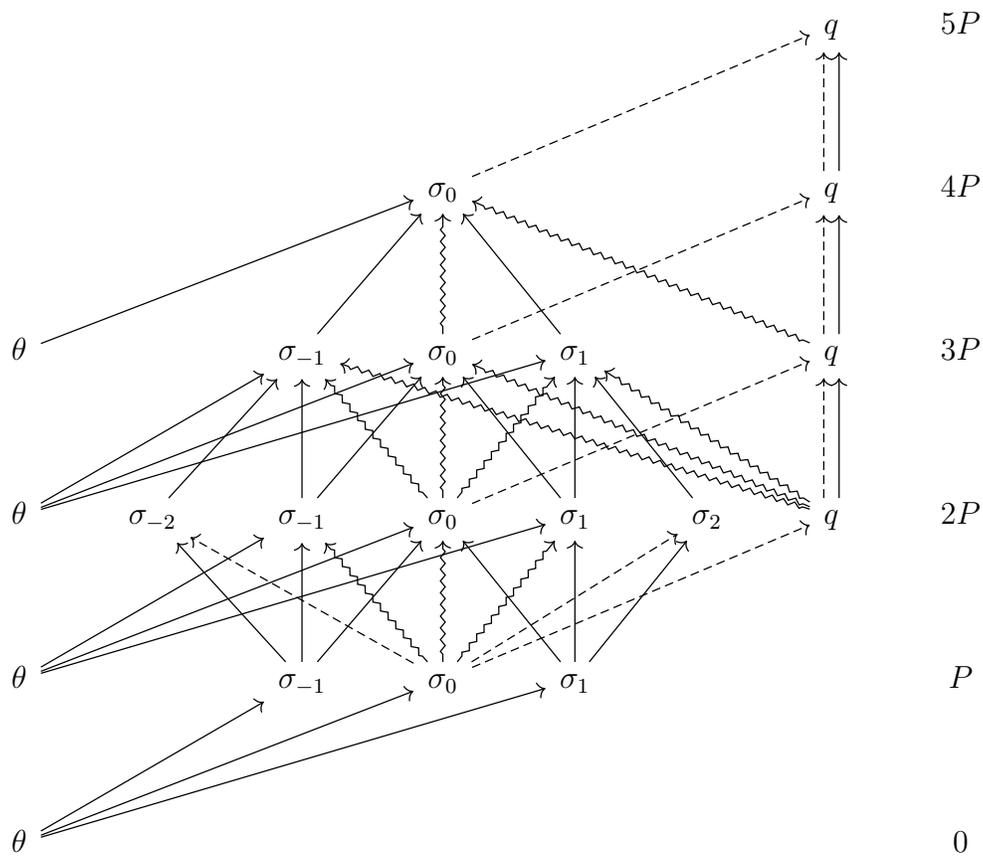
\begin{figure}[htpb]
    \centering
\[\begin{tikzcd}
	&&&&&& q & 5P \\
	\\
	&&& {\sigma_0} &&& q & 4P \\
	\\
	\theta && {\sigma_{-1}} & {\sigma_0} & {\sigma_1} && q & 3P \\
	\\
	\theta & {\sigma_{-2}} & {\sigma_{-1}} & {\sigma_0} & {\sigma_1} & {\sigma_2} & q & 2P \\
	\\
	\theta && {\sigma_{-1}} & {\sigma_0} & {\sigma_1} &&& P \\
	\\
	\theta &&&&&&& 0
	\arrow[dashed, from=3-4, to=1-7]
	\arrow[shift left, dashed, from=3-7, to=1-7]
	\arrow[shift right, from=3-7, to=1-7]
	\arrow[from=5-1, to=3-4]
	\arrow[from=5-3, to=3-4]
	\arrow[squiggly, from=5-4, to=3-4]
	\arrow[dashed, from=5-4, to=3-7]
	\arrow[from=5-5, to=3-4]
	\arrow[squiggly, from=5-7, to=3-4]
	\arrow[shift left, dashed, from=5-7, to=3-7]
	\arrow[shift right, from=5-7, to=3-7]
	\arrow[from=7-1, to=5-3]
	\arrow[from=7-1, to=5-4]
	\arrow[from=7-1, to=5-5]
	\arrow[from=7-2, to=5-3]
	\arrow[from=7-3, to=5-3]
	\arrow[from=7-3, to=5-4]
	\arrow[squiggly, from=7-4, to=5-3]
	\arrow[squiggly, from=7-4, to=5-4]
	\arrow[squiggly, from=7-4, to=5-5]
	\arrow[dashed, from=7-4, to=5-7]
	\arrow[from=7-5, to=5-4]
	\arrow[from=7-5, to=5-5]
	\arrow[from=7-6, to=5-5]
	\arrow[squiggly, from=7-7, to=5-3]
	\arrow[squiggly, from=7-7, to=5-4]
	\arrow[squiggly, from=7-7, to=5-5]
	\arrow[shift left, dashed, from=7-7, to=5-7]
	\arrow[shift right, from=7-7, to=5-7]
	\arrow[from=9-1, to=7-3]
	\arrow[from=9-1, to=7-4]
	\arrow[from=9-1, to=7-5]
	\arrow[from=9-3, to=7-2]
	\arrow[from=9-3, to=7-3]
	\arrow[from=9-3, to=7-4]
	\arrow[dashed, from=9-4, to=7-2]
	\arrow[squiggly, from=9-4, to=7-3]
	\arrow[squiggly, from=9-4, to=7-4]
	\arrow[squiggly, from=9-4, to=7-5]
	\arrow[dashed, from=9-4, to=7-6]
	\arrow[dashed, from=9-4, to=7-7]
	\arrow[from=9-5, to=7-4]
	\arrow[from=9-5, to=7-5]
	\arrow[from=9-5, to=7-6]
	\arrow[from=11-1, to=9-3]
	\arrow[from=11-1, to=9-4]
	\arrow[from=11-1, to=9-5]
\end{tikzcd}\]
    \caption{\label{figure:write_error_five_steps} A collapsed version of five stacked diagrams as in Figure \ref{figure:symbol_write_error_diagram} (that is, five simulated timesteps of a TM, considering a possible error in the symbol to write square of the description tape) showing only paths that begin at $\theta$ and end at $q$.
    }
\end{figure}

\subsection{State write error}
Now we consider a state write error. This situation is similar to the above, but at time step $P=1$ only the state square has uncertainty on it, not the $\sigma_0$ square on the working tape. However, at step $P=2$, there is the same set of squares with uncertainty as in the symbol write error case. We obtain the DGM for $P=1$ to $P=3$ as shown in Figure \ref{figure:state_error_three_steps}.

\begin{figure}[htpb]
    \centering
\[\adjustbox{scale=0.85}{\begin{tikzcd}[column sep = tiny, row sep = huge]
	& {\sigma_{-h-3}} & {\sigma_{-h-2}} & \ldots & {\sigma_{-2}} & {\sigma_{-1}} & {\sigma_0} & {\sigma_1} & {\sigma_2} & \ldots & {\sigma_{h+2}} & {\sigma_{h+3}} & q & 3P \\
	\\
	\theta && {\sigma_{-h-2}} & \ldots & {\sigma_{-2}} & {\sigma_{-1}} & {\sigma_0} & {\sigma_1} & {\sigma_2} & \ldots & {\sigma_{h+2}} && q & 2P \\
	\\
	\theta &&&&&&&&&&&& q & P \\
	\\
	\theta &&&&&&&&&&&&& 0
	\arrow[from=3-1, to=1-13]
	\arrow[from=3-3, to=1-2]
	\arrow[from=3-3, to=1-3]
	\arrow[from=3-5, to=1-5]
	\arrow[from=3-5, to=1-6]
	\arrow[from=3-6, to=1-6]
	\arrow[from=3-6, to=1-7]
	\arrow[dashed, from=3-7, to=1-2]
	\arrow[dashed, from=3-7, to=1-3]
	\arrow[dashed, from=3-7, to=1-5]
	\arrow[squiggly, from=3-7, to=1-6]
	\arrow[squiggly, from=3-7, to=1-7]
	\arrow[squiggly, from=3-7, to=1-8]
	\arrow[dashed, from=3-7, to=1-9]
	\arrow[dashed, from=3-7, to=1-11]
	\arrow[dashed, from=3-7, to=1-12]
	\arrow[dashed, from=3-7, to=1-13]
	\arrow[from=3-8, to=1-7]
	\arrow[from=3-8, to=1-8]
	\arrow[from=3-9, to=1-8]
	\arrow[from=3-9, to=1-9]
	\arrow[from=3-11, to=1-11]
	\arrow[from=3-11, to=1-12]
	\arrow[dashed, from=3-13, to=1-2]
	\arrow[dashed, from=3-13, to=1-3]
	\arrow[dashed, from=3-13, to=1-5]
	\arrow[squiggly, from=3-13, to=1-6]
	\arrow[squiggly, from=3-13, to=1-7]
	\arrow[squiggly, from=3-13, to=1-8]
	\arrow[dashed, from=3-13, to=1-9]
	\arrow[dashed, from=3-13, to=1-11]
	\arrow[dashed, from=3-13, to=1-12]
	\arrow[shift left, dashed, from=3-13, to=1-13]
	\arrow[shift right, from=3-13, to=1-13]
	\arrow[from=5-1, to=3-13]
	\arrow[dashed, from=5-13, to=3-3]
	\arrow[dashed, from=5-13, to=3-5]
	\arrow[squiggly, from=5-13, to=3-6]
	\arrow[squiggly, from=5-13, to=3-7]
	\arrow[squiggly, from=5-13, to=3-8]
	\arrow[dashed, from=5-13, to=3-9]
	\arrow[dashed, from=5-13, to=3-11]
	\arrow[shift left, dashed, from=5-13, to=3-13]
	\arrow[shift right, from=5-13, to=3-13]
	\arrow[from=7-1, to=5-13]
\end{tikzcd}}\]
\caption{\label{figure:state_error_three_steps} A collapsed version of three stacked diagrams as in Figure \ref{figure:symbol_write_error_diagram} (that is, three simulated timesteps of a TM, considering a possible error in the state to transition to on the description tape) showing only paths that begin at $\theta$.}
\end{figure}
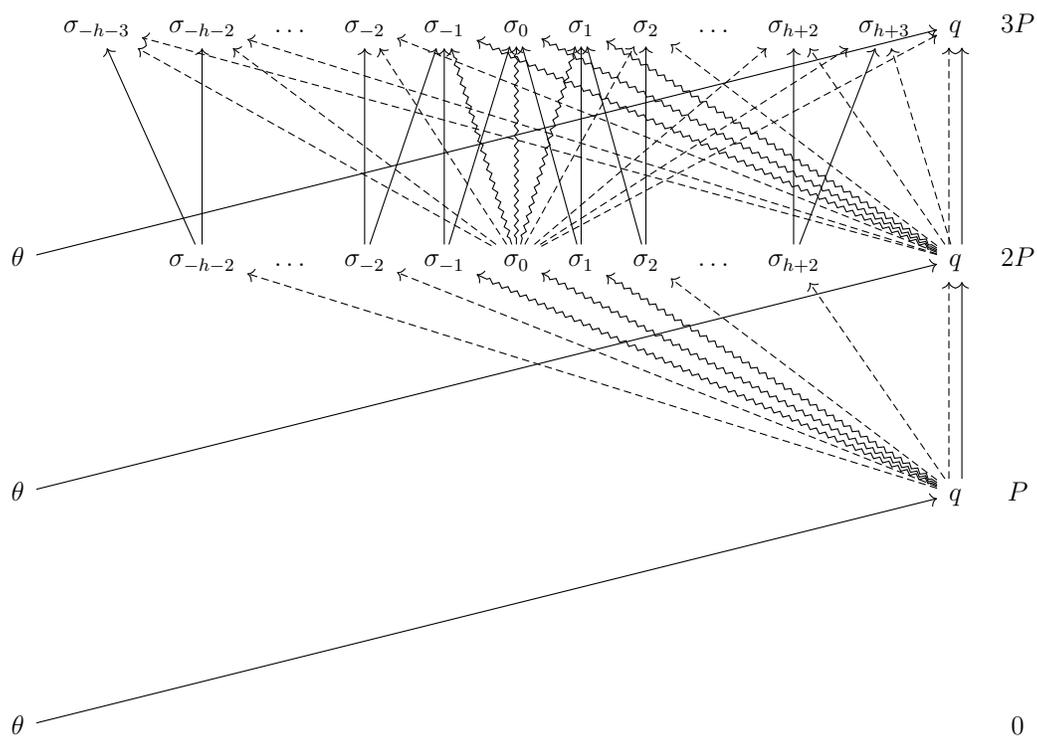

If we consider $5$ simulated steps and only paths which end at $q$, we obtain Figure \ref{figure:state_error_five_steps}.

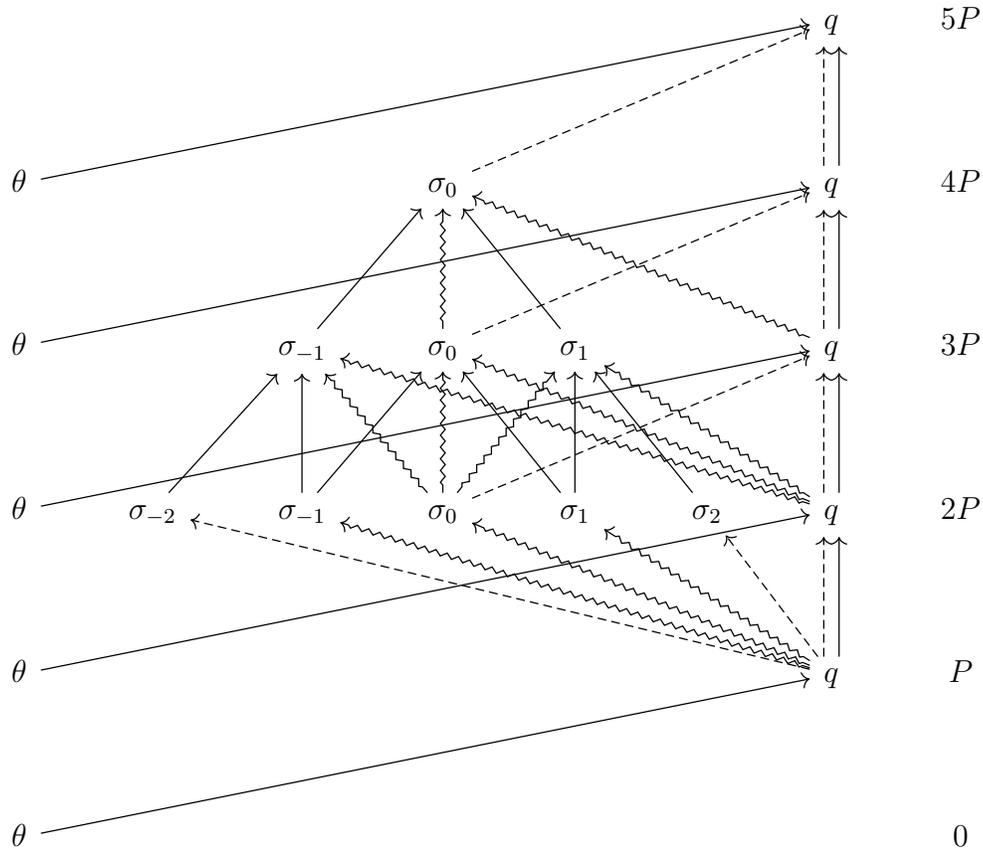
\begin{figure}[htpb]
    \centering
\[\begin{tikzcd}
	&&&&&& q & 5P \\
	\\
	\theta &&& {\sigma_0} &&& q & 4P \\
	\\
	\theta && {\sigma_{-1}} & {\sigma_0} & {\sigma_1} && q & 3P \\
	\\
	\theta & {\sigma_{-2}} & {\sigma_{-1}} & {\sigma_0} & {\sigma_1} & {\sigma_2} & q & 2P \\
	\\
	\theta &&&&&& q & P \\
	\\
	\theta &&&&&&& 0
	\arrow[from=3-1, to=1-7]
	\arrow[dashed, from=3-4, to=1-7]
	\arrow[shift left, dashed, from=3-7, to=1-7]
	\arrow[shift right, from=3-7, to=1-7]
	\arrow[from=5-1, to=3-7]
	\arrow[from=5-3, to=3-4]
	\arrow[squiggly, from=5-4, to=3-4]
	\arrow[dashed, from=5-4, to=3-7]
	\arrow[from=5-5, to=3-4]
	\arrow[squiggly, from=5-7, to=3-4]
	\arrow[shift left, dashed, from=5-7, to=3-7]
	\arrow[shift right, from=5-7, to=3-7]
	\arrow[from=7-1, to=5-7]
	\arrow[from=7-2, to=5-3]
	\arrow[from=7-3, to=5-3]
	\arrow[from=7-3, to=5-4]
	\arrow[squiggly, from=7-4, to=5-3]
	\arrow[squiggly, from=7-4, to=5-4]
	\arrow[squiggly, from=7-4, to=5-5]
	\arrow[dashed, from=7-4, to=5-7]
	\arrow[from=7-5, to=5-4]
	\arrow[from=7-5, to=5-5]
	\arrow[from=7-6, to=5-5]
	\arrow[squiggly, from=7-7, to=5-3]
	\arrow[squiggly, from=7-7, to=5-4]
	\arrow[squiggly, from=7-7, to=5-5]
	\arrow[shift left, dashed, from=7-7, to=5-7]
	\arrow[shift right, from=7-7, to=5-7]
	\arrow[from=9-1, to=7-7]
	\arrow[dashed, from=9-7, to=7-2]
	\arrow[squiggly, from=9-7, to=7-3]
	\arrow[squiggly, from=9-7, to=7-4]
	\arrow[squiggly, from=9-7, to=7-5]
	\arrow[dashed, from=9-7, to=7-6]
	\arrow[shift left, dashed, from=9-7, to=7-7]
	\arrow[shift right, from=9-7, to=7-7]
	\arrow[from=11-1, to=9-7]
\end{tikzcd}\]
\caption{\label{figure:state_error_five_steps} A collapsed version of five stacked diagrams as in Figure \ref{figure:symbol_write_error_diagram} (that is, five simulated timesteps of a TM, considering a possible error in the state to transition to on the description tape) showing only paths that begin at $\theta$ and end at $q$.}
\end{figure}

\subsection{Direction write error}
Finally, we consider a direction write error. The simplified DGM for two simulated steps of $T$ is given in Figure \ref{figure:direction_error_three_steps}.

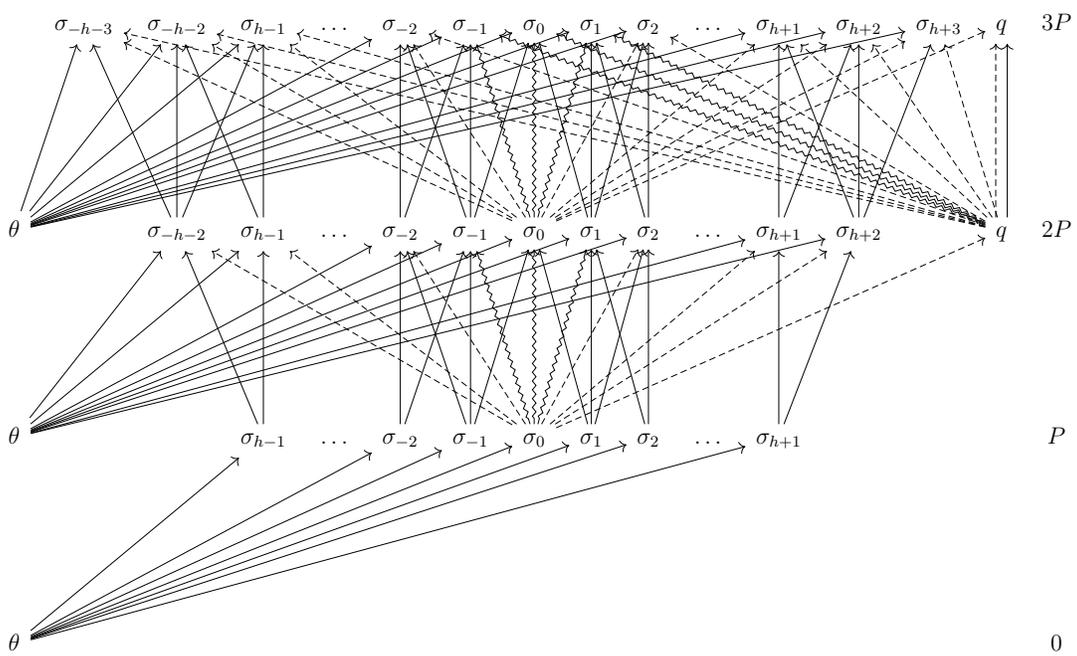
\begin{figure}[htpb]
    \centering
\[\adjustbox{scale=0.75}{\begin{tikzcd}[column sep = tiny, row sep = huge]
	& {\sigma_{-h-3}} & {\sigma_{-h-2}} & {\sigma_{h-1}} & \ldots & {\sigma_{-2}} & {\sigma_{-1}} & {\sigma_0} & {\sigma_1} & {\sigma_2} & \ldots & {\sigma_{h+1}} & {\sigma_{h+2}} & {\sigma_{h+3}} & q & 3P \\
	\\
	\theta && {\sigma_{-h-2}} & {\sigma_{h-1}} & \ldots & {\sigma_{-2}} & {\sigma_{-1}} & {\sigma_0} & {\sigma_1} & {\sigma_2} & \ldots & {\sigma_{h+1}} & {\sigma_{h+2}} && q & 2P \\
	\\
	\theta &&& {\sigma_{h-1}} & \ldots & {\sigma_{-2}} & {\sigma_{-1}} & {\sigma_0} & {\sigma_1} & {\sigma_2} & \ldots & {\sigma_{h+1}} &&&& P \\
	\\
	\theta &&&&&&&&&&&&&&& 0
	\arrow[from=3-1, to=1-2]
	\arrow[from=3-1, to=1-3]
	\arrow[from=3-1, to=1-4]
	\arrow[from=3-1, to=1-6]
	\arrow[from=3-1, to=1-7]
	\arrow[from=3-1, to=1-8]
	\arrow[from=3-1, to=1-9]
	\arrow[from=3-1, to=1-10]
	\arrow[from=3-1, to=1-12]
	\arrow[from=3-1, to=1-13]
	\arrow[from=3-1, to=1-14]
	\arrow[from=3-3, to=1-2]
	\arrow[from=3-3, to=1-3]
	\arrow[from=3-3, to=1-4]
	\arrow[from=3-4, to=1-3]
	\arrow[from=3-4, to=1-4]
	\arrow[from=3-6, to=1-6]
	\arrow[from=3-6, to=1-7]
	\arrow[from=3-7, to=1-6]
	\arrow[from=3-7, to=1-7]
	\arrow[from=3-7, to=1-8]
	\arrow[dashed, from=3-8, to=1-2]
	\arrow[dashed, from=3-8, to=1-3]
	\arrow[dashed, from=3-8, to=1-4]
	\arrow[dashed, from=3-8, to=1-6]
	\arrow[squiggly, from=3-8, to=1-7]
	\arrow[squiggly, from=3-8, to=1-8]
	\arrow[squiggly, from=3-8, to=1-9]
	\arrow[dashed, from=3-8, to=1-10]
	\arrow[dashed, from=3-8, to=1-12]
	\arrow[dashed, from=3-8, to=1-13]
	\arrow[dashed, from=3-8, to=1-14]
	\arrow[dashed, from=3-8, to=1-15]
	\arrow[from=3-9, to=1-8]
	\arrow[from=3-9, to=1-9]
	\arrow[from=3-9, to=1-10]
	\arrow[from=3-10, to=1-9]
	\arrow[from=3-10, to=1-10]
	\arrow[from=3-12, to=1-12]
	\arrow[from=3-12, to=1-13]
	\arrow[from=3-13, to=1-12]
	\arrow[from=3-13, to=1-13]
	\arrow[from=3-13, to=1-14]
	\arrow[dashed, from=3-15, to=1-2]
	\arrow[dashed, from=3-15, to=1-3]
	\arrow[dashed, from=3-15, to=1-4]
	\arrow[dashed, from=3-15, to=1-6]
	\arrow[squiggly, from=3-15, to=1-7]
	\arrow[squiggly, from=3-15, to=1-8]
	\arrow[squiggly, from=3-15, to=1-9]
	\arrow[dashed, from=3-15, to=1-10]
	\arrow[dashed, from=3-15, to=1-12]
	\arrow[dashed, from=3-15, to=1-13]
	\arrow[dashed, from=3-15, to=1-14]
	\arrow[shift left, dashed, from=3-15, to=1-15]
	\arrow[shift right, from=3-15, to=1-15]
	\arrow[from=5-1, to=3-3]
	\arrow[from=5-1, to=3-4]
	\arrow[from=5-1, to=3-6]
	\arrow[from=5-1, to=3-7]
	\arrow[from=5-1, to=3-8]
	\arrow[from=5-1, to=3-9]
	\arrow[from=5-1, to=3-10]
	\arrow[from=5-1, to=3-12]
	\arrow[from=5-1, to=3-13]
	\arrow[from=5-4, to=3-3]
	\arrow[from=5-4, to=3-4]
	\arrow[from=5-6, to=3-6]
	\arrow[from=5-6, to=3-7]
	\arrow[from=5-7, to=3-6]
	\arrow[from=5-7, to=3-7]
	\arrow[from=5-7, to=3-8]
	\arrow[dashed, from=5-8, to=3-3]
	\arrow[dashed, from=5-8, to=3-4]
	\arrow[dashed, from=5-8, to=3-6]
	\arrow[squiggly, from=5-8, to=3-7]
	\arrow[squiggly, from=5-8, to=3-8]
	\arrow[squiggly, from=5-8, to=3-9]
	\arrow[dashed, from=5-8, to=3-10]
	\arrow[dashed, from=5-8, to=3-12]
	\arrow[dashed, from=5-8, to=3-13]
	\arrow[dashed, from=5-8, to=3-15]
	\arrow[from=5-9, to=3-8]
	\arrow[from=5-9, to=3-9]
	\arrow[from=5-9, to=3-10]
	\arrow[from=5-10, to=3-9]
	\arrow[from=5-10, to=3-10]
	\arrow[from=5-12, to=3-12]
	\arrow[from=5-12, to=3-13]
	\arrow[from=7-1, to=5-4]
	\arrow[from=7-1, to=5-6]
	\arrow[from=7-1, to=5-7]
	\arrow[from=7-1, to=5-8]
	\arrow[from=7-1, to=5-9]
	\arrow[from=7-1, to=5-10]
	\arrow[from=7-1, to=5-12]
\end{tikzcd}}\]
\caption{\label{figure:direction_error_three_steps} A collapsed version of three stacked diagrams as in Figure \ref{figure:symbol_write_error_diagram} (that is, three simulated timesteps of a TM, considering a possible error in the direction to move) showing only paths that begin at $\theta$.}
\end{figure}

If we consider 5 simulated steps and only paths which end at $q$, we obtain Figure \ref{figure:direction_error_five_steps}.

\begin{figure}[htpb]
    \centering
\[\begin{tikzcd}[column sep = small]
	&&&&&&&& q & 5P \\
	\\
	&&&& {\sigma_0} &&&& q & 4P \\
	\\
	\theta &&& {\sigma_1} & {\sigma_0} & {\sigma_1} &&& q & 3P \\
	\\
	\theta && {\sigma_{-2}} & {\sigma_{-1}} & {\sigma_0} & {\sigma_1} & {\sigma_2} && q & 2P \\
	\\
	\theta & {\sigma_{-3}} & {\sigma_{-2}} & {\sigma_{-1}} & {\sigma_0} & {\sigma_1} & {\sigma_{2}} & {\sigma_3} && P \\
	\\
	\theta &&&&&&&&& 0
	\arrow[dashed, from=3-5, to=1-9]
	\arrow[shift left, dashed, from=3-9, to=1-9]
	\arrow[shift right, from=3-9, to=1-9]
	\arrow[from=5-1, to=3-5]
	\arrow[from=5-4, to=3-5]
	\arrow[squiggly, from=5-5, to=3-5]
	\arrow[dashed, from=5-5, to=3-9]
	\arrow[from=5-6, to=3-5]
	\arrow[squiggly, from=5-9, to=3-5]
	\arrow[shift left, dashed, from=5-9, to=3-9]
	\arrow[shift right, from=5-9, to=3-9]
	\arrow[from=7-1, to=5-4]
	\arrow[from=7-1, to=5-5]
	\arrow[from=7-1, to=5-6]
	\arrow[from=7-3, to=5-4]
	\arrow[from=7-4, to=5-4]
	\arrow[from=7-4, to=5-5]
	\arrow[squiggly, from=7-5, to=5-4]
	\arrow[squiggly, from=7-5, to=5-5]
	\arrow[squiggly, from=7-5, to=5-6]
	\arrow[dashed, from=7-5, to=5-9]
	\arrow[from=7-6, to=5-5]
	\arrow[from=7-6, to=5-6]
	\arrow[from=7-7, to=5-6]
	\arrow[squiggly, from=7-9, to=5-4]
	\arrow[squiggly, from=7-9, to=5-5]
	\arrow[squiggly, from=7-9, to=5-6]
	\arrow[shift left, dashed, from=7-9, to=5-9]
	\arrow[shift right, from=7-9, to=5-9]
	\arrow[from=9-1, to=7-3]
	\arrow[from=9-1, to=7-4]
	\arrow[from=9-1, to=7-5]
	\arrow[from=9-1, to=7-6]
	\arrow[from=9-1, to=7-7]
	\arrow[from=9-2, to=7-3]
	\arrow[from=9-3, to=7-3]
	\arrow[from=9-3, to=7-4]
	\arrow[from=9-4, to=7-3]
	\arrow[from=9-4, to=7-4]
	\arrow[from=9-4, to=7-5]
	\arrow[dashed, from=9-5, to=7-3]
	\arrow[squiggly, from=9-5, to=7-4]
	\arrow[squiggly, from=9-5, to=7-5]
	\arrow[squiggly, from=9-5, to=7-6]
	\arrow[dashed, from=9-5, to=7-7]
	\arrow[dashed, from=9-5, to=7-9]
	\arrow[from=9-6, to=7-5]
	\arrow[from=9-6, to=7-6]
	\arrow[from=9-6, to=7-7]
	\arrow[from=9-7, to=7-6]
	\arrow[from=9-7, to=7-7]
	\arrow[from=9-8, to=7-7]
	\arrow[from=11-1, to=9-2]
	\arrow[from=11-1, to=9-3]
	\arrow[from=11-1, to=9-4]
	\arrow[from=11-1, to=9-5]
	\arrow[from=11-1, to=9-6]
	\arrow[from=11-1, to=9-7]
	\arrow[from=11-1, to=9-8]
\end{tikzcd}\]
\caption{\label{figure:direction_error_five_steps} A collapsed version of five stacked diagrams as in Figure \ref{figure:symbol_write_error_diagram} (that is, five simulated timesteps of a TM, considering a possible error in the direction to move) showing only paths that begin at $\theta$ and end at $q$.}
\end{figure}
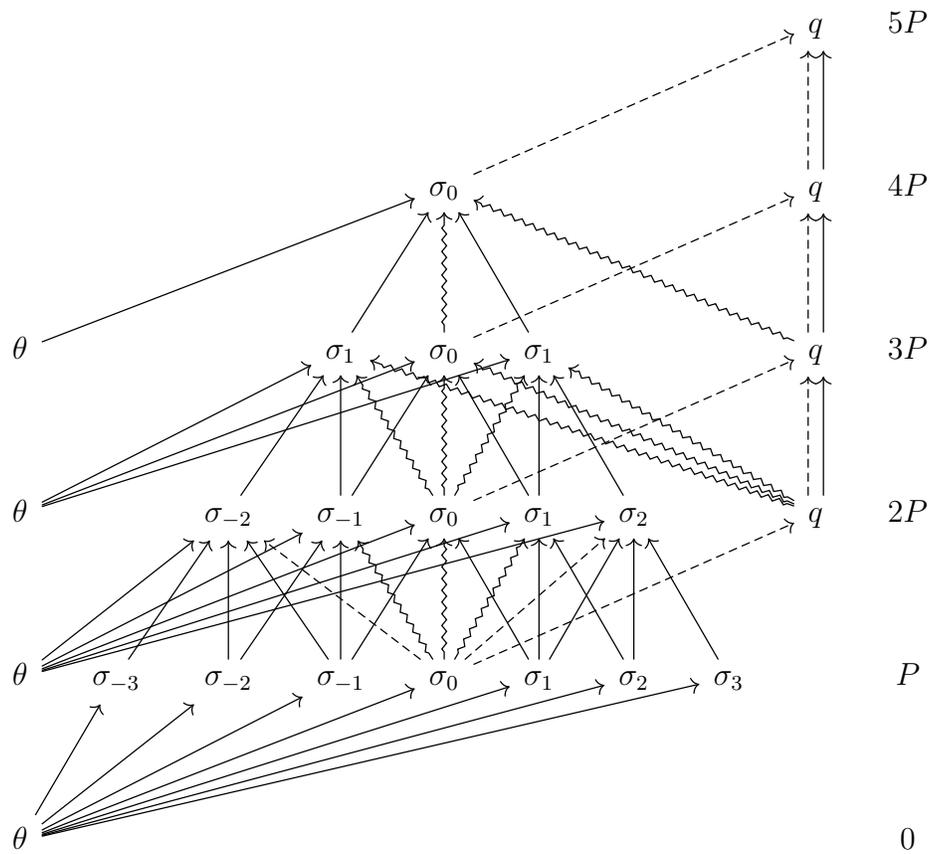

\begin{remark}
    In all three cases, the trees from time step $2P$ and above are identical except for the arrows with source $\theta$.
\end{remark}

\end{document}